\newcolumntype{C}[1]{>{\centering\arraybackslash}p{#1}}
\newtheorem{Theorem}{Theorem}
\newtheorem{Lemma}{Lemma}
\newtheorem{Corollary}{Corollary}
\newtheorem{Remark}{Remark}
\newcommand{\rs}{\!\!}
\newcommand{\bblue}{\textcolor{black}}
\newcommand{\Bquad}{\qquad\qquad\qquad\qquad\qquad\qquad\qquad\qquad\qquad}
\newcommand{\Mquad}{\qquad\qquad\qquad\qquad\qquad}
\newcommand{\Squad}{\qquad\qquad\qquad}
\title{Hierarchical Federated Learning in Wireless Networks: Pruning Tackles Bandwidth \\Scarcity and System Heterogeneity}
\author{
Md Ferdous Pervej, \IEEEmembership{Member, IEEE}, Richeng Jin, \IEEEmembership{Member, IEEE}, and Huaiyu Dai, \IEEEmembership{Fellow, IEEE}
\thanks{
This research was supported in part by the National Natural Science Foundation of China under Grants 62301487, in part by the Zhejiang Provincial Natural Science Foundation of China under Grant No. LQ23F010021 and LD21F010001, in part by the Ng Teng Fong Charitable Foundation in the form of ZJU-SUTD IDEA Grant under Grant No. 188170-11102,  and in part by the US National Science Foundation under grants CNS-1824518 and ECCS-2203214. (\textit{Corresponding author: Richeng Jin}.)}
\thanks{M. F. Pervej was with the Department of Electrical and Computer Engineering, North Carolina State University, Raleigh, NC 27695, USA, and now is with the Ming Hsieh Department of Electrical and Computer Engineering, University of Southern California, Los Angeles, CA 90089, USA (e-mail: pervej@usc.edu).}
\thanks{H. Dai is with the Department of Electrical and Computer Engineering, North Carolina State University, Raleigh, NC 27695, USA (e-mail: hdai@ncsu.edu).}
\thanks{R. Jin is with the College of Information Science and Electronic Engineering, Zhejiang University, the Zhejiang–Singapore Innovation and AI Joint Research Lab, and the Zhejiang Provincial Key Lab of Information Processing, Communication and Networking (IPCAN), Hangzhou, China, 310000  (e-mail: richengjin@zju.edu.cn).} 
\thanks{\copyright 2024 IEEE. Personal use of this material is permitted. Permission from IEEE must be obtained for all other uses, in any current or future media, including reprinting/republishing this material for advertising or promotional purposes, creating new collective works, for resale or redistribution to servers or lists, or reuse of any copyrighted component of this work in other works.}}
\begin{document}
\maketitle

\begin{abstract}
While a practical wireless network has many tiers where end users do not directly communicate with the central server, the users' devices have limited computation and battery powers, and the serving base station (BS) has a fixed bandwidth.
Owing to these practical constraints and system models, this paper leverages model pruning and proposes a pruning-enabled hierarchical federated learning (PHFL) in heterogeneous networks (HetNets).
We first derive an upper bound of the convergence rate that clearly demonstrates the impact of the model pruning and wireless communications between the clients and the associated BS.
Then we jointly optimize the model pruning ratio, central processing unit (CPU) frequency and transmission power of the clients in order to minimize the controllable terms of the convergence bound under strict delay and energy constraints.
However, since the original problem is not convex, we perform successive convex approximation (SCA) and jointly optimize the parameters for the relaxed convex problem.
Through extensive simulation, we validate the effectiveness of our proposed PHFL algorithm in terms of test accuracy, wall clock time, energy consumption and bandwidth requirement.
\end{abstract}

\begin{IEEEkeywords}
\noindent
Heterogeneous network, hierarchical federated learning, model pruning, resource management.
\end{IEEEkeywords}

\section{Introduction}
\IEEEPARstart{F}{ederated} learning (FL) has garnered significant attention as a privacy-preserving distributed edge learning solution in wireless edge networks \cite{mcmahan2017communication,tran2019Fedearated,yang2020energy}.
Since the original FL follows the parameter server paradigm, many state-of-the-art works consider a single server with distributed clients as the general system model in order to study the analytical and empirical performance \cite{tran2019Fedearated,yang2020energy,richeng2022communication,chen2023exploring,zhang2023joint}.
Given that there are $\mathcal{U} \coloneqq \{u\}_{u=1}^{\mathrm{U}}$ clients, each with a local dataset of $\mathcal{D}_u \coloneqq \{\mathbf{x}_a, y_a\}_{a=1}^A$, where $\mathbf{x}_a$ and $y_a$ are the $a^{th}$ feature vector and the corresponding label, the central server wants to train a global machine learning (ML) model $\mathbf{w}$ by minimizing a weighted combination of the clients' local objective functions $f_u(\mathbf{w})$'s, as follows
\begin{align}
    &f(\mathbf{w}) \coloneqq \sum\nolimits_{u=1}^{\mathrm{U}} \alpha_u f_u(\mathbf{w}), \label{globalLoss} \\
    &f_{u}(\mathbf{w}) \coloneqq (1/|\mathcal{D}_u|) \sum\nolimits_{(\mathbf{x}_a, y_a) \in \mathcal{D}_u} \mathrm{l}(\mathbf{w}, \mathbf{x}_a, y_a),\label{ueLossFun}
\end{align}
where $\alpha_u$ is the corresponding weight, and $\mathrm{l}(\mathbf{w}, \mathbf{x}_a, y_a)$ denotes the loss function associated with the $a^{th}$ data sample.
However, general networks usually follow a hierarchical structure \cite{wang2022demystifying}, where the clients are connected to edge servers, the edge servers are connected to fog nodes/servers, and these fog nodes/servers are connected to the cloud server \cite{hosseinalipour2022multi}.
Naturally, some recent works \cite{xu2021adaptive, Liu2022jointUE, luo2020hfel, liu2020client, feng2022Mobility,abad2020hierarchical} have extended FL to accommodate this hierarchical network topology.

A client\footnote{The terms client and UE are interchangeably used when there is no ambiguity.} does not directly communicate with the central server in the hierarchical network topology. 
Instead, the clients usually perform multiple local rounds of model training before sending the updated models to the edge server. 
The edge server aggregates the received models and updates its edge model, and then broadcasts the updated model to the associated clients for local training.
The edge servers repeat this for multiple edge rounds and finally send the updated edge models to the upper-tier servers, which then undergo the same process before finally sending the updated models to the cloud/central server.
This is usually known as hierarchical federated learning (HFL) \cite{luo2020hfel}.
On the one hand, HFL acknowledges the practical wireless heterogeneous network (HetNet) architecture.
On the other hand, it avoids costly direct communication between the far-away cloud server and the capacity-limited clients \cite{abad2020hierarchical}.
Moreover, since local averaging improves learning accuracy \cite{wang2022demystifying}, the central server ends up with a better-trained model.

While HFL can alleviate communication bottlenecks for the cloud server, data and system heterogeneity amongst the clients still need to be addressed.
Since the clients are usually scattered in different locations and have various onboard sensors, the data collected/sensed by these clients are diverse, causing statistical data heterogeneity that the server cannot govern.
As such, we need to embrace it in our theoretical and empirical study.
Besides, the well-known system heterogeneity arises from the clients' diverse computation powers \cite{kairouz2021advances}.
Recently, some works have been proposed to deal with system heterogeneity. 
For example, FedProx \cite{MLSYS2020_38af8613}, anarchic federated averaging (AFA) \cite{yang2022anarchic} and federated normalized averaging algorithm (FedNova) \cite{wang2020tackling}, to name a few, considered different local rounds for different clients to address the system heterogeneity.
More specifically, FedProx adds a proximal term to the client's local objective function to handle heterogeneity. 
AFA and FedNova present different ways to aggregate clients trained models' weights at the server to tackle this system heterogeneity.
However, these algorithms still assume that the client has and trains the original ML model, i.e., neither the computation time for the client's local training nor the communication overhead for offloading the trained model to the server is considered in system design.

Model pruning has attracted research interest recently \cite{lin2020Dynamic, jiang2022model}. 
It makes the over-parameterized model sparser, which allows the less computationally capable clients to perform local training more efficiently without sacrificing much of the test accuracy.
Besides, since the trained model contains fewer non-zero entries, the communication overhead over the unreliable wireless link between the client and the associated base station (BS) also dramatically reduces.
However, pruning generally introduces errors that only partially vanish, causing the pruned model to converge only to a neighborhood of the optimal solution \cite{lin2020Dynamic}.
Besides, unlike the traditional FL, where the model averaging happens only at the central server, HFL has multiple hierarchical levels that may adopt their own aggregation strategy.
Therefore, model pruning at the local client level leads to additional errors in the available models at different levels, eventually contributing to the global model. 
As such, more in-depth study is in need to understand the full benefit of model pruning in hierarchical networks.

\subsection{Related Work}
\noindent
Some recent works studied HFL \cite{xu2021adaptive, Liu2022jointUE, luo2020hfel, liu2020client, feng2022Mobility, abad2020hierarchical} and model pruning-based traditional single server based FL \cite{jiang2022model, zhu2023fedlp, liu2021adaptive, Liu2022Joint, ren2022toward} separately.
In \cite{xu2021adaptive}, Xu \textit{et al.} proposed an adaptive HFL scheme, where they optimized edge aggregation intervals and bandwidth allocation to minimize a weighted combination of the model training delay and training loss.
Liu \textit{et al.} proposed network-assisted HFL in \cite{Liu2022jointUE}, where they optimized wireless resource allocation and user associations to minimize $1)$ learning latency for independent identically distributed (IID) data distribution and $2)$ weighted sum of the total data distance and learning latency for the non-IID data distribution. 
Similar to \cite{xu2021adaptive, Liu2022jointUE}, Luo \textit{et al.} jointly optimized the wireless network parameters in order to minimize the weighted combination of the total energy consumption and delay during the training process in \cite{luo2020hfel}. 
Besides, \cite{liu2020client} also proposed an HFL algorithm based on federated averaging (FedAvg) \cite{mcmahan2017communication}.
In \cite{feng2022Mobility}, Feng \textit{et al.} proposed a mobility-aware clustered FL algorithm owing to user mobility.
More specifically, assuming that all users had an equal probability of staying at a cluster, the authors derived an upper bound of the convergence rate to capture the impact of user mobility, data heterogeneity and network heterogeneity. 
Abad \textit{et al.} also optimized wireless resources to reduce the communication latency and facilitate HFL in \cite{abad2020hierarchical}.

On the model pruning side, Jiang \textit{et al.} considered two-stage distributed model pruning in \cite{jiang2022model} with traditional single server based FL setting without any wireless network aspects.  
In a similar setting, Zhu \textit{et al.} proposed a layer-wise pruning mechanism in \cite{zhu2023fedlp}. 
Liu \textit{et al.} optimized the pruning ratio and time allocation in \cite{liu2021adaptive} in order to maximize the convergence rate in a time division multiple access operated small BS (sBS).
The idea was extended to joint client selection, pruning ratio optimization and time allocation in \cite{Liu2022Joint}. 
Using a similar network model, Ren \textit{et al.} optimized pruning ratios and bandwidth allocations jointly to minimize a weighted combination of the FL training time and pruning error in \cite{ren2022toward}.
These works \cite{liu2021adaptive,Liu2022Joint,ren2022toward} decomposed the original problem into different sub-problems that they solved iteratively in an attempt to solve the original problem sub-optimally. 
Moreover, \cite{liu2021adaptive,Liu2022Joint,ren2022toward} considered a simple network system model with a single BS serving the distributed clients with the wireless links.

\subsection{Our Contributions}
\noindent
While the studies mentioned above shed some light on HFL and model pruning in the traditional single server based FL, the impact of pruning on HFL in resource-constrained wireless HetNet is yet to be explored.
On the one hand, the clients need to train the original model for a few local episodes to determine the neurons they shall prune, which adds additional time and energy costs.
On the other hand, pruning adds errors to the learning performance.
Therefore, it is necessary to theoretically and empirically study these errors from different levels in HFL.
Moreover, it is also crucial to justify how and when one should adopt model pruning in practical wireless HetNets.
Motivated by these, in this work, we present our pruning-enabled HFL (PHFL) framework with the following major contributions:
\begin{itemize}
    \item Considering a practical wireless HetNet, we propose a PHFL solution in which the clients perform local training on the initial models to determine the neurons to prune, perform extensive training on the pruned models, and offload the trained models under strict delay and energy constraints.
    \item We theoretically analyze how pruning introduces errors in different levels under resource constraints in wireless HetNets by deriving a convergence bound that captures the impact of the wireless links between the clients and server and the pruning ratios. More specifically, the proposed solution converges to the neighborhood of a stationary point of traditional HFL with a convergence rate of $\mathcal{O} \big(1 / \sqrt{\mathrm{U} T} \big) + \mathcal{O} (\beta^2 D^2 \delta^{\mathrm{th}})$, where $\mathrm{U}$ is the total number of clients, $T$ is the total local iterations, $\beta$ quantifies smoothness of the loss function, $D^2$ is an upper bound of the $L_2$ norm of the model weights, and $0 < \delta^{\mathrm{th}} < 1$ is the maximum allowable pruning ratio.
    \item Then, we formulate an optimization problem to maximize the convergence rate by jointly configuring wireless resources and system parameters. To tackle the non-convexity of the original problem, we use a successive convex approximation (SCA) algorithm to solve the relaxed convex problem efficiently.
    \item Finally, using extensive simulation on two popular datasets and three popular ML models, we show the effectiveness of our proposed solution in terms of test accuracy, training time, energy consumption and bandwidth requirement.
\end{itemize}

The rest of the paper is organized as follows:
Section \ref{sec_sysModel} introduces our system model.
Detailed theoretical analysis is performed in Section \ref{section_convBound}, followed by our joint problem formulation and solution in Section \ref{section_jointProbSol}.
Based on our extensive simulation, we discuss the results in Section \ref{section_sim_results}.
Finally, Section \ref{section_conclusion} concludes the paper.
Moreover, Table \ref{tableOfNotations} summarizes the important notations used in the paper.

\begin{table}[!t]
\caption{Important Notations}
\fontsize{8}{8}\selectfont
\centering
\begin{tabular}{|C{1.9cm} |C {6.1cm}|}
\hline 
\textbf{Notation} & \textbf{Description} \\ \hline
$u$; $b$; $l$  & $u^{\mathrm{th}}$ user; $b^{\mathrm{th}}$ sBS; $l^{\mathrm{th}}$ mBS  \\ \hline 
$\mathcal{B}_l$; $k$ & sBS set under the $l^\mathrm{th}$ mBS; $k^{\mathrm{th}}$ sBS in $\mathcal{B}_l$ \\ \hline
$\mathcal{V}_{k,l}$; $j$ & VC set of sBS $k$ under the $l^\mathrm{th}$ mBS; $j^{\mathrm{th}}$ VC of sBS $k$ \\ \hline
$\mathcal{U}_{j,k,l}$; $i$ & Client set of the $j^{\mathrm{th}}$ VC of $l^{\mathrm{th}}$ sBS under the $l^\mathrm{th}$ mBS; $i^{\mathrm{th}}$ client in $\mathcal{U}_{j,k,l}$ \\ \hline
$z$; $\mathcal{Z}$ & $z^{\mathrm{th}}$ pRB; pRB set \\ \hline
$\mathrm{P}_i^t$ & Client $i$'s transmission power during $t$  \\ \hline
$\mathbf{w}$; $\mathbf{m}$; $\tilde{\mathbf{w}}$ & Original model; binary mask; pruned model \\ \hline
$\mathbf{w}_i$; $\mathbf{w}_j$; $\mathbf{w}_k$; $\mathbf{w}_l$ & Local model of the client, VC, sBS, and mBS \\ \hline
$f_i(\cdot)$; $f_j(\cdot)$; $f_k(\cdot)$; $f_l(\cdot)$; $f(\cdot)$ & Loss function of the client, VC, sBS, mBS, and central server, respectively \\ \hline
$\nabla f_i (\cdot)$; $g(\cdot)$; $\eta$ & True gradient; stochastic gradient; learning rate \\ \hline
$d$; $d_p$ & Total \& pruned parameters of the ML model \\ \hline
$\delta_i^t$; $\delta^{\mathrm{th}}$ & Pruning ratio of client $i$ during $t$; max pruning ratio \\ \hline
$\alpha_i$; $\alpha_j$; $\alpha_k$; $\alpha_l$ & Weight of $i^{th}$ client, $j^{\mathrm{th}}$ VC, $k^{\mathrm{th}}$ sBS, and $l^{\mathrm{th}}$ mBS \\ \hline 
$\rho$ & Number of SGD rounds on $\mathbf{w}$ to get winning ticket \\ \hline
$\kappa_0$; $\kappa_1$; $\kappa_2$; $\kappa_3$ & Number of local, VC, sBS, and mBS rounds\\ \hline
$\mathbf{1}_i^t$; $p_i^t$ & Binary indicator function to define if sBS receives $i^{\mathrm{th}}$ client's trained model; probability that $\mathbf{1}_i^t=1$ \\ \hline 
$\beta$ & Smoothness of the loss functions \\ \hline
$\sigma^2$ & Bounded variance of the gradients \\ \hline
$\epsilon_{\cdot}^2$ & Bounded divergence of the loss functions of two inter-connected tiers\\ \hline
$G^2$; $D^2$ & Upper bound of the $L_2$-norm of stochastic gradients and model weights, respectively \\ \hline 
$\mathrm{f}_i^t$; $\mathrm{f}_i^{\mathrm{max}}$ & CPU clock cycle of $i$ during $t$; max CPU cycle of i \\ \hline
$b$, $n$ & Batch size; number of mini-batch\\ \hline
$c_i$; $\mathrm{D}_i$ & Required number of CPU cycle of $i$ to process $1$-bit data; each data sample size in bits \\ \hline
$\mathrm{FPP}$ & Floating point precision \\ \hline
$\mathrm{t}_i^{\mathrm{cp_d}}$; $\mathrm{e}_i^{\mathrm{cp_d}}$ & Time and energy overheads to get the lottery ticket \\ \hline
$\mathrm{t}_i^{\mathrm{cp_s}}$; $\mathrm{e}_i^{\mathrm{cp_s}}$ & Time and energy overheads
to compute $\kappa_0$ local SGD rounds with the pruned model \\ \hline 
$\mathrm{t}_i^{\mathrm{up}}$; $\mathrm{e}_i^{\mathrm{up}}$ & Time and energy overheads for offloading client $i$'s trained model \\ \hline
$\mathrm{t}_i^{\mathrm{tot}}$; $\mathrm{e}_i^{\mathrm{tot}}$ & Client $i$'s total time and energy overheads to finish one VC round \\ \hline
$\mathrm{t^{th}}$; $\mathrm{e}_i^{\mathrm{th}}$ & Time and energy budgets to finish one VC round \\ \hline
\end{tabular}
\label{tableOfNotations}
\end{table}

\section{System Model}
\label{sec_sysModel}
\subsection{Wireless Network Model}
\noindent
We consider a generic heterogeneous network (HetNet) consisting of some UEs, sBSs and macro BSs (mBSs), as shown in Fig. \ref{sysMod}.
Denote the UE, sBS and mBS sets by $\mathcal{U} \coloneqq \{u\}_{u=1}^{\mathrm{U}}$, $\mathcal{B} \coloneqq \{b\}_{b=1}^{\mathrm{B}}$ and $\mathcal{L} \coloneqq \{l\}_{l=1}^{ L }$, respectively.
Each UE and sBS are connected to one sBS and mBS, respectively.
The mBSs are connected to the central server. 
While the UEs communicate over wireless links with the sBS, the connections between the sBS and mBS and between the mBS and central server are wired.
Moreover, due to UEs' system heterogeneity, we consider that the sBS groups UEs with similar computation and battery powers into a virtual cluster (VC).

\begin{figure}[!t]
    \centering
    \includegraphics[width=0.4\textwidth]{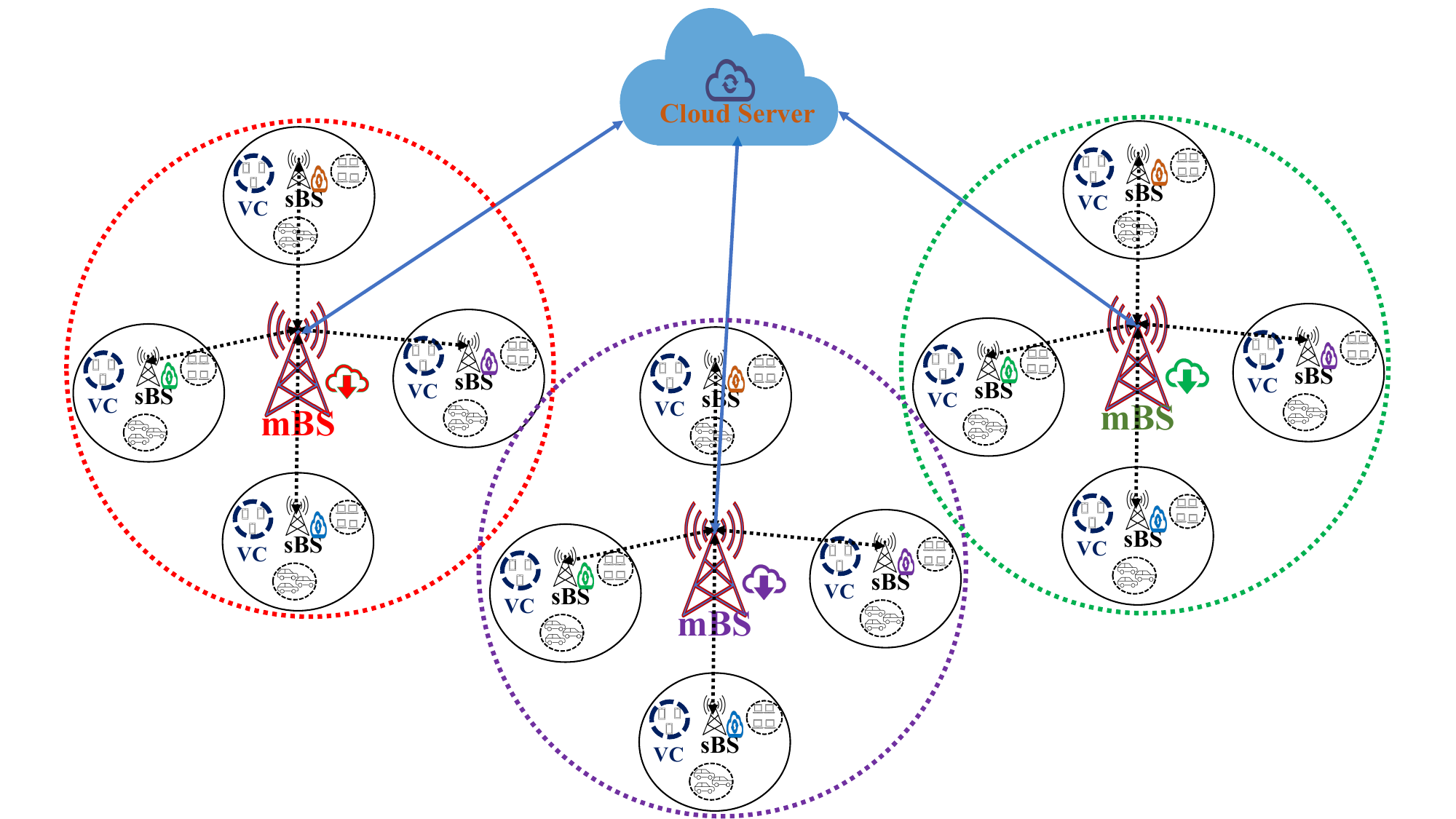}
    \caption{Pruning-enabled hierarchical FL system model}
    \label{sysMod}
\end{figure}

We can benefit from the VC since it enables one additional aggregation tier. 
Besides, thanks to the recent progress of the proximity-services in practical networks \cite{3GPP_TS_22_554}, one can also select a cluster head and leverage device-to-device communication to receive and distribute the models to the UEs in the same VC. 
However, in this work, we assume that the sBS creates the VC and also manages it.
We use the notation $\mathcal{B}_l \coloneqq \{k\}_{k=1}^{B_{l}}$, $\mathcal{V}_{k,l} \coloneqq \{j\}_{j=1}^{V_{k,l}}$ and $\mathcal{U}_{j,k,l} \coloneqq \{i\}_{i=1}^{U_{j,k,l}}$ to represent the sBS set associated to mBS $l$, VC set of sBS $k \in \mathcal{B}_l$ and UE set of VC $j \in \mathcal{V}_{k,l}$, respectively.
Moreover, denote the UEs associated with sBS $k$, mBS $l$ by $\mathcal{U}_{k,l} = \bigcup_{j=1}^{V_{k,l}}$ $\mathcal{U}_{j,k,l}$ and $\mathcal{U}_{l} = \bigcup_{k=1}^{B_{l}}\mathcal{U}_{k,l}$, respectively.
Finally, $\mathcal{U} = \bigcup_{l=1}^{L} \mathcal{U}_l$.
Note that we consider that these associations are known and provided by the network administrator.
The network has a fixed bandwidth divided into orthogonal physical resource blocks (pRBs) for performing the FL task. 
Denote the pRB set by $\mathcal{Z}\coloneqq \{z\}_{z=1}^Z$.
Each mBS reuses the same pRB set, i.e., the frequency reuse factor is $1$.
Besides, each mBS allocates dedicated pRBs to its associated sBSs.
Each sBS further uses dedicated pRBs to communicate with the associated UEs. 
As such, there is no intra-tier interference in the system.

To this end, denote the distance between UE $i$ and the sBS $k$ by $d_{i,k}$.
The wireless fading channel between the UE and sBS follows Rayleigh distribution\footnote{Rayleigh distribution is widely used for its simplicity. Other distributions are not precluded.}, and is denoted by $h_{i,k}^t$.
The transmission power of UE is denoted by $\mathrm{P}_{i}^t$.
As such, the uplink signal-to-noise-plus-interference ratio (SINR) is expressed as 
\begin{equation}
\label{ue_VC_sinr}
\begin{aligned}
    \gamma_{i,k}^t &= \mathrm{P}_{i}^t h_{i,k}^t d_{i,k}^{-\alpha} / (\omega \zeta^2 + I_{i,k}^t ),
\end{aligned}
\end{equation}
where $\alpha$ is the path loss exponent, $\zeta^2$ is the variance of the circularly symmetric zero-mean Gaussian distributed random noise and $\omega$ is the pRB size.
Moreover, $I_{i,k}^t = \sum_{l=1}^{L} \sum_{k'=1, k \neq k'}^{B_l} \sum_{j'=1}^{V_{k,l}} \sum_{i' \in \mathcal{U}_{j',k',l}} \mathrm{P}_{i'}^t h_{i',k'}^t d_{i',k'}^{-\alpha}$ is the inter-cell interference.
Thus, the data rate 
\begin{equation}
\begin{aligned}
    r_{i}^t & = \omega \log_2 \big[ 1 + \gamma_{i,k}^t \big]. 
\end{aligned}
\end{equation}

\subsection{Pruning-Enabled Hierarchical Federated Learning (PHFL)}
\label{subSecPHFL}
\noindent
In this work, we consider that each client uses mini-batch stochastic gradient descent (SGD) to minimize (\ref{ueLossFun}) over $n$ mini-batches since the gradient descent over the entire dataset is time-consuming.
Denote the stochastic gradient $g(\mathbf{w})$ such that $\mathbb{E}_{\xi_{n} \sim \mathcal{D}_i} [ g (\mathbf{w})] \coloneqq \nabla f_i(\mathbf{w})$, where $\xi_n$ is a randomly sampled batch from dataset $\mathcal{D}_i$.
However, computation with the original model $\mathbf{w} \in \mathbb{R}^d$ is still costly and may significantly extend the training time of a computationally limited client. 
To alleviate this, the UE trains a pruned model by removing some of the weights of the original model $\mathbf{w}$ \cite{lin2020Dynamic}.

Denote a binary mask by $\mathbf{m}_i \in \mathbb{R}^d$ and the pruned model by $\tilde{\mathbf{w}}_i \coloneqq \mathbf{w}_i \odot \mathbf{m}_i$, where $\odot$ means element-wise multiplication.
Note that training the pruned model $\tilde{\mathbf{w}}_i$ is computationally less expensive as it has fewer parameters than the original model. 
It is worth pointing out that the UE utilizes the state-of-the-art lottery ticket hypothesis \cite{franklelottery} to find the winning ticket $\tilde{\mathbf{w}}_i$ and the corresponding mask with the following key steps.
Denote the number of parameters required to be pruned by $d_p$. 
The UE performs $\rho$ local iterations on the original model $\mathbf{w}_i$ as
\begin{equation}
\label{lotTicket}
    \mathbf{w}_i^{\rho} = \mathbf{w}_i - \eta \sum\nolimits_{\bar{\rho}=1}^{\rho} g (\mathbf{w}_i^{\bar{\rho}}),
\end{equation}
where $\eta$ is the step size.
The UE then prunes $d_p$ entries of $\mathbf{w}_i^\rho \in \mathbb{R}^d$ with the smallest magnitudes\footnote{The time complexity to sort the $d$ parameters and then prune the $d_p$ smallest ones depends on the sorting technique.
Many sorting algorithms have logarithmic time complexity, which can be computed quickly in modern graphical processing units. 
Following the common practice in literature \cite{liu2021adaptive,Liu2022Joint,ren2022toward}, the overhead for pruning is ignored in this work. Moreover, the proposed method can be readily extended to incorporate the time overhead for pruning.} and generates a binary mask $\mathbf{m}_i \in \{0,1\}^d$.
To that end, the client obtains the winning ticket $\tilde{\mathbf{w}}_i$ by retaining the original weights of the corresponding nonzero entries of the mask $\mathbf{m}_i$ from the original initial model $\mathbf{w}_i$ \cite{franklelottery}. 
Note that other pruning techniques can also be adopted. 
Moreover, we denote the pruning ratio by \cite{Liu2022Joint} 
\begin{equation}
    \delta_{i} \coloneqq d_p/d.
\end{equation}
Given the pruned model $\tilde{\mathbf{w}}_i^{t,0} \coloneqq \mathbf{w}_i^t \odot \mathbf{m}_i^t$, the loss function of the UE is rewritten as
\begin{equation}
\label{ueLossFunPrune}
    f_{i}(\tilde{\mathbf{w}}_i^{t,0}) \coloneqq [1/ |\mathcal{D}_i|] \sum\nolimits_{(\mathbf{x}_a, y_a) \in \mathcal{D}_i} f(\tilde{\mathbf{w}}_i^{t,0}, \mathbf{x}_a, y_a),
\end{equation}
Each UE updates its pruned model as 
\begin{equation}
\label{localRound}
    \tilde{\mathbf{w}}_i^{t+1} 
    \coloneqq \tilde{\mathbf{w}}_i^{t,0} - \eta g (\tilde{\mathbf{w}}_i^{t,0}) \odot \mathbf{m}_i^t.
\end{equation}

As such, we denote the loss functions of the $j^{th}$ VC, $k^{th}$ sBS, $l^{th}$ mBS and central server as $f_j (\mathbf{w}_j) \coloneqq \sum\nolimits_{i=1}^{U_{j,k,l}}  \alpha_{i} f_i (\tilde{\mathbf{w}}_i)$, 
$f_{k} (\mathbf{w}_k)  \coloneqq  \sum\nolimits_{j=1}^{V_{k,l}}  \alpha_j f_j (\mathbf{w}_j)$,
$f_l (\mathbf{w}_l) \coloneqq  \sum\nolimits_{k=1}^{B_l}  \alpha_k f_k (\mathbf{w}_k)$, 
and $ f (\mathbf{w}) \coloneqq \sum\nolimits_{l=1}^{ L } \alpha_l f_l (\mathbf{w}_l)$, respectively. 
Note that for simplicity, we consider identical weights, i.e., $\alpha_i=1/U_{j,k,l}$, $\alpha_j=1/V_{k,l}$, $\alpha_k=1/B_l$ and $\alpha_l=1/L$, which can be easily adjusted for other weighting strategies. 
Besides, since aggregation happens at different times and different levels, we need to capture the time indices explicitly. 
Let each UE perform $\kappa_0$ local iterations before sending the updated model to the associated VC.
It is worth noting that the winning ticket and the corresponding binary mask are only obtained before these $\kappa_0$ local rounds begin. 
Besides, since training the original model is costly, it is reasonable to consider $\rho \ll \kappa_0$\footnote{In our simulation, we considered $\rho <\kappa_0$ and observed that even $\rho=1$ with $\kappa_0 \geq 5$ performed well.}.
Moreover, although the sBS-mBS and mBS-central server links are wired, communication and computation at these nodes incur additional burdens.
As such, we assume that each VC, sBS and mBS perform $\kappa_1$, $\kappa_2$ and $\kappa_3$ rounds, respectively, before sending the trained model to the respective upper layers.
Denote the indices of the current global round, mBS round, sBS round, VC round and UE's local round by $m$, $t_3$, $t_2$, $t_1$ and $t_0$, respectively.
Besides, similar to \cite{feng2022Mobility,xu2021adaptive}, let $t \coloneqq 
[\{ (m\kappa_3 + t_3) \kappa_2 + t_2 \}\kappa_1 + t_1 ]\kappa_0 + t_0$ denote the index of local update iterations.

If $t ~\textsl{mod}~ \kappa_0 = 0$, the UE receives the latest available model of its associated VC, i.e.,
\begin{equation}
\label{UE_Update_with_VC_Model}
    \mathbf{w}_i^{ \Bar{t}_0 } \gets \mathbf{w}_{j}^{\Bar{t}_0},
\end{equation}
where $\Bar{t}_0 = [\{ (m\kappa_3 + t_3) \kappa_2 + t_2 \}\kappa_1 + t_1 ]\kappa_0$.
The UE then computes the pruned model $\tilde{\mathbf{w}}_i^{\Bar{t}_0,0}$ and the binary mask $\mathbf{m}_i^{\Bar{t}_0}$. 
It then performs $\kappa_0$ local SGD rounds as
\begin{equation}
\label{UE_Update}
    \tilde{\mathbf{w}}_i^{\Bar{t}_0 + \kappa_0} = \tilde{\mathbf{w}}_i^{\Bar{t}_0,0} - \eta \sum\nolimits_{t_0=1}^{\kappa_0} g \big(\tilde{\mathbf{w}}_i^{\Bar{t}_0 + t_0, 0}\big) \odot \mathbf{m}_i^{\Bar{t}_0}. 
\end{equation}
Each VC $j$ performs $t_1=0, \dots, \kappa_1 - 1$ local rounds. 
When $(t_1+1) ~\textsl{mod}~ \kappa_1 = 0$, the VC's model gets updated by the latest available sBS model, i.e.,
\begin{equation}
\label{VC_Update_with_sBS_Model}
    \mathbf{w}_{j}^{\Bar{t}_1} \gets \mathbf{w}_{k}^{\Bar{t}_1},
\end{equation}
where $\Bar{t}_1 = \{ (m\kappa_3 + t_3) \kappa_2 + t_2 \}\kappa_1\kappa_0$.
Besides, between two VC rounds, the local model of the VC is updated as
\begin{align}
    &\mathbf{w}_{j}^{\Bar{t}_1 + (t_1+1) \kappa_0} = \sum\nolimits_{i=1}^{U_{j,k,l}} \alpha_i \tilde{\mathbf{w}}_{i}^{\Bar{t}_1 + (t_1+1) \kappa_0} 
    = \tilde{\mathbf{w}}_j^{\Bar{t}_1 + t_1 \kappa_0,0}  - \nonumber\\
    &\eta \rs \sum\nolimits_{i=1}^{U_{j,k,l}} \rs \big[\pmb{1}_i^{\Bar{t}_1 + t_1\kappa_0} \rs / \! p_i^{\Bar{t}_1 + t_1\kappa_0} \big] \alpha_i \sum\nolimits_{t_0=1}^{\kappa_0} \! g \big(\tilde{\mathbf{w}}_i^{\Bar{t}_1 + t_1\kappa_0 + t_0,0}\big) \odot \mathbf{m}_i^{\Bar{t}_1 + t_1 \kappa_0}, \label{VC_Update}
\end{align}
where $\tilde{\mathbf{w}}_j^{\Bar{t}_1 + t_1 \kappa_0,0} \coloneqq \sum_{i=1}^{U_{j,k,l}} \alpha_i \tilde{\mathbf{w}}_{i}^{\Bar{t}_1 + t_1\kappa_0,0}$ and $\pmb{1}_i^{\Bar{t}_1 + t_1 \kappa_0}$ is a binary indicator function that indicates whether the sBS receives the trained model of $i$ during the VC aggregation round $\Bar{t}_1 + (t_1+1) \kappa_0$ or not, and is defined as follows:
\begin{equation}
\label{suc_indicator}
\begin{aligned}
  \rs \pmb{1}_i^{\Bar{t}_1 + t_1 \kappa_0}  & \coloneqq 
         \begin{cases}
            1, \rs  & \text{with probability $p_i^{\Bar{t}_1 + t_1 \kappa_0}$}, \\
             0, \rs & \text{otherwise},
        \end{cases},
\end{aligned}
\end{equation}
where $p_i^{\Bar{t}_1 + t_1 \kappa_0}\rs$ is the probability of receiving the trained model over the wireless link and is calculated in the subsequent section (\textit{c.f.} (\ref{probSuc})).
Note that since the sBS has to receive the gradient over the wireless link, we use the binary indicator function $\pmb{1}_i^{\Bar{t}_1 + t_1 \kappa_0}\rs$ in (\ref{VC_Update}) as a common practice  \cite{pervej2023resource,feng2022Mobility}.

The sBS performs $t_2 \rs = \rs 0, \dots, \rs\kappa_2\rs - \! 1 \rs$ local rounds before updating its model.
When $(t_2 + 1) ~\textsl{mod} ~\kappa_2=0$, the sBS updates its local model with the latest available model at its associated mBS, i.e.,
\begin{equation}
\label{sBS_Update_with_mBS_Model}
    \mathbf{w}_{k}^{\Bar{t}_2} \gets \mathbf{w}_{l}^{\Bar{t}_2},
\end{equation}
where $\Bar{t}_2 = (m\kappa_3 + t_3)\kappa_2\kappa_1\kappa_0$.
In each sBS round $t_2$, the sBS updates its model as 
\begin{align}
    &\mathbf{w}_{k}^{\Bar{t}_2 + (t_2+1) \kappa_1\kappa_0} = \sum\nolimits_{j=1}^{V_{k,l}} \alpha_j \mathbf{w}_{j}^{\Bar{t}_2 + (t_2+1) \kappa_1\kappa_0}
    = \tilde{\mathbf{w}}_k^{\Bar{t}_2 + t_2 \kappa_1\kappa_0,0} - \nonumber\\
    &\eta \sum_{j=1}^{V_{k,l}} \alpha_j \sum_{t_1=0}^{\kappa_1-1} \sum_{i=1}^{U_{j,k,l}} \alpha_i \frac{\pmb{1}_i^{ \Bar{t}_2 + \Tilde{t}_2 } } {p_i^{\Bar{t}_2 + \Tilde{t}_2} } \sum_{t_0=1}^{\kappa_0} g \big( \tilde{\mathbf{w}}_i^{ \Bar{t}_2 + \Tilde{t}_2 + t_0, 0} \big) \odot \mathbf{m}_i^{\Bar{t}_2 + \Tilde{t}_2}, \label{sBS_Update}
\end{align} 
where $\tilde{\mathbf{w}}_k^{\Bar{t}_2 + t_2 \kappa_1\kappa_0,0} \coloneqq \sum_{j=1}^{V_{k,l}} \alpha_j \tilde{\mathbf{w}}_{j,k,l}^{\Bar{t}_2 + t_2 \kappa_1\kappa_0,0}$ and $\Tilde{t}_2 = (t_2 \kappa_1 + t_1)\kappa_0$.
Similarly, the mBS performs $t_3=0,\dots,\kappa_3-1$ local rounds before updating its local model with the latest available global model when $(t_3+1) ~ \textsl{mod}~ \kappa_3=0$, i.e.,
\begin{equation}
\label{mBS_Update_with_global_Model}
    \mathbf{w}_l^{m\kappa_3 \kappa_2\kappa_1\kappa_0} \gets \mathbf{w}^{m\kappa_3\kappa_2\kappa_1\kappa_0}.
\end{equation}
Moreover, between two mBS rounds, we can write 
\begin{align}
    &\mathbf{w}_l^{(m\kappa_3 + (t_3+1)) \kappa_2\kappa_1\kappa_0} 
    \rs = \tilde{\mathbf{w}}_l^{(m\kappa_3 + t_3) \kappa_2\kappa_1\kappa_0,0} \rs - \nonumber\\
    &\quad \eta \sum_{k=1}^{B_l} \rs \alpha_k \rs \sum_{t_2=0}^{\kappa_2-1} \sum_{j=1}^{V_{k,l}} \rs \alpha_j \sum_{t_1=0}^{\kappa_1-1} \sum_{i=1}^{U_{j,k,l}} \rs \alpha_i \frac{\pmb{1}_i^{\Bar{t}_3}} {p_i^{\Bar{t}_3}} \rs \sum_{t_0=1}^{\kappa_0} \rs g \big( \tilde{\mathbf{w}}_i^{\Bar{t}_3 + t_0,0} \big) \odot \mathbf{m}_i^{\Bar{t}_3}, \label{mBS_Update}
\end{align}    
where $\tilde{\mathbf{w}}_l^{(m\kappa_3 + t_3) \kappa_2\kappa_1\kappa_0,0} \coloneqq \sum_{k=1}^{B_l} \alpha_k \tilde{\mathbf{w}}_k^{(m\kappa_3 + t_3) \kappa_2\kappa_1\kappa_0,0}$ and $\Bar{t}_3 = [((m\kappa_3 + t_3)\kappa_2 + t_2)\kappa_1 + t_1]\kappa_0$.
Finally, the central server performs global aggregation by collecting the updated models from all mBSs as follows:
\begin{align}
    &\mathbf{w}^{(m + 1) \prod_{z=0}^3 \kappa_{z}} 
    = \tilde{\mathbf{w}}^{ m \prod_{z=0}^3 \kappa_z,0} \rs - \eta \sum\nolimits_{l=1}^{ L } \rs \alpha_l  \sum\nolimits_{t_3 = 0}^{\kappa_3-1} \sum\nolimits_{k=1}^{B_{l}} \rs \alpha_k \nonumber\\
    &\qquad \sum_{t_2 = 0}^{\kappa_2-1} \sum_{j=1}^{V_{k,l}} \rs \alpha_j \rs \sum_{t_1=0}^{\kappa_1-1} \sum_{i=1}^{U_{j,k,l}} \rs \alpha_i \frac{\pmb{1}_i^{\Bar{t}_3 }} {p_i^{\Bar{t}_3}} \sum_{t_0=1}^{\kappa_0} g \big( \tilde{\mathbf{w}}_i^{\Bar{t}_3 + t_0,0} \big) \odot \mathbf{m}_i^{\Bar{t}_3}, \label{global_Update}
\end{align} 
where $\tilde{\mathbf{w}}^{m\prod_{z=0}^3 \kappa_z,0} \coloneqq \sum_{l=1}^{ L } \alpha_l \tilde{\mathbf{w}}_l^{m \prod_{z=0}^3 \kappa_z,0}$.

The proposed PHFL process is summarized in Algorithm \ref{HFLalgo}.
\begin{algorithm}[t!]
\fontsize{8}{8}\selectfont
\SetAlgoLined 
\DontPrintSemicolon
\KwIn{Total global round $M$, initial model $\mathbf{w}^0$}
Synchronize all edge devices with the initial model $\mathbf{w}^0$ \;
\For{All global rounds $m=0$ to $M-1$}{
    \For{All mBS rounds $t_3 = 0, 1, \dots, \kappa_3-1$}{
        \For{All sBS rounds $t_2=0, 1, \dots, \kappa_2 - 1 $}{
            \For{All VC rounds $t_1=0,1,\dots, \kappa_1 - 1$}{
                \For {$i \in \mathcal{U}_{j,k,l}$ in parallel} {
                    UE receives the latest available model from the associated VC \;
                    Compute binary mask and get the winning ticket using lottery ticket hypothesis \;
                    \For{All local rounds $t_0=1,2, \dots, \kappa_0$}{
                        $t \gets [\{ (m\kappa_3 + t_3) \kappa_2 + t_2 \}\kappa_1 + t_1 ]\kappa_0 + t_0$ \;
                        UE updates local model using (\ref{localRound}) \;}
                    }
                sBS updates VC model using (\ref{VC_Update_with_sBS_Model}) and (\ref{VC_Update}) \; 
            }
            sBS update local cell model using (\ref{sBS_Update_with_mBS_Model}) and (\ref{sBS_Update}) \;
        }
        mBS updates local cell model using (\ref{mBS_Update_with_global_Model}) and (\ref{mBS_Update}) \;
    }
    Central server updates global model using (\ref{global_Update}) \;
}
\KwOut{Global ML model $\mathbf{w}^{M-1}$}
\caption{Pruning-Enabled Hierarchical FL}
\label{HFLalgo}
\end{algorithm}

\section{PHFL: Convergence Analysis}
\label{section_convBound}

\subsection{Assumptions}
\label{convergenceAssumptions}
\noindent
We make the following standard assumptions \cite{feng2022Mobility, wang2022demystifying,jiang2022model,Liu2022Joint,stich2018sparsified}:
\begin{enumerate}
    \item The loss functions are lower-bounded, i.e., $f(\mathbf{w}) \geq f_{\text{inf}}$.
    \item The loss functions are $\beta$-smooth, i.e., $\Vert \nabla f_i(\mathbf{w}) - \nabla f_i(\mathbf{w}') \Vert \leq \beta \Vert \mathbf{w} - \mathbf{w}' \Vert$.
    \item Mini-batch gradients are unbiased $\mathbb{E}_{\xi \sim \mathcal{D}_i} [g (\tilde{\mathbf{w}}_i)] = \nabla f_i(\tilde{\mathbf{w}}_i)$. Besides, the variance of the gradients is bounded, i.e., $\Vert g (\tilde{\mathbf{w}}_i) - \nabla f_i(\tilde{\mathbf{w}}_i) \Vert^2 \leq \sigma^2$.
    \item The divergence of the local, VC, sBS, mBS and global loss functions are bounded for all $i$, $j$, $k$, $l$ and $\mathbf{w}$, i.e.,
    \begin{align*} 
        &\sum\nolimits_{i=1}^{U_{j,k,l}} \alpha_i \Vert \nabla f_i (\mathbf{w}) - \nabla f_j (\mathbf{w}) \Vert^2 \leq \epsilon_{\mathrm{vc}}^2,\\
        & \sum\nolimits_{j=1}^{V_{k,l}} \alpha_j \Vert \nabla f_j (\mathbf{w}) - \nabla f_k (\mathbf{w}) \Vert^2 \leq \epsilon_{\mathrm{sbs}}^2,\\
        & \sum\nolimits_{k=1}^{B_l} \alpha_k \Vert \nabla f_k (\mathbf{w}) - \nabla f_l (\mathbf{w}) \Vert^2 \leq \epsilon_{\mathrm{mbs}}^2, \\
        & \sum\nolimits_{l=1}^{L} \alpha_l \Vert \nabla f_l (\mathbf{w}) - \nabla f (\mathbf{w}) \Vert^2 \leq \epsilon^2. 
    \end{align*}

    \item The stochastic gradients are independent of each other in different iterations.
    \item The stochastic gradients are bounded, i.e., $\mathbb{E} \Vert g(\mathbf{w}_i) \Vert^2 \leq  G^2$.
    \item The model weights are bounded, i.e., $\mathbb{E} \Vert \mathbf{w}_i \Vert^2 \leq D^2$.    
    \item The pruning ratio $\delta_i^t \in [0, \delta^{\mathrm{th}}]$, in which $0 < \delta^{\mathrm{th}} < 1$ and $\delta^{\mathrm{th}}$ is the maximum allowable pruning ratio, follows
    \begin{equation}
    \label{pruneRatio}
        \delta_i^t \geq \big\Vert \mathbf{w}_i^t - \tilde{\mathbf{w}}_i^{t,0} \big\Vert^2 / \Vert \mathbf{w}_i^t \Vert^2. 
    \end{equation}
\end{enumerate}

Since the updated global, mBS, sBS and VC models are not available in each local iteration $t$, similar to standard practice \cite{feng2022Mobility,wang2022demystifying,xu2021adaptive}, we assume the virtual copies of these models, denoted by $\bar{\mathbf{w}}^t$, $\bar{\mathbf{w}}_{l}^t$, $\bar{\mathbf{w}}_k^t$ and $\bar{\mathbf{w}}_j^t$, respectively, are available. 
Besides, we assume that the bounded divergence assumptions amongst the above loss functions also hold for these virtual models.
Moreover, analogous to our previous notations, we express $\tilde{\bar{\mathbf{w}}}^{t,0} \coloneqq \sum_{l=1}^{L} \alpha_l \sum_{k=1}^{B_l} \alpha_k \sum_{j=1}^{V_{k,l}} \alpha_j \sum_{i=1}^{U_{j,k,l}} \alpha_i \tilde{\mathbf{w}}_i^{t,0} = \sum_{u=1}^{\mathrm{U}} \alpha_u \tilde{\mathbf{w}}_u^{t,0}$ and $\tilde{\bar{\mathbf{w}}}^{0} \coloneqq \tilde{\bar{\mathbf{w}}}^{0,0}$.

\subsection{Convergence Analysis}
\noindent
Similar to existing literature \cite{feng2022Mobility,wang2022demystifying,xu2021adaptive,jiang2022model}, we consider the average global gradient norm as the indicator of the proposed PHFL algorithm's convergence.
As such, in the following, we seek an $\theta_{\mathrm{PHFL}}$-suboptimal solution such that $\frac{1}{T}\sum_{t=0}^{T-1} \Vert \sum_{u=1}^{\mathrm{U}} \alpha_u \nabla f_u (\tilde{\bar{\mathbf{w}}}^{t,0}) \odot \mathbf{m}_u^t \Vert^2 \leq \theta_{\mathrm{PHFL}}$ and $\theta_{\mathrm{PHFL}} \geq 0$. 
Particularly, we start with Theorem \ref{theorem_1} that requires bounding the differences amongst the models in different hierarchical levels.
These differences are first calculated in Lemma \ref{Lemma2} to Lemma \ref{Lemma5} and then plugged into Theorem \ref{theorem_1} to get the $\theta_{\mathrm{PHFL}}$-suboptimal bound in Corollary \ref{Theorem1_Convergence}.

\begin{Theorem}
\label{theorem_1}
When the assumptions in Section \ref{convergenceAssumptions} hold and $\eta \leq 1/\beta$, we have  
\begin{align}
\label{theorem_1_eqn}
    &\rs \theta_\mathrm{{PHFL}} \leq
    \mathcal{O} \bigg( \rs \frac{f(\tilde{\bar{\mathbf{w}}}^0) - f_{\mathrm{inf}} } {\eta T} \rs \bigg)  + \mathcal{O} \bigg( \rs \frac{\beta \eta \sigma^2}{\mathrm{U}} \! \bigg) + \underbrace{\mathcal{O} \big(\! \delta^{\mathrm{th}} \beta^2 D^2 \big)}_{\mathrm{pruning ~error}} + \nonumber\\
    & \underbrace{\mathcal{O} \big( \beta \eta G^2 \cdot \varphi_\mathrm{w,0}(\pmb{\delta},\pmb{\mathrm{f}}, \pmb{\mathrm{P}}) \big)}_{\mathrm{wireless ~ factor}} + \mathcal{O} \big( \beta^2 \big[\mathrm{L}_1 + \mathrm{L}_2 + \mathrm{L}_3 + \mathrm{L}_4 \big] \big),
\end{align}
where $\pmb{\delta}=\{ \delta_1^t, \dots, \delta_U^t \}_{t=0}^{T-1}$, $\pmb{\mathrm{f}}=\{\mathrm{f}_1^t, \dots, \mathrm{f}_U^t\}_{t=0}^{T-1}$, $\pmb{\mathrm{P}}=\{P_1^t,\dots, P_U^t\}_{t=0}^{T-1}$ and $\mathrm{f}_i^t$ is the $i^{\mathrm{th}}$ client's central processing unit (CPU) frequency in the wireless factor.
Besides, the terms $\mathrm{L}_1$, $\mathrm{L}_2$, $\mathrm{L}_3$ and $\mathrm{L}_4$ are  
\begin{align}
    &\varphi_\mathrm{w,0}(\pmb{\delta},\pmb{\mathrm{f}}, \pmb{\mathrm{P}}) = \frac{1}{T} \sum_{t=0}^{T-1} \sum_{l=1}^{L} \rs \alpha_{l}^2 \sum_{k=1}^{B_l} \rs \alpha_k^2 \sum_{j=1}^{V_{k,l}} \rs \alpha_j^2 \sum_{i=1}^{U_{j,k,l}} \rs \alpha_i^2 \bigg[\frac{1}{p_i^t} - 1 \bigg], \\
    &\mathrm{L}_1 = \frac{1}{T} \sum_{t=0}^{T-1} \sum_{l=1}^{L} \alpha_l \sum_{k=1}^{B_l} \alpha_k \sum_{j=1}^{V_{k,l}} \alpha_j \sum_{i=1}^{U_{j,k,l}} \alpha_i\mathbb{E} \Vert \bar{\mathbf{w}}_j^t - \tilde{\mathbf{w}}_i^{t} \Vert^2, \label{HL1}\\
    &\mathrm{L}_2 = [1/T] \sum\nolimits_{t=0}^{T-1} \sum\nolimits_{l=1}^{L} \rs \alpha_l \sum\nolimits_{k=1}^{B_l} \rs \alpha_k \sum\nolimits_{j=1}^{V_{k,l}} \rs \alpha_j \mathbb{E} \Vert \bar{\mathbf{w}}_k^t - \bar{\mathbf{w}}_j^t \Vert^2, \rs \rs \label{HL2}\\
    &\mathrm{L}_3 = [1/T] \sum\nolimits_{t=0}^{T-1} \sum\nolimits_{l=1}^{L} \alpha_l \sum\nolimits_{k=1}^{B_l} \alpha_k \mathbb{E} \Vert \bar{\mathbf{w}}_{l}^t - \bar{\mathbf{w}}_k^t \Vert^2, \label{HL3}\\
    &\mathrm{L}_4 = [1/T] \sum\nolimits_{t=0}^{T-1} \sum\nolimits_{l=1}^{L} \alpha_l \mathbb{E} \Vert \bar{\mathbf{w}}^t - \bar{\mathbf{w}}_{l}^t \Vert^2.\label{HL4}
\end{align}
\end{Theorem}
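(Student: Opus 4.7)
The plan is to start from the $\beta$-smoothness of the global loss $f$ applied to the virtual averaged sequence $\tilde{\bar{\mathbf{w}}}^{t,0}$, namely
\begin{equation*}
f(\tilde{\bar{\mathbf{w}}}^{t+1,0}) \leq f(\tilde{\bar{\mathbf{w}}}^{t,0}) + \langle \nabla f(\tilde{\bar{\mathbf{w}}}^{t,0}),\, \tilde{\bar{\mathbf{w}}}^{t+1,0} - \tilde{\bar{\mathbf{w}}}^{t,0}\rangle + \tfrac{\beta}{2}\Vert \tilde{\bar{\mathbf{w}}}^{t+1,0}-\tilde{\bar{\mathbf{w}}}^{t,0}\Vert^2.
\end{equation*}
Using the update formula (\ref{global_Update}) together with the identity $\tilde{\bar{\mathbf{w}}}^{t,0}=\sum_{u}\alpha_u \tilde{\mathbf{w}}_u^{t,0}$, the increment is (up to reindexing) a masked, randomly subsampled SGD step $-\eta\sum_u \alpha_u (\pmb{1}_u^t/p_u^t)\,g(\tilde{\mathbf{w}}_u^{t,0})\odot\mathbf{m}_u^t$. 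I would take expectation over both the stochastic gradients and the wireless indicators, exploiting the unbiasedness $\mathbb{E}[\pmb{1}_u^t/p_u^t]=1$ (Assumption~5) to turn the linear term into $-\eta\,\langle \nabla f(\tilde{\bar{\mathbf{w}}}^{t,0}),\,\sum_u \alpha_u \nabla f_u(\tilde{\mathbf{w}}_u^{t,0})\odot\mathbf{m}_u^t\rangle$, and the quadratic term into a gradient-norm plus two variance pieces: one of order $\sigma^2/\mathrm{U}$ from Assumption~3, and one of the form $\eta^2 G^2 \varphi_\mathrm{w,0}(\pmb{\delta},\pmb{\mathrm{f}},\pmb{\mathrm{P}})$ coming from $\mathrm{Var}(\pmb{1}_u^t/p_u^t)=1/p_u^t-1$ combined with Assumption~6. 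This is where the wireless factor enters; the dependence of $p_u^t$ on $\pmb{\delta},\pmb{\mathrm{f}},\pmb{\mathrm{P}}$ will be derived later (eq.~(\ref{probSuc})).

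Next I would decompose the inner product using the identity $2\langle a,b\rangle=\Vert a\Vert^2+\Vert b\Vert^2-\Vert a-b\Vert^2$ (or a Young-type inequality) to isolate the target quantity $\Vert \sum_u \alpha_u \nabla f_u(\tilde{\bar{\mathbf{w}}}^{t,0})\odot \mathbf{m}_u^t\Vert^2$. The residual term $\Vert \nabla f(\tilde{\bar{\mathbf{w}}}^{t,0})-\sum_u \alpha_u \nabla f_u(\tilde{\bar{\mathbf{w}}}^{t,0})\odot\mathbf{m}_u^t\Vert^2$ needs to be split into two contributions: (i) an unmasked piece $\Vert \nabla f(\tilde{\bar{\mathbf{w}}}^{t,0})-\sum_u \alpha_u \nabla f_u(\tilde{\mathbf{w}}_u^{t,0})\odot\mathbf{m}_u^t\Vert^2$, bounded via the $\beta$-smoothness (Assumption~2) by $\beta^2$ times the hierarchical mismatches $\mathbb{E}\Vert \tilde{\bar{\mathbf{w}}}^{t,0}-\tilde{\mathbf{w}}_u^{t,0}\Vert^2$, and (ii) a masking piece $\Vert \sum_u \alpha_u \nabla f_u(\tilde{\bar{\mathbf{w}}}^{t,0})\odot(\mathbf{1}-\mathbf{m}_u^t)\Vert^2$ which, by Assumption~8 (eq.~(\ref{pruneRatio})) and the $D^2$ bound (Assumption~7), contributes the pruning-error term $\delta^{\mathrm{th}}\beta^2 D^2$. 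The mismatch $\mathbb{E}\Vert \tilde{\bar{\mathbf{w}}}^{t,0}-\tilde{\mathbf{w}}_u^{t,0}\Vert^2$ is then expanded by inserting the four virtual anchors $\bar{\mathbf{w}}^t, \bar{\mathbf{w}}_l^t, \bar{\mathbf{w}}_k^t, \bar{\mathbf{w}}_j^t$ and applying the $L$-term Cauchy--Schwarz inequality, producing exactly $\mathrm{L}_1,\mathrm{L}_2,\mathrm{L}_3,\mathrm{L}_4$ as defined in (\ref{HL1})--(\ref{HL4}) after averaging in $t$.

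With these pieces assembled, I would rearrange, choose $\eta\leq 1/\beta$ so that the $\frac{\beta\eta^2}{2}$ coefficient in front of the gradient-norm square is dominated by $\eta/2$, telescope $f(\tilde{\bar{\mathbf{w}}}^{t+1,0})-f(\tilde{\bar{\mathbf{w}}}^{t,0})$ from $t=0$ to $T-1$, invoke $f\geq f_{\mathrm{inf}}$ (Assumption~1), and divide by $\eta T$. The resulting inequality produces exactly the five $\mathcal{O}$ terms claimed: the initial-suboptimality term, the $\beta\eta\sigma^2/\mathrm{U}$ SGD-variance term, the pruning error $\delta^{\mathrm{th}}\beta^2 D^2$, the wireless factor $\beta\eta G^2\varphi_\mathrm{w,0}$, and $\beta^2(\mathrm{L}_1+\mathrm{L}_2+\mathrm{L}_3+\mathrm{L}_4)$.

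The main obstacle I anticipate is the bookkeeping of the element-wise mask $\mathbf{m}_u^t$ when combined with smoothness: $\beta$-smoothness is stated for the un-masked gradient, so one has to carefully justify that $\Vert (\nabla f_i(\mathbf{w})-\nabla f_i(\mathbf{w}'))\odot \mathbf{m}\Vert\leq \Vert \nabla f_i(\mathbf{w})-\nabla f_i(\mathbf{w}')\Vert$ and that the ``missing'' entries contribute only through $(\mathbf{1}-\mathbf{m}_u^t)\odot\mathbf{w}$, which is where Assumption~8 is crucial. A secondary difficulty is establishing that the cross-term $\mathbb{E}[\pmb{1}_u^t g(\tilde{\mathbf{w}}_u^{t,0})/p_u^t]$ really telescopes cleanly under the conditional independence in Assumption~5 without entangling with the mask randomness; this likely requires conditioning on the filtration at time $t$ before taking the channel expectation. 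The four auxiliary lemmas advertised in the statement (Lemmas~2--5) are presumably precisely what delivers closed-form bounds on $\mathrm{L}_1$--$\mathrm{L}_4$ in terms of $\kappa_0,\kappa_1,\kappa_2,\kappa_3$ and the divergences $\epsilon_{\mathrm{vc}}^2,\epsilon_{\mathrm{sbs}}^2,\epsilon_{\mathrm{mbs}}^2,\epsilon^2$, so here I would defer their explicit evaluation to Corollary~\ref{Theorem1_Convergence} and only use them as black boxes in Theorem~\ref{theorem_1}.
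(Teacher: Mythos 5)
Your overall architecture matches the paper's: descent via $\beta$-smoothness at $\tilde{\bar{\mathbf{w}}}^{t,0}$, unbiasedness of $\pmb{1}_u^t/p_u^t$ for the linear term, a variance decomposition of the quadratic term yielding the $\sigma^2/\mathrm{U}$ and $\varphi_{\mathrm{w},0}$ pieces, a polarization identity to isolate the gradient norm, insertion of the four virtual anchors to produce $\mathrm{L}_1$--$\mathrm{L}_4$, and telescoping with $\eta\le 1/\beta$. Your anticipated ``obstacle'' about masked smoothness is real and is handled in the paper by an explicit additional assumption that all Section III-A assumptions continue to hold when binary masks are applied to the gradients.

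However, there is a genuine gap in how you derive the pruning-error term. You attribute $\mathcal{O}(\delta^{\mathrm{th}}\beta^2 D^2)$ to a ``masking piece'' $\Vert \sum_u \alpha_u \nabla f_u(\tilde{\bar{\mathbf{w}}}^{t,0})\odot(\mathbf{1}-\mathbf{m}_u^t)\Vert^2$ and claim it is controlled by Assumptions 7 and 8. This does not follow: Assumption 8 (eq.~(\ref{pruneRatio})) bounds $\Vert \mathbf{w}_i^t - \tilde{\mathbf{w}}_i^{t,0}\Vert^2/\Vert\mathbf{w}_i^t\Vert^2$, i.e.\ it constrains the \emph{weights} removed by the mask, and says nothing about the gradient entries living on the masked-out coordinates; smoothness relates gradients at two different points, not a masked and an unmasked gradient at the same point, so no $\beta^2$ can be extracted this way, and the only available bound on that piece would be the crude $G^2$ from Assumption 6. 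The paper avoids ever creating this term: because $\mathbf{m}\odot\mathbf{m}=\mathbf{m}$, the inner product $\langle\nabla f_u(\tilde{\bar{\mathbf{w}}}^{t,0}),\,g(\tilde{\mathbf{w}}_u^{t,0})\odot\mathbf{m}_u^t\rangle$ equals $\langle\nabla \tilde f_u(\tilde{\bar{\mathbf{w}}}^{t,0}),\,\tilde g(\tilde{\mathbf{w}}_u^{t,0})\rangle$ with \emph{both} arguments carrying the same mask, so the polarization residual is $\Vert\nabla\tilde f_u(\tilde{\bar{\mathbf{w}}}^{t,0})-\nabla\tilde f_u(\tilde{\mathbf{w}}_u^{t,0})\Vert^2\le\beta^2\Vert\tilde{\bar{\mathbf{w}}}^{t,0}-\tilde{\mathbf{w}}_u^{t,0}\Vert^2$ and no unmasked global gradient ever needs to be compared against a masked sum. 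The pruning error then arises in the anchor-insertion step: since the virtual anchors $\bar{\mathbf{w}}^t,\bar{\mathbf{w}}_l^t,\bar{\mathbf{w}}_k^t,\bar{\mathbf{w}}_j^t$ are un-pruned while the endpoints $\tilde{\bar{\mathbf{w}}}^{t,0}$ and $\tilde{\mathbf{w}}_u^{t,0}$ are pruned, the expansion generates, alongside $\mathrm{L}_1$--$\mathrm{L}_4$, extra gaps of the form $\mathbb{E}\Vert\tilde{\mathbf{w}}_i^t-\tilde{\mathbf{w}}_i^{t,0}\Vert^2\le\delta_i^t\,\mathbb{E}\Vert\mathbf{w}_i^t\Vert^2\le\delta^{\mathrm{th}}D^2$ at every hierarchy level, each carrying the $\beta^2$ prefactor from smoothness --- which is exactly where $\mathcal{O}(\delta^{\mathrm{th}}\beta^2 D^2)$ comes from. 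You would need to restructure your residual decomposition accordingly for the proof to close.
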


The proof of Theorem \ref{theorem_1} and the subsequent Lemmas are left in the supplementary materials.

\begin{Remark}
    The first term in (\ref{theorem_1_eqn}) is what we get for centralized learning, while the second term arises from the randomness of the mini-batch gradients \cite{bottou2018optimization}. 
    The third term appears from model pruning.
    Besides, the fourth term arises from the wireless links among the sBS and UEs. 
    It is worth noting that $\varphi_\mathrm{w,0} (\pmb{\delta},\pmb{\mathrm{f}}, \pmb{\mathrm{P}}) = 0$ when all $p_i^t$'s are $1$'s.
    Finally, the last term is due to the difference among the VC-local, sBS-VC, sBs-mBS and mBS-global model parameters, respectively, which are derived in the following. 
\end{Remark}

\begin{Remark}
    When the system has no pruning, i.e., all UEs use the original models, all $\delta_i^t=0$. Besides, under the perfect communication among the sBS and UEs, we have $p_i^t=1$. In such cases, the $\theta_{\mathrm{PHFL}}$-suboptimal bound boils down to 
    \begin{subequations}
    \label{Lemma1_No_Prune_Wireless}
    \begin{align}
        \theta_\mathrm{{PHFL}} 
        &\leq \mathcal{O} \big( (f(\bar{\mathbf{w}}^0) - f_{\mathrm{inf}})/ [\eta T]  \big) + \mathcal{O} \big( \beta \eta \sigma^2 / \mathrm{U} \big) + \nonumber\\
        &\Squad\qquad \mathcal{O} \big( \beta^2 [ \mathrm{L}_1 + \mathrm{L}_2 + \mathrm{L}_3 + \mathrm{L}_4 ]\big). \tag{\ref{Lemma1_No_Prune_Wireless}}
    \end{align}
    \end{subequations}
    Besides, the last term in (\ref{Lemma1_No_Prune_Wireless}) appears from the four hierarchical levels. 
    When $\mathrm{U}=1$ and there are no levels, i.e., $\mathrm{L}_1=\mathrm{L}_2=\mathrm{L}_3=\mathrm{L}_4=0$, the convergence bound is exactly the same as the original SGD with non-convex loss function \cite{wang2022demystifying}.
\end{Remark}

To that end, we calculate the divergence among the local, VC, sBS, mBS and global model parameters, and derive the corresponding pruning errors in each level in what follows.

\begin{Lemma}
\label{Lemma2}
When $\eta \leq 1/[2 \sqrt{10} \kappa_0 \beta]$, the average difference between the VC and local model parameters, i.e., the $\mathrm{L}_1$ term of (\ref{theorem_1_eqn}), is upper bounded as 
\begin{align}
\label{lem2}
    &\rs\rs\rs \frac{\beta^2}{T} \rs \sum_{t=0}^T \sum_{l=1}^{L} \rs \alpha_l \rs \sum_{k=1}^{B_l} \rs \alpha_k \rs \sum_{j=1}^{V_{k,l}} \rs \alpha_j \rs \sum_{i=1}^{U_{j,k,l}} \rs \alpha_i \mathbb{E} \left\Vert \bar{\mathbf{w}}_j^t - \tilde{\mathbf{w}}_i^t \right\Vert^2 \rs 
    \leq \mathcal{O} \big( \kappa_0 \eta^2 \beta^2 \sigma^2 \big) + \nonumber\\
    &\rs \mathcal{O} \big( \kappa_0^2 \eta^2 \beta^2 \epsilon_{\mathrm{vc}}^2 \big) + \mathcal{O} \big(\delta^{\mathrm{th}}  \beta^2 D^2 \big) + \mathcal{O} \big( \kappa_0 \eta^2 \beta^2 G^2 \cdot \varphi_{\mathrm{w, L}_1} \big), \rs
\end{align}
where $\varphi_{\mathrm{w, L}_1} \rs \! = \! \frac{1}{T} \sum_{l=1}^{L} \rs \alpha_l \! \sum_{k=1}^{B_l} \rs \alpha_k \sum_{j=1}^{V_{k,l}} \rs \alpha_j \sum_{i=1}^{U_{j,k,l}} \rs \alpha_i \sum_{t=0}^{T-1} (1/p_i^t - 1)$.
\end{Lemma}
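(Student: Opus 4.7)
The plan is to bound the per-client deviation $\mathbb{E}\Vert \bar{\mathbf{w}}_j^t - \tilde{\mathbf{w}}_i^t \Vert^2$ by decomposing it around the most recent UE-VC synchronization instant $\Bar{t}_0 \coloneqq \lfloor t/\kappa_0 \rfloor \kappa_0$, bounding each piece separately, and finally averaging over $t$ and the four-level hierarchical weights. By (\ref{UE_Update_with_VC_Model}) the client copies the VC model at $\Bar{t}_0$, so $\mathbf{w}_i^{\Bar{t}_0}=\bar{\mathbf{w}}_j^{\Bar{t}_0}$ and then $\tilde{\mathbf{w}}_i^{\Bar{t}_0,0}=\mathbf{w}_i^{\Bar{t}_0}\odot\mathbf{m}_i^{\Bar{t}_0}$. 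A standard Cauchy--Schwarz / Jensen split gives
\[
\Vert \bar{\mathbf{w}}_j^t - \tilde{\mathbf{w}}_i^t\Vert^2 \leq 3\Vert\bar{\mathbf{w}}_j^t-\bar{\mathbf{w}}_j^{\Bar{t}_0}\Vert^2 + 3\Vert\mathbf{w}_i^{\Bar{t}_0}-\tilde{\mathbf{w}}_i^{\Bar{t}_0,0}\Vert^2 + 3\Vert\tilde{\mathbf{w}}_i^{\Bar{t}_0,0}-\tilde{\mathbf{w}}_i^t\Vert^2.
\]
The middle pruning term is controlled directly by Assumption 8 together with the bounded-weights Assumption 7: $\mathbb{E}\Vert\mathbf{w}_i^{\Bar{t}_0}-\tilde{\mathbf{w}}_i^{\Bar{t}_0,0}\Vert^2\leq\delta_i^{\Bar{t}_0}\mathbb{E}\Vert\mathbf{w}_i^{\Bar{t}_0}\Vert^2\leq\delta^{\mathrm{th}}D^2$, yielding the $\mathcal{O}(\delta^{\mathrm{th}}\beta^2 D^2)$ term in (\ref{lem2}).

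For the client drift $\Vert\tilde{\mathbf{w}}_i^{\Bar{t}_0,0}-\tilde{\mathbf{w}}_i^t\Vert^2$, I would expand (\ref{UE_Update}) as at most $\kappa_0$ masked stochastic steps, use Jensen to pull the square inside the sum at the cost of a factor $\kappa_0$, and split each $g(\tilde{\mathbf{w}}_i^{\tau,0})$ into $\nabla f_i(\tilde{\mathbf{w}}_i^{\tau,0})+(g-\nabla f_i)$. Independence of mini-batch noise (Assumption 5) together with its bounded variance $\sigma^2$ (Assumption 3) yields the $\mathcal{O}(\kappa_0\eta^2\beta^2\sigma^2)$ piece. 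For the deterministic part, adding and subtracting $\nabla f_j(\bar{\mathbf{w}}_j^\tau)$ and using the VC-local divergence bound $\epsilon_{\mathrm{vc}}^2$ (Assumption 4) and smoothness (Assumption 2) produces the $\mathcal{O}(\kappa_0^2\eta^2\beta^2\epsilon_{\mathrm{vc}}^2)$ piece together with a residual term proportional to $\beta^2\mathbb{E}\Vert\bar{\mathbf{w}}_j^\tau-\tilde{\mathbf{w}}_i^\tau\Vert^2$ itself.

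For the VC drift $\Vert\bar{\mathbf{w}}_j^t-\bar{\mathbf{w}}_j^{\Bar{t}_0}\Vert^2$, I would expand via (\ref{VC_Update}) and invoke the standard unbiased-erasure identity: because $\pmb{1}_{i'}^{\tau}/p_{i'}^{\tau}$ is unbiased with second moment $1/p_{i'}^\tau$, splitting each term into its expectation plus a zero-mean reception noise gives a variance contribution $\kappa_0 G^2\sum_{i'}\alpha_{i'}^2\bigl(1/p_{i'}^\tau-1\bigr)$ via Assumption 6. Averaging over $t$ and collecting the hierarchical weights produces the $\mathcal{O}(\kappa_0\eta^2\beta^2 G^2\cdot\varphi_{\mathrm{w},\mathrm{L}_1})$ term, while the expected aggregate-gradient part is again bounded, exactly as in the client-drift step, by $\sigma^2$, $\epsilon_{\mathrm{vc}}^2$, and the same self-referential residual.

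The principal obstacle is precisely that self-referential residual: both the client-drift and VC-drift expansions feed back a contribution proportional to the average $\frac{\beta^2}{T}\sum_t\sum\alpha_l\alpha_k\alpha_j\alpha_i\mathbb{E}\Vert\bar{\mathbf{w}}_j^\tau-\tilde{\mathbf{w}}_i^\tau\Vert^2$ that we are trying to bound. Resolving this requires showing that with $\eta\leq 1/(2\sqrt{10}\kappa_0\beta)$ the coefficient of this self-term is strictly below $1$, so the recursion can be absorbed into the left-hand side and inverted to produce a closed-form bound. Carefully tracking the numerical constant in this contraction (together with the four-level $\alpha_{\cdot}$ weightings and the subtlety that the pruning masks $\mathbf{m}_i^{\Bar{t}_0}$ may change between synchronizations, so the pruning error must be absorbed only at each sync) is the delicate piece; once done, the four contributions add up to (\ref{lem2}).
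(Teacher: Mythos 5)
Your overall strategy --- trace back to the last synchronization instant $\Bar{t}_0$, isolate the pruning error there via Assumptions 7--8, extract the stochastic and wireless variance via independence and the second moment of $\pmb{1}_i^\tau/p_i^\tau$, and absorb the self-referential residual using $\eta \leq 1/[2\sqrt{10}\kappa_0\beta]$ --- matches the paper's proof in spirit, and the pruning, noise, wireless, and contraction steps are all handled correctly. However, your opening three-way split introduces a genuine gap. By separating $\Vert\bar{\mathbf{w}}_j^t-\bar{\mathbf{w}}_j^{\Bar{t}_0}\Vert^2$ and $\Vert\tilde{\mathbf{w}}_i^{\Bar{t}_0,0}-\tilde{\mathbf{w}}_i^t\Vert^2$ into two different squared norms, you destroy the cancellation of the \emph{common} deterministic drift. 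After you add and subtract $\nabla f_j(\bar{\mathbf{w}}_j^\tau)$ inside the client-drift term, the pieces $\nabla f_i(\tilde{\mathbf{w}}_i^{\tau,0})-\nabla f_i(\bar{\mathbf{w}}_j^\tau)$ and $\nabla f_i(\bar{\mathbf{w}}_j^\tau)-\nabla f_j(\bar{\mathbf{w}}_j^\tau)$ are indeed controlled by smoothness and $\epsilon_{\mathrm{vc}}^2$, but the leftover $\eta\sum_{\tau=\Bar{t}_0}^{t-1}\nabla \tilde{f}_j(\bar{\mathbf{w}}_j^\tau)$ has no partner to cancel against and can only be bounded by $\mathcal{O}(\kappa_0^2\eta^2 G^2)$; the identical leftover appears in your VC-drift term. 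After multiplying by $\beta^2$ this yields an extra $\mathcal{O}(\kappa_0^2\eta^2\beta^2 G^2)$ contribution that does not appear in (\ref{lem2}) and cannot be absorbed into any of its four terms (e.g., it does not vanish when $\epsilon_{\mathrm{vc}}=0$, whereas the stated bound's only $\kappa_0^2\eta^2$ term does).

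The paper avoids this by making only a \emph{two}-way split: the sync-point term $\Vert\tilde{\mathbf{w}}_j^{\Bar{t}_0,0}-\tilde{\mathbf{w}}_i^{\Bar{t}_0,0}\Vert^2$ (which carries both pruning errors) and a single norm containing the \emph{difference} of the two accumulated gradient sums, $\sum_{\tau}\tilde{g}(\tilde{\mathbf{w}}_i^{\tau,0})\pmb{1}_i^\tau/p_i^\tau-\sum_{i'}\alpha_{i'}\sum_{\tau}\tilde{g}(\tilde{\mathbf{w}}_{i'}^{\tau,0})\pmb{1}_{i'}^\tau/p_{i'}^\tau$. Only inside that joint norm does inserting $\pm\nabla\tilde{f}_i(\bar{\mathbf{w}}_j^\tau)$ and $\pm\sum_{i'}\alpha_{i'}\nabla\tilde{f}_{i'}(\bar{\mathbf{w}}_j^\tau)$ leave nothing but gradient \emph{differences}, each controlled by smoothness, by the $\epsilon_{\mathrm{vc}}^2$ divergence at the common point $\bar{\mathbf{w}}_j^\tau$, or by the pruning/self terms. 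To repair your argument, keep the two gradient accumulations together from the start; everything else in your outline then goes through as you describe.
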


\begin{Remark}
    In (\ref{lem2}), the first term comes from the statistical data heterogeneity, while the second term arises from the divergence between the local and VC loss functions. 
    The third term emanates from model pruning.
    Finally, the fourth term stems from the wireless links among the UEs and sBS.
\end{Remark}

\begin{Lemma}
\label{Lemma3}
When $\eta \leq 1/[2 \sqrt{10} \kappa_0 \kappa_1 \beta]$, the difference between the sBS model parameters and VC model parameters, i.e., the $\mathrm{L}_2$ term of (\ref{theorem_1_eqn}), is upper bounded as 
\begin{align}
\label{lem3}
    &\rs \rs \rs \rs \frac{\beta^2}{T} \sum_{t=0}^{T-1} \sum_{l=1}^{L} \rs \alpha_l \rs \sum_{k=1}^{B_l} \rs \alpha_k \rs \sum_{j=1}^{V_{k,l}} \rs \alpha_j \mathbb{E} \left\Vert \bar{\mathbf{w}}_k^t - \bar{\mathbf{w}}_j^t \right\Vert^2 \rs
    \leq \mathcal{O} \big(\beta^4 \kappa_0^4 \kappa_1^2 \eta^4 \epsilon_{\mathrm{vc}}^2 \big) + \nonumber\\
    &\mathcal{O} \big(\kappa_0^2 \kappa_1^2 \eta^2 \beta^2 \epsilon_{\mathrm{sbs}}^2 \big) + \mathcal{O} \big( \kappa_0 \kappa_1 \eta^2 \sigma^2 \beta^2 \big) + \mathcal{O } \big( \delta^{th} \beta^2 D^2 \big) + \nonumber\\
    &\mathcal{O} \big( \kappa_0^3 \kappa_1^2 \beta^4 \eta^4 G^2 \varphi_{\mathrm{w, L}_1} \big) + \mathcal{O} \big( \kappa_0 \kappa_1 \beta^2 \eta^2  \varphi_{\mathrm{w, L}_2} \big),
\end{align}
where $\varphi_{\mathrm{w, L}_2} \rs = \rs \frac{1}{T} \sum_{t=0}^{T-1} \sum_{l=1}^{L} \rs \alpha_l \! \sum_{k=1}^{B_l} \rs \alpha_k \! \sum_{j=1}^{V_{k,l}} \rs \alpha_j \sum_{i=1}^{U_{j,k,l}} \rs \alpha_i^2 ( 1/p_i^t - 1 )$.
\end{Lemma}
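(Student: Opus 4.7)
The plan is to mirror the structure of Lemma \ref{Lemma2}, but with one additional hierarchical tier (sBS$\leftrightarrow$VC) wrapped around it, invoking Lemma \ref{Lemma2} as a subroutine whenever the inner UE$\leftrightarrow$VC drift appears. I begin by anchoring at the most recent sBS synchronization time $\tau_2 \le t$, so that $t-\tau_2 \le \kappa_0\kappa_1$ and, by (\ref{sBS_Update_with_mBS_Model}), $\bar{\mathbf{w}}_{j'}^{\tau_2} = \bar{\mathbf{w}}_k^{\tau_2}$ for every $j' \in \mathcal{V}_{k,l}$. Since the virtual sBS identity $\bar{\mathbf{w}}_k^t = \sum_{j'}\alpha_{j'}\bar{\mathbf{w}}_{j'}^t$ holds, Jensen's inequality combined with $\|a-b\|^2 \le 2\|a\|^2+2\|b\|^2$ give
\begin{align*}
\mathbb{E}\|\bar{\mathbf{w}}_k^t-\bar{\mathbf{w}}_j^t\|^2 \le 2\sum_{j'}\alpha_{j'}\mathbb{E}\|\bar{\mathbf{w}}_{j'}^t-\bar{\mathbf{w}}_{j'}^{\tau_2}\|^2 + 2\mathbb{E}\|\bar{\mathbf{w}}_j^t-\bar{\mathbf{w}}_j^{\tau_2}\|^2,
\end{align*}
so the problem reduces to controlling the drift of a single VC model from its sBS-sync value over at most $\kappa_0\kappa_1$ iterations.

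Next I unroll (\ref{VC_Update}) from $\tau_2$ to $t$; the drift becomes $-\eta$ times a sum of at most $\kappa_0\kappa_1$ masked stochastic gradients weighted by $\alpha_i \pmb{1}_i^s/p_i^s$, and Cauchy--Schwarz brings out a $\kappa_0\kappa_1$ multiplier. Inside each gradient I telescope
\begin{align*}
g(\tilde{\mathbf{w}}_i^{s,0}) &= [g-\nabla f_i](\tilde{\mathbf{w}}_i^{s,0}) + \bigl(\nabla f_i(\tilde{\mathbf{w}}_i^{s,0})-\nabla f_i(\bar{\mathbf{w}}_j^{\tau_2})\bigr) + \bigl(\nabla f_i(\bar{\mathbf{w}}_j^{\tau_2})-\nabla f_j(\bar{\mathbf{w}}_j^{\tau_2})\bigr) \\
&\quad + \bigl(\nabla f_j(\bar{\mathbf{w}}_j^{\tau_2})-\nabla f_k(\bar{\mathbf{w}}_j^{\tau_2})\bigr) + \nabla f_k(\bar{\mathbf{w}}_j^{\tau_2}),
\end{align*}
and the common $\nabla f_k$ term cancels against the sBS side. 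By Assumption~5 (independent gradients) the noise piece leaves only its variance and produces $\mathcal{O}(\kappa_0\kappa_1\eta^2\sigma^2\beta^2)$; the $\epsilon_{\mathrm{sbs}}^2$ piece gives $\mathcal{O}(\kappa_0^2\kappa_1^2\eta^2\beta^2\epsilon_{\mathrm{sbs}}^2)$ directly; the smoothness piece is handled through $\beta^2\|\tilde{\mathbf{w}}_i^{s,0}-\bar{\mathbf{w}}_j^s\|^2 + \beta^2\|\bar{\mathbf{w}}_j^s-\bar{\mathbf{w}}_j^{\tau_2}\|^2$, whose first half is the UE--VC gap already bounded by Lemma \ref{Lemma2} and, after multiplication by the outer $\kappa_0^2\kappa_1^2\eta^2\beta^2$ factor, yields $\mathcal{O}(\beta^4\kappa_0^4\kappa_1^2\eta^4\epsilon_{\mathrm{vc}}^2)$ and $\mathcal{O}(\kappa_0^3\kappa_1^2\beta^4\eta^4 G^2 \varphi_{\mathrm{w,L_1}})$, while its second half is a self-referential $\mathcal{O}(\kappa_0^2\kappa_1^2\eta^2\beta^2)\cdot\mathbb{E}\|\bar{\mathbf{w}}_k^t-\bar{\mathbf{w}}_j^t\|^2$ piece to be absorbed by the step-size condition.

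The pruning mask is handled via $g\odot\mathbf{m}_i^s = g - g\odot(\mathbf{1}-\mathbf{m}_i^s)$; Assumptions~6, 7, 8 together with the pruning-ratio bound (\ref{pruneRatio}) then bound the residual by $\mathcal{O}(\delta^{\mathrm{th}}\beta^2 D^2)$. The wireless indicator $\pmb{1}_i^s/p_i^s - 1$ is zero-mean given the past with variance $1/p_i^s - 1$; squaring and using cross-term independence across different $i$'s leaves the $\alpha_i^2$ weighting characteristic of $\varphi_{\mathrm{w,L_2}}$, while the outer $\alpha_j,\alpha_k,\alpha_l$ remain linear because they enter only through deterministic averaging. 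Combined with the $G^2$ bound and the $\kappa_0\kappa_1\eta^2$ factor from unrolling, this yields $\mathcal{O}(\kappa_0\kappa_1\beta^2\eta^2\varphi_{\mathrm{w,L_2}})$.

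Finally, I average over $t = 0,\dots,T-1$ and invoke the step-size assumption $\eta \le 1/[2\sqrt{10}\,\kappa_0\kappa_1\beta]$ to absorb the self-referential $\kappa_0^2\kappa_1^2\eta^2\beta^2$ coefficient, so that the $\|\bar{\mathbf{w}}_k^t-\bar{\mathbf{w}}_j^t\|^2$ term can be moved to the left-hand side, producing the claimed bound. The principal obstacle is the careful two-level bookkeeping: the inner $\kappa_0$ local-SGD expansion must invoke Lemma \ref{Lemma2} with the correct anchor so that $\varphi_{\mathrm{w,L_1}}$ arrives with the exact $\kappa_0^3\kappa_1^2\eta^4\beta^4 G^2$ prefactor, and the outer $\kappa_1$ VC-round expansion must keep the fresh $\alpha_i^2$ weighting of $\varphi_{\mathrm{w,L_2}}$ (arising from zero-mean wireless randomness at this tier) separate from the $\alpha_i$ weighting inherited from $\varphi_{\mathrm{w,L_1}}$. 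Once these nested factorial patterns are tracked, the remaining calculations are routine applications of Cauchy--Schwarz and smoothness.
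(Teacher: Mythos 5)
Your overall architecture matches the paper's: anchor at the last sBS synchronization $\bar{t}_1$ where $\mathbf{w}_k^{\bar{t}_1}=\mathbf{w}_j^{\bar{t}_1}$, split off a pruning term of order $\mathcal{O}(\delta^{\mathrm{th}}D^2)$, handle the zero-mean wireless/SGD noise with the $\alpha_i^2$-weighted variance identity to get $\varphi_{\mathrm{w,L}_2}$, telescope the deterministic gradients through $\bar{\mathbf{w}}_j^\tau$ and $\bar{\mathbf{w}}_k^\tau$ to extract $\epsilon_{\mathrm{sbs}}^2$ and the Lemma~\ref{Lemma2}-type inner gap, and absorb the self-referential piece via the step-size condition. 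However, your very first step contains a genuine gap that breaks the argument: by applying $\Vert a-b\Vert^2\le 2\Vert a\Vert^2+2\Vert b\Vert^2$ to reduce $\mathbb{E}\Vert\bar{\mathbf{w}}_k^t-\bar{\mathbf{w}}_j^t\Vert^2$ to separate single-VC drifts $\mathbb{E}\Vert\bar{\mathbf{w}}_{j'}^t-\bar{\mathbf{w}}_{j'}^{\tau_2}\Vert^2$, you destroy exactly the cancellation you later rely on. Once each drift sits inside its own squared norm, the common deterministic part $-\eta\sum_\tau \nabla f_k(\cdot)$ of every drift can no longer ``cancel against the sBS side''; each squared drift is then of order $\eta^2\kappa_0^2\kappa_1^2\Vert\nabla f_k\Vert^2$, which after multiplying by $\beta^2$ leaves a residual $\mathcal{O}(\kappa_0^2\kappa_1^2\eta^2\beta^2 G^2)$ that appears nowhere in the claimed bound, is not self-referential, and cannot be absorbed by the step-size condition (under $\eta\le 1/[2\sqrt{10}\kappa_0\kappa_1\beta]$ it is a non-vanishing $\mathcal{O}(G^2)$ constant).

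The paper avoids this by keeping the two aggregated gradient sums inside a single norm: after splitting off the pruning term, the remaining quantity is $\mathbb{E}\Vert\sum_\tau[\sum_i\alpha_i\tilde{g}(\tilde{\mathbf{w}}_i^{\tau,0})\pmb{1}_i^\tau/p_i^\tau-\sum_{j'}\alpha_{j'}\sum_{i'}\alpha_{i'}\tilde{g}(\tilde{\mathbf{w}}_{i'}^{\tau,0})\pmb{1}_{i'}^\tau/p_{i'}^\tau]\Vert^2$, and only \emph{then} are the expected gradients added and subtracted, so that the noise part collapses via the identity $\sum_j\alpha_j\Vert x_j-\bar{x}\Vert^2=\sum_j\alpha_j\Vert x_j\Vert^2-\Vert\bar{x}\Vert^2$ (yielding the $\alpha_i^2$ weighting of $\varphi_{\mathrm{w,L}_2}$) and the deterministic part telescopes into the bounded-divergence term $\epsilon_{\mathrm{sbs}}^2$, the Lemma~\ref{Lemma2} gap, and a self-referential multiple of $\mathrm{L}_2$ itself. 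A related symptom of the same issue is your self-referential term: you obtain $\mathcal{O}(\kappa_0^2\kappa_1^2\eta^2\beta^2)\cdot\mathbb{E}\Vert\bar{\mathbf{w}}_j^s-\bar{\mathbf{w}}_j^{\tau_2}\Vert^2$ (a drift, generally of order $\eta^2\kappa_0^2\kappa_1^2G^2$), whereas the paper's recursion closes on $\mathbb{E}\Vert\bar{\mathbf{w}}_k^t-\bar{\mathbf{w}}_j^t\Vert^2$ itself, which is what makes the absorption step legitimate. To repair your proof, drop the initial Jensen separation and carry the difference of the two gradient accumulations as a single object until after the add-and-subtract of $\nabla\tilde{f}$.
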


\begin{Remark}
\label{rem_of_Lem2}
The first term in (\ref{lem3}) appears from the divergence of the loss functions of the clients and VC, while the second term stems from the divergence between the loss function of the VC and sBS. 
The rest of the terms are due to the statistical data heterogeneity, model pruning and wireless links, respectively. 
\end{Remark}

\begin{Lemma}
\label{Lemma4}
When $\eta \leq 1/[2\sqrt{14} \kappa_0 \kappa_1 \kappa_2 \beta]$, the average difference between the sBS and mBS model parameters, i.e., the $\mathrm{L}_3$ term of (\ref{theorem_1_eqn}), is upper bounded as
\begin{align}
\label{Lem4}
    &[\beta^2/T] \sum\nolimits_{t=0}^{T-1} \sum\nolimits_{l=1}^{L} \alpha_l \sum\nolimits_{k=1}^{B_l} \alpha_k \mathbb{E} \left\Vert \bar{\mathbf{w}}_l^t - \bar{\mathbf{w}}_k^t \right\Vert^2 \nonumber\\
    &\leq \mathcal{O} \big( \kappa_0^3 \kappa_1^2 \kappa_2^2 \eta^4 \beta^4 \epsilon_{\mathrm{vc}}^2 \big) + \mathcal{O} \big(\kappa_0^4 \kappa_1^4 \kappa_2^2 \eta^4 \beta^4 \epsilon_{\mathrm{sbs}}^2 \big) + \nonumber\\
    &\mathcal{O} \big( \kappa_0^2 \kappa_1^2 \kappa_2^2 \eta^2 \beta^4 \epsilon_{\mathrm{mbs}}^2 \big) + \mathcal{O} \big( \kappa_0\kappa_1\kappa_2 \eta^2 \beta^2 \sigma^2 \big) + \nonumber\\
    &\mathcal{O} \big( \delta^{\mathrm{th}} \beta^2 D^2 \big) + \mathcal{O} \big( \kappa_0^3 \kappa_1^2 \kappa_2^2 \eta^4 \beta^4 G^2 \cdot \varphi_{\mathrm{w, L}_1} \big) + \nonumber\\
    &\mathcal{O} \big( \kappa_0^2 \kappa_1^2 \kappa_2^2 \beta^4 \eta^4 \cdot \varphi_{\mathrm{w, L}_2} ) + \mathcal{O} \big( \kappa_0\kappa_1\kappa_2 \eta^2 \beta^2 G^2 \cdot \varphi_{\mathrm{w,L}_3} \big).
\end{align}
where $\varphi_{\mathrm{w,L}_3} \rs \!=\! \frac{1}{T} \sum_{l=1}^{L} \rs \alpha_l \sum_{k=1}^{B_l} \rs \alpha_k \sum_{j=1}^{V_{k,l}} \rs \alpha_j^2 \sum_{i=1}^{U_{j,k,l}} \rs \alpha_i^2 \sum_{t=0}^{T-1} \!(1\!/\!p_i^t - 1)$. 
\end{Lemma}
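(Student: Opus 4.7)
The plan is to follow the same template used for Lemmas \ref{Lemma2} and \ref{Lemma3}, but now over the longer window of length $\kappa_0 \kappa_1 \kappa_2$ that separates two consecutive mBS synchronizations. First, I would fix an iteration $t$ and write $\tau$ for the most recent mBS update index, so that $\bar{\mathbf{w}}_l^{\tau} = \bar{\mathbf{w}}_k^{\tau}$ (both sBS and mBS share the same starting point right after an mBS round). Using the update recursions \eqref{sBS_Update} and \eqref{mBS_Update}, I would expand $\bar{\mathbf{w}}_l^t - \bar{\mathbf{w}}_k^t$ as a telescoping sum over $(t-\tau)/\kappa_0$ completed VC rounds of the difference between the mBS aggregate of gradients (averaged across all sBSs under $l$) and the gradients actually accumulated at sBS $k$, each masked by $\mathbf{m}_i^{\cdot}$ and weighted by the $\pmb{1}_i^{\cdot}/p_i^{\cdot}$ factors.

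Next, I would apply Jensen's inequality to pull $\mathbb{E}\|\cdot\|^2$ inside the time sum, picking up a factor proportional to $\kappa_0\kappa_1\kappa_2$. The resulting per-step term then splits, via the standard add-and-subtract of $\nabla f_i(\tilde{\mathbf{w}}_i^{\cdot,0})$ and of $\nabla f_k, \nabla f_l$, into (a) a stochastic-noise piece controlled by $\sigma^2/(\text{effective batch})$, (b) a wireless piece that comes from $\mathrm{Var}(\pmb{1}_i^{\cdot}/p_i^{\cdot}) = 1/p_i^{\cdot} - 1$ and yields the $\varphi_{\mathrm{w},\mathrm{L}_3}$ term after the Cauchy–Schwarz rearrangement (noting the squared $\alpha_j^2\alpha_i^2$ weights that appear once the indicator variances are isolated), (c) a pruning piece bounded via Assumption 8, $\|\mathbf{w}_i^t - \tilde{\mathbf{w}}_i^{t,0}\|^2 \le \delta^{\mathrm{th}}\|\mathbf{w}_i^t\|^2 \le \delta^{\mathrm{th}} D^2$, and (d) a drift piece $\nabla f_i(\tilde{\mathbf{w}}_i^{\cdot,0}) - \nabla f_l(\bar{\mathbf{w}}_l^{\cdot})$. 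For (d), I would insert $\nabla f_j$, $\nabla f_k$ successively and invoke the three divergence bounds $\epsilon_{\mathrm{vc}}^2$, $\epsilon_{\mathrm{sbs}}^2$, $\epsilon_{\mathrm{mbs}}^2$ together with $\beta$-smoothness to trade the remaining gradient differences for model-parameter differences.

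Those leftover model-parameter differences are precisely what Lemmas \ref{Lemma2} and \ref{Lemma3} control. I would substitute the bounds \eqref{lem2} and \eqref{lem3} into the recursion, and carefully track the multiplicative factors: the extra length $\kappa_2$ of the mBS window contributes one factor of $\kappa_2^2$ from Jensen, an additional $\kappa_0\kappa_1$ from the inner gradient sum, and two more factors of $\eta^2\beta^2$ whenever a smoothness step converts a parameter drift into a gradient bound. Matching powers gives the announced scaling $\kappa_0^a \kappa_1^b \kappa_2^c \eta^d$ for each term: $\eta^4\beta^4$ on the heterogeneity terms that were upgraded once more, $\eta^2\beta^2$ on the freshly introduced $\epsilon_{\mathrm{mbs}}^2$ and $\sigma^2$ terms and on the new wireless term $\varphi_{\mathrm{w},\mathrm{L}_3}$, and an additive $\delta^{\mathrm{th}}\beta^2 D^2$ from (c). The stepsize condition $\eta \le 1/[2\sqrt{14}\kappa_0\kappa_1\kappa_2\beta]$ enters when the implicit $\mathbb{E}\|\bar{\mathbf{w}}_l^t-\bar{\mathbf{w}}_k^t\|^2$ term that appears on the right-hand side (via the $\nabla f_l(\bar{\mathbf{w}}_l^{\cdot})-\nabla f_k(\bar{\mathbf{w}}_k^{\cdot})$ step) can be absorbed into the left-hand side with a coefficient strictly less than $1$; the constant $14$ should come from the seven or so add-and-subtract terms, each contributing a factor of $2$ via $\|a+b\|^2 \le 2\|a\|^2 + 2\|b\|^2$.

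The main obstacle, as in Lemma \ref{Lemma3} but more pronounced here, is bookkeeping: the nested sums over $(l,k,j,i,t_2,t_1,t_0)$ interact non-trivially with the $\alpha_{\cdot}$ weights and with the $1/p_i^{\cdot}-1$ factors, and one must be careful to identify which squared weights $\alpha_j^2\alpha_i^2$ survive after the variance decomposition (those that do are what define $\varphi_{\mathrm{w},\mathrm{L}_3}$ as opposed to $\varphi_{\mathrm{w},\mathrm{L}_1}$ or $\varphi_{\mathrm{w},\mathrm{L}_2}$). A secondary subtlety is that the pruning term appears only \emph{once} at order $\delta^{\mathrm{th}}\beta^2 D^2$ rather than accumulating $\kappa_0\kappa_1\kappa_2$ copies, which requires using Assumption 8 inside the add-and-subtract at the beginning of each new local window rather than inside the gradient sum—mirroring the way Lemma \ref{Lemma2} isolates that term.
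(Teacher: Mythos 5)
Your proposal follows essentially the same route as the paper's proof: anchoring at the most recent mBS synchronization point where $\bar{\mathbf{w}}_l^{\bar{t}_2}=\bar{\mathbf{w}}_k^{\bar{t}_2}$ (so the pruning error enters only once via Assumption 8), splitting the accumulated gradient gap into a variance/wireless piece (which isolates the $\alpha_j^2\alpha_i^2$ weights defining $\varphi_{\mathrm{w,L}_3}$) and a drift piece handled by successive insertion of $\nabla f_j,\nabla f_k,\nabla f_l$, substituting Lemmas \ref{Lemma2}--\ref{Lemma3} for the residual lower-level parameter gaps, and absorbing the self-referencing $\mathbb{E}\Vert\bar{\mathbf{w}}_l^t-\bar{\mathbf{w}}_k^t\Vert^2$ term into the left-hand side under the stated stepsize condition (your accounting of the constant $2\sqrt{14}=\sqrt{56}$ from the add-and-subtract terms is consistent with the paper's coefficient $56\kappa_0^2\kappa_1^2\kappa_2^2\eta^2\beta^2$). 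The plan is correct and matches the paper's argument.
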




\begin{Lemma}
\label{Lemma5}
When $\eta \leq 1/[6\sqrt{2} \kappa_0 \kappa_1 \kappa_2 \kappa_3 \beta]$, the average difference between the global and the mBS models, i.e., the $\mathrm{L}_4$ term, is bounded as follows:
\begin{align}
\label{Lem5}
    &\rs \rs \rs [\beta^2/T] \! \sum\nolimits_{t=0}^{T-1} \rs \sum\nolimits_{l=1}^{L} \rs \alpha_l \mathbb{E} \left\Vert \bar{\mathbf{w}}^t - \bar{\mathbf{w}}_l^t \right\Vert^2 \rs
    \leq \mathcal{O} \big(\rs \kappa_0^4 \kappa_1^2 \kappa_2^2 \kappa_3^2 \eta^4 \beta^4 \epsilon_{\mathrm{vc}}^2 \! \big) + \nonumber\\
    &\mathcal{O} \big(\kappa_0^4 \kappa_1^4 \kappa_2^2 \kappa_3^2 \eta^4 \beta^4 \epsilon_{\mathrm{sbs}}^2 \big) + \mathcal{O} \big( \kappa_0^4 \kappa_1^4 \kappa_2^4 \kappa_3^2 \eta^4 \beta^6 \epsilon_{\mathrm{mbs}}^2 \big) + \nonumber\\
    & \rs \mathcal{O} \!(\! \kappa_0^2 \kappa_1^2 \kappa_2^2 \kappa_3^2 \beta^4 \eta^2 \epsilon^2 \! ) \! + \! \mathcal{O} (\kappa_0\kappa_1\kappa_2\kappa_3 \beta^2 \eta^2 \sigma^2 ) \! + \! \mathcal{O} \!\big( \! \delta^{\mathrm{th}} \beta^2 D^2 \! \big) \! + \! \nonumber\\
    &\mathcal{O} \big( \kappa_0^3 \kappa_1^2 \kappa_2^2 \kappa_3^2 \eta^4 \beta^4 G^2  \varphi_{\mathrm{w, L}_1} \big) \! + \! \mathcal{O} \big( \kappa_0^3 \kappa_1^3 \kappa_2^2 \kappa_3^2 \beta^4 \eta^4 \varphi_{\mathrm{w, L}_2} \big) + \nonumber\\
    &\rs\mathcal{O} \!(\! \kappa_0^3 \kappa_1^3 \kappa_2^3 \kappa_3^2 \eta^4 \beta^4 G^2 \varphi_{\mathrm{w,L}_3} \!)\! + \! \mathcal{O} \!(\! \kappa_0\kappa_1\kappa_2\kappa_3 \beta^2 \eta^2 G^2 \varphi_{\mathrm{w,L}_4} \!),\rs 
\end{align}
where $\varphi_{\mathrm{w,L}_4} \rs = \frac{1}{T} \sum_{l=1}^{L} \rs \alpha_l \! \sum_{k=1}^{B_l} \rs \alpha_k^2 \!\sum_{j=1}^{V_{k,l}} \rs \alpha_j^2 \! \sum_{i=1}^{U_{j,k,l}} \rs \alpha_i^2 \! \sum_{t = 0}^{T-1} \left(1\!/\!p_i^t - 1 \right)$.
\end{Lemma}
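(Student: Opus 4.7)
\medskip
\noindent\textbf{Proof proposal for Lemma \ref{Lemma5}.} The plan is to follow the same cascading scheme used in Lemmas \ref{Lemma2}--\ref{Lemma4} but applied one tier higher, and then to feed those three lemmas back in to close the recursion. First, I would introduce the shorthand $\tau_3 = \kappa_0\kappa_1\kappa_2\kappa_3$ and let $t^* = m\tau_3$ denote the most recent global-synchronization instant with $t^* \le t < t^*+\tau_3$, so that after (\ref{mBS_Update_with_global_Model}) we have $\bar{\mathbf{w}}^{t^*} = \bar{\mathbf{w}}_l^{t^*}$ for every $l$. Unrolling (\ref{mBS_Update}) for each $l$ and averaging with $\alpha_l$ produces $\bar{\mathbf{w}}^t$, so that the difference $\bar{\mathbf{w}}^t - \bar{\mathbf{w}}_l^t$ can be written as a weighted sum, across all clients under either the cluster $l$ or its ``complement'', of masked stochastic gradients $g(\tilde{\mathbf{w}}_i^{s,0}) \odot \mathbf{m}_i^s$ and of the indicator correction $(\pmb{1}_i^s/p_i^s - 1)$ accumulated over $s \in [t^*,t)$.

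Next I would apply Jensen's/Cauchy--Schwarz inequality to pull the $\tau_3$ time indices outside the square, and $\|a+b\|^2\le 2\|a\|^2 + 2\|b\|^2$ to split the resulting expression into an ``expected gradient'' part and a ``noise'' part. The noise part further decomposes into (i) the mini-batch variance contribution, bounded by Assumption 3 and yielding the $\kappa_0\kappa_1\kappa_2\kappa_3 \eta^2 \beta^2 \sigma^2$ term, and (ii) the wireless indicator variance, bounded via $\mathbb{E}[(\pmb{1}_i^s/p_i^s-1)^2]=1/p_i^s - 1$ together with Assumption 6, producing the $\varphi_{\mathrm{w,L}_4}$ term. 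For the expected-gradient part, I would add and subtract $\nabla f_i(\bar{\mathbf{w}}^s)$, $\nabla f_j(\bar{\mathbf{w}}^s)$, $\nabla f_k(\bar{\mathbf{w}}^s)$, $\nabla f_l(\bar{\mathbf{w}}^s)$ and $\nabla f(\bar{\mathbf{w}}^s)$, so that Assumption 2 ($\beta$-smoothness) converts each model-difference into one of the already-bounded quantities $\|\bar{\mathbf{w}}_j^s - \tilde{\mathbf{w}}_i^s\|^2$, $\|\bar{\mathbf{w}}_k^s - \bar{\mathbf{w}}_j^s\|^2$, $\|\bar{\mathbf{w}}_l^s - \bar{\mathbf{w}}_k^s\|^2$ and $\|\bar{\mathbf{w}}^s - \bar{\mathbf{w}}_l^s\|^2$, while Assumption 4 contributes one divergence term at each of the four levels, including the new $\epsilon^2$. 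The pruning error $\mathcal{O}(\delta^{\mathrm{th}}\beta^2 D^2)$ is extracted from the mask mismatch $\|\mathbf{w}_i^s - \tilde{\mathbf{w}}_i^{s,0}\|^2$ via Assumption 8 and Assumption 7.

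At that point the RHS contains the same $\mathbb{E}\|\bar{\mathbf{w}}^s - \bar{\mathbf{w}}_l^s\|^2$ that I am trying to bound, multiplied by a coefficient that scales like $\beta^2\eta^2\tau_3^2$; the stepsize restriction $\eta \leq 1/[6\sqrt{2}\kappa_0\kappa_1\kappa_2\kappa_3\beta]$ is chosen precisely so that this coefficient is at most $1/2$, letting me move it to the LHS and absorb it. The remaining model-difference terms are then replaced by the upper bounds of Lemmas \ref{Lemma2}--\ref{Lemma4}, whose coefficients after substitution gain extra factors of $\kappa_1\kappa_2\kappa_3$, $\kappa_2\kappa_3$ and $\kappa_3$ respectively, matching the stated exponents in (\ref{Lem5}). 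Averaging over $t\in\{0,\dots,T-1\}$ completes the argument.

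The main obstacle, as I see it, is bookkeeping rather than insight: carefully tracking the four nested sums $\sum_l\alpha_l\sum_k\alpha_k\sum_j\alpha_j\sum_i\alpha_i$, separating the ``own-cluster'' indices from the complementary ones when expanding $\bar{\mathbf{w}}^t - \bar{\mathbf{w}}_l^t$, and verifying that every cross-term produces either a bounded-divergence constant from Assumption 4 or a squared norm that is controlled by a previous lemma with the correct $\kappa$ multipliers. A related subtlety is that some wireless-factor coefficients of the earlier lemmas get ``diluted'' by extra $\alpha_\ell^2$ or $\alpha_k^2$ factors when integrated into the present recursion, which is why the final expression distinguishes $\varphi_{\mathrm{w,L}_1},\ldots,\varphi_{\mathrm{w,L}_4}$; I would verify each of these factors term by term before collecting into the $\mathcal{O}(\cdot)$ form reported in (\ref{Lem5}).
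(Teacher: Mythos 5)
Your proposal is correct and follows essentially the same route as the paper's proof: unroll from the last global synchronization instant, split via Jensen into a stochastic/wireless-indicator variance part (yielding the $\sigma^2$ and $\varphi_{\mathrm{w,L}_4}$ terms) and an expected-gradient part that is telescoped through the four hierarchy levels via $\beta$-smoothness and the bounded-divergence assumption, absorb the self-referencing $\mathbb{E}\Vert \bar{\mathbf{w}}^t - \bar{\mathbf{w}}_l^t\Vert^2$ term using the stated stepsize condition, and substitute the bounds of Lemmas \ref{Lemma2}--\ref{Lemma4} for the lower-level differences. The only cosmetic omission is that the paper also isolates the pruning mismatch $\Vert \tilde{\mathbf{w}}^{m\prod_z\kappa_z,0} - \tilde{\mathbf{w}}_l^{m\prod_z\kappa_z,0}\Vert^2$ at the synchronization instant as a separate $\mathcal{O}(\delta^{\mathrm{th}} D^2)$ term, which your mask-mismatch argument covers in substance.
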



Note that we have similar observations for (\ref{Lem4}) and (\ref{Lem5}) as in Remark \ref{rem_of_Lem2}. 
Now, using the above Lemmas, we find the final convergence rate in Corollary \ref{Theorem1_Convergence}.  

\begin{Corollary}
\label{Theorem1_Convergence}
When $\eta \leq 1/[6\sqrt{2} \kappa_0 \kappa_1 \kappa_2 \kappa_3 \beta]$, the $\theta_\mathrm{{PHFL}}$ bound of Theorem \ref{theorem_1} boils down to
\begin{align}
\label{Theorem1}
    \theta_\mathrm{{PHFL}} 
    &\leq \mathcal{O} \big( [f(\tilde{\bar{\mathbf{w}}}^0) - f_{\mathrm{inf}}]/ [\eta T] \big)  + \mathcal{O} ( \beta \eta \sigma^2 / \mathrm{U} ) + \nonumber\\
    &\mathcal{O} \big( \kappa_0^2 \eta^2 \beta^2 \epsilon_{\mathrm{vc}}^2 \big) + \mathcal{O} \big(\kappa_0^2 \kappa_1^2 \eta^2 \beta^2 \epsilon_{\mathrm{sbs}}^2 \big) + \nonumber\\
    &\mathcal{O} \big( \kappa_0^2 \kappa_1^2 \kappa_2^2 \eta^2 \beta^4 \epsilon_{\mathrm{mbs}}^2 \big) + \mathcal{O} \big(\kappa_0^2 \kappa_1^2 \kappa_2^2 \kappa_3^2 \beta^4 \eta^2 \epsilon^2 \big) + \nonumber\\
    &\underbrace{\mathcal{O} \big(\delta^{\mathrm{th}} \beta^2 D^2 \big)}_{\mathrm{pruning ~error}} + \underbrace{\mathcal{O} \big( \beta \eta G^2 \cdot \varphi_\mathrm{w,0}(\pmb{\delta},\pmb{\mathrm{f}}, \pmb{\mathrm{P}}) \big) }_{\mathrm{wireless ~ factor}}. 
\end{align} 
\end{Corollary}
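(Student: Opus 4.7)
The plan is to substitute the right-hand sides of Lemmas~\ref{Lemma2}--\ref{Lemma5} for $\beta^2\mathrm{L}_1$, $\beta^2\mathrm{L}_2$, $\beta^2\mathrm{L}_3$ and $\beta^2\mathrm{L}_4$ appearing in \eqref{theorem_1_eqn}, and then collect the resulting summands according to their structural type (centralized, mini-batch variance, divergence, pruning, wireless). The corollary's stepsize condition $\eta \leq 1/[6\sqrt{2}\kappa_0\kappa_1\kappa_2\kappa_3\beta]$ is the tightest among those required by Theorem~\ref{theorem_1} and the four lemmas, so all four lemmas can be applied simultaneously and the substitution is legal.

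The main reduction is a ``leading-order wins'' argument driven by the stepsize bound. For the divergence family indexed by $\epsilon_{\mathrm{vc}}^2$, Lemma~\ref{Lemma2} contributes $\mathcal{O}(\kappa_0^2\eta^2\beta^2\epsilon_{\mathrm{vc}}^2)$, while Lemmas~\ref{Lemma3}--\ref{Lemma5} contribute the same shape multiplied by additional factors $(\kappa_1\beta\eta)^{2}$, $(\kappa_1\kappa_2\beta\eta)^{2}$ and $(\kappa_1\kappa_2\kappa_3\beta\eta)^{2}$, each of which is bounded by a constant under the stepsize condition; thus only the Lemma~\ref{Lemma2} representative survives up to constants. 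The same telescoping reduces the remaining divergence families to the three single terms $\mathcal{O}(\kappa_0^2\kappa_1^2\eta^2\beta^2\epsilon_{\mathrm{sbs}}^2)$, $\mathcal{O}(\kappa_0^2\kappa_1^2\kappa_2^2\eta^2\beta^4\epsilon_{\mathrm{mbs}}^2)$ and $\mathcal{O}(\kappa_0^2\kappa_1^2\kappa_2^2\kappa_3^2\beta^4\eta^2\epsilon^2)$. The several $\mathcal{O}(\delta^{\mathrm{th}}\beta^2 D^2)$ contributions (one from Theorem~\ref{theorem_1} and one per lemma) collapse into a single copy, and within the mini-batch-variance family the Theorem~\ref{theorem_1} term $\mathcal{O}(\beta\eta\sigma^2/\mathrm{U})$ dominates the $\mathcal{O}(\kappa_0\cdots\kappa_r\beta^2\eta^2\sigma^2)$ residues from the lemmas after the same constant-factor absorption via $\kappa_0\kappa_1\kappa_2\kappa_3\beta\eta \leq 1/(6\sqrt{2})$.

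The step I expect to be the main obstacle is consolidating the wireless family. Theorem~\ref{theorem_1} already carries $\mathcal{O}(\beta\eta G^2\varphi_{\mathrm{w},0}(\pmb{\delta},\pmb{\mathrm{f}},\pmb{\mathrm{P}}))$, while Lemmas~\ref{Lemma2}--\ref{Lemma5} introduce four further variants $\varphi_{\mathrm{w},\mathrm{L}_1},\ldots,\varphi_{\mathrm{w},\mathrm{L}_4}$ that differ from $\varphi_{\mathrm{w},0}$ in the combination of per-tier weights $\alpha_l,\alpha_k,\alpha_j,\alpha_i$ (some appearing with first powers, others squared). Each such variant is accompanied by at least one extra factor of the form $(\kappa_{\cdot}\beta\eta)^{2m}$ with $m \geq 1$, which the stepsize condition reduces to a constant; treating the resulting $\varphi_{\mathrm{w},\mathrm{L}_m}$-type quantities as a common wireless factor of the same structural form as $\varphi_{\mathrm{w},0}$ (absorbing the weight-product constants into the $\mathcal{O}(\cdot)$) then identifies them all with the Theorem~\ref{theorem_1} representative. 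Merging the surviving terms yields the bound displayed in \eqref{Theorem1}; the delicate part is keeping the constant-absorption uniform in the network-topology parameters $L,B_l,V_{k,l},U_{j,k,l}$ so that the final statement reads purely in terms of $\varphi_{\mathrm{w},0}$.
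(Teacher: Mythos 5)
Your proposal is correct and follows essentially the same route as the paper: the paper offers no separate argument for the corollary beyond substituting Lemmas~\ref{Lemma2}--\ref{Lemma5} into the $\beta^2[\mathrm{L}_1+\mathrm{L}_2+\mathrm{L}_3+\mathrm{L}_4]$ term of Theorem~\ref{theorem_1} and absorbing every higher-order summand (each carrying at least one extra factor of the form $\kappa_0\cdots\kappa_r\beta\eta$) into the surviving representatives via the step-size condition $\eta\leq 1/[6\sqrt{2}\kappa_0\kappa_1\kappa_2\kappa_3\beta]$, which dominates all the individual lemma conditions. The delicate point you flag about folding $\varphi_{\mathrm{w},\mathrm{L}_1},\ldots,\varphi_{\mathrm{w},\mathrm{L}_4}$ into $\varphi_{\mathrm{w},0}$ (their weight exponents differ, so the absorbed constants depend on the tier sizes) is an imprecision already present in the paper's own statement, not a defect introduced by your argument.
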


\begin{Remark}
\label{ConvergenceRemark}
    In (\ref{Theorem1}), the third, fourth, fifth and sixth terms appear from the divergence between client-VC, VC-sBS, sBS-mBS and mBS-global loss functions, respectively.
\end{Remark}
\begin{Remark}
    In typical HFL with no model pruning, i.e., $\delta_u^t=0$, $\forall u \in \mathcal{U}$, $\mathcal{O} \big(\delta^{\mathrm{th}} \beta^2 D^2 \big)=0$.
    Besides, when the wireless links are ignored, the last term in (\ref{Theorem1}) becomes zero.
    In such a special case, Corollary \ref{Theorem1_Convergence} boils down to
    \begin{align}
    \theta_\mathrm{{PHFL}} 
    &\leq \mathcal{O} \big( [f(\tilde{\bar{\mathbf{w}}}^0) - f_{\mathrm{inf}}]/ [\eta T] \big)  + \mathcal{O} ( \beta \eta \sigma^2/\mathrm{U} ) + \nonumber\\
    &\mathcal{O} \big( \kappa_0^2 \eta^2 \beta^2 \epsilon_{\mathrm{vc}}^2 \big) + \mathcal{O} \big(\kappa_0^2 \kappa_1^2 \eta^2 \beta^2 \epsilon_{\mathrm{sbs}}^2 \big) + \nonumber\\
    &\mathcal{O} \big( \kappa_0^2 \kappa_1^2 \kappa_2^2 \eta^2 \beta^4 \epsilon_{\mathrm{mbs}}^2 \big) + \mathcal{O} \big(\kappa_0^2 \kappa_1^2 \kappa_2^2 \kappa_3^2 \beta^4 \eta^2 \epsilon^2 \big).
    \end{align}
\end{Remark}

\begin{Remark}
    When $\eta = \sqrt{\mathrm{U}/T}$, we have $T \geq 1/[72 \mathrm{U} \kappa_0^2 \kappa_1^2 \kappa_2^2 \kappa_3^2 \beta^2]$.
    With a sufficiently large $T$, when the trained model reception success probability is $1$ for all users in all time steps, we have $\theta_{\mathrm{PHFL}} \approx \mathcal{O} \big(1/\sqrt{\mathrm{U} T} \big) + \mathcal{O} \big(\delta^{\mathrm{th}} \beta^2 D^2 \big)$, where the second term comes from the pruning error. 
    Therefore, the proposed PHFL solution converges to the neighborhood of a stationary point of traditional HFL.
\end{Remark}

\section{Joint Problem Formulation and Solution}
\label{section_jointProbSol}

\noindent
Similar to existing literature\cite{yang2020energy, richeng2022communication, hosseinalipour2022multi, pervej2023resource}, we ignore the downlink delay in this paper since the sBS can utilize the higher spectrum and transmission power to broadcast the updated model.
Moreover, since the sBS-mBS and mBS-cloud server links are wired, we ignore the transmission delays for these links\footnote{The transmissions over the wired sBS-mBS and mBS-cloud server links happen in the backhaul, and the corresponding delays are quite small.
In order to calculate these delays, one should also consider the overall network loads, which are beyond the scope of this paper.}. 
Furthermore, since the sBS, mBS and the cloud server usually have high computation power, we also ignore the model aggregation and processing delays\footnote{The addition of $d$ parameters and then taking the average have a time complexity of $\mathcal{O}(d+1)$. 
With highly capable CPUs at the sBS, mBS, and central server, the corresponding time delays for parameter aggregation are usually small and therefore ignored in the literature \cite{xu2021adaptive,Liu2022jointUE,Liu2022Joint}.}.
Therefore, at the beginning of each VC round, i.e., $t \ni (m\prod_{z=0}^{\kappa_z} + t_0) ~ \textsl{mod} ~ \kappa_0 = 0$, we first calculate the required computation time for finding the lottery ticket as
\begin{equation}
\label{localCompLottery}
\begin{aligned}
    \mathrm{t}_{i}^{\mathrm{cp_d}} = \rho \times \left(b n c_{i} \mathrm{D}_{i}/ \mathrm{f}_{i}^t \right),
\end{aligned}    
\end{equation}
where $b$ is the batch size, $n$ is the number of batches, $c_{i}$ is the CPU cycles to process $1$-bit data, $\mathrm{D}_{i}$ is UE $u_{i}$'s each data sample's size in bits and $\mathrm{f}_{i}^t$ is the CPU frequency.
Upon finding the pruned model, each client performs $\kappa_0$ local iterations, which require the following computation time \cite{Liu2022Joint} 
\begin{equation}
\label{localComp_Sparse}
\begin{aligned}
    \mathrm{t}_{i}^{\mathrm{cp_s}} = \kappa_0 \times \left( b n (1 - \delta_{i}^t) c_{i} \mathrm{D}_{i} / \mathrm{f}_{i}^t\right).
\end{aligned}    
\end{equation}
To that end, the UE only offloads the non-zero weights along with the binary mask to the sBS.
As such, we calculate the uplink payload size of UE $i$ as follows\footnote{Note that one may send the non-pruned weights and the corresponding indices, which are unknown until the original initial model is trained for $\rho$ iterations. We consider an upper bound for the uplink payload, which will be used during the joint parameters optimization phase.}: 
\begin{equation}
\label{uplinkPayload}
\begin{aligned}
    s_{i} \leq d \big[1 - \delta_{i}^t\big] \left(\mathrm{FPP} + 1\right) + d,  
\end{aligned}
\end{equation}
where $\mathrm{FPP}$ is the floating point precision.
Note that, in (\ref{uplinkPayload}), we need $1$ bit to represent the sign of the entry.
Therefore, we calculate the uplink payload offloading delay as follows:
\begin{equation}
\label{offlodingDelay}
\begin{aligned}
    \mathrm{t}_{i}^{\mathrm{up}} \leq d\left[ \left(1 - \delta_{i}^t\right) \left(\mathrm{FPP} + 1 \right) + 1 \right] / r_{i}^t.
\end{aligned}
\end{equation}
As such, UE $i$'s total duration for local computing and trained model offloading is 
\begin{equation}
\label{ueTotalTime}
    \mathrm{t}_{i}^{\mathrm{tot}} \leq \mathrm{t}_{i}^{\mathrm{cp_d}} + \mathrm{t}_{i}^{\mathrm{cp_s}} + \mathrm{t}_{i}^{\mathrm{up}}.
\end{equation}

We now calculate the energy consumption for performing the model training, followed by the required energy for offloading the trained models. 
First, let us calculate the energy consumption to get the lottery ticket as 
\begin{equation}
\label{energyConLot}
\begin{aligned}
    \mathrm{e}_{i}^{\mathrm{cp_d}} = \rho \times  0.5\xi b n c_i \mathrm{D}_{i} (\mathrm{f}_{i}^t)^2,
\end{aligned}    
\end{equation}
where $0.5\xi$ is the effective capacitance of UE's CPU chip.
Similarly, we calculate the energy consumption to train $\kappa_0$ local iterations using the pruned model as 
\begin{equation}
\label{energyConSparse}
\begin{aligned}
    \mathrm{e}_{i}^{\mathrm{cp_s}} = \kappa_0 \times 0.5\xi b n ( 1 - \delta_{i}^t ) c_i \mathrm{D}_{i} (\mathrm{f}_{i}^t)^2.
\end{aligned}    
\end{equation}
Moreover, we calculate the uplink payload offloading energy consumption as follows: 
\begin{equation}
\label{energyConOffload}
\begin{aligned}
    \mathrm{e}_{i}^{\mathrm{up}} \leq d\left[ \left(1 - \delta_{i}^t\right) \left(\mathrm{FPP} + 1 \right) + 1 \right] P_{i}^t / r_{i}.  
\end{aligned}
\end{equation}
Therefore, the total energy consumption is calculated as  
\begin{equation}
\label{ueTotalEnergyCons}
    \mathrm{e}_{i}^{\mathrm{tot}} \leq \mathrm{e}_{i}^{\mathrm{cp_d}} + \mathrm{e}_{i}^{\mathrm{cp_s}} + \mathrm{e}_{i}^{\mathrm{up}}.
\end{equation}

\subsection{Problem Formulation}
\noindent
Denote the duration between VC aggregation $\mathrm{t}^{\mathrm{th}}$.
Then, we calculate the probability of successful reception of UE's trained model as follows:
\begin{align}
\label{probSuc}
    p_{i}^t 
    &= \mathrm{Pr} \big\{ \mathrm{t}_i^{\mathrm{tot}} \leq \mathrm{t}^{\mathrm{th}} \big\} 
    \! = \! \mathrm{Pr} \big\{\! s_i \! \leq \! r_i^t \big[ \mathrm{t}^{\mathrm{th}} \rs - \mathrm{t}_i^{\mathrm{cp_d}} \rs - \mathrm{t}_i^{\mathrm{cp_s}} \big] \rs \big\} 
    \nonumber\\
    &\rs = \mathrm{Pr} \big\{h_{i,k}^t \geq \big[ (2^{\chi_i^t} -1)(\omega \zeta^2 + I_{i,k}^t)/(\mathrm{P}_i^t d_{i,k}^{-\alpha}) \big] \big\} \nonumber\\
    &\overset{(a)}{=} \exp \big[- (2^{\chi_i^t} -1)(\omega \zeta^2 + I_{i,k}^t) / (\mathrm{P}_i^t d_{i,k}^{-\alpha}) \big],
\end{align}
where $\chi_i^t = \frac{d \mathrm{f}_i^t \left[ \left(1 - \delta_{i}^t\right) \left(\mathrm{FPP} + 1 \right) + 1 \right] } {\omega \left[ \mathrm{f}_i^t \mathrm{t^{th}} - b n c_i \mathrm{D}_i \left(\rho + \kappa_0 (1 - \delta_i^t) \right) \right] }$ and $(a)$ follows from the Rayleigh fading channels between the UE and the sBS.

Notice that the pruning ratio $\delta_i^t$, CPU frequency $\mathrm{f}_i^t$, transmission power $\mathrm{P}_i^t$ and the probability of successful model reception $p_i^t$ are intertwined.
More specifically, $p_i^t$ depends on $\delta_i^t$, $\mathrm{f}_i^t$ and $\mathrm{P}_i^t$, given that the other parameters remain fixed.
As such, we aim to optimize these parameters jointly by considering the controllable terms in our convergence bound in Corollary \ref{Theorem1_Convergence}.
Therefore, we focus on each VC round, i.e., the local iteration round $t$ at which $t \textsl{ mod } \kappa_0=0$.
Specifically, we focus on minimizing the error terms due to pruning and wireless links, which are given by
\begin{subequations}
\label{objFunc_Original}
\begin{align}
    \mathcal{O} \big(\delta^{\mathrm{th}} \beta^2 D^2 \big) + \mathcal{O} \big( \beta \eta G^2 \cdot \varphi_\mathrm{w,0}(\pmb{\delta},\pmb{\mathrm{f}}, \pmb{\mathrm{P}}) \big) . \tag{\ref{objFunc_Original}} 
\end{align}
\end{subequations}

\begin{Remark}
In the above expression, the first term appears from the pruning error ${\frac{12 \beta^2}{T} \sum_{t=0}^{T-1} \sum_{l=1}^{L}} \cdot \alpha_l \sum_{k=1}^{B_l} \alpha_k \sum_{j=1}^{V_{k,l}} \alpha_j \sum_{i=1}^{U_{j,k,l}} \alpha_i \big( 1 + 2\big\{\alpha_i \big[1 + \alpha_j \big(1 + \alpha_k \big\{ 1 + \alpha_l \big\} \big) \big] \big\} \big) \delta_i^t \Vert \mathbf{w}_i^t \Vert^2 \leq \mathcal{O} \big(\delta^{\mathrm{th}} \beta^2 D^2 \big)$, while the second term comes from the wireless factor $\frac{2 \beta \eta}{T}\sum_{l=1}^{L} \alpha_l^2 \cdot \sum_{k=1}^{B_l} \alpha_k^2 \sum_{j=1}^{V_{k,l}} \alpha_j^2 \sum_{i=1}^{U_{j,k,l}} \alpha_i^2 \sum_{t=0}^{T-1} \left(1/p_i^t - 1 \right) \mathbb{E} \big\Vert \tilde{g}\big(\tilde{\mathbf{w}}_i^{t,0}\big) \big\Vert^2 
\leq \mathcal{O} \big( \frac{\beta \eta G^2}{T} \sum_{l=1}^{L} \alpha_l^2 \sum_{k=1}^{B_l} \alpha_k^2 \sum_{j=1}^{V_{k,l}} \alpha_j^2 \sum_{i=1}^{U_{j,k,l}} \alpha_i^2 \sum_{t=0}^{T-1} \left(1/p_i^t - 1 \right) \big)$.
\end{Remark}

Based on the above observations, we consider a weighted combination of these two terms as our objective function to minimize the bound in (\ref{objFunc_Original}).
Using (\ref{probSuc}) in the wireless factor, we, therefore, consider the following objective function.
\begin{subequations}
\label{objFunc_Orig_1}
\begin{align}
    &\rs \rs \varphi^t(\pmb{\delta}^t, \pmb{\mathrm{f}}^t, \pmb{\mathrm{P}}^t) \rs = 
     \phi_1 \sum\nolimits_{l=1}^{L} \rs \alpha_l \sum\nolimits_{k=1}^{B_l} \rs \alpha_k \sum\nolimits_{j=1}^{V_{k,l}} \rs \alpha_j \sum\nolimits_{i=1}^{U_{j,k,l}}  \alpha_i \delta_i^{t} + \rs \tag{\ref{objFunc_Orig_1}} \\
     &\rs \rs \phi_2 \sum_{l=1}^{L} \rs \alpha_l \sum_{k=1}^{B_l} \rs \alpha_k \rs \sum_{j=1}^{V_{k,l}} \rs \alpha_j \rs \sum_{i=1}^{U_{j,k,l}} \rs \alpha_i \bigg[ \exp \bigg( \frac{(2^{\chi_i^t} - 1)(\omega \zeta^2 + I_{i,k}^t)} {\mathrm{P}_i^t d_{i,k}^{-\alpha}} \bigg) - 1 \bigg], \nonumber  
\end{align}
\end{subequations}
where $\phi_1$ and $\phi_2$ are two weights to strike the balance between the terms.
Note that the wireless factor is multiplied by the learning rate and gradient in (\ref{objFunc_Original}).
Typically, the learning rate is small. 
Besides, the gradient becomes smaller as the training progresses. 
As such, the wireless factor term is relatively small when $p_i^t > 0$ for all UEs and VC aggregation rounds. 
The model weights are non-negative. 
Furthermore, a larger pruning ratio $\delta_i^t$ can dramatically reduce the computation and offloading time, making the wireless factor $0$. 
However, as a higher pruning ratio means more model parameters are pruned, we wish to avoid making the $\delta_i^t$'s large to reduce the pruning-induced errors. 
The above facts suggest we put more weight on the pruning error term to penalize more for the $\delta_i^t$'s. 
As such, we consider $\phi_1 \gg \phi_2$.
However, in our resource-constrained setting, a small $\delta_i^t$ can prolong the training and offloading time, leading $p_i^t$ to be $0$, i.e., the sBS will not receive the local trained model. 
Therefore, although $\phi_2$ is small, we keep the wireless factor to ensure $p_i^t$ is never $0$.

Therefore, we pose the following optimization problem to configure the parameters jointly.
\begin{subequations}
\label{originalOptProb}
\begin{align}
    \underset{\pmb{\delta}^t, \pmb{\mathrm{f}}^t, \pmb{\mathrm{P}}^t} {\text{minimize }} &\quad  \varphi^t(\pmb{\delta}^t, \pmb{\mathrm{f}}^t, \pmb{\mathrm{P}}^t ) , \tag{\ref{originalOptProb}} \\
    \text{   s.t.} \quad \label{cons1} (C1) & \quad \mathrm{t}_i^{\mathrm{tot}} \leq \mathrm{t}^{\mathrm{th}}, \Squad ~ \forall i, \\
    \label{cons2} (C2) & \quad \mathrm{e}_i^{\mathrm{tot}} \leq \mathrm{e}_i^{\mathrm{th}}, \Squad \forall i, \\
    \label{cons3} (C3) & \quad 0 \leq \mathrm{f}_i^t \leq \mathrm{f}_i^{\mathrm{max}}, \qquad \qquad~ \forall i, \\
    \label{cons4} (C4) & \quad 0 \leq \mathrm{P}_i^t \leq P_{i}^{\mathrm{max}}, \qquad\qquad \forall i,\\
    \label{cons5} (C5) & \quad 0 \leq \delta_{i}^t \leq \delta^{\mathrm{th}}, \qquad\qquad ~~ \forall i,
\end{align}
\end{subequations}
where constraint $(C1)$ ensures that the completion of one VC round is within the required deadline.
Constraint $(C2)$ controls the energy expense to be within the allowable budget.
Besides, $(C3)$ and $(C4)$ restrict us from choosing the CPU frequency and transmission power within the UE's minimum and maximum CPU cycles and transmission power, respectively.
Finally, constraint $(C5)$ ensures the pruning ratio to be within a tolerable limit $\delta^{\mathrm{th}}$.

\begin{Remark}
We assume that clients' system configurations remain unchanged over time, while the channel state information (CSI) is dynamic and known at the sBS. 
The clients share their system configurations with their associated sBS.
The sBSs share their respective users' system configurations and CSI with the central server.
As such, problem (\ref{originalOptProb}) is solved centrally, and the optimized parameters are broadcasted to the clients.
Besides, problem (\ref{originalOptProb}) is non-convex with the multiplications and divisions of the optimization variables in the second term. 
Moreover, constraints (C$1$) and (C$2$) are not convex.
Therefore, it is not desirable to minimize this original problem directly.
In the following, we transform the problem into an approximate convex problem that can be solved efficiently.
\end{Remark}

\subsection{Problem Transformation}
\noindent
Let us define $A(\delta_i^t,\mathrm{f}_i, P_i) \coloneqq \exp \big[ (2^{\chi_i^t} - 1)(\omega \zeta^2 + I_{i,k}^t) /(\mathrm{P}_i^t d_{i,k}^{-\alpha}) \big]$. 
Given an initial feasible point set ($\delta_{i}^{t,q}$, $\mathrm{f}_{i}^{t,q}$, $\mathrm{P}_{i}^{t,q}$),
we perform a linear approximation of this non-convex expression as follows:
\begin{align}
\label{offloadProbApprox}
   \rs\rs A(\delta_i^t,\mathrm{f}_i, P_i) \rs
   &\approx \rs A(\delta_{i}^{t,q},\mathrm{f}_{i}^{t,q}, \mathrm{P}_{i}^{t,q}) \rs + \rs \nabla_{\delta_i^t} \rs \big[ A(\delta_{i}^{t,q},\mathrm{f}_{i}^{t,q}, \mathrm{P}_{i}^{t,q})\big] (\delta_i^t - \delta_{i}^{t,q}) \nonumber\\
   &\rs\rs + \nabla_{\mathrm{f}_i^t} \big[ A (\delta_{i}^{t,q},\mathrm{f}_{i}^{t,q}, \mathrm{P}_{i}^{t,q})\big] (\mathrm{f}_i^t - \mathrm{f}_{i}^{t,q}) + \nonumber\\
   &\rs \rs \nabla_{\mathrm{P}_i^t} \big[ A(\delta_{i}^{t,q},\mathrm{f}_{i}^{t,q}, \mathrm{P}_{i}^{t,q})\big] (\mathrm{P}_i^t - \mathrm{P}_{i}^{t,q}) 
   \coloneqq \tilde{A}(\delta_i^t,\mathrm{f}_i^t, \mathrm{P}_i^t), \rs
\end{align}
where $A(\delta_{i}^{t,q},\mathrm{f}_{i}^{t,q}, \mathrm{P}_{i}^{t,q}) = \exp \big[(2^{\chi_i^{t,q}} -1)(\omega \zeta^2 + \tilde{I}_{i,k}^t) /(\mathrm{P}_{i}^{t,q} d_{i,k}^{-\alpha}) \big]$, $\chi_i^{t,q} = \frac{d \mathrm{f}_{i}^{t,q} \left[ \left(1 - \delta_{i}^t\right) \left(\mathrm{FPP} + 1 \right) + 1 \right]} {\omega \left[\mathrm{f}_{i}^{t,q} \mathrm{t^{th}} - b n c_i \mathrm{D}_i \big(\rho + \kappa_0 (1 - \delta_{i}^{t,q}) \big) \right]}$ and $\tilde{I}_{i,k}^t=\sum_{l=1}^{L} \sum_{k'=1, k \neq k'}^{K} \sum_{j'=1}^{J_{k',l}} \sum_{u_{i'} \in \mathcal{U}_{j',k',l}} \mathrm{P}_{i'}^{t,q} h_{i',k'}  d_{i',k'}^{-\alpha}$.
Moreover, $\nabla_{\delta_i^t} \big[ A(\delta_{i}^{t,q},\mathrm{f}_{i}^{t,q}, \mathrm{P}_{i}^{t,q})\big]$, \newline  $\nabla_{\mathrm{f}_i^t} \big[ A(\delta_{i}^{t,q},\mathrm{f}_{i}^{t,q}, \mathrm{P}_{i}^{t,q})\big]$ and $\nabla_{\mathrm{P}_i^t} \big[ A(\delta_{i}^{t,q},\mathrm{f}_{i}^{t,q}, \mathrm{P}_{i}^{t,q})\big]$ are calculated in (\ref{nablaDelta}), (\ref{nablafreq}) and (\ref{nablaPower}), respectively.

\begin{figure*}[!t]
\begin{align}
    &\nabla_{\delta_i^t} \big[ A(\delta_{i}^{t,q},\mathrm{f}_{i}^{t,q}, \mathrm{P}_{i}^{t,q})\big] = \{\ln(2) 2^{\chi_i^{t,q}} d \mathrm{f}_{i}^{t,q} A(\delta_{i}^{t,q},\mathrm{f}_{i}^{t,q}, \mathrm{P}_{i}^{t,q}) (\omega \zeta^2 + \tilde{I}_{i,k}^t) [b n c_i D_i (\rho (\mathrm{FPP} + 1) - \kappa_0 ) - \nonumber\\
    &\Squad \qquad \mathrm{f}_{i}^{t,q} \mathrm{t^{th}} (\mathrm{FPP} + 1)    ] \}/ \{\omega \mathrm{P}_{i}^{t,q} d_{i,k}^{-\alpha} \times [\mathrm{f}_{i}^{t,q} \mathrm{t^{th}} - b n c_i D_i (\rho + \kappa_0(1 - \delta_{i}^{t,q}))]^2\}. \label{nablaDelta}\\
    &\nabla_{\mathrm{f}_i} \big[ A(\delta_{i}^{t,q},\mathrm{f}_{i}^{t,q}, \mathrm{P}_{i}^{t,q})\big] = \{-\ln(2) 2^{\chi_i^{t,q}} b n c_i d D_i A (\delta_{i}^{t,q},\mathrm{f}_{i}^{t,q}, \mathrm{P}_{i}^{t,q}) (\omega \zeta^2 + \tilde{I}_{i,k}^t) [(1-\delta_{i}^{t,q}) (\mathrm{FPP} + 1) + 1] \times \nonumber\\
    &\Squad \qquad (\rho + \kappa_0 [1-\delta_{i}^{t,q}])\} / \{\omega \mathrm{P}_{i}^{t,q} d_{i,k}^{-\alpha} \times [\mathrm{f}_{i}^{t,q} \mathrm{t^{th}} - b n c_i D_i (\rho + \kappa_0 (1 - \delta_{i}^{t,q}))]^2\}.  \label{nablafreq} \hfill\\
    &\nabla_{P_i} \big[ A(\delta_{i}^{t,q},\mathrm{f}_{i}^{t,q}, \mathrm{P}_{i}^{t,q})\big] = - A(\delta_{i}^{t,q},\mathrm{f}_{i}^{t,q}, \mathrm{P}_{i}^{t,q}) (2^{\chi_i^{t,q}}-1) (\omega \zeta^2 + \tilde{I}_{i,k}^t)/[(P_{i}^{t,q})^2 d_{i,k}^{-\alpha}]. \label{nablaPower}
\end{align}
\begin{equation}
\label{offlodingDelay_Approx}
\begin{aligned}
    \mathrm{t}_{i}^{\mathrm{up}} 
    &\approx \frac{d \big(2 + \mathrm{FPP} - (1 + \mathrm{FPP}) \delta_i^t \big)}{\omega\log_2 (1 + \mathrm{P}_{i}^{t,q} h_{i,k} d_{i,k}^{-\alpha} /[\omega \zeta^2 + \tilde{I}_{i,k}^t] ) } + \frac{-\ln(2) d h_{i,k} d_{i,k}^{-\alpha}[(1 - \delta_{i}^{t,q}) (\mathrm{FPP} + 1) +1] \times (\mathrm{P}_i^t - \mathrm{P}_{i}^{t,q})} {\omega \{\ln (1 + [\mathrm{P}_{i}^{t,q} h_{i,k} d_{i,k}^{-\alpha}] / [\omega \zeta^2 + \tilde{I}_{i,k}^t] )\}^2 (\omega \zeta^2 + \tilde{I}_{i,k}^t + \mathrm{P}_{i}^{t,q} h_{i,k} d_{i,k}^{-\alpha}) } \coloneqq \mathrm{\tilde{t}}_{i}^{\mathrm{up}}. 
\end{aligned}
\end{equation}
\begin{align}
\label{energyConOffload_Approx}
    \mathrm{e}_{i}^{\mathrm{up}} 
    &\approx \{ d \mathrm{P}_{i}^{t,q} [ \left(\mathrm{FPP} + 2 \right) - (\mathrm{FPP}+1) \delta_{i}^t ]\} / \{ \omega\log_2 (1 + \mathrm{P}_{i}^{t,q} h_{i,k} d_{i,k}^{-\alpha}/(\omega \zeta^2 + \tilde{I}_{i,k}^t ) ) \} + \qquad \quad \nonumber\\
    &\Squad \qquad \frac{ d[(1-\delta_{i}^{t,q}) (\mathrm{FPP}+1) + 1]  \bigg[\rs \log_2 \rs \bigg(\rs \rs 1 + \frac{\mathrm{P}_{i}^{t,q} h_{i,k} d_{i,k}^{-\alpha} } {\omega \zeta^2 + \tilde{I}_{i,k}^t } \rs \bigg)\rs - \frac{\mathrm{P}_{i}^{t,q} h_{i,k} d_{i,k}^{-\alpha}} {\ln(2)(\omega \zeta^2 + \tilde{I}_{i,k}^t + \mathrm{P}_{i}^{t,q} h_{i,k} d_{i,k}^{-\alpha}) } \rs \bigg]} {\omega \{\log_2 (1 + \mathrm{P}_{i}^{t,q} h_{i,k} d_{i,k}^{-\alpha} / (\omega \zeta^2 + \tilde{I}_{i,k}^t) )\}^2 } (\mathrm{P}_i^t - \mathrm{P}_{i}^{t,q}) = \mathrm{\tilde{e}}_{i}^{\mathrm{up}}. \rs\rs\rs
\end{align}
\vspace{0.01in}
\hrule{}
\end{figure*}

As such, we approximate (\ref{objFunc_Orig_1}) as follows:
\begin{align}
\label{objFunc_Approx}
    \rs \rs \tilde{\varphi}^t(\pmb{\delta}^t, \pmb{\mathrm{f}}^t, \pmb{\mathrm{P}}^t) 
    &= \sum\nolimits_{l=1}^{L} \rs \alpha_l \sum\nolimits_{k=1}^{B_l} \rs \alpha_k \sum\nolimits_{j=1}^{V_{k,l}} \rs \alpha_j \sum\nolimits_{i=1}^{U_{j,k,l}} \rs \alpha_i  \big(  \phi_1 \delta_i^{t} + \nonumber\\
    &\Squad\qquad \phi_2 \big[ \tilde{A}(\delta_i^t,\mathrm{f}_i^t, \mathrm{P}_i^t) - 1 \big] \big), 
\end{align}
where $\tilde{A}(\delta_i^t,\mathrm{f}_i^t, \mathrm{P}_i^t)$ is calculated in (\ref{offloadProbApprox}). 

We now focus on the non-convex constraints.
First, let us approximate the local pruned model computation time as 
\begin{equation}
\label{localComp_Sparse_Approx}
\begin{aligned}
   \rs\rs\rs \mathrm{t}_{i}^{\mathrm{cp_s}} 
    \rs & \approx \rs [\kappa_0 b n c_{i} \mathrm{D}_{i} \!/\! \mathrm{f}_{i}^{t,q}] (\! 1 - \delta_i^t  - (1 - \delta_{i}^{t,q}) (\mathrm{f}_i^t - \mathrm{f}_{i}^{t,q}) / \mathrm{f}_{i}^{t,q} ) \rs = \rs \mathrm{\tilde{t}^{cp_s}}_i\rs.\rs\rs\rs\rs\rs
\end{aligned}    
\end{equation}
Then, we approximate the non-convex uplink model offloading delay as shown in (\ref{offlodingDelay_Approx}).
Using a similar treatment, we write
\begin{align}
\label{energyConSparse_Approx}
    \mathrm{e}_{i}^{\mathrm{cp_s}} 
    &\approx \kappa_0 \xi b n c_i \mathrm{D}_{i} \mathrm{f}_{i}^{t,q} \big[(\delta_{i}^{t,q} - 0.5) \mathrm{f}_{i}^{t,q} - 0.5 \mathrm{f}_{i}^{t,q} \delta_i^t + \nonumber\\
    &\Mquad \qquad (1 - \delta_{i}^{t,q})\mathrm{f}_i^t \big] \coloneqq \mathrm{\tilde{e}}_{i}^{\mathrm{cp_s}}.
\end{align}
Similarly, we approximate the energy consumption for model offloading as shown in (\ref{energyConOffload_Approx}).

Therefore, we pose the following transformed problem 
\begin{subequations}
\label{optimTransformed}
\begin{align}
    \underset{\pmb{\delta}^t, \pmb{\mathrm{f}}^t, \pmb{\mathrm{P}}^t} {\text{minimize }} &\quad  \tilde{\varphi}^t  (\pmb{\delta}^t, \pmb{\mathrm{f}}^t, \pmb{\mathrm{P}}^t) , \tag{\ref{optimTransformed}} \\
    \text{   s.t.} \quad \label{trans_cons1} (\tilde{C}1) & \quad \mathrm{\tilde{t}}_{i}^{\mathrm{cp_d}} + \mathrm{\tilde{t}}_{i}^{\mathrm{cp_s}} + \mathrm{\tilde{t}}_{i}^{\mathrm{up}} \leq \mathrm{t}^{\mathrm{th}}, \\
    \label{trans_cons2} (\tilde{C}2) & \quad \mathrm{e}_{i}^{\mathrm{cp_d}} + \mathrm{\tilde{e}}_{i}^{\mathrm{cp_s}} + \mathrm{\tilde{e}}_{i}^{\mathrm{up}} \leq \mathrm{e}_i^{\mathrm{th}}, \\
    & \quad (\ref{cons3}), (\ref{cons4}), (\ref{cons5}),
\end{align}
\end{subequations}
where the constraints are taken for the same reasons as in the original problem. 
Besides, $\mathrm{\tilde{t}}_{i}^{\mathrm{cp_d}} = 2\rho b n c_{i} \mathrm{D}_{i} / \mathrm{f}_{i}^{t,q} - \rho b n c_{i} \mathrm{D}_{i}\mathrm{f}_i^t / (\mathrm{f}_{i}^{t,q})^{2}$.

Note that problem (\ref{optimTransformed}) is now convex and can be solved iteratively using existing solvers such as CVX \cite{diamond2016cvxpy}.
The key steps of our iterative solution are summarized in Algorithm \ref{SCA_Sol_Algo}.
Moreover, as (\ref{optimTransformed}) has $3\mathrm{U}$ decision variables and $5\mathrm{U}$ constraints, the time complexity of running Algorithm \ref{SCA_Sol_Algo} for $Q$ iterations is $\mathcal{O} \left(Q \times [(3\mathrm{U})^3 \times 5\mathrm{U} ] \right)$ \cite{che2014joint}.
While Algorithm \ref{SCA_Sol_Algo} yields a suboptimal solution and converges to a local stationary solution set of the original problem (\ref{originalOptProb}), SCA-based solutions are well-known for fast convergence \cite{sun2019optimal}.
Moreover, our extensive empirical study in the sequel suggests that the proposed PHFL solution with Algorithm \ref{SCA_Sol_Algo} delivers nearly identical performance to the upper bounded performance.

\begin{algorithm}[t!]
\fontsize{8}{8}\selectfont
\SetAlgoLined 
\DontPrintSemicolon
\KwIn{Initial feasible set $(\pmb{\delta}^{t,0}, \pmb{\mathrm{f}}^{t,0}, \pmb{\mathrm{P}}^{t,0})$, maximum iteration $Q$, precision level $\epsilon^{\mathrm{prec}}$; set $q=0$}
\nl{\textbf{Repeat}:} \;
\Indp {
    Solve (\ref{optimTransformed}) using $(\pmb{\delta}^{t,q}, \pmb{\mathrm{f}}^{t,q}, \pmb{\mathrm{P}}^{t,q})$ to get the optimized $(\pmb{\delta}^{t}, \pmb{\mathrm{f}}^{t}, \pmb{\mathrm{P}}^{t})$\;
    $q \gets q+1$ ;
    $\pmb{\delta}^{t,q} \gets \pmb{\delta}^{t}$ ;
    $\pmb{\mathrm{f}}^{t,q} \gets \pmb{\mathrm{f}}^{t}$ ;
    $ \pmb{\mathrm{P}}^{t,q} \gets \pmb{\mathrm{P}}^{t}$ \;
    }
\Indm \textbf{Until} converge with $\epsilon^{\mathrm{prec}}$ precision or $q=Q$ \;
\KwOut{Optimal $(\pmb{\delta}^t, \pmb{\mathrm{f}}^t, \pmb{\mathrm{P}}^t)$}
\caption{Iterative Joint Pruning Ratio, CPU Frequency and Transmission Power Selection Process}
\label{SCA_Sol_Algo}
\end{algorithm}

\section{Simulation Results and Discussions}
\label{section_sim_results}
\subsection{Simulation Setting}
\label{simu_setting}
\noindent 
For the performance evaluation, we consider $L=2$, $B=4$ and $\mathrm{U}=48$.
We let each sBS maintain $2$ VCs, where each VC has $6$ UEs.
In other words, we have $U_{j,k,l}=6, \forall j,k$ and $l$, $V_{k,l}=2, \forall k$ and $l$, and $B_l=2, \forall l$.  
We assume $\omega=1$ megahertz (MHz).
We randomly generate maximum transmission power $P^{\mathrm{max}}$, energy budget for each VC aggregation round $\mathrm{e^{th}}$, CPU frequency $\mathrm{f^{max}}$ and required CPU cycle to process per-bit data $c$, respectively, from $[23, 30]$ dBm, $[10, 13]$ Joules, $[1.8, 2.8]$ gigahertz (GHz) and $[20, 25]$ for these two VCs.
Therefore, all UEs in a VC have the above randomly generated system configurations\footnote{Our approach can easily be extended where all clients can have random $\mathrm{f}_i^{\mathrm{max}}$'s, $\mathrm{e}_i^{\mathrm{th}}$'s and $P_i^{\mathrm{max}}$'s. 
Our approach is practical since these parameters depend on the clients' manufacturers and their specific models.}.
Moreover, as described earlier in Section \ref{sec_sysModel}, our proposed PHFL has $4$ tiers, namely ($1$) UE-VC, ($2$) VC-sBS, ($3$) sBS-mBS, and ($4$) mBS-central server.

For our ML task, we use image classification with the popular CIFAR-$10$ and CIFAR-$100$ datasets \cite{krizhevsky2009learning} for performance evaluation.
We use symmetric Dirichlet distribution $\mathrm{Dir}(\bar{\alpha})$ with concentration parameter $\bar{\alpha}$ for the non-IID data distribution as commonly used in literature \cite{richeng2022communication, pervej2023resource}.
Besides, we use $1)$ convolutional neural network (CNN), $2)$ residual network (ResNet)-$18$ \cite{he2016deep} and $3)$ ResNet-$34$ \cite{he2016deep}.
The CNN model has the following architecture:  $\tt{Conv2d(3,128)}$, $\tt{MaxPool2d}$, $\tt{Conv2d(128, 64)}$, $\tt{MaxPool2d}$, $\tt{Linear(256, 256)}$, $\tt{Linear(256, \#Labels)}$, whereas the ResNets have a similar architecture as in the original paper \cite{he2016deep}. 
Moreover, the total number of trainable parameters depends on various  configurations, such as the input/output shapes, kernel sizes, strides, etc. 
In our implementation, the original CNN, ResNet-$18$ and ResNet-$34$ models, respectively, have $151,882$; $6,992,138$ and $12,614,794$ trainable parameters on CIFAR-$10$, and $175,012$; $7,038,308$ and $12,660,964$ trainable parameters on CIFAR-$100$.
Besides, with $\mathrm{FPP}=32$, we have a wireless payload of about $5.01$ megabits (Mbs), $230.7$ Mbs and $416.3$ Mbs for CIFAR-$10$, and $5.8$ Mbs, $232.3$ Mbs and $417.8$ Mbs for CIFAR-$100$ datasets for the respective three original models.

\begin{figure*}
\begin{subfigure}{0.33\textwidth}
    \centering
    \includegraphics[trim=18 5 40 25, clip, width=\textwidth, height=0.16\textheight]{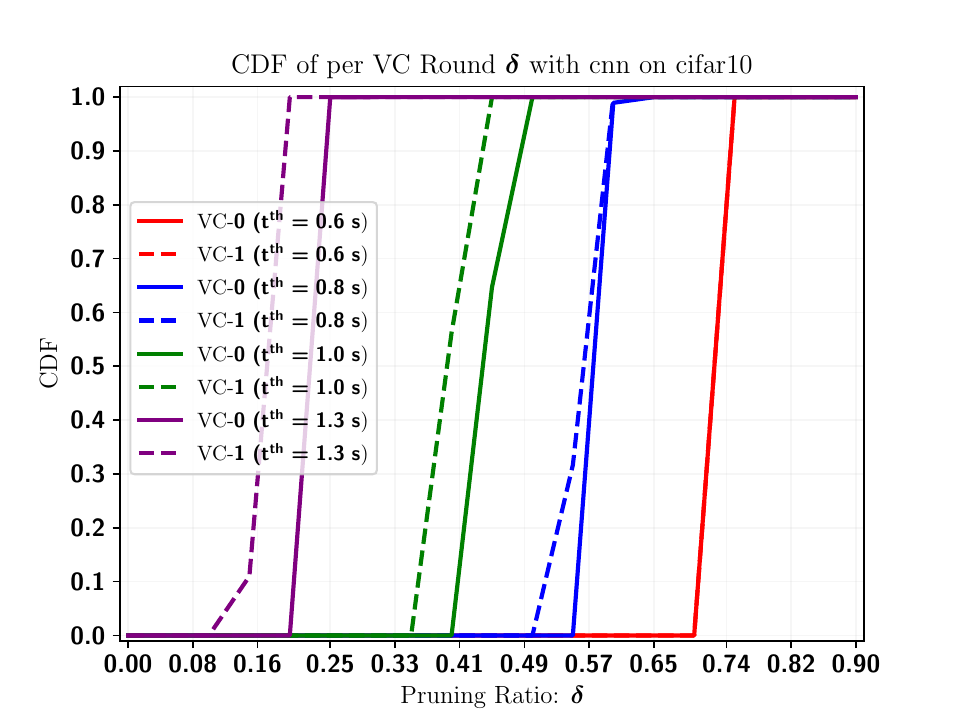}
    \caption{CDF of $\delta_u^t$'s with CNN}
    \label{cdfDelta5222cifar10cnn}
\end{subfigure} 
\begin{subfigure}{0.33\textwidth}
    \centering
    \includegraphics[trim=18 5 40 25, clip, width=\textwidth, height=0.16\textheight]{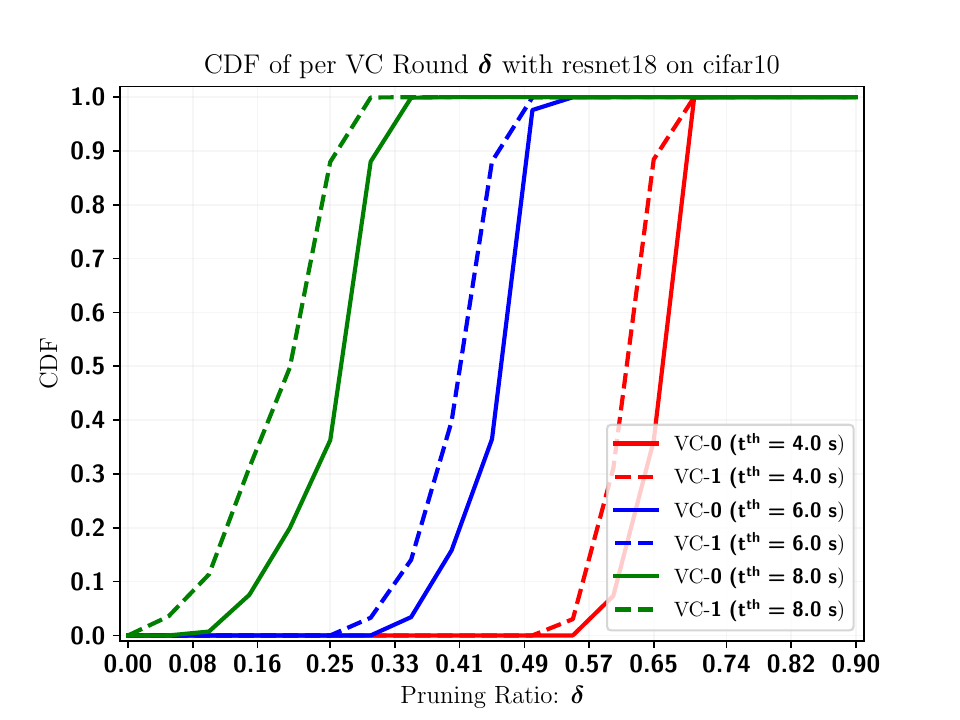}
    \caption{CDF of $\delta_u^t$'s with ResNet-$18$}
    \label{cdfDelta5222cifar10Resnet18}
\end{subfigure} 
\begin{subfigure}{0.325\textwidth}
    \centering
    \includegraphics[trim=18 5 40 25, clip, width=\textwidth, height=0.16\textheight]{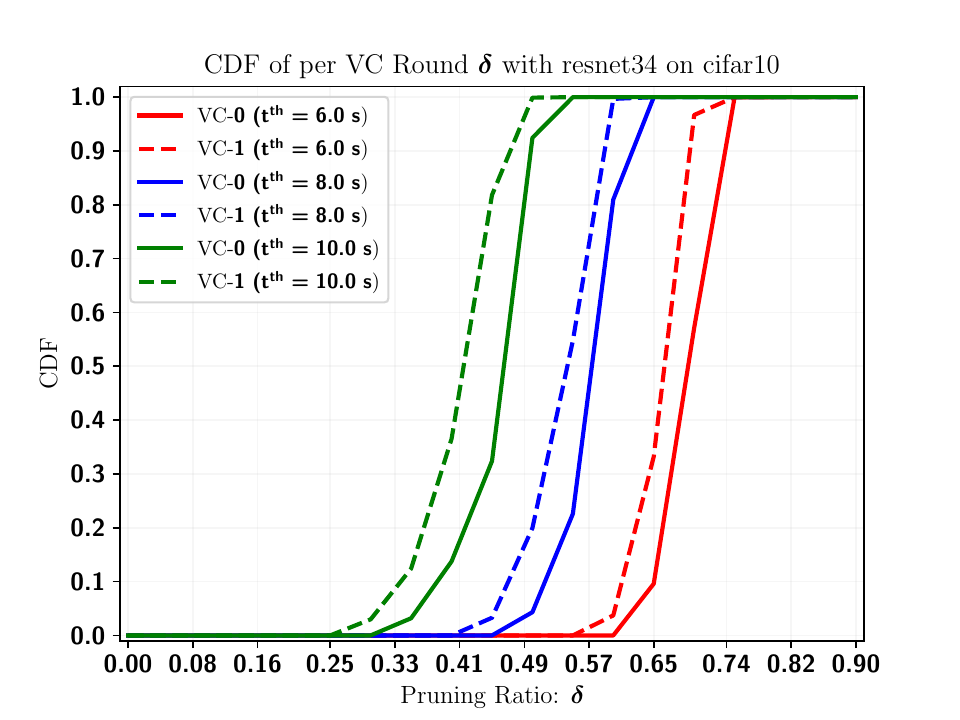}
    \caption{CDF of $\delta_u^t$'s with ResNet-$34$}
    \label{cdfDelta5222cifar10Resnet34}
\end{subfigure}
\caption{CDF of clients' pruning ratios in different VCs for different $\mathrm{t^{th}}$ with different ML models}
\label{pruningRatioCDFs}
\end{figure*}

\subsection{Performance Study}
\noindent 
First, we investigate the pruning ratios $\delta_i^t$'s in different VCs. 
When the system configurations remain the same, the pruning ratio depends on the deadline threshold $\mathrm{t^{th}}$.
More specifically, a larger deadline allows the client to prune fewer model parameters, given that the energy constraint is satisfied.
Intuitively, less pruning leads to a bulky model that takes longer training time.
The CNN model is shallower compared to the ResNets.
More specifically, the original non-pruned ResNet-$18$ and ResNet-$34$ models have about $46$ times and $83$ times the trainable parameters of the CNN model, respectively, on CIFAR-$10$.
Therefore, the clients require a larger $\mathrm{t^{th}}$ to perform their local training and trained model offloading as the trainable parameters increase.

Intuitively, given a fixed $\mathrm{t^{th}}$, the clients need to prune more model parameters for a bulky model in order to meet the deadline and energy constraints. 
Our simulation results also show that this general intuition holds in determining the $\delta_i^t$'s, as shown in Fig. \ref{pruningRatioCDFs}, which show the cumulative distribution function (CDF) of the $\delta_i^t$'s in different VCs. 
It is worth noting that the pruning ratios $\delta_i^t$'s in each VC aggregation rounds are not deterministic due to the randomness of the wireless channels. 
We know the optimal variables once we solve the optimization problem in (\ref{optimTransformed}), which depends on the realizations of the wireless channels.
Then, for a given VC $j$, we generate the plot by calculating $ \frac{\sum_{l=1}^{L} \sum_{k=1}^{B_l} \sum_{i=1}^{U_{j,k,l}} \mathbf{1} \left( \delta_i^{t} \leq \delta \right)} { \sum_{l=1}^{L} \sum_{k=1}^{B_l} \sum_{i=1}^{U_{j,k,l}} i }$, where $\mathbf{1} \left( \mathrm{\delta}_i^t \leq \delta \right)$ is an indicator function that takes value $1$ if $\delta_i^t \leq \delta$ and $0$ otherwise.
With the CNN model, about $50\%$ clients have a $\delta_i^t$ less than $0.23$, $0.43$, $0.58$ and $0.72$ in VC-$0$ in all cells, for $1.3$s, $1$s, $0.8$s and $0.6$s deadline thresholds, respectively, in Fig. \ref{cdfDelta5222cifar10cnn}.
Note that we use $\delta^{\mathrm{th}}=0.9$, i.e., the clients can prune up to $90\%$ of the neurons.
Moreover, we consider $\mathrm{t^{th}}=4$s and $\mathrm{t^{th}}=6$s, to make the problem feasible for all clients for the ResNet-$18$ and ResNet-$34$ models, respectively.
Furthermore, from Fig. \ref{cdfDelta5222cifar10cnn} - Fig. \ref{cdfDelta5222cifar10Resnet34}, it is quite clear that the UEs in VC-$1$ have to prune slightly lesser model parameters than the UEs in VC-$0$, even though the maximum CPU frequency $\mathrm{f^{max}}$ of the UEs in VC-$0$ is $2.58$ GHz, which is about $6.22\%$ higher than the UEs in VC-$1$.
However, due to the wireless payloads in the offloading phase, the transmission powers of the clients can also influence the $\delta_i^t$'s.
In our setting, the UEs' maximum transmission powers are $0.35$ Watt and $0.95$ Watt, respectively, in VC-$0$ and VC-$1$.
As such, with a similar wireless channel, the UEs in VC-$1$ can offload much faster than the UEs in VC-$0$. 
The above observations, thus, point out that trained model offloading time $\mathrm{t}_i^{\mathrm{up}}$ dominates the total time to finish one VC round $\mathrm{t}_i^{\mathrm{tot}}$.

\begin{figure*}
\begin{subfigure}{0.33\textwidth}
    \centering
    \includegraphics[trim=18 5 40 20, clip, width=\textwidth, height=0.16\textheight]{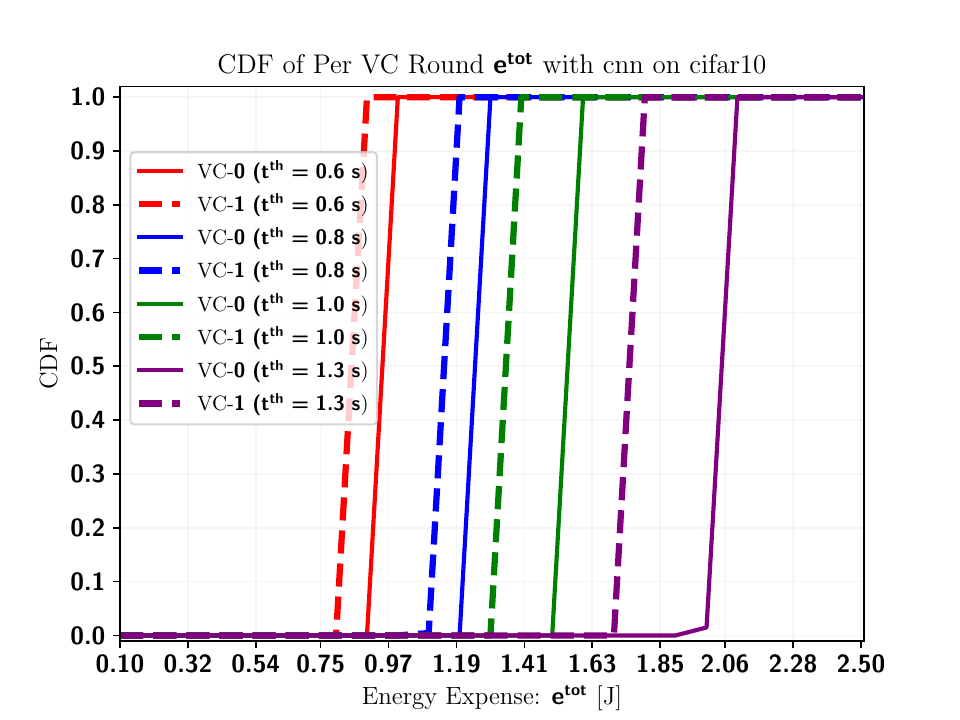}
    \subcaption{CDF of $\mathrm{e}_i^{\mathrm{tot}}$'s with CNN}
    \label{cdfEtot5222cifar10cnn}
\end{subfigure} 
\begin{subfigure}{0.33\textwidth}
    \centering
    \includegraphics[trim=18 5 40 20, clip, width=\textwidth, height=0.16\textheight]{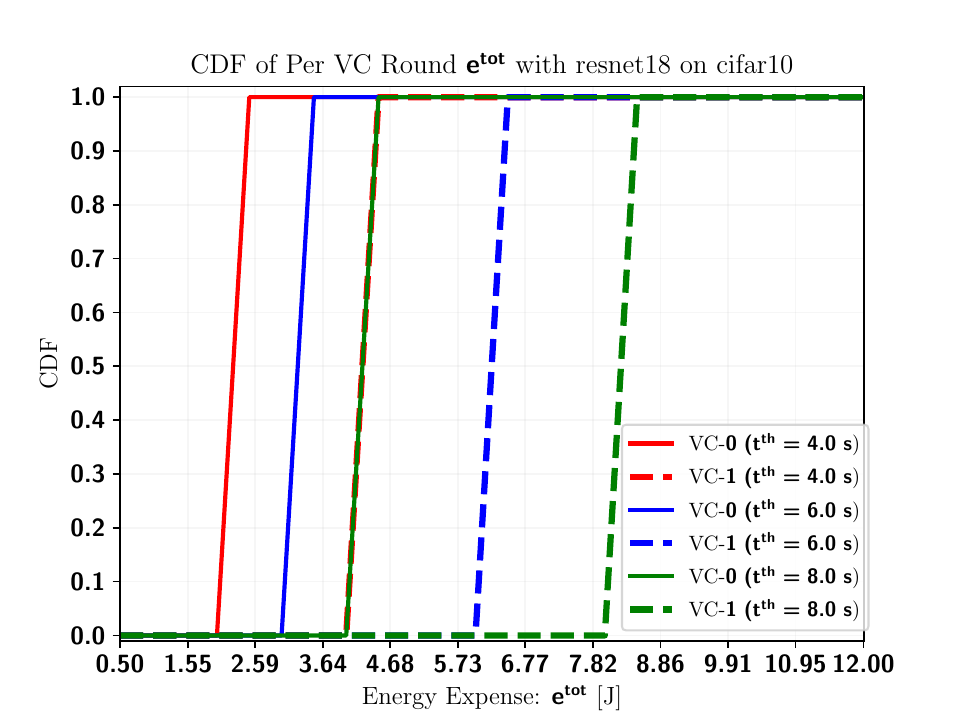}
    \subcaption{CDF of $\mathrm{e}_i^{\mathrm{tot}}$'s with ResNet-$18$}
    \label{cdfEtot5222cifar10Resnet18}
\end{subfigure} 
\begin{subfigure}{0.325\textwidth}
    \centering
    \includegraphics[trim=18 5 40 20, clip, width=\textwidth, height=0.16\textheight]{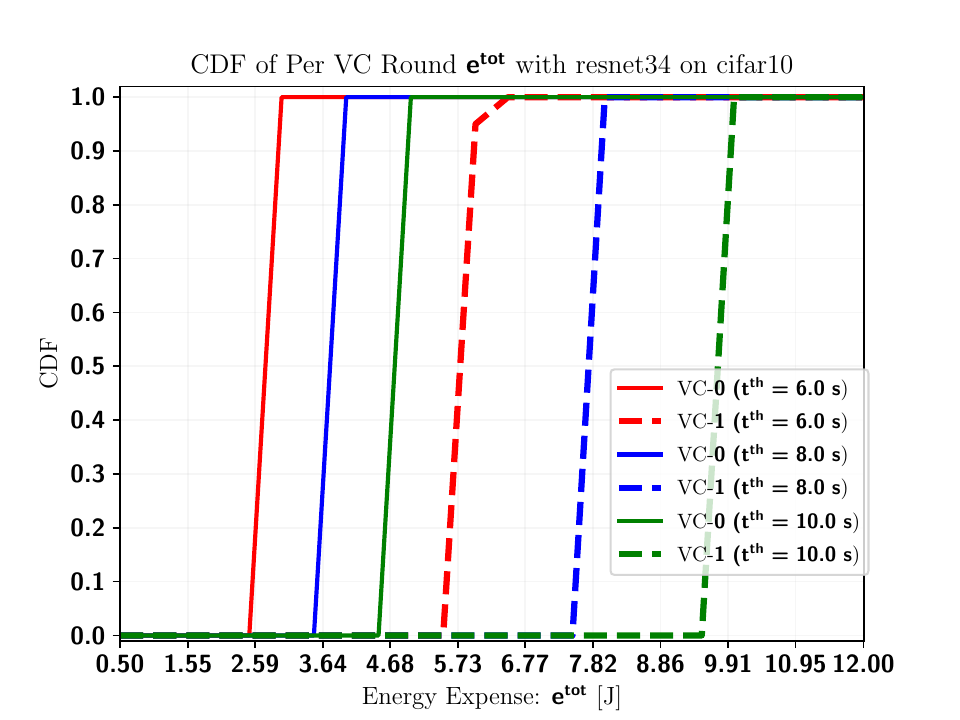}
    \subcaption{CDF of $\mathrm{e}_i^{\mathrm{tot}}$'s with ResNet-$34$}
    \label{cdfEtot5222cifar10Resnet34}
\end{subfigure}
\caption{CDF of clients' $\mathrm{e}_i^{\mathrm{tot}}$'s in different VCs for different $\mathrm{t^{th}}$ with different ML models}
\label{cdfEtot5222cifar10}
\end{figure*}

When it comes to energy expense, from (\ref{energyConSparse}) and (\ref{energyConOffload}), it is quite obvious that a high $\delta_i^t$ shall lead to less energy consumption for both the training and offloading.
However, the total energy expense of the clients boils down to the dominating factor between the required energy for computation and trained model offloading due to the interplay between the wireless and the learning parameters. 
Particularly, with $\omega=1$ MHz, the clients can offload the CNN model fast, leading the computational energy {consumption} to be the dominating factor.
The ResNets, on the other hand, have huge wireless overheads, leading the offloading time and energy be the dominating factors. 
This is also observed in our results in Fig. \ref{cdfEtot5222cifar10}, which is the CDF plot of the energy expense $\mathrm{e}_i^{\mathrm{tot}}$, calculated in (\ref{ueTotalEnergyCons}), of the clients in each VC and is generated following a similar strategy as in Fig. \ref{pruningRatioCDFs}. 
When the CNN model is used, the total energy cost of the clients in VC-$0$ is larger, even though they prune more parameters, compared to the clients in VC-$1$ since larger $\mathrm{f}_i^{\mathrm{max}}$'s of the clients in VC-$0$ leads to a higher computational energy cost. 
This, however, changes for the ResNets since the wireless communication burden dominates the computation burden. 
The clients in VC-$1$ can use their higher $P_i^\mathrm{max}$'s for the offloading time reduction when they determine the pruning ratios $\delta_i^t$'s.
As such, the total energy expenses of the clients in VC-$1$ are much larger than the ones in VC-$0$.  
Our simulation results in Fig. \ref{cdfEtot5222cifar10cnn} and Figs. \ref{cdfEtot5222cifar10Resnet18}-\ref{cdfEtot5222cifar10Resnet34} also reveal the same trends.

\begin{figure*}
\begin{subfigure}{0.33\textwidth}
    \centering
    \includegraphics[trim=10 5 35 25, clip, width=\textwidth, height=0.16\textheight]{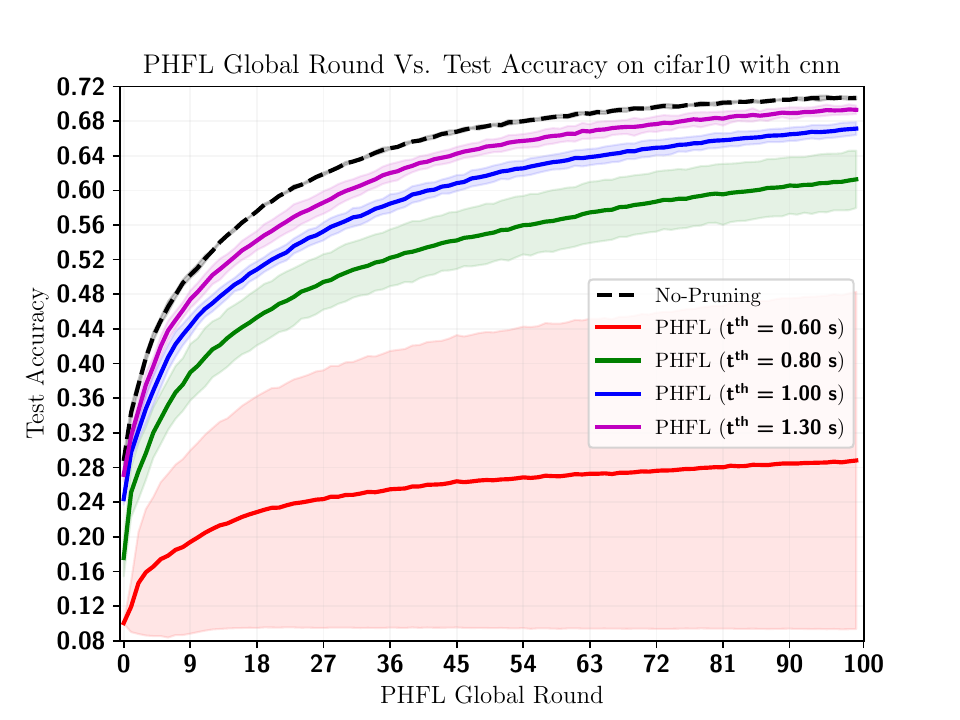}
    \subcaption{Test accuracy with CNN}
    \label{testAccVsDeadline5222cifar10cnn}
\end{subfigure} 
\begin{subfigure}{0.33\textwidth}
    \centering
    \includegraphics[trim=10 5 35 25, clip, width=\textwidth, height=0.16\textheight]{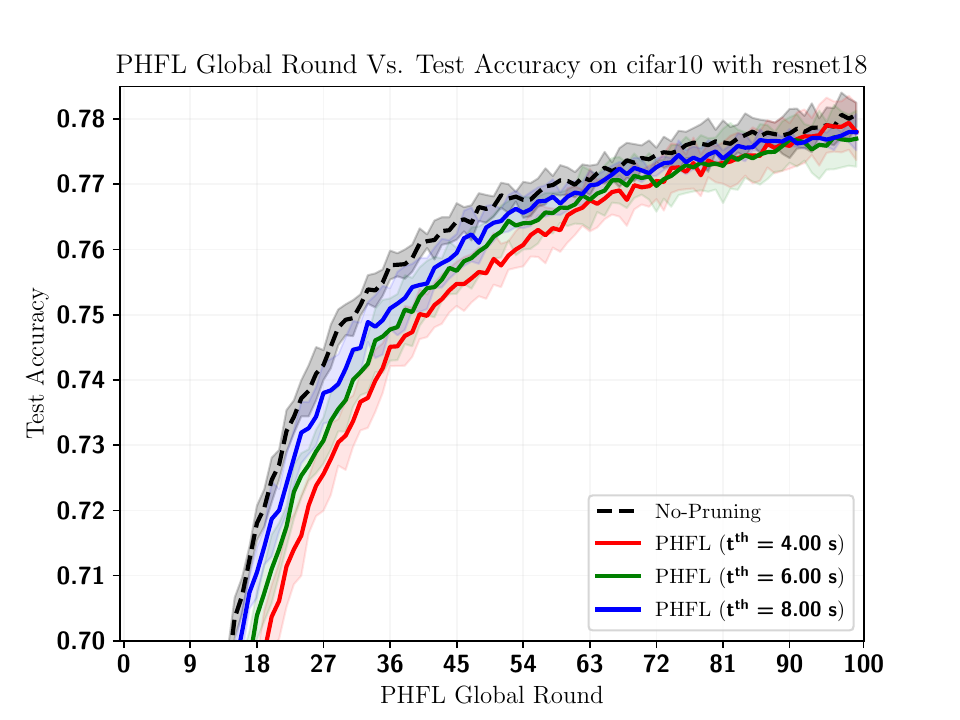}
    \subcaption{Test accuracy with ResNet-$18$}
    \label{testAccVsDeadline5222cifar10Resnet18}
\end{subfigure}
\begin{subfigure}{0.325\textwidth}
    \centering
    \includegraphics[trim=10 5 35 25, clip, width=\textwidth, height=0.16\textheight]{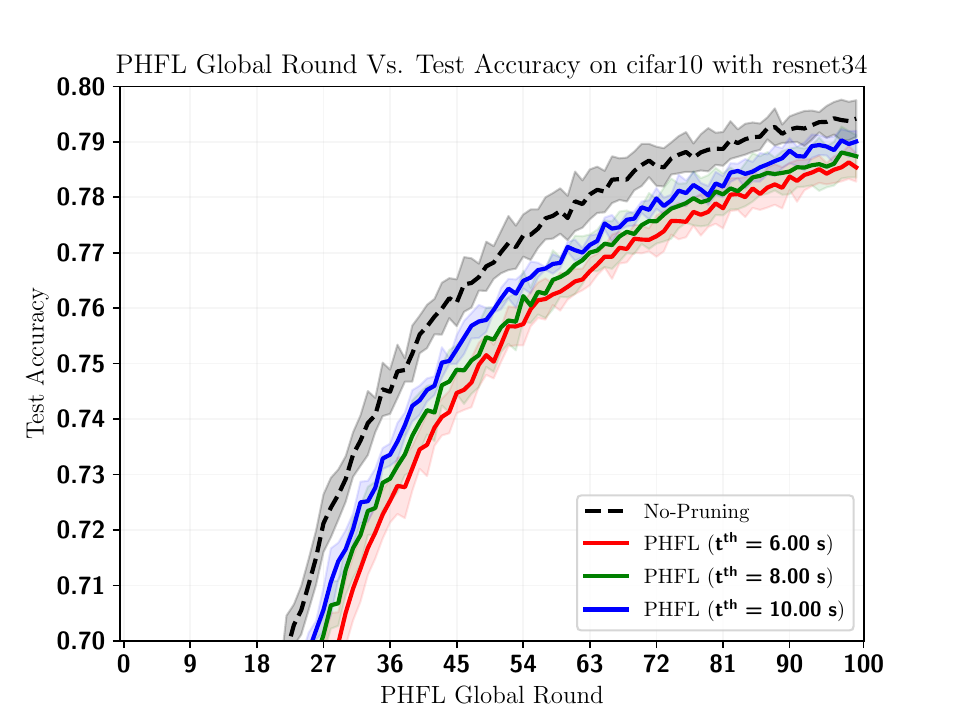}
    \subcaption{Test accuracy with ResNet-$34$}
    \label{testAccVsDeadline5222cifar10Resnet34}
\end{subfigure}
\begin{subfigure}{0.33\textwidth}
    \centering
    \includegraphics[trim=10 5 32 25, clip, width=\textwidth, height=0.16\textheight]{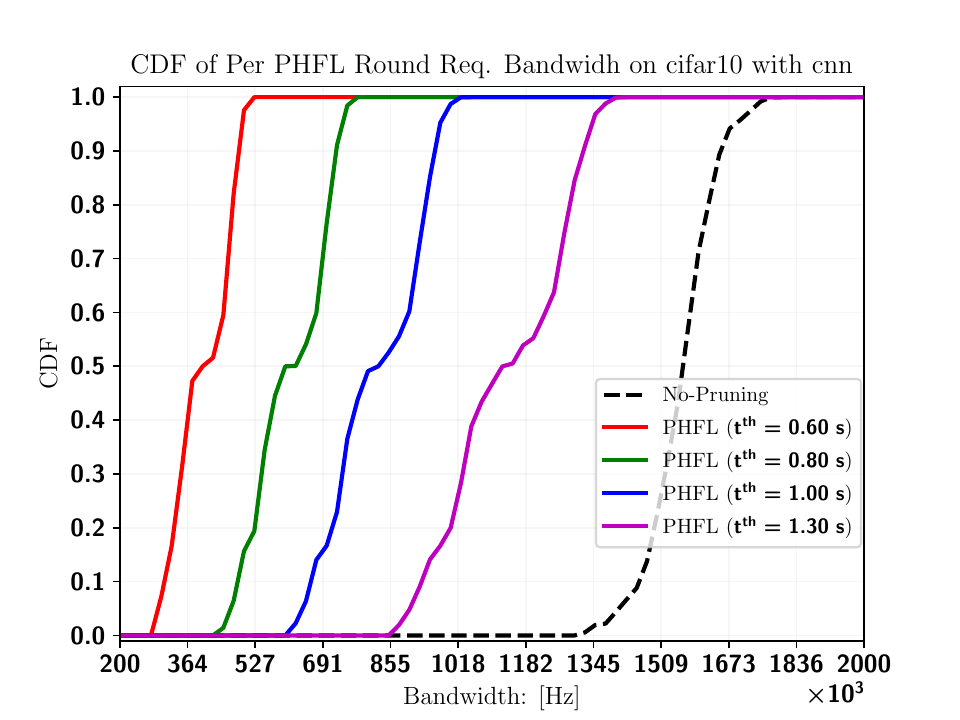}
    \subcaption{Required bandwidth with CNN}
    \label{bwVsDeadline5222cifar10cnn}
\end{subfigure} 
\begin{subfigure}{0.33\textwidth}
    \centering
    \includegraphics[trim=10 5 35 25, clip, width=\textwidth, height=0.16\textheight]{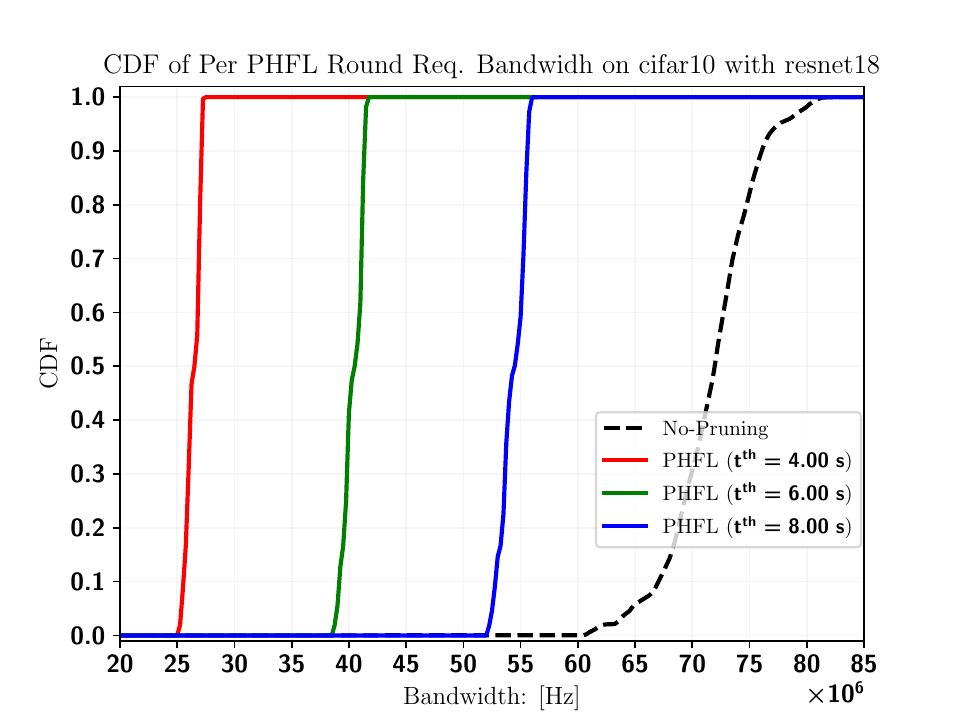}
    \subcaption{Required bandwidth with ResNet-$18$}
    \label{bwVsDeadline5222cifar10Resnet18}
\end{subfigure}
\begin{subfigure}{0.325\textwidth}
    \centering
    \includegraphics[trim=10 5 32 25, clip, width=\textwidth, height=0.16\textheight]{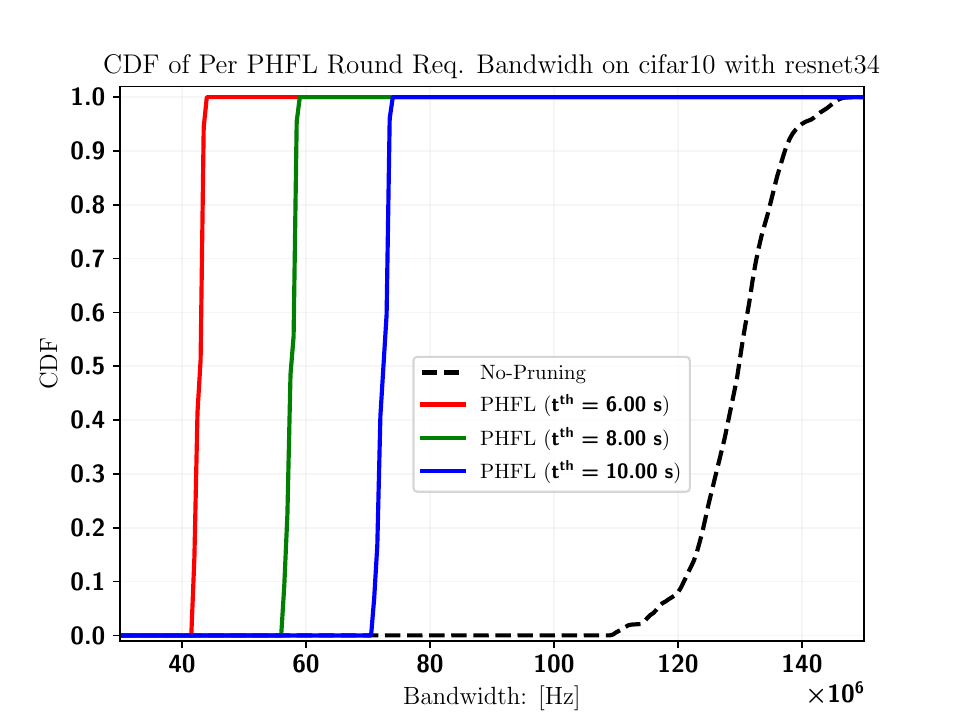}
    \subcaption{Required bandwidth with ResNet-$34$}
    \label{bwVsDeadline5222cifar10Resnet34}
\end{subfigure}
\caption{\bblue{Trade-offs between test accuracies and required bandwidth for different $\mathrm{t^{th}}$'s with different ML models}}
\label{accBWTradeoffsVsDeadline5222cifar10}
\end{figure*}

Now, we observe the impact of $\delta_i^t$'s on the test accuracies and required bandwidth for trained model offloading.
Intuitively, if the model is shallow, pruning further makes it shallower. 
Therefore, the test performance can exacerbate if $\delta_i^t$'s increases for a shallow model.
On the other hand, for a bulky model, pruning may have a less severe effect.
Specifically, under the deadline and energy constraints, pruning may eventually help because pruning a few neurons leads to a shallower but still reasonably well-constructed model that can be trained more efficiently.
Moreover, our convergence bound in (\ref{Theorem1}) clearly shows that the increasing $\delta_i^t$'s decreases the convergence speed. 
However, the wireless payload is directly related to $\delta_i^t$'s as shown in (\ref{uplinkPayload}).
Particularly, the wireless payload is an increasing function of the $\delta_i^t$'s.
As such, increasing the deadline threshold $\mathrm{t^{th}}$ should decrease the $\delta_i^t$'s but significantly increase the wireless payload size.

Our simulation results also reveal \bblue{the above} trends in Fig. \ref{accBWTradeoffsVsDeadline5222cifar10}.
\bblue{Note that, in Fig. \ref{accBWTradeoffsVsDeadline5222cifar10} and the subsequent figures, the (solid/dashed) lines are the average of $4$ independent simulation trials using the configurations mentioned in Section \ref{simu_setting}, while the shaded strips show the corresponding standard deviations.}
Particularly, we observe that the CNN model is largely affected by a small $\mathrm{t^{th}}$ because that leads to a large $\delta_i^t$, which eventually prunes more neurons of the already shallow model.
On the other hand, the bulky ResNets exhibit small performance degradation when $\mathrm{t^{th}}$ decreases.
Moreover, compared to the original non-pruned counterparts, the performance difference is small if $\mathrm{t^{th}}$ is selected appropriately, as shown in Fig. \ref{testAccVsDeadline5222cifar10cnn} to Fig. \ref{testAccVsDeadline5222cifar10Resnet34}.
However, even a slight increase in $d_p$ shall reduce the wireless payload size, which can significantly save a large portion of the bandwidth, as observed in Fig. \ref{bwVsDeadline5222cifar10cnn} to Fig. \ref{bwVsDeadline5222cifar10Resnet34}.
For example, if the CNN model is used, with $\mathrm{t^{th}}=1.3$s, the performance degradation on test accuracy is about \bblue{$1.92\%$} after $100$ PHFL round, while the per PHFL round bandwidth saving for at least $70\%$ of the clients is about \bblue{$20.84\%$}.
Similarly, if $\mathrm{t^{th}}=6$s, the test accuracy degradation is about \bblue{$0.47\%$} and \bblue{$1.11\%$} after $100$ global rounds, while the per PHFL round bandwidth saving for at least $70\%$ of the clients are about \bblue{$33.12\%$} and \bblue{$81.26\%$}, for ResNet-$18$ and ResNet-$34$, respectively.

\subsection{Baseline Comparisons}
\noindent
We now focus on performance comparisons. 
First, we consider the existing HFL \cite{Liu2022jointUE, xu2021adaptive} algorithm that does not consider model pruning or any energy constraints.
Besides, \cite{Liu2022jointUE, xu2021adaptive} only considered two levels, UE-BS and BS-cloud/server.
For a fair comparison, we adapt HFL into three levels $1$) UE-sBS, $2$) sBS-mBS and $3$) mBS-cloud/server.
Furthermore, we enforce the energy and deadline constraints in each UE-sBS aggregation round and name this baseline HFL with constraints (HFL-WC).
Moreover, since \cite{Liu2022jointUE, xu2021adaptive} did not have any VC and we have $\kappa_1$ VC aggregation rounds before the sBS aggregation round, we have adapted the deadline accordingly for HFL-WC to make our comparison fair.
Furthermore, we consider a random PHFL (R-PHFL) scheme, which has the same system model as ours and allows model pruning. 
In R-PHFL, the pruning ratios, $\delta_i^t$'s, are randomly selected between $0$ and $\delta^{\mathrm{th}}$ to satisfy constraint (\ref{cons5}). 
Moreover, in both HFL-WC and R-PHFL, a common $\kappa_0$ that satisfies both deadline and energy constraints for all clients in all VCs leads to poor test accuracy.
As such, we determine the local iterations of the UEs within a VC by selecting the maximum possible number of iterations that all clients within that VC can perform without violating the delay and energy constraints.
For our proposed PHFL, we choose $\kappa_0=5$ and $\kappa_1=2$.
Moreover, we let $\kappa_2=\kappa_3=2$ for both HFL-WC and PHFL. 
We also consider centralized SGD to show the performance gap of PHFL with the ideal case where all training data samples are available centrally.

\begin{figure*}[!t]
\begin{subfigure}{0.33\textwidth}
    \centering
    \includegraphics[trim=10 5 35 25, clip, width=\textwidth, height=0.16\textheight]{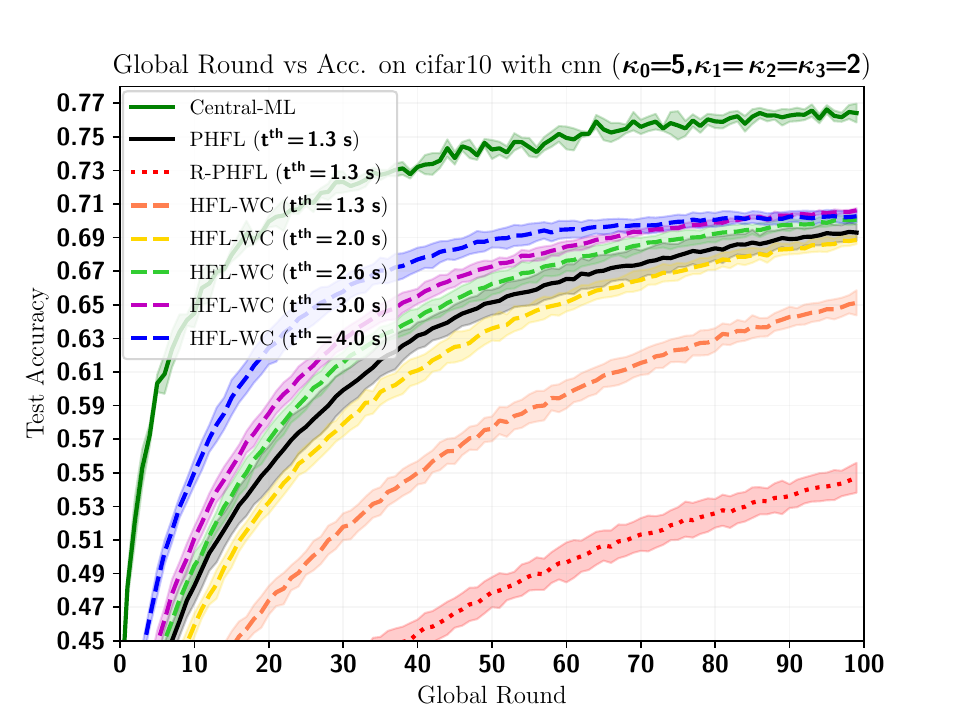}
    \subcaption{Test acc. (CNN on CIFAR-10)}
    \label{baselineAccVsDeadline5222cifar10cnn}
\end{subfigure} 
\begin{subfigure}{0.33\textwidth}
    \centering
    \includegraphics[trim=10 5 25 25, clip, width=\textwidth, height=0.16\textheight]{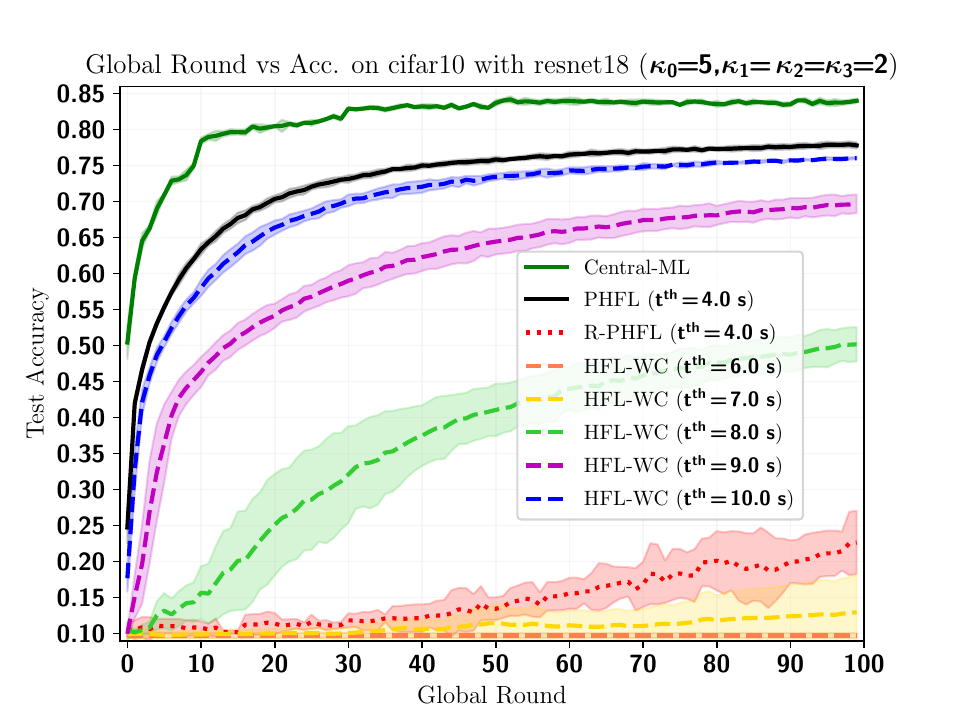}
    \subcaption{Test acc. (ResNet-$18$ on CIFAR-10)}
    \label{baselineAccVsDeadline5222cifar10Resnet18}
\end{subfigure}
\begin{subfigure}{0.325\textwidth}
    \centering
    \includegraphics[trim=10 5 25 25, clip, width=\textwidth, height=0.16\textheight]{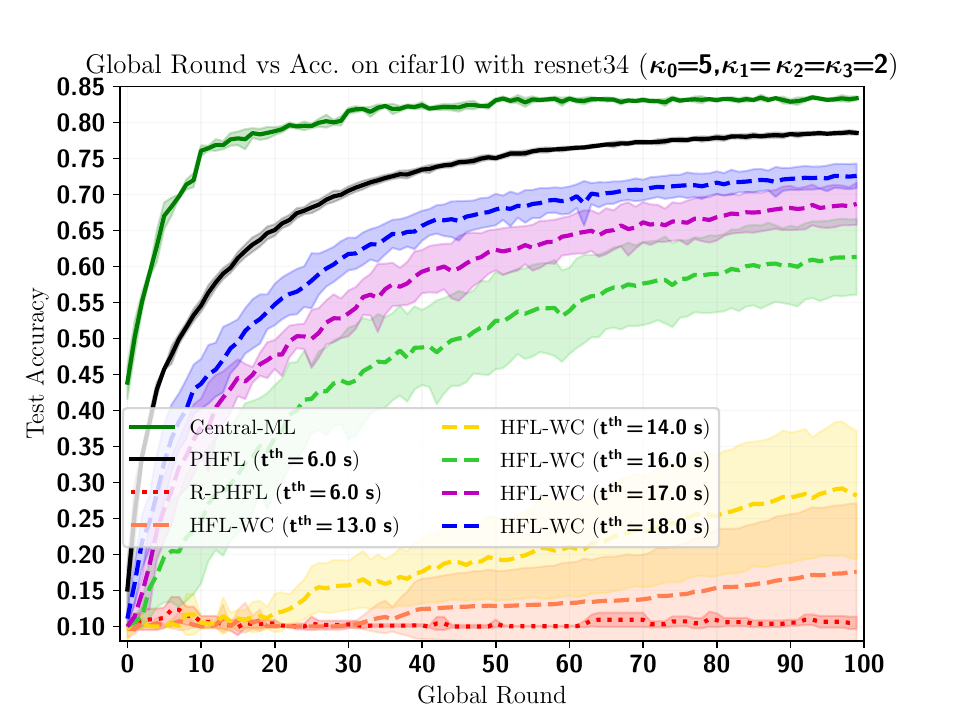}
    \subcaption{Test acc. (ResNet-$34$ on CIFAR-10)}
    \label{baselineAccVsDeadline5222cifar10Resnet34}
\end{subfigure}
\begin{subfigure}{0.33\textwidth}
    \centering
    \includegraphics[trim=10 5 35 25, clip, width=\textwidth, height=0.16\textheight]{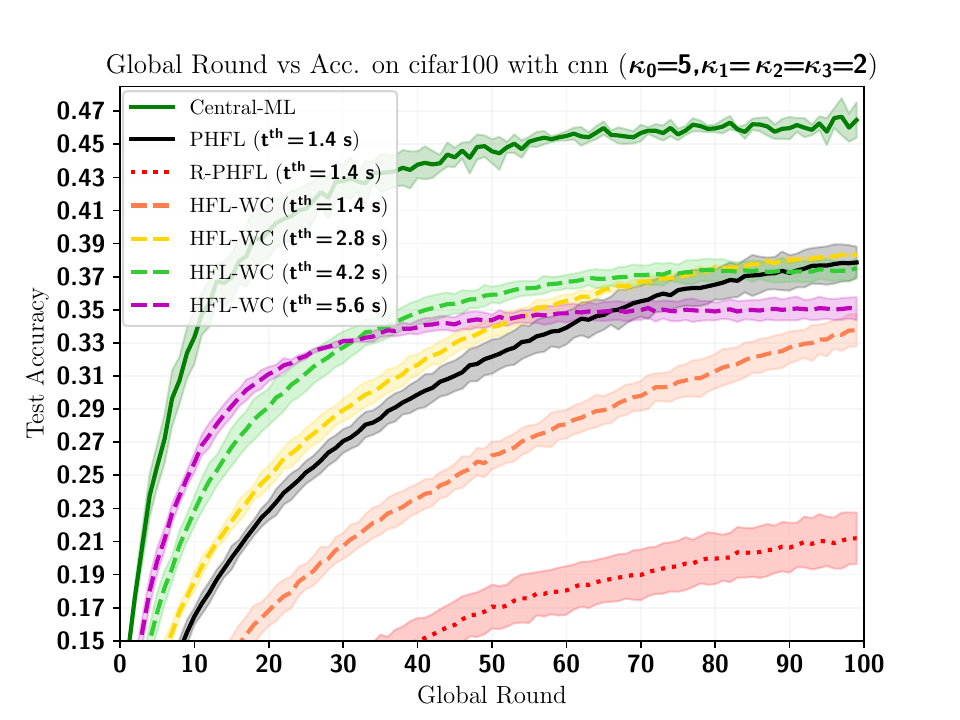}
    \subcaption{Test acc. (CNN on CIFAR-$100$)}
    \label{baselineAccVsDeadline5222cifar100cnn}
\end{subfigure} 
\begin{subfigure}{0.33\textwidth}
    \centering
    \includegraphics[trim=10 5 25 25, clip, width=\textwidth, height=0.16\textheight]{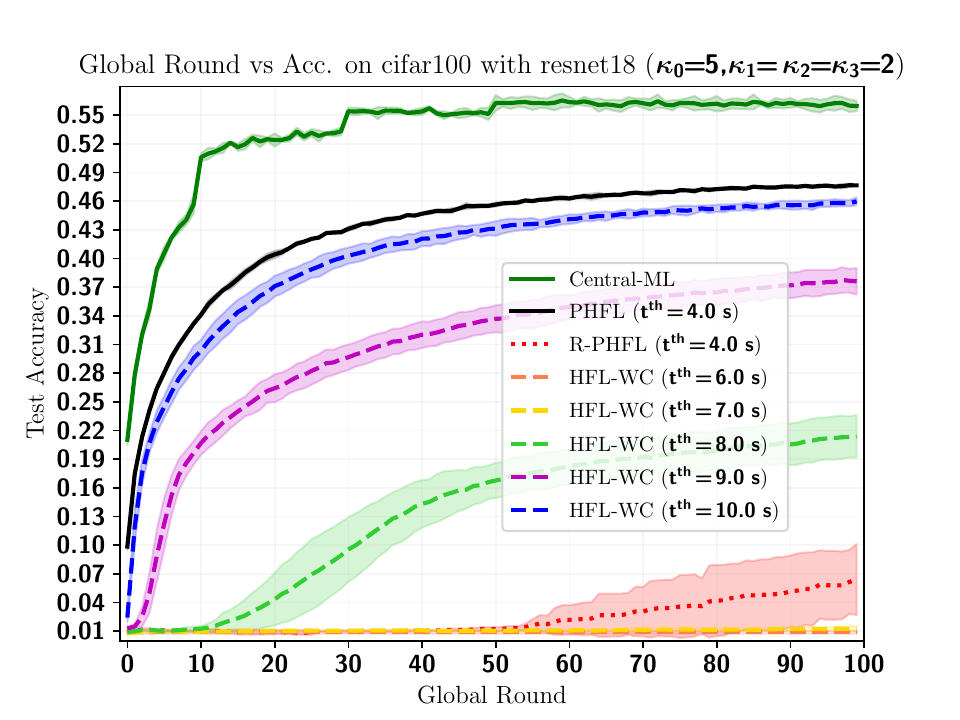}
    \subcaption{Test acc. (ResNet-$18$ on CIFAR-$100$)}
    \label{baselineAccVsDeadline5222cifar100Resnet18}
\end{subfigure}
\begin{subfigure}{0.325\textwidth}
    \centering
    \includegraphics[trim=10 5 25 25, clip, width=\textwidth, height=0.16\textheight]{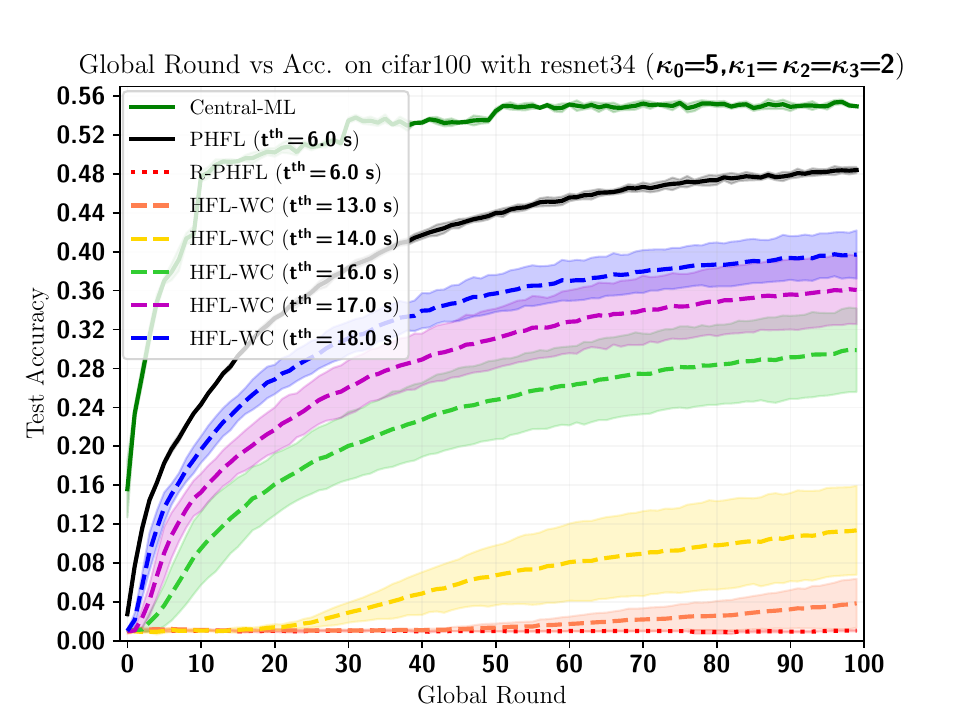}
    \subcaption{Test acc. (ResNet-$34$ on CIFAR-$100$)}
    \label{baselineAccVsDeadline5222cifar100Resnet34}
\end{subfigure}
\caption{\bblue{Test accuracies for different $\mathrm{t^{th}}$'s with different ML models on different datasets}}
\label{baselineAccVsDeadline10222cifar10}
\end{figure*}

From our above discussion, it is expected that pruning will likely not have the edge over the HFL-WC with a shallow model. 
Moreover, one may not need pruning for a shallow model in the first place. 
However, for a bulky model, due to a large number of training parameters and a huge wireless payload, pruning can be a necessity under extreme resource constraints.
Furthermore, it is crucial to jointly optimize $\delta_i^t$'s, $\mathrm{f}_i^t$'s and $\mathrm{P}_i^t$'s in order to increase the test accuracy.
The simulation results in Fig. \ref{baselineAccVsDeadline10222cifar10} also validate these claims.
We observe that when the $\mathrm{t^{th}}$ increases, HFL-WC's performance improves with the CNN model.
Moreover, when HFL-WC has $\kappa_1$ times the deadline of PHFL, the performance is comparable, as shown in Fig. \ref{baselineAccVsDeadline5222cifar10cnn} and Fig. \ref{baselineAccVsDeadline5222cifar100cnn}.
More specifically, the maximum performance degradation of the proposed PHFL algorithm is about \bblue{$1.88\%$} on CIFAR-$10$ when HFL-WC has $2.31$ times the deadline of PHFL.
However, for the ResNets model, HFL-WC requires a significantly longer deadline threshold to make the problem feasible. 
Particularly, with ResNet-$18$, $\mathrm{t^{th}} \leq 6$s does not allow the UEs to perform even a single local iteration, leading to the same initial model weights and, thus, the same test accuracy in HFL-WC.
Moreover, when the deadline threshold is $\kappa_1$ times the $\mathrm{t^{th}}$ of the PHFL, HFL-WC's test accuracy significantly lags, as shown in Fig. \ref{baselineAccVsDeadline5222cifar10Resnet18} and Fig. \ref{baselineAccVsDeadline5222cifar100Resnet18}.
Particularly, after $T=100$ rounds, our proposed solution with $\mathrm{t^{th}}=4$s provides about \bblue{$55.06\%$, $11.62\%$ and $2.33\%$}, and about \bblue{$122.8 \%$, $26.7 \%$ and $3.62\%$} better test accuracy on CIFAR-$10$ and on CIFAR-$100$, respectively, than the HFL-WC with $\mathrm{t^{th}}=8$s, $\mathrm{t^{th}}=9$s and $\mathrm{t^{th}}=10$s.
For the ResNet-$34$ model, the clients require \bblue{$\mathrm{t^{th}} \geq 13$s} for performing some local training in HFL-WC, whereas our proposed solution can achieve significantly better performance with only $\mathrm{t^{th}}=6$s. 
For example, our proposed solution with $\mathrm{t^{th}}=6$s yields about \bblue{$28.09\%$, $14.34\%$ and $8.22\%$}, and about \bblue{$61.97 \%$, $34.19 \%$ and $21.86 \%$} better test accuracy  than the HFL-WC with $\mathrm{t^{th}}=16$s, $\mathrm{t^{th}}=17$s and $\mathrm{t^{th}}=18$s on CIFAR-$10$ and CIFAR-$100$, respectively.
Moreover, our proposed solution provides about \bblue{$243.47\%$} and \bblue{$643.29\%$} and about \bblue{$648.36\%$} and \bblue{$4542.45 \%$} better test accuracy than R-PHFL on CIFAR-$10$ and CIFAR-$100$ with the ResNet-$18$ and ResNet-$34$ models, respectively.
The gap with the ideal centralized ML is also expected since FL suffers from data and system heterogeneity.

\begin{figure}[t!]
    \centering
    \includegraphics[trim=10 5 25 25, clip, width=0.4\textwidth, height=0.19\textheight]{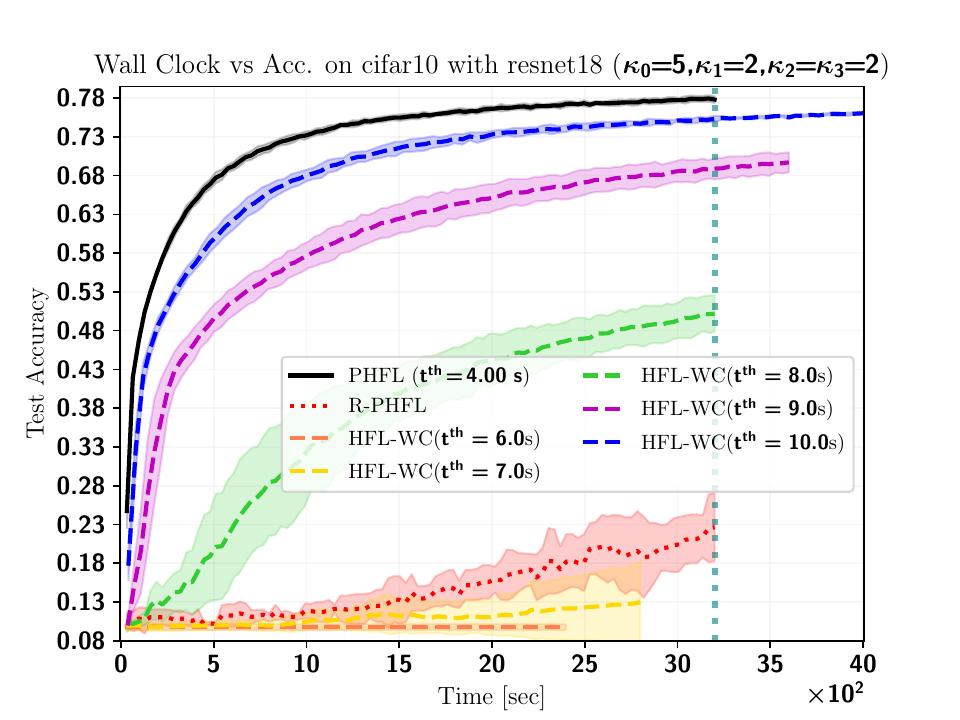}
    \caption{\bblue{Wall clock vs test accuracy on CIFAR-$10$ with ResNet-$18$}}
    \label{wallClockVsAccResnet18_5222}
\end{figure}

We also examine the performance of the proposed method in terms of wall clock time. 
Since HFL-WC does not have the VC tier, the wall clock time to run $M$ global rounds for HFL-WC with a deadline smaller than  $\kappa_1 \times \mathrm{t^{th}}$ will be lower than the proposed PHFL's wall clock time to run the same number of global rounds.
However, that does not necessarily guarantee a higher test accuracy than our PHFL since training and offloading the original bulky model may take a long time, allowing only a few local SGD rounds of the clients.
Besides, any deadline greater than $\kappa_1 \times \mathrm{t^{th}}$ for the HFL-WC will require a longer wall clock time than our proposed PHFL solution.
Our simulation results in Fig. \ref{wallClockVsAccResnet18_5222} clearly shows these trends.
We observe that when HFL-WC has a deadline $\kappa_1 \times \mathrm{t^{th}}$, i.e., $2 \times 4$s = $8$s, the test accuracies are about \bblue{$42.52\%$} and \bblue{$76.24\%$}, respectively, for HFL-WC and PHFL when the wall clock reaches $1800$ seconds. 
Even with $\mathrm{t^{th}} > 4\kappa_1$ seconds, the HFL-WC algorithm performs worse than our proposed PHFL solution.

From the above results and discussion, it is quite clear that R-PHFL yields poor test accuracy due to the random selection of $\delta_i^t$'s.
Besides, HFL-WC cannot deliver reasonable performance when the model has a large number of training parameters. 
Furthermore, our proposed PHFL's performance is comparable to the non-pruned HFL-WC performance with the shallow model. 
As such, in the following, we only consider the upper bound (UB) of the HFL baseline, which does not consider the constraints.
Moreover, to show how pruning degrades test accuracy, we also consider the UB, called HFL-VC-UB, by inheriting the same system model described in Section \ref{sec_sysModel}, but without the model pruning and the constraints.
In other words, we inherit the same underlying four levels, UE-VC, VC-sBS, sBS-mBS and mBS-cloud/server aggregation policy with the original non-pruned model.

\begin{figure*}[!t]
\begin{subfigure}{0.33\textwidth}
    \centering
    \includegraphics[trim=10 5 35 25, clip, width=\textwidth, height=0.16\textheight]{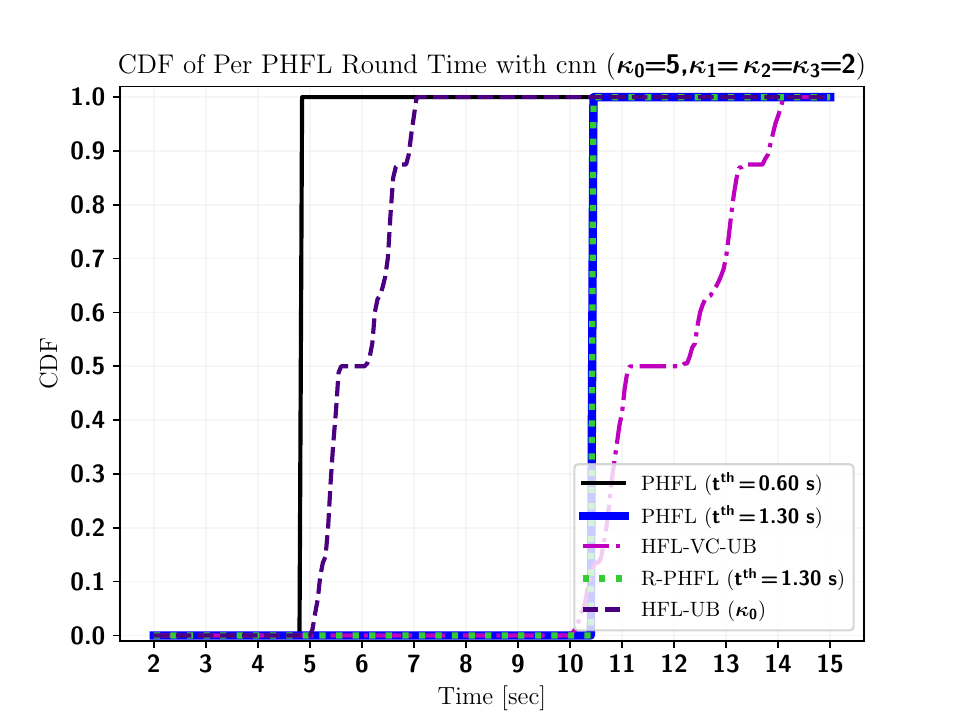}
    \subcaption{Per round time with CNN}
    \label{cdfPHFLRoundTime_CNN}
\end{subfigure} 
\begin{subfigure}{0.33\textwidth}
    \centering
    \includegraphics[trim=10 5 35 25, clip, width=\textwidth, height=0.16\textheight]{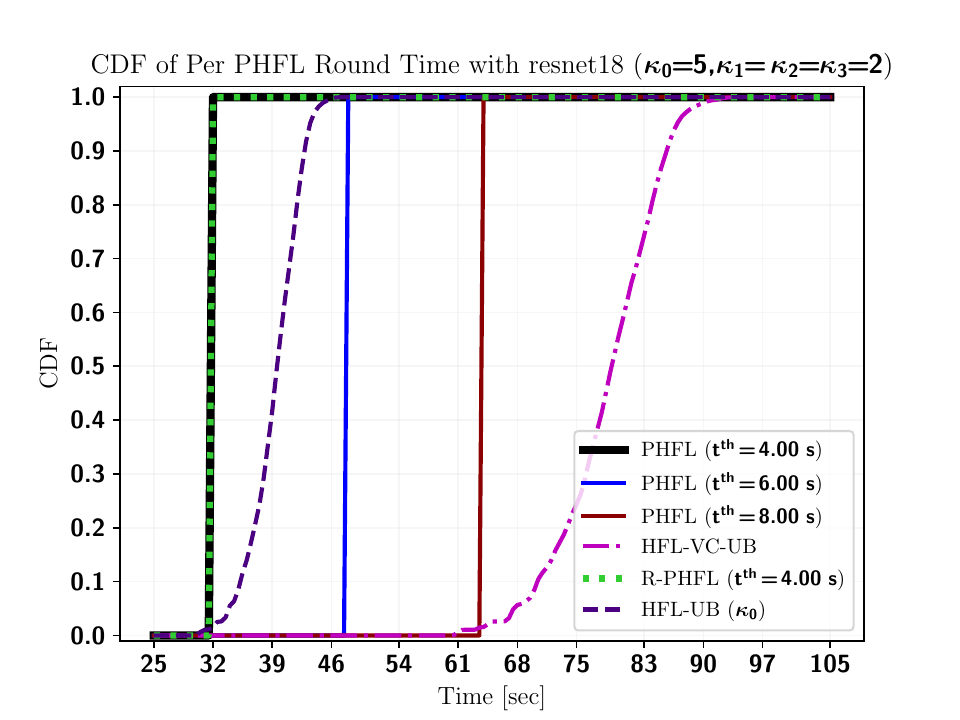}
    \subcaption{Per round time with ResNet-$18$}
    \label{cdfPHFLRoundTime_Resnet18}
\end{subfigure} 
\begin{subfigure}{0.325\textwidth}
    \centering
    \includegraphics[trim=10 5 25 25, clip, width=\textwidth, height=0.16\textheight]{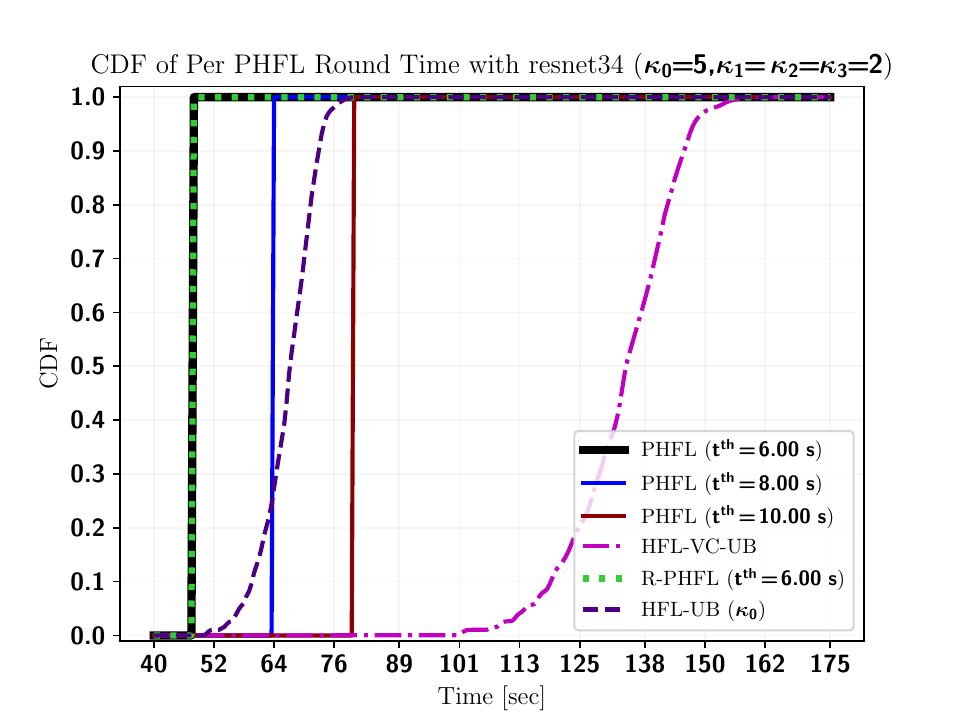}
    \subcaption{Per round time with ResNet-$34$}
    \label{cdfPHFLRoundTime_Resnet34}
\end{subfigure}
\begin{subfigure}{0.33\textwidth}
    \centering
    \includegraphics[trim=10 5 25 25, clip, width=\textwidth, height=0.16\textheight]{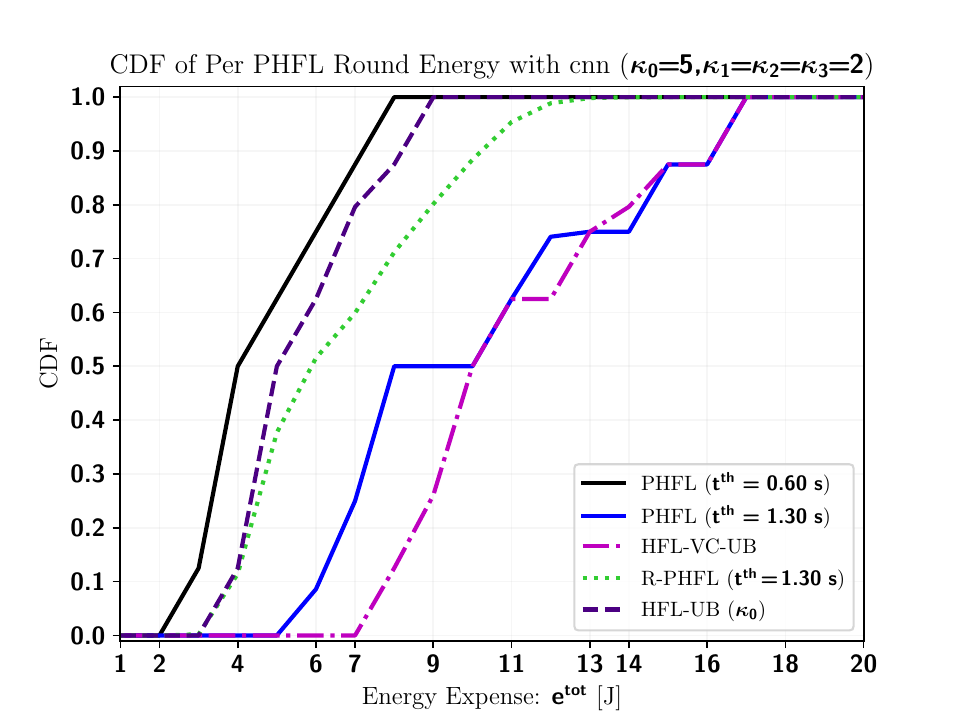}
    \subcaption{Per round $\mathrm{e^{tot}}$ with CNN}
    \label{cdfPHFLRoundEnergy_CNN}
\end{subfigure} 
\begin{subfigure}{0.33\textwidth}
    \centering
    \includegraphics[trim=10 5 25 25, clip, width=\textwidth, height=0.16\textheight]{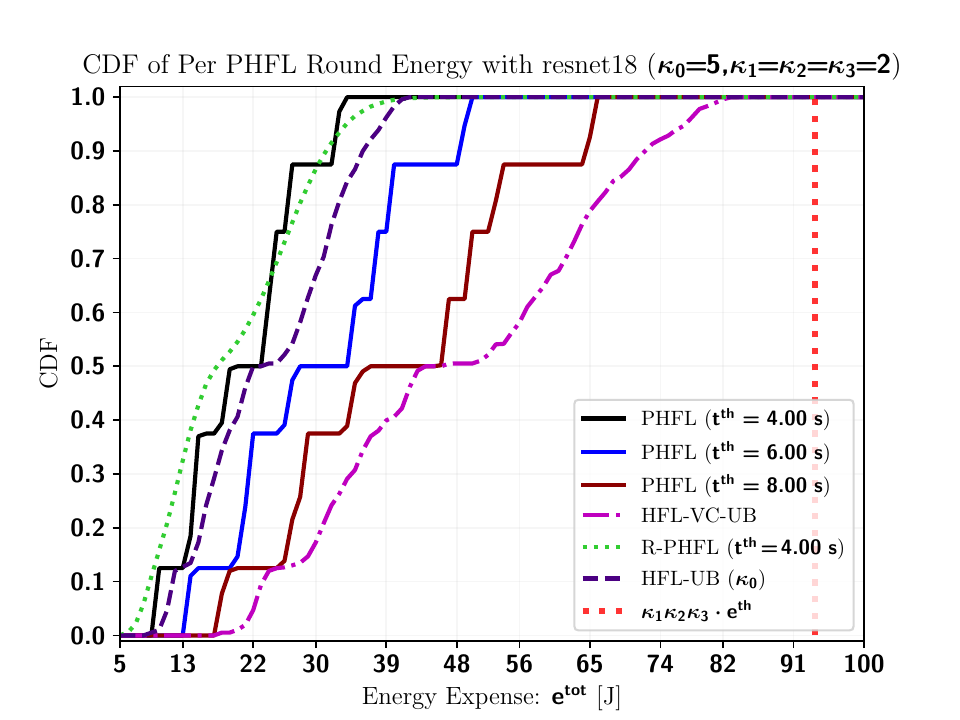}
    \subcaption{Per round $\mathrm{e^{tot}}$ with ResNet-$18$}
    \label{cdfPHFLRoundEnergy_Resnet18}
\end{subfigure} 
\begin{subfigure}{0.325\textwidth}
    \centering
    \includegraphics[trim=10 5 25 25, clip, width=\textwidth, height=0.16\textheight]{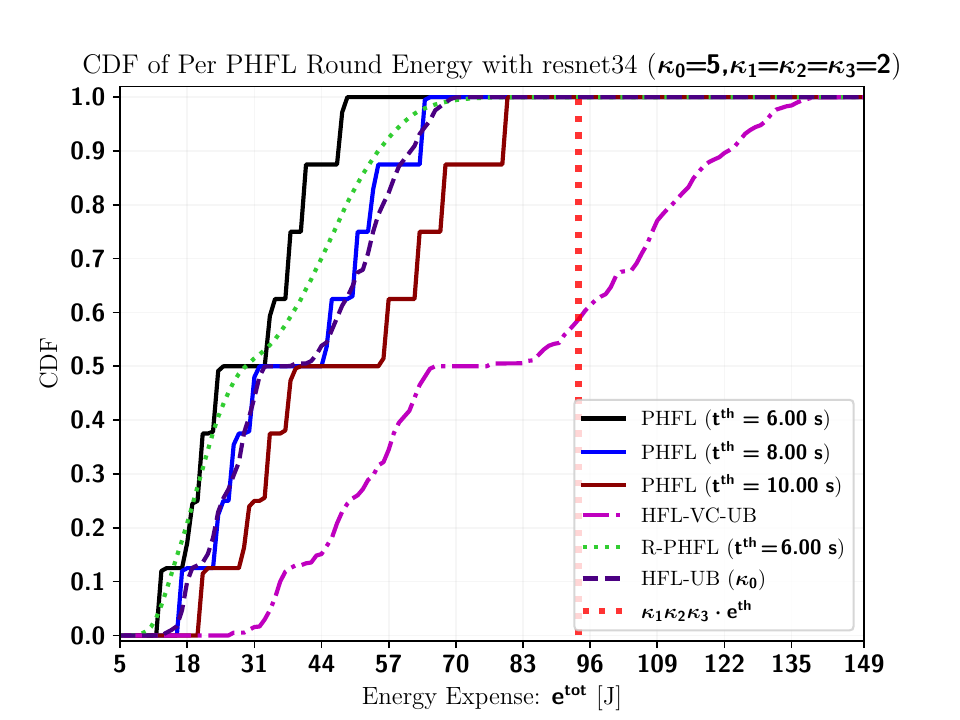} 
    \subcaption{Per round $\mathrm{e^{tot}}$ with ResNet-$34$}
    \label{cdfPHFLRoundEnergy_Resnet34}
\end{subfigure}
\caption{\bblue{CDF of per PHFL round time and energy consumption with different ML models on CIFAR-$10$}}
\end{figure*}
\begin{table*}[!t] 
\centering
\caption{Test Accuracy with Trained $\mathbf{w}^{T}$ on CIFAR-$10$ dataset with $\kappa_0=5, \kappa_1=\kappa_2=\kappa_3=2$ and $T=100$}
\fontsize{6.3}{8}\selectfont

\begin{tabular}{|C {0.88cm} | C{0.4cm} | C {1.44cm} | C{1cm} | C{1.4cm} | C{1.45cm} | C{1cm} | C{1.45cm} | C{1.5cm} | C{1.25cm} | C{1.5cm} |}
\hline 
\multirow{2}{*}{\textbf{Methods}} & \multirow{2}{*}{\rs\rs \textbf{Dir($\bar{\alpha}$)}} & \multicolumn{3}{C{2.65cm}|} {\textbf{With CNN Model}} & \multicolumn{3}{C{2.65cm}|} {\textbf{With ResNet-$18$ Model}} & \multicolumn{3}{C{2.65cm}|} {\textbf{With ResNet-$34$ Model}} \\  \cline{3-11} 
&  & Acc & Req T [s] & Req E [J] & Acc & Req T [s] & Req E [J] & Acc & Req T [s] & Req E [J] \\ \hline
    & $0.5$ & $0.6791 \pm 0.0049$ & \multirow{3}{*}{$1040$} & $48355 \pm 14664$ & $0.7613 \pm 0.0026$ & \multirow{3}{*}{$3200$} & $98122 \pm 15878$ & $0.7677 \pm 0.0017$ & \multirow{3}{*}{$4800$} & $140469 \pm 20993$ \\ \cline{2-3} \cline{5-6} \cline{8-9} \cline{11-11}  
    PHFL & $0.9$ & $0.6930 \pm 0.0049$ & ~ & $48355 \pm 14663$ & $0.7780 \pm 0.0043$ & ~ & $98122 \pm 15878$ & $0.7854 \pm 0.0026$ & ~  & $140469 \pm 20993$ \\ \cline{2-3} \cline{5-6} \cline{8-9} \cline{11-11} 
    (Ours) & $10$ & $0.7091 \pm 0.0022$ & ~ & $48355 \pm 14663$ & $0.7899 \pm 0.0020$ & ~ & $98122 \pm 15878$ & $0.7994 \pm 0.0010$ & ~ & $140469 \pm 20994$ \\ \cline{1-11} 
    HFL-VC & $0.5$ & $0.6971 \pm 0.0037$ & $1293 \pm 110$ & $53041 \pm 12377$ & $0.7689 \pm 0.0054$ & $8985 \pm 182$ & $231634 \pm 32761$ & $0.7789 \pm 0.0030$ & $15370 \pm 269$ & $378436 \pm 51167$ \\ \cline{2-11}
    (UB) & $0.9$ & $0.7066 \pm 0.0017$ & $1293 \pm 110$ & $53041 \pm 12377$ & $0.7807 \pm 0.0019$ & $8985 \pm 182$ & $231634 \pm 32761$ & $0.7942 \pm 0.0033$ & $15370 \pm 269$ & $378436 \pm 51167$ \\ \cline{2-11}
    (Ours) & $10$ & $0.7211 \pm 0.0010$ & $1293 \pm 110$ & $53041 \pm 12377$ & $0.7920 \pm 0.0037$ & $8985 \pm 182$ & $231634 \pm 32761$ & $0.8031 \pm 0.0027$ & $15370 \pm 269$ & $378436 \pm 51167$ \\ \cline{1-11}
    & $0.5$ & $0.5346 \pm 0.0074$ & \multirow{3}{*}{$1040$} & $31483 \pm 9454$ & $0.2447 \pm 0.0161$ & \multirow{3}{*}{$3200$} & $92529 \pm 12400$ & $0.1112 \pm 0.0193$ & \multirow{3}{*}{$4800$} & $158622 \pm 20589$ \\ \cline{2-3} \cline{5-6} \cline{8-9} \cline{11-11}
    R-PHFL & $0.9$ & $0.5471 \pm 0.0089$ & $~$ & $31467 \pm 9427$ & $0.2265 \pm 0.0439$ & $~$ & $92495 \pm 12356$ & $0.1049 \pm 0.0086$ & $~$ & $158827 \pm 20834$ \\ \cline{2-3} \cline{5-6} \cline{8-9} \cline{11-11}
    & $10$ & $0.5847 \pm 0.0164$ & $~$ & $31455 \pm 9406$ & $0.3265 \pm 0.0273$ & $~$ & $92441 \pm 12287$ & $0.1096 \pm 0.0108$ & $~$ & $158651 \pm 20624$  \\ \cline{1-11}
    & $0.5$ & $0.6624 \pm 0.0021$ & $646 \pm 55$ & $26520 \pm 6188$ & $0.7539 \pm 0.0044$ & $4493 \pm 90$ & $115810 \pm 16379$  & $0.7445 \pm 0.0049$ & $7686 \pm 132$ & $189207 \pm 25580$ \\ \cline{2-11}
    HFL & $0.9$ & $0.6752 \pm 0.0018$ & $646 \pm 55$ & $26520 \pm 6188$ & $0.7695 \pm 0.0014$ & $4493 \pm 90$ & $115810 \pm 16379$ & $0.7664 \pm 0.0045$ & $7686 \pm 132$ & $189207 \pm 25580$  \\ \cline{2-11}
    (UB) - $\kappa_0$ & $10$ & $0.6934 \pm 0.0012$ & $646 \pm 55$ & $26520 \pm 6188$ & $0.7833 \pm 0.0032$ & $4493 \pm 90$ & $115810 \pm 16379$ & $0.7844 \pm 0.0013$ & $7686 \pm 132$ & $189207 \pm 25580$ \\ \cline{1-11} \hline
\end{tabular}
\label{performanceComparison_5_2_2_2_cifar10}
\end{table*} 
\begin{table*}[!t]
\caption{Test Accuracy with Trained $\mathbf{w}^{T}$ on CIFAR-$10$ dataset with $\kappa_0=10, \kappa_1=\kappa_2=\kappa_3=2$ and $T=100$}
\centering
\fontsize{6.3}{8}\selectfont

\begin{tabular}{|C {0.85cm} | C{0.4cm} | C {1.44cm} | C{0.99cm} | C{1.43cm} | C{1.44cm} | C{1.1cm} | C{1.45cm} | C{1.5cm} | C{1.1cm} | C{1.45cm} |}
\hline 
\multirow{2}{*}{\textbf{Methods}} & \multirow{2}{*}{\rs\rs \textbf{Dir($\bar{\alpha}$)}} & \multicolumn{3}{C{2.65cm}|} {\textbf{With CNN Model}} & \multicolumn{3}{C{2.65cm}|} {\textbf{With ResNet-$18$ Model}} & \multicolumn{3}{C{2.65cm}|} {\textbf{With ResNet-$34$ Model}} \\  \cline{3-11} 
&  & Acc & Req T [s] & Req E [J] & Acc & Req T [s] & Req E [J] & Acc & Req T [s] & Req E [J] \\ \hline
    & $0.5$ & $0.6859 \pm 0.0037$ & \multirow{3}{*}{$1600$} & $76084 \pm 23665$ & $0.7668 \pm 0.0016$ & \multirow{3}{*}{$3200$} & $103997 \pm 18692$ & $0.7774 \pm 0.0016$ & \multirow{3}{*}{$4800$} & $146238 \pm 23406$ \\ \cline{2-3} \cline{5-6} \cline{8-9} \cline{11-11} 
    PHFL & $0.9$ & $0.6966 \pm 0.0034$ & ~ &$76084 \pm 23665$ & $0.7804 \pm 0.0014$ & ~ & $103998 \pm 18693$ & $0.7948 \pm 0.0023$ & ~ & $146239 \pm 23408$ \\ \cline{2-3} \cline{5-6} \cline{8-9} \cline{11-11}
    (Ours) & $10$ & $0.7093 \pm 0.0024$ & ~ & $76083 \pm 23664$ & $0.7935 \pm 0.0015$ & ~ & $103999 \pm 18695$ & $0.8092 \pm 0.0019$ & ~ & $146238 \pm 23407$ \\ \cline{1-11} 
    HFL-VC & $0.5$ & $0.6932 \pm 0.0021$ & $2417 \pm 221$ & $102117 \pm 24405$ & $0.7704 \pm 0.0042$ & $10052 \pm 267$ & $280710 \pm 43428$ & $0.7825 \pm 0.0033$ & $16416 \pm 346$ & $427512 \pm 61116$ \\ \cline{2-11}
    (UB) & $0.9$ & $0.7070 \pm 0.0024$ & $2417 \pm 221$ & $102117 \pm 24405$ & $0.7866 \pm 0.0016$ & $10052 \pm 267$ & $280710 \pm 43428$ & $0.7975 \pm 0.0025$ & $16416 \pm 346$ & $427512 \pm 61116$ \\ \cline{2-11}
    (Ours) & $10$ & $0.7164 \pm 0.0042$ & $2417 \pm 221$ & $102117 \pm 24405$ & $0.7936 \pm 0.0009$ & $10052 \pm 267$ & $280710 \pm 43428$ & $0.8102 \pm 0.0018$ & $16416 \pm 346$ & $427512 \pm 61116$ \\ \cline{1-11}
    & $0.5$ & $0.6934 \pm 0.0030$ & $1208 \pm 110$ & $51058 \pm 12202$ & $ 0.7634 \pm 0.0017$ & $5026 \pm 132$ & $140349 \pm 21712$ & $0.7726 \pm 0.0014$ & $8209 \pm 171$ & $213745 \pm 30554$ \\ \cline{2-11}
    HFL & $0.9$ & $0.7033 \pm 0.0031$ & $1208 \pm 110$ & $51058 \pm 12202$ & $ 0.7769 \pm 0.0060$ & $5026 \pm 132$ & $140349 \pm 21712$ & $0.7847 \pm 0.0014$ & $8209 \pm 171$ & $213745 \pm 30554$ \\ \cline{2-11}
    (UB)- $\kappa_0$ & $10$ & $0.7208 \pm 0.0018$ & $1208 \pm 110$ & $51058 \pm 12202$ & $0.7895 \pm 0.0023$ & $5026 \pm 132$ & $140349 \pm 21712$ & $0.8004 \pm 0.0016$ & $8209 \pm 171$ & $213745 \pm 30554$ \\ \cline{1-11} \hline
\end{tabular}

\label{performanceComparison_10_2_2_2_cifar10}
\end{table*} 
\begin{table*}
\caption{Test Accuracy with Trained $\mathbf{w}^{T}$ on CIFAR-$10$ dataset with $\kappa_0=5, \kappa_1=4, \kappa_2=\kappa_3=2$ and $T=100$}
\fontsize{6.3}{8}\selectfont
\centering

\begin{tabular}{|C {0.8cm} | C{0.4cm} | C {1.44cm} | C{1cm} | C{1.45cm} | C{1.44cm} | C{1.1cm} | C{1.43cm} | C{1.44cm} | C{1.1cm} | C{1.6cm} |}
\hline 
\multirow{2}{*}{\textbf{Methods}} & \multirow{2}{*}{\rs\rs \textbf{Dir($\bar{\alpha}$)}} & \multicolumn{3}{C{2.65cm}|} {\textbf{With CNN Model}} & \multicolumn{3}{C{2.65cm}|} {\textbf{With ResNet-$18$ Model}} & \multicolumn{3}{C{2.65cm}|} {\textbf{With ResNet-$34$ Model}} \\  \cline{3-11} 
&  & Acc & Req T [s] & Req E [J] & Acc & Req T [s] & Req E [J] & Acc & Req T [s] & Req E [J] \\ \hline
 & $0.5$ & $0.6950 \pm 0.0047$ & \multirow{3}{*}{$2080$} & $96710 \pm 29327$ & $0.7699 \pm 0.0016$ & \multirow{3}{*}{$6400$} & $196244 \pm 31755$ & $0.7826 \pm 0.0023$ & \multirow{3}{*}{$9600$} & $280939 \pm 41988$ \\ \cline{2-3} \cline{5-6} \cline{8-9} \cline{11-11} 
 PHFL & $0.9$ & $0.7060 \pm 0.0026$ & ~ & $96710 \pm 29327$ & $0.7828 \pm 0.0018$ & ~ & $196245 \pm 31757$ & $0.8010 \pm 0.0033$ & ~ & $280938 \pm 41986$ \\ \cline{2-3} \cline{5-6} \cline{8-9} \cline{11-11} 
 (Ours) & $10$ & $0.7136 \pm 0.0043$ & ~ & $96710 \pm 29327$ & $0.7945 \pm 0.0014$ & ~ & $196244 \pm 31756$ & $0.8111 \pm 0.0014$ & ~ & $280938 \pm 41987$  \\ \cline{1-11} 
 HFL-VC & $0.5$ & $0.7052 \pm 0.0038$ & $2587 \pm 220$ & $106082 \pm 24755$ & $0.7726 \pm 0.0057$ & $17969 \pm 364$ & $463257 \pm 65516$ & $0.7881 \pm 0.0028$ & $30737 \pm 534$ & $756853 \pm 102321$  \\ \cline{2-11}
 (UB) & $0.9$ & $0.7122 \pm 0.0016$ & $2587 \pm 220$ & $106082 \pm 24755$ & $0.7874 \pm 0.0015$ & $17969 \pm 364$ & $463257 \pm 65516$ & $0.8012 \pm 0.0009$ & $30737 \pm 534$ & $756853 \pm 102321$ \\ \cline{2-11}
 (Ours) & $10$ & $0.7195 \pm 0.0015$ & $2587 \pm 220$ & $106082 \pm 24755$ & $0.7961 \pm 0.0028$ & $17969 \pm 364$ & $463257 \pm 65516$ & $0.8117 \pm 0.0031$ & $30737 \pm 534$ & $756853 \pm 102321$ \\ \cline{1-11}
 R-PHFL & $0.5$ & $ 0.5813 \pm 0.0146$ & \multirow{3}{*}{$2080$} & $62927 \pm 18914$ & $0.3863 \pm 0.0408$ & \multirow{3}{*}{$6400$} & $184493 \pm 24419$ & $0.1042 \pm 0.0069$ & \multirow{3}{*}{$9600$} & $316985 \pm 41406$ \\ \cline{2-3} \cline{5-6} \cline{8-9} \cline{11-11} 
 & $0.9$ & $0.6013 \pm 0.0164$ & ~ & $62933 \pm 18924$ & $0.4010 \pm 0.0208$ & ~ & $184766 \pm 24766$ & $0.1157 \pm 0.0137$ & ~ & $316870 \pm 41267$ \\ \cline{2-3} \cline{5-6} \cline{8-9} \cline{11-11}
 & $10$ & $0.6360 \pm 0.0082$ & ~ & $62849 \pm 18781$ & $0.4648 \pm 0.0352$ & ~ & $184617 \pm 24577$ & $0.1111 \pm 0.0078$ & ~ & $317122 \pm 41570$ \\ \cline{1-11} \hline
\end{tabular}

\label{performanceComparison_5_4_2_2_cifar10}
\end{table*} 

First, we illustrate the required computation time and the corresponding energy expenses for the baselines and our proposed PHFL solution with different deadline thresholds.
Naturally, if we increase the $\mathrm{t^{th}}$ for each VC aggregation round, our proposed solution will take longer time and energy to perform $T=100$ global rounds.
Moreover, since there are $\kappa_1$ VC aggregation rounds in our proposed system model, the original model training and offloading with HFL-VC-UB is expected to take significantly longer and consume more energy than the HFL-UB baseline.
Therefore, the effectiveness, in terms of time and energy consumption, of PHFL is largely dependent upon the deadline threshold $\mathrm{t^{th}}$.
While R-PHFL requires the same time as our proposed PHFL, the energy requirements for R-PHFL can vary significantly due to the random selection of the $\delta_i^t$'s that lead to different model sizes and different local training episodes in different VCs.

Our results in Figs. \ref{cdfPHFLRoundTime_CNN}-\ref{cdfPHFLRoundTime_Resnet34} and Figs. \ref{cdfPHFLRoundEnergy_CNN}-\ref{cdfPHFLRoundEnergy_Resnet34} clearly show these trade-offs with time and energy consumption, respectively, for the three models.
It is worth pointing out that we adopted the popular lottery ticket hypothesis \cite{franklelottery} for finding the winning ticket that requires performing $\rho$ iterations on the initial model with full parameter space as described in Section \ref{subSecPHFL}.
This incurs additional $\mathrm{t}_i^{\mathrm{cp_d}}$ time and $\mathrm{e}_i^{\mathrm{cp_d}}$ energy overheads, as calculated in (\ref{localCompLottery}) and (\ref{energyConLot}), respectively.
As such, when the $\mathrm{t^{th}}$ is large, if the UEs have sufficient energy budgets, they can prune only a few neurons.
\bblue{Therefore}, the total time and energy consumption for the original model computation for getting the winning ticket, training the pruned model and offloading the trained pruned model parameters can become slightly larger than the HFL-VC-UB.
This is observed in Fig. \ref{cdfPHFLRoundTime_CNN} and Fig. \ref{cdfPHFLRoundEnergy_CNN} for CNN model when $\mathrm{t^{th}}=1.3$s, and also in Fig. \ref{cdfPHFLRoundTime_Resnet18} \bblue{and Fig. \ref{cdfPHFLRoundEnergy_Resnet18}} when \bblue{$\mathrm{t^{th}}=8$s} for the ResNet-$18$ model.
Besides, the dashed vertical lines in Fig. \ref{cdfPHFLRoundEnergy_Resnet18} and Fig. \ref{cdfPHFLRoundEnergy_Resnet34} show the mean energy budget of the clients for each PHFL global round.
While all UEs are able to perform the learning and offloading within this mean energy budget with CNN and ResNet-$18$ models, clearly, when the ResNet-$34$ model is used, more than \bblue{$57\%$} of the clients will fail to perform the HFL-VC-UB due to their limited energy budgets.

\begin{table*}
\caption{Test Accuracy with Trained $\mathbf{w}^{T}$ on CIFAR-$100$ dataset with $\kappa_0=5, \kappa_1=\kappa_2=\kappa_3=2$ and $T=100$}
\fontsize{6.3}{8}\selectfont
\centering

\begin{tabular}{|C {0.88cm} | C{0.4cm} | C {1.44cm} | C{1cm} | C{1.4cm} | C{1.45cm} | C{1cm} | C{1.45cm} | C{1.5cm} | C{1.25cm} | C{1.5cm} |}
\hline 
\multirow{2}{*}{\textbf{Methods}} & \multirow{2}{*}{\rs\rs \textbf{Dir($\bar{\alpha}$)}} & \multicolumn{3}{C{2.65cm}|} {\textbf{With CNN Model}} & \multicolumn{3}{C{2.65cm}|} {\textbf{With ResNet-$18$ Model}} & \multicolumn{3}{C{2.65cm}|} {\textbf{With ResNet-$34$ Model}} \\  \cline{3-11} 
&  & Acc & Req T [s] & Req E [J] & Acc & Req T [s] & Req E [J] & Acc & Req T [s] & Req E [J] \\ \hline
 & $0.5$ & $0.3723 \pm 0.0101$ & \multirow{3}{*}{$1120$} & $51667 \pm 15530$ & $0.4725 \pm 0.0032$ & \multirow{3}{*}{$3200$} & $98074 \pm 15856$ & $0.4770 \pm 0.0032$ & \multirow{3}{*}{$4800$} & $140445 \pm 20984$ \\ \cline{2-3} \cline{5-6} \cline{8-9} \cline{11-11} 
 PHFL & $0.9$ & $0.3786 \pm 0.0096$ & ~ & $51667 \pm 15531$ & $0.4765 \pm 0.0003$ & ~ & $98075 \pm 15857$ & $0.4840 \pm 0.0032$ & ~ & $140444 \pm 20983$ \\ \cline{2-3} \cline{5-6} \cline{8-9} \cline{11-11} 
 (Ours) & $10$ & $0.3795 \pm 0.0084$ & ~ & $51667 \pm 15531$ & $0.4811 \pm 0.0024$ & ~ & $98075 \pm 15857$ & $0.4834 \pm 0.0023$ & ~ & $140444 \pm 20984$ \\ \cline{1-11} 
 HFL-VC & $0.5$ & $0.3962 \pm 0.0030$ & $1319 \pm 109$ & $53645 \pm 12431$ & $0.4805 \pm 0.0022$ & $9038 \pm 183$ & $232839 \pm 32910$ & $0.4909 \pm 0.0017$ & $15422 \pm 270$ & $379641 \pm 51319$ \\ \cline{2-11}
 (UB) & $0.9$ & $0.3956 \pm 0.0030$ & $1319 \pm 109$ & $53645 \pm 12431$ & $0.4832 \pm 0.0034$ & $9038 \pm 183$ & $232839 \pm 32910$ & $0.4922 \pm 0.0019$ & $15422 \pm 270$ & $379641 \pm 51319$ \\ \cline{2-11}
 (Ours) & $10$ & $0.4004 \pm 0.0042$ & $1319 \pm 109$ & $53645 \pm 12431$ & $0.4800 \pm 0.0019$ & $9038 \pm 183$ & $232839 \pm 32910$ & $0.4895 \pm 0.0024$ & $15422 \pm 270$ & $379641 \pm 51319$ \\ \cline{1-11}
 R-PHFL & $0.5$ & $0.1994 \pm 0.0075$ & \multirow{3}{*}{$1120$} & $33751 \pm 10069$ & $0.0587 \pm 0.0212$ & \multirow{3}{*}{$3200$} & $92996 \pm 12374$ & $0.0123 \pm 0.0028$ & \multirow{3}{*}{$4800$} & $159083 \pm 20562$ \\ \cline{2-3} \cline{5-6} \cline{8-9} \cline{11-11} 
 & $0.9$ & $0.2119 \pm 0.0156$ & ~ & $33763 \pm 10089$ & $0.0641 \pm 0.0371$ & ~ & $ 92893 \pm 12245$ & $0.0104 \pm 0.0007$ & ~ & $159122 \pm 20608$ \\ \cline{2-3} \cline{5-6} \cline{8-9} \cline{11-11} 
 & $10$ & $0.2165 \pm 0.0057$ & ~ & $33733 \pm 10040$ & $0.0728 \pm 0.0427$ & ~ & $92893 \pm 12245$ & $0.0109 \pm 0.0011$ & ~ & $159175 \pm 20671$ \\ \cline{1-11} \hline
 HFL & $0.5$ & $0.3509 \pm 0.0023$ & $659 \pm 54$ & $26822 \pm 6215$ & $0.4730 \pm 0.0025$ & $4519 \pm 90$ & $116413 \pm 16453$ & $0.4663 \pm 0.0018$ & $7712 \pm 133$ & $189809 \pm 25656$ \\ \cline{2-11}
 (UB) - $\kappa_0$ & $0.9$ & $0.3524 \pm 0.0029$ & $659 \pm 54$ & $26822 \pm 6215$ & $0.4768 \pm 0.0047$ & $4519 \pm 90$ & $116413 \pm 16453$ & $ 0.4756 \pm 0.0021$ & $7712 \pm 133$ & $189809 \pm 25656$ \\ \cline{2-11} 
 & $10$ & $0.3590 \pm 0.0019$ & $659 \pm 54$ & $26822 \pm 6215$ & $0.4841 \pm 0.0057$ & $4519 \pm 90$ & $116413 \pm 16453$ & $0.4776 \pm 0.0035$ & $7712 \pm 133$ & $189809 \pm 25656$ \\ \cline{1-11} \hline
\end{tabular}
\label{performanceComparison_5_2_2_2_cifar100}
\end{table*} 
\begin{table*}
\caption{Test Accuracy with Trained $\mathbf{w}^{T}$ on CIFAR-$100$ dataset with $\kappa_0=10, \kappa_1 = \kappa_2=\kappa_3=2$ and $T=100$}
\fontsize{6.3}{8}\selectfont
\centering

\begin{tabular}{|C {0.7cm} | C{0.4cm} | C {1.44cm} | C{1cm} | C{1.45cm} | C{1.45cm} | C{1.1cm} | C{1.45cm} | C{1.5cm} | C{1.25cm} | C{1.5cm} |}
\hline 
\multirow{2}{*}{\rs\textbf{Methods}} & \multirow{2}{*}{\rs\rs \textbf{Dir($\bar{\alpha}$)}} & \multicolumn{3}{C{2.65cm}|} {\textbf{With CNN Model}} & \multicolumn{3}{C{2.65cm}|} {\textbf{With ResNet-$18$ Model}} & \multicolumn{3}{C{2.65cm}|} {\textbf{With ResNet-$34$ Model}} \\  \cline{3-11} 
&  & Acc & Req T [s] & Req E [J] & Acc & Req T [s] & Req E [J] & Acc & Req T [s] & Req E [J] \\ \hline
 & $0.5$ & $0.3700 \pm 0.0035$ & \multirow{3}{*}{$2000$} & $94632 \pm 29277$ & $0.4654 \pm 0.0035$  & \multirow{3}{*}{$3200$} & $103923 \pm 18656$ & $0.4801 \pm 0.0009$ & \multirow{3}{*}{$4800$} & $146195 \pm 23388$ \\ \cline{2-3} \cline{5-6} \cline{8-9} \cline{11-11} 
 PHFL & $0.9$ & $0.3648 \pm 0.0044$ & ~ & $94632 \pm 29277$ & $0.4699 \pm 0.0026$ & ~ & $103923 \pm 18655$ & $0.4810 \pm 0.0047$ & ~ & $146196 \pm 23388$ \\ \cline{2-3} \cline{5-6} \cline{8-9} \cline{11-11} 
 (Ours) & $10$ & $0.3634 \pm 0.0039$ & ~ & $94632 \pm 29277$ & $0.4763 \pm 0.0019$ & ~ & $103922 \pm 18655$ & $0.4780 \pm 0.0018 $ & ~ & $146195 \pm 23388$ \\ \cline{1-11} 
 \rs\rs HFL-VC & $0.5$ & $0.3756 \pm 0.0041$ & $2443 \pm 220$ & $102721 \pm 24458$ & $0.4720 \pm 0.0020$ & $10104 \pm 268$ & $281915 \pm 43569$ & $0.4877 \pm 0.0058$ & $16469 \pm 346$ & $428717 \pm 61264$ \\ \cline{2-11}
 (UB) & $0.9$ & $0.3754 \pm 0.0053$ & $2443 \pm 220$ & $102721 \pm 24458$ & $ 0.4701 \pm 0.0043$ & $10104 \pm 268$ & $281915 \pm 43569$ & $0.4895 \pm 0.0015$ & $16469 \pm 346$ & $428717 \pm 61264$ \\ \cline{2-11}
 (Ours) & $10$ & $0.3794 \pm 0.0025$ & $2443 \pm 220$ & $102721 \pm 24458$ & $0.4736 \pm 0.0012$ & $ 10104 \pm 268$ & $281915 \pm 43569$ & $0.4736 \pm 0.0012$ & $10104 \pm 268$ & $281915 \pm 43569$ \\ \cline{1-11}
 HFL & $0.5$ & $0.3877 \pm 0.0031$ & $1221 \pm 110$ & $51360 \pm 12228$ & $0.4713 \pm 0.0052 $ & $5052 \pm 132 $ & $140951 \pm 21783$ & $0.4806 \pm 0.0058$ & $8235 \pm 171$ & $214347 \pm 30628$ \\ \cline{2-11}
 (UB) - & $0.9$ & $0.3877 \pm 0.0024$ & $1221 \pm 110$ & $51360 \pm 12228$ & $0.4731 \pm 0.0020$ & $5052 \pm 132$ & $140951 \pm 21783$ & $0.4855 \pm 0.0011$ & $8235 \pm 171 $ & $214347 \pm 30628 $ \\ \cline{2-11}
 $\kappa_0$ & $10$ & $0.3922 \pm 0.0042$ & $1221 \pm 110$ & $51360 \pm 12228$ & $0.4772 \pm 0.0047$ & $5052 \pm 132$ & $140951 \pm 21783$ & $0.4844 \pm 0.0031$ & $8235 \pm 171$ & $214347 \pm 30628$ \\ \cline{1-11} \hline
\end{tabular}
\label{performanceComparison_10_2_2_2_cifar100}
\end{table*} 
\begin{table*}[!t]
\caption{Test Accuracy with Trained $\mathbf{w}^{T}$ on CIFAR-$100$ dataset with $\kappa_0=5, \kappa_1=4, \kappa_2=\kappa_3=2$ and $T=100$}
\fontsize{6.3}{8}\selectfont
\centering
\begin{tabular}{|C {0.78cm} | C{0.4cm} | C {1.44cm} | C{1cm} | C{1.44cm} | C{1.44cm} | C{1.1cm} | C{1.45cm} | C{1.5cm} | C{1.1cm} | C{1.55cm} |}
\hline 
\multirow{2}{*}{\textbf{Methods}} & \multirow{2}{*}{\rs\rs \textbf{Dir($\bar{\alpha}$)}} & \multicolumn{3}{C{2.65cm}|} {\textbf{With CNN Model}} & \multicolumn{3}{C{2.65cm}|} {\textbf{With ResNet-$18$ Model}} & \multicolumn{3}{C{2.65cm}|} {\textbf{With ResNet-$34$ Model}} \\  \cline{3-11} 
&  & Acc & Req T [s] & Req E [J] & Acc & Req T [s] & Req E [J] & Acc & Req T [s] & Req E [J] \\ \hline
 & $0.5$ & $0.3817 \pm 0.0053$ & \multirow{3}{*}{$2240$} & $103334 \pm 31061$ & $0.4712 \pm 0.0043$ & \multirow{3}{*}{$6400$} & $196151 \pm 31714$ & $0.4932 \pm 0.0039$ & \multirow{3}{*}{$9600$} & $280889 \pm 41968$ \\ \cline{2-3} \cline{5-6} \cline{8-9} \cline{11-11} 
 PHFL & $0.9$ & $0.3782 \pm 0.0022$ & ~ & $103335 \pm 31061$ & $0.4815 \pm 0.0049$ & ~ & $196151 \pm 31714$ & $0.4928 \pm 0.0029$ & ~ & $280889 \pm 41968$ \\ \cline{2-3} \cline{5-6} \cline{8-9} \cline{11-11} 
 (Ours) & $10$ & $0.3853 \pm 0.0035$ & ~ & $103335 \pm 31061$ & $0.4806 \pm 0.0035$ & ~ & $196150 \pm 31714$ & $0.4935 \pm 0.0037$ & ~ & $280889 \pm 41968$ \\ \cline{1-11} 
 \rs HFL-VC & $0.5$ & $0.3924 \pm 0.0055$ & $2638 \pm 219 $ & $107290 \pm 24863$ & $0.4780 \pm 0.0043$ & $18074 \pm 365$ & $465668 \pm 65814$ & $0.4974 \pm 0.0032$ & $30841 \pm 535$ & $759264 \pm 102626$ \\ \cline{2-11}
 (UB) & $0.9$ & $0.3871 \pm 0.0063$ & $2638 \pm 219 $ & $107290 \pm 24863$ & $0.4819 \pm 0.0012$ & $18074 \pm 365$ & $465668 \pm 65814$ & $0.4992 \pm 0.0022$ & $30841 \pm 535$ & $759264 \pm 102626$ \\ \cline{2-11}
 (Ours) & $10$ & $0.3934 \pm 0.0048$ & $2638 \pm 219$ & $107290 \pm 24863$ & $0.4824 \pm 0.0035$ & $18074 \pm 365$ & $465668 \pm 65814$ & $0.4954 \pm 0.0058$ & $30841 \pm 535$ & $759264 \pm 102626$ \\ \cline{1-11}
 R-PHFL & $0.5$ & $0.2338 \pm 0.0047 $ & \multirow{3}{*}{$2240$} & $67480 \pm 20211$ & $0.1562 \pm 0.0181$ & \multirow{3}{*}{$6400$} & $185662 \pm 24670$ & $0.0153 \pm 0.0032$ & \multirow{3}{*}{$9600$} & $318013 \pm 41476$ \\ \cline{2-3} \cline{5-6} \cline{8-9} \cline{11-11} 
 & $0.9$ & $0.2282 \pm 0.0092$ & ~ & $67530 \pm 20296$ & $ 0.1706 \pm 0.0180$ & ~ & $185738 \pm 24765$ & $0.0149 \pm 0.0011$ & ~ & $318004 \pm 41465$ \\ \cline{2-3} \cline{5-6} \cline{8-9} \cline{11-11} 
 & $10$ & $0.2398 \pm 0.0085$ & ~ & $67535 \pm 20304$ & $0.1837 \pm 0.0220$ & ~ & $185639 \pm 24640$ & $0.0144 \pm 0.0036$ & ~ & $317815 \pm 41240$ \\ \cline{1-11} \hline
\end{tabular}
\label{performanceComparison_5_4_2_2_cifar100}
\end{table*}

Finally, we show the impact of $\kappa_0$ and $\kappa_1$ for different dataset heterogeneity levels on CIFAR-$10$ dataset in Table \ref{performanceComparison_5_2_2_2_cifar10} - Table \ref{performanceComparison_5_4_2_2_cifar10}, where $\mathrm{t^{th}}=4$s and $\mathrm{t^{th}}=6$s is used in our proposed PHFL algorithms for ResNet-$18$ and ResNet-$34$ models, respectively. 
Besides, for the CNN model, we used $\mathrm{t^{th}}=1.3$s and $\mathrm{2}$s, respectively, in our PHFL algorithm when $\kappa_0=5$ and $\kappa_0=10$, respectively, due to the facts that pruning exacerbates test performance for a shallow model and computational time dominates the offloading delay.
Similarly, we considered $\mathrm{t^{th}}=1.4$s and $\mathrm{t^{th}}=2.5$s, respectively, for $\kappa_0=5$ and $\kappa_0=10$ on CIFAR-$100$ datasets for the CNN model.
The performance comparisons for different $\bar{\alpha}$'s are shown in Table \ref{performanceComparison_5_2_2_2_cifar100} - Table \ref{performanceComparison_5_4_2_2_cifar100}.
From the tables, it is quite clear that pruning helps with negligible performance deviation from its original non-pruned counterparts.
Besides, for the shallow CNN model, the performance gain, in terms of test accuracy, of our proposed PHFL is insignificant compared to the bulky ResNets.  
Moreover, increasing $\kappa_0$ or $\kappa_1$ generally improves the test accuracy.
However, if the same $\mathrm{t^{th}}$ is to be used, it is beneficial to increase $\kappa_0$ compared to increasing $\kappa_1$.

\section{Conclusion}
\label{section_conclusion}
\noindent
This work proposed a model pruning solution to alleviate bandwidth scarcity and limited computational capacity of wireless clients in heterogeneous networks. 
Using the convergence upper-bound, pruning ratio, computation frequency and transmission power of the clients were jointly optimized to maximize the convergence rate.
The performances were evaluated on two popular datasets using three popular machine learning models of different total training parameter sizes.
The results suggest that pruning can significantly reduce training time, energy expense and bandwidth requirement while incurring negligible test performance.

\bibliography{Ref.bib}

\begin{thebibliography}{10}
\providecommand{\url}[1]{#1}
\csname url@samestyle\endcsname
\providecommand{\newblock}{\relax}
\providecommand{\bibinfo}[2]{#2}
\providecommand{\BIBentrySTDinterwordspacing}{\spaceskip=0pt\relax}
\providecommand{\BIBentryALTinterwordstretchfactor}{4}
\providecommand{\BIBentryALTinterwordspacing}{\spaceskip=\fontdimen2\font plus
\BIBentryALTinterwordstretchfactor\fontdimen3\font minus \fontdimen4\font\relax}
\providecommand{\BIBforeignlanguage}[2]{{%
\expandafter\ifx\csname l@#1\endcsname\relax
\typeout{** WARNING: IEEEtran.bst: No hyphenation pattern has been}%
\typeout{** loaded for the language `#1'. Using the pattern for}%
\typeout{** the default language instead.}%
\else
\language=\csname l@#1\endcsname
\fi
#2}}
\providecommand{\BIBdecl}{\relax}
\BIBdecl

\bibitem{mcmahan2017communication}
B.~McMahan, E.~Moore, D.~Ramage, S.~Hampson, and B.~A. y~Arcas, ``Communication-efficient learning of deep networks from decentralized data,'' in \emph{Proc. AISTATS}, 2017.

\bibitem{tran2019Fedearated}
N.~H. Tran, W.~Bao, A.~Zomaya, M.~N.~H. Nguyen, and C.~S. Hong, ``Federated learning over wireless networks: Optimization model design and analysis,'' in \emph{Proc. IEEE INFOCOM}, 2019.

\bibitem{yang2020energy}
Z.~Yang, M.~Chen, W.~Saad, C.~S. Hong, and M.~Shikh-Bahaei, ``Energy efficient federated learning over wireless communication networks,'' \emph{IEEE Trans. Wireless Commun.}, vol.~20, no.~3, pp. 1935--1949, 2020.

\bibitem{richeng2022communication}
R.~Jin, X.~He, and H.~Dai, ``Communication efficient federated learning with energy awareness over wireless networks,'' \emph{IEEE Trans. Wireless Commun.}, vol.~21, no.~7, pp. 5204--5219, Jan. 2022.

\bibitem{chen2023exploring}
Z.~Chen, W.~Yi, and A.~Nallanathan, ``Exploring representativity in device scheduling for wireless federated learning,'' \emph{IEEE Trans. Wireless Commun.}, pp. 1--1, 2023.

\bibitem{zhang2023joint}
T.~Zhang, K.-Y. Lam, J.~Zhao, and J.~Feng, ``Joint device scheduling and bandwidth allocation for federated learning over wireless networks,'' \emph{IEEE Trans. Wireless Commun.}, pp. 1--1, 2023.

\bibitem{wang2022demystifying}
J.~Wang, S.~Wang, R.-R. Chen, and M.~Ji, ``Demystifying why local aggregation helps: Convergence analysis of hierarchical sgd,'' in \emph{Proc. AAAI}, 2022.

\bibitem{hosseinalipour2022multi}
S.~Hosseinalipour, S.~S. Azam, C.~G. Brinton, N.~Michelusi, V.~Aggarwal, D.~J. Love, and H.~Dai, ``Multi-stage hybrid federated learning over large-scale d2d-enabled fog networks,'' \emph{IEEE/ACM Trans. Network.}, vol.~30, no.~4, pp. 1569--1584, 2022.

\bibitem{xu2021adaptive}
B.~Xu, W.~Xia, W.~Wen, P.~Liu, H.~Zhao, and H.~Zhu, ``Adaptive hierarchical federated learning over wireless networks,'' \emph{IEEE Trans. Vehicular Technol.}, vol.~71, no.~2, pp. 2070--2083, 2021.

\bibitem{Liu2022jointUE}
S.~Liu, G.~Yu, X.~Chen, and M.~Bennis, ``Joint user association and resource allocation for wireless hierarchical federated learning with iid and non-iid data,'' \emph{IEEE Trans. Wireless Commun.}, vol.~21, no.~10, pp. 7852--7866, 2022.

\bibitem{luo2020hfel}
S.~Luo, X.~Chen, Q.~Wu, Z.~Zhou, and S.~Yu, ``Hfel: Joint edge association and resource allocation for cost-efficient hierarchical federated edge learning,'' \emph{IEEE Trans. Wireless Commun.}, vol.~19, no.~10, pp. 6535--6548, 2020.

\bibitem{liu2020client}
L.~Liu, J.~Zhang, S.~Song, and K.~B. Letaief, ``Client-edge-cloud hierarchical federated learning,'' in \emph{Proc. IEEE ICC}, 2020.

\bibitem{feng2022Mobility}
C.~Feng, H.~H. Yang, D.~Hu, Z.~Zhao, T.~Q.~S. Quek, and G.~Min, ``Mobility-aware cluster federated learning in hierarchical wireless networks,'' \emph{IEEE Trans. Wireless Commun.}, vol.~21, no.~10, pp. 8441--8458, Oct. 2022.

\bibitem{abad2020hierarchical}
M.~S.~H. Abad, E.~Ozfatura, D.~Gunduz, and O.~Ercetin, ``Hierarchical federated learning across heterogeneous cellular networks,'' in \emph{Proc. ICASSP}.\hskip 1em plus 0.5em minus 0.4em\relax IEEE, 2020.

\bibitem{kairouz2021advances}
P.~Kairouz, H.~B. McMahan, B.~Avent, A.~Bellet, M.~Bennis, A.~N. Bhagoji, K.~Bonawitz, Z.~Charles, G.~Cormode, R.~Cummings \emph{et~al.}, ``Advances and open problems in federated learning,'' \emph{Foundations and Trends{\textregistered} in Machine Learning}, vol.~14, no. 1--2, pp. 1--210, 2021.

\bibitem{MLSYS2020_38af8613}
T.~Li, A.~K. Sahu, M.~Zaheer, M.~Sanjabi, A.~Talwalkar, and V.~Smith, ``Federated optimization in heterogeneous networks,'' \emph{Proc. MLSys}, vol.~2, pp. 429--450, 2020.

\bibitem{yang2022anarchic}
H.~Yang, X.~Zhang, P.~Khanduri, and J.~Liu, ``Anarchic federated learning,'' in \emph{Proc. ICML}.\hskip 1em plus 0.5em minus 0.4em\relax PMLR, 2022, pp. 25\,331--25\,363.

\bibitem{wang2020tackling}
J.~Wang, Q.~Liu, H.~Liang, G.~Joshi, and H.~V. Poor, ``Tackling the objective inconsistency problem in heterogeneous federated optimization,'' \emph{Proc. NeurIPS}, vol.~33, pp. 7611--7623, 2020.

\bibitem{lin2020Dynamic}
T.~Lin, S.~U. Stich, L.~Barba, D.~Dmitriev, and M.~Jaggi, ``Dynamic model pruning with feedback,'' in \emph{Proc. ICLR}, 2020.

\bibitem{jiang2022model}
Y.~Jiang, S.~Wang, V.~Valls, B.~J. Ko, W.-H. Lee, K.~K. Leung, and L.~Tassiulas, ``Model pruning enables efficient federated learning on edge devices,'' \emph{IEEE Trans. Neural Netw. Learn. Syst.}, pp. 1--13, Apr. 2022.

\bibitem{zhu2023fedlp}
Z.~Zhu, Y.~Shi, J.~Luo, F.~Wang, C.~Peng, P.~Fan, and K.~B. Letaief, ``Fedlp: Layer-wise pruning mechanism for communication-computation efficient federated learning,'' \emph{arXiv preprint arXiv:2303.06360}, 2023.

\bibitem{liu2021adaptive}
S.~Liu, G.~Yu, R.~Yin, and J.~Yuan, ``Adaptive network pruning for wireless federated learning,'' \emph{IEEE Wireless Commun. Lett.}, vol.~10, no.~7, pp. 1572--1576, 2021.

\bibitem{Liu2022Joint}
S.~Liu, G.~Yu, R.~Yin, J.~Yuan, L.~Shen, and C.~Liu, ``Joint model pruning and device selection for communication-efficient federated edge learning,'' \emph{IEEE Trans. Commun.}, vol.~70, no.~1, pp. 231--244, Jan. 2022.

\bibitem{ren2022toward}
J.~Ren, W.~Ni, and H.~Tian, ``Toward communication-learning trade-off for federated learning at the network edge,'' \emph{IEEE Commun. Lett.}, vol.~26, no.~8, pp. 1858--1862, Aug. 2022.

\bibitem{3GPP_TS_22_554}
``\textit{3rd Generation Partnership Project; Technical Specification Group Core Network and Terminals; Proximity-services in 5G System protocol aspects; Stage 3},'' 3GPP TS 24.554 V18.0.0, Rel. 18, Mar. 2023.

\bibitem{franklelottery}
J.~Frankle and M.~Carbin, ``The lottery ticket hypothesis: Finding sparse, trainable neural networks,'' in \emph{Proc. ICLR}, 2019.

\bibitem{pervej2023resource}
M.~F. Pervej, R.~Jin, and H.~Dai, ``Resource constrained vehicular edge federated learning with highly mobile connected vehicles,'' \emph{IEEE J. Sel. Areas Commun.}, vol.~41, no.~6, pp. 1825--1844, June 2023.

\bibitem{stich2018sparsified}
S.~U. Stich, J.-B. Cordonnier, and M.~Jaggi, ``Sparsified sgd with memory,'' in \emph{Proc. {NeurIPS}}, 2018.

\bibitem{bottou2018optimization}
L.~Bottou, F.~E. Curtis, and J.~Nocedal, ``Optimization methods for large-scale machine learning,'' \emph{Siam Review}, vol.~60, no.~2, pp. 223--311, 2018.

\bibitem{diamond2016cvxpy}
S.~Diamond and S.~Boyd, ``Cvxpy: A python-embedded modeling language for convex optimization,'' \emph{J. Mach. Learn. Res.}, vol.~17, no.~1, pp. 2909--2913, 2016.

\bibitem{che2014joint}
E.~Che, H.~D. Tuan, and H.~H. Nguyen, ``Joint optimization of cooperative beamforming and relay assignment in multi-user wireless relay networks,'' \emph{IEEE Trans. Wireless Comm.}, vol.~13, no.~10, pp. 5481--5495, 2014.

\bibitem{sun2019optimal}
Y.~Sun, D.~Xu, D.~W.~K. Ng, L.~Dai, and R.~Schober, ``Optimal 3d-trajectory design and resource allocation for solar-powered uav communication systems,'' \emph{IEEE Trans. Commun.}, vol.~67, no.~6, pp. 4281--4298, 2019.

\bibitem{krizhevsky2009learning}
\BIBentryALTinterwordspacing
A.~Krizhevsky, ``Learning multiple layers of features from tiny images,'' Apr. 2009. [Online]. Available: \url{https://www.cs.toronto.edu/~kriz/learning-features-2009-TR.pdf}
\BIBentrySTDinterwordspacing

\bibitem{he2016deep}
K.~He, X.~Zhang, S.~Ren, and J.~Sun, ``Deep residual learning for image recognition,'' in \emph{Proc. IEEE CVPR}, 2016.

\end{thebibliography}
\bibliographystyle{IEEEtran}

\begin{IEEEbiography}[{\includegraphics[width=1in,height=1.25in,clip]{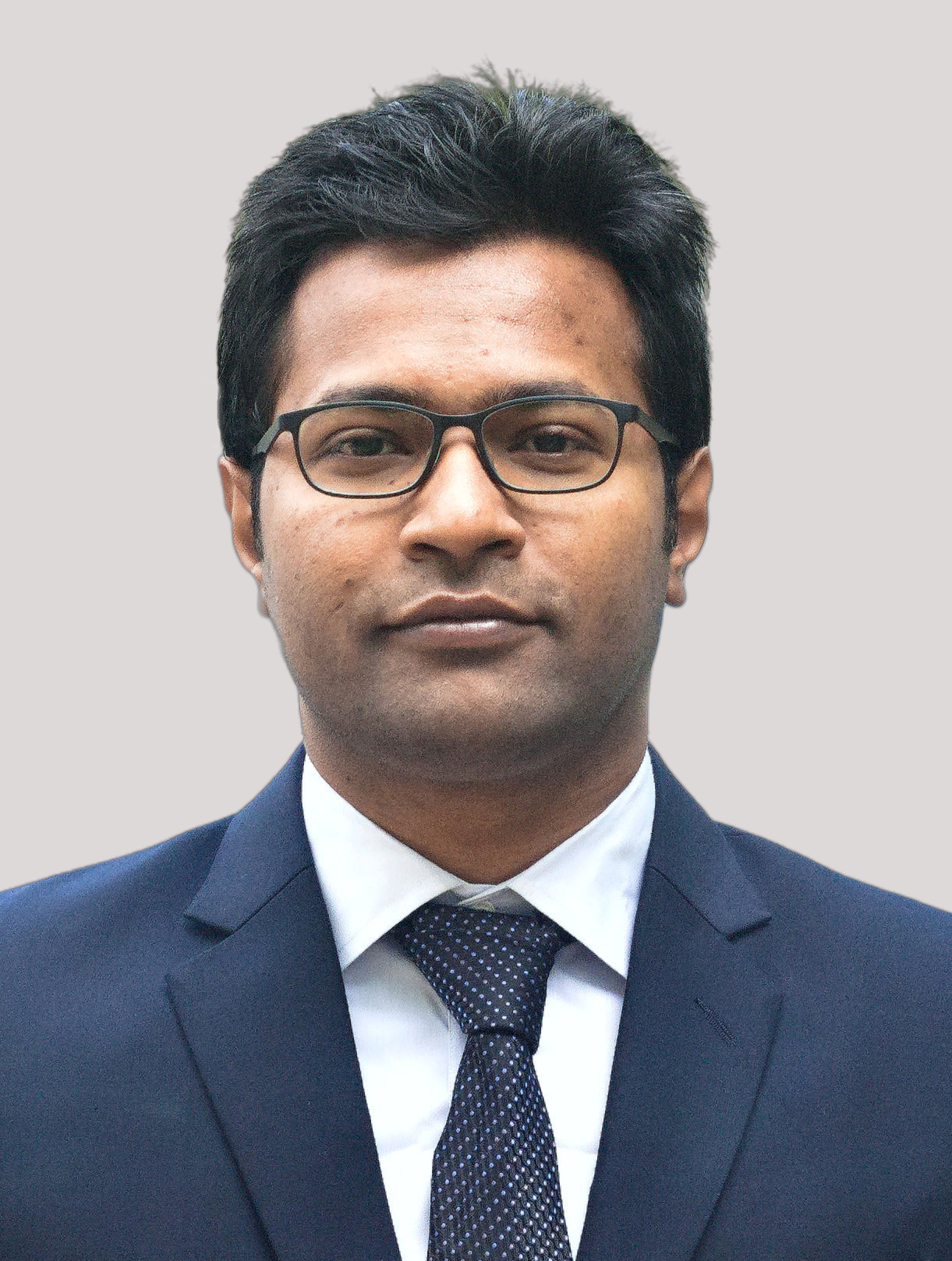}}]{Md Ferdous Pervej (M'23)} 
received the B.Sc. degree in electronics and telecommunication engineering from the Rajshahi University of Engineering and Technology, Rajshahi, Bangladesh, in 2014, and the M.S. and Ph.D. degrees in electrical engineering from Utah State University, Logan, UT, USA, in $2019$ and North Carolina State University, Raleigh, NC, USA, in  $2023$, respectively.  

He was with Mitsubishi Electric Research Laboratories, Cambridge, MA, in summer $2021$, and with Futurewei Wireless Research and Standards, Schaumburg, IL from May to December $2022$.
He is currently a Postdoctoral Scholar -- Research Associate in the Ming Hsieh Department of Electrical and Computer Engineering at the University of Southern California, Los Angeles, CA, USA.
His primary research interests are wireless networks, distributed machine learning, vehicle-to-everything communication, edge caching/computing, and machine learning for wireless networks.
\end{IEEEbiography}

\begin{IEEEbiography}[{\includegraphics[width=1in,height=1.25in,clip]{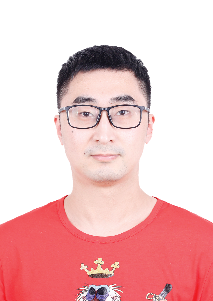}}]{Richeng Jin (M'21)} received the B.S. degree in information and communication engineering from Zhejiang University, Hangzhou, China, in 2015, and the Ph.D. degree in electrical engineering from North Carolina State University, Raleigh, NC, USA, in 2020.

He was a Postdoctoral Researcher in electrical and computer engineering at North Carolina State University, Raleigh, NC, USA, from 2021 to 2022. He is currently a faculty member of the department of information and communication engineering with Zhejiang University, Hangzhou, China. His research interests are in the general area of wireless AI, game theory, and security and privacy in machine learning/artificial intelligence and wireless networks.
\end{IEEEbiography}

\begin{IEEEbiography}[{\includegraphics[width=1.1in,height=1.2in,clip]{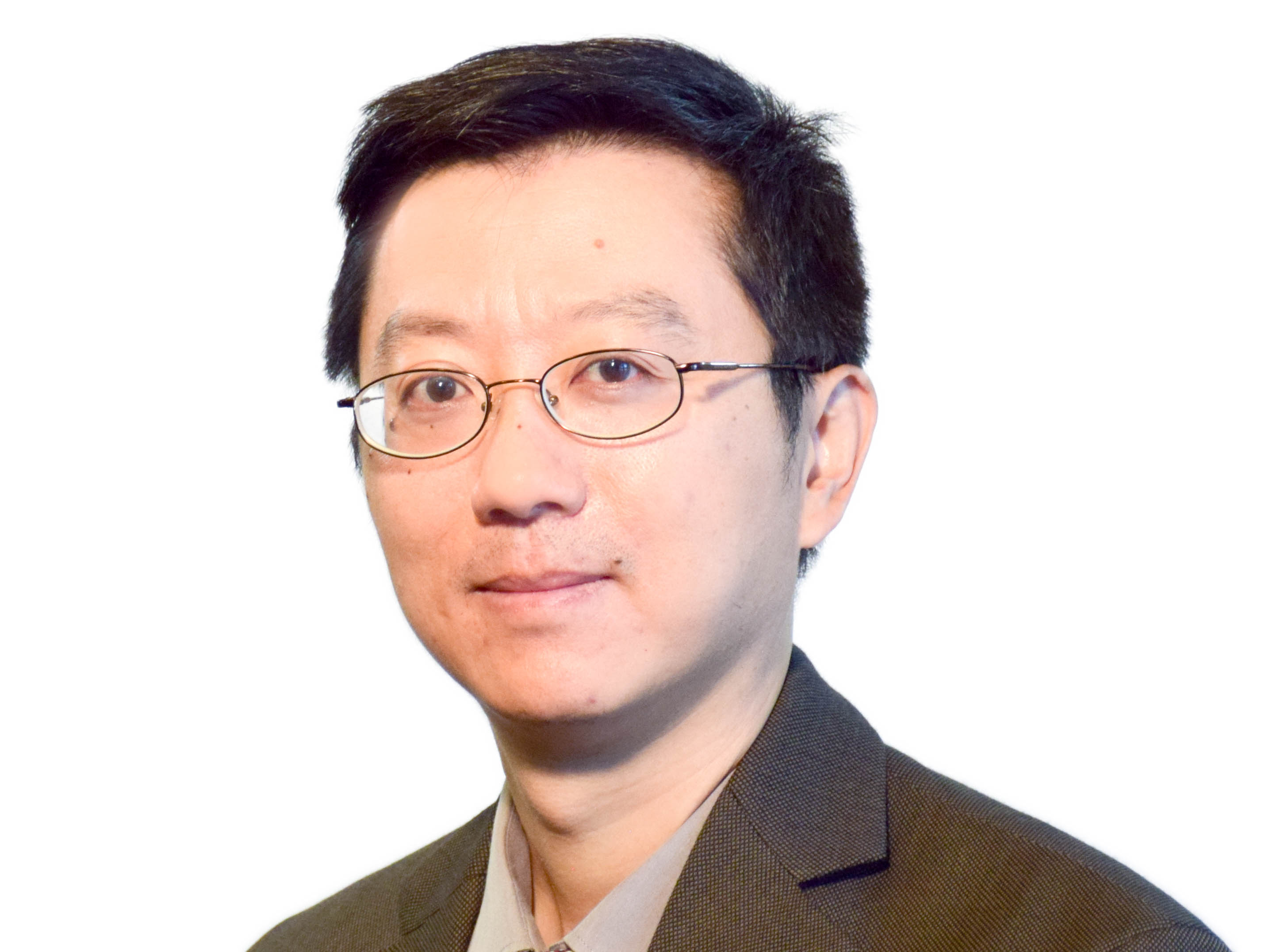}}]{Huaiyu Dai (F'17)} 
received the B.E. and M.S. degrees in electrical engineering from Tsinghua University, Beijing, China, in 1996 and 1998, respectively, and the Ph.D. degree in electrical engineering from Princeton University, Princeton, NJ in 2002. 

He was with Bell Labs, Lucent Technologies, Holmdel, NJ, in summer 2000, and with AT\&T Labs-Research, Middletown, NJ, in summer 2001. He is currently a Professor of Electrical and Computer Engineering with NC State University, Raleigh, holding the title of University Faculty Scholar. His research interests are in the general areas of communications, signal processing, networking, and computing. His current research focuses on machine learning and artificial intelligence for communications and networking, multilayer and interdependent networks, dynamic spectrum access and sharing, as well as security and privacy issues in the above systems.

He has served as an area editor for IEEE Transactions on Communications, a member of the Executive Editorial Committee for IEEE Transactions on Wireless Communications, and an editor for IEEE Transactions on Signal Processing. Currently he serves as an area editor for IEEE Transactions on Wireless Communications. He was a co-recipient of best paper awards at 2010 IEEE International Conference on Mobile Ad-hoc and Sensor Systems (MASS 2010), 2016 IEEE INFOCOM BIGSECURITY Workshop, and 2017 IEEE International Conference on Communications (ICC 2017). He received Qualcomm Faculty Award in 2019.
\end{IEEEbiography}

\newpage
\clearpage
\appendices 
\onecolumn

\section*{Supplementary Materials}
\noindent
\textbf{Additional Notations:} Analogous to our previous notations, we express the pruned virtual models at different hierarchy levels as $\tilde{\bar{\mathbf{w}}}_j^{t,0} \coloneqq \sum_{i=1}^{U_{j,k,l}} \alpha_i \tilde{\mathbf{w}}_i^{t,0}$, $\tilde{\bar{\mathbf{w}}}_k^{t,0} \coloneqq \sum_{j=1}^{V_{k,l}} \alpha_j \sum_{i=1}^{U_{j,k,l}} \alpha_i \tilde{\mathbf{w}}_i^{t,0} = \sum_{j=1}^{V_{k,l}} \alpha_j \tilde{\bar{\mathbf{w}}}_j^{t,0}$, $\tilde{\bar{\mathbf{w}}}_{l}^{t,0} \coloneqq \sum_{k=1}^{B_l} \alpha_k \sum_{j=1}^{V_{k,l}} \alpha_j \sum_{i=1}^{U_{j,k,l}} \alpha_i \tilde{\mathbf{w}}_i^{t,0} = \sum_{k=1}^{B_l} \alpha_k \tilde{\bar{\mathbf{w}}}_k^{t,0}$ and
$\tilde{\bar{\mathbf{w}}}^{t,0} \coloneqq \sum_{l=1}^{L} \alpha_l \tilde{\bar{\mathbf{w}}}_{l}^{t,0} = \sum_{u=1}^{\mathrm{U}} \alpha_u \tilde{\mathbf{w}}_u^{t,0}$.
Moreover, $\Tilde{g} (\tilde{\mathbf{w}}_i^{t,0}) \coloneqq g (\tilde{\mathbf{w}}_i^{t,0}) \odot \mathbf{m}_i^t$, $\nabla \tilde{f}_i(\tilde{\bar{\mathbf{w}}}^{t,0}) \coloneqq \nabla f_i(\tilde{\bar{\mathbf{w}}}^{t,0}) \odot \mathbf{m}_i^t$ and $ \nabla \tilde{f}(\tilde{\bar{\mathbf{w}}}^{t,0})) = \sum_{u=1}^{\mathrm{U}} \alpha_u \nabla \tilde{f}_u(\tilde{\bar{\mathbf{w}}}^{t,0}))$. 

\noindent
\textbf{Additional Assumptions:} Since the element-wise mask operation is equivalent to replacing a subset of the actual entries with zero, similar to \cite{jiang2022model, liu2021adaptive}, we also assume that the assumptions made in Section III-A hold if we apply the binary masks to all gradients.

\section{Proof of Theorem \ref{theorem_1}}
\label{proof_Theorem1}

\setcounter{Theorem}{0}
\begin{Theorem}
\label{theorem_1}
When the assumptions in Section III-A hold and $\eta \leq 1/\beta$, we have  
\begin{align}
\label{theorem_1_eqn_Sup}
    \rs \rs \theta_\mathrm{{PHFL}} \leq
    & \mathcal{O} \big( [f(\tilde{\bar{\mathbf{w}}}^0) - f_{\mathrm{inf}} ]/[\eta T] \big)  + \mathcal{O} ( \beta \eta \sigma^2 / \mathrm{U} ) + \underbrace{\mathcal{O} \big(\delta^{\mathrm{th}} \beta^2 D^2 \big)}_{\mathrm{pruning ~error}} + \underbrace{\mathcal{O} \big( \beta \eta G^2 \cdot \varphi_\mathrm{w,0}(\pmb{\delta},\pmb{\mathrm{f}}, \pmb{\mathrm{P}}) \big)}_{\mathrm{wireless ~ factor}} +  \mathcal{O} \big( \beta^2 \big[\mathrm{L}_1 + \mathrm{L}_2 + \mathrm{L}_3 + \mathrm{L}_4 \big] \big),
\end{align}
where $\pmb{\delta}=\{ \delta_1^t, \dots, \delta_U^t \}_{t=0}^{T-1}$, $\pmb{\mathrm{f}}=\{\mathrm{f}_1^t, \dots, \mathrm{f}_U^t\}_{t=0}^{T-1}$, $\pmb{\mathrm{P}}=\{P_1^t,\dots, P_U^t\}_{t=0}^{T-1}$ and $\mathrm{f}_i^t$ is the $i^{\mathrm{th}}$ client's CPU frequency in the wireless factors.
Besides, the terms $\mathrm{L}_1$, $\mathrm{L}_2$, $\mathrm{L}_3$ and $\mathrm{L}_4$ are  
\begin{align}
    \varphi_\mathrm{w,0}(\pmb{\delta},\pmb{\mathrm{f}}, \pmb{\mathrm{P}}) &= [1/T] \sum\nolimits_{t=0}^{T-1} \sum\nolimits_{l=1}^{L} \alpha_{l}^2 \sum\nolimits_{k=1}^{B_l} \alpha_k^2 \sum\nolimits_{j=1}^{V_{k,l}} \alpha_j^2 \sum\nolimits_{i=1}^{U_{j,k,l}} \alpha_i^2 \big[1/p_i^t - 1 \big], \\
    \mathrm{L}_1 &= [1/T] \sum\nolimits_{t=0}^{T-1}\sum\nolimits_{l=1}^{L} \alpha_l \sum\nolimits_{k=1}^{B_l} \alpha_k \sum\nolimits_{j=1}^{V_{k,l}} \alpha_j \sum\nolimits_{i=1}^{U_{j,k,l}} \alpha_i\mathbb{E} \Vert \bar{\mathbf{w}}_j^t - \tilde{\mathbf{w}}_i^{t} \Vert^2, \label{HL1}\\
    \mathrm{L}_2 &= [1/T] \sum\nolimits_{t=0}^{T-1} \sum\nolimits_{l=1}^{L} \alpha_l \sum\nolimits_{k=1}^{B_l} \alpha_k \sum\nolimits_{j=1}^{V_{k,l}} \alpha_j \mathbb{E} \Vert \bar{\mathbf{w}}_k^t - \bar{\mathbf{w}}_j^t \Vert^2, \label{HL2}\\
    \mathrm{L}_3 &= [1/T] \sum\nolimits_{t=0}^{T-1} \sum\nolimits_{l=1}^{L} \alpha_l \sum\nolimits_{k=1}^{B_l} \alpha_k \mathbb{E} \Vert \bar{\mathbf{w}}_{l}^t - \bar{\mathbf{w}}_k^t \Vert^2, \label{HL3}\\
    \mathrm{L}_4 &= [1/T] \sum\nolimits_{t=0}^{T-1} \sum\nolimits_{l=1}^{L} \alpha_l \mathbb{E} \Vert \bar{\mathbf{w}}^t - \bar{\mathbf{w}}_{l}^t \Vert^2.\label{HL4}
\end{align}
\end{Theorem}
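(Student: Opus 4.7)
The plan is to combine a standard nonconvex descent argument with a careful bias--variance decomposition that separates the pruning error, the wireless randomness, and the hierarchical model drift. Concretely, I will analyze the \emph{virtual} global model $\tilde{\bar{\mathbf{w}}}^{t,0}$, which, by the definitions in Section II-B and the unbiasedness $\mathbb{E}[\pmb{1}_u^t/p_u^t] = 1$, evolves as $\tilde{\bar{\mathbf{w}}}^{t+1,0} - \tilde{\bar{\mathbf{w}}}^{t,0} = -\eta \sum_u \alpha_u (\pmb{1}_u^t/p_u^t) \tilde{g}(\tilde{\mathbf{w}}_u^{t,0})$. Applying $\beta$-smoothness of $f$ gives the descent inequality
\begin{equation*}
\mathbb{E}\bigl[f(\tilde{\bar{\mathbf{w}}}^{t+1,0})\bigr] \leq f(\tilde{\bar{\mathbf{w}}}^{t,0}) + \mathbb{E}\bigl\langle \nabla f(\tilde{\bar{\mathbf{w}}}^{t,0}), \tilde{\bar{\mathbf{w}}}^{t+1,0} - \tilde{\bar{\mathbf{w}}}^{t,0}\bigr\rangle + \tfrac{\beta}{2} \mathbb{E}\bigl\| \tilde{\bar{\mathbf{w}}}^{t+1,0} - \tilde{\bar{\mathbf{w}}}^{t,0}\bigr\|^2,
\end{equation*}
and I would take expectations layer by layer---first over the Bernoulli reception indicators, then over the mini-batch draws---to turn the inner product into $-\eta \langle \nabla f(\tilde{\bar{\mathbf{w}}}^{t,0}), \sum_u \alpha_u \nabla \tilde{f}_u(\tilde{\mathbf{w}}_u^{t,0})\rangle$.

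Next, I would use $-\langle a, b\rangle = \tfrac{1}{2}(\|a-b\|^2 - \|a\|^2 - \|b\|^2)$ with $a = \nabla f(\tilde{\bar{\mathbf{w}}}^{t,0})$ and $b = \sum_u \alpha_u \nabla \tilde{f}_u(\tilde{\mathbf{w}}_u^{t,0})$. The $-\|b\|^2$ piece produces precisely the quantity $\|\sum_u \alpha_u \nabla f_u(\tilde{\bar{\mathbf{w}}}^{t,0}) \odot \mathbf{m}_u^t\|^2 = \theta_{\mathrm{PHFL}}$ (after swapping the evaluation point from $\tilde{\mathbf{w}}_u^{t,0}$ to $\tilde{\bar{\mathbf{w}}}^{t,0}$ at the cost of a smoothness residual), while $\|a-b\|^2$ decomposes into (i) a pruning part involving $\nabla f_u(\tilde{\bar{\mathbf{w}}}^{t,0}) \odot (\mathbf{1}-\mathbf{m}_u^t)$, controlled via the smoothness bound $\|\nabla f_u(\tilde{\mathbf{w}}_u^{t,0}) - \nabla f_u(\mathbf{w}_u^t)\|^2 \leq \beta^2 \|\mathbf{w}_u^t - \tilde{\mathbf{w}}_u^{t,0}\|^2 \leq \beta^2 \delta^{\mathrm{th}} \|\mathbf{w}_u^t\|^2$, yielding the $\mathcal{O}(\delta^{\mathrm{th}} \beta^2 D^2)$ term, and (ii) a client-drift part $\|\nabla f_u(\tilde{\mathbf{w}}_u^{t,0}) - \nabla f_u(\tilde{\bar{\mathbf{w}}}^{t,0})\|^2$, which I would split along the hierarchy
\begin{equation*}
\tilde{\mathbf{w}}_u^{t,0} - \tilde{\bar{\mathbf{w}}}^{t,0} = (\tilde{\mathbf{w}}_u^{t,0} - \bar{\mathbf{w}}_j^t) + (\bar{\mathbf{w}}_j^t - \bar{\mathbf{w}}_k^t) + (\bar{\mathbf{w}}_k^t - \bar{\mathbf{w}}_l^t) + (\bar{\mathbf{w}}_l^t - \bar{\mathbf{w}}^t),
\end{equation*}
so that after applying $\beta^2$-smoothness the four averaged terms become exactly $\mathrm{L}_1,\mathrm{L}_2,\mathrm{L}_3,\mathrm{L}_4$ of (\ref{HL1})--(\ref{HL4}).

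For the second-order term $\tfrac{\beta}{2}\mathbb{E}\|\sum_u \alpha_u (\pmb{1}_u^t/p_u^t)\tilde{g}(\tilde{\mathbf{w}}_u^{t,0})\|^2$, I would perform a bias--variance decomposition with respect to (a) the independent indicators $\pmb{1}_u^t$, whose variance $(1/p_u^t)-1$ produces the wireless factor $\varphi_{\mathrm{w},0}$ multiplied by $\mathbb{E}\|\tilde g\|^2 \leq G^2$, and (b) the independent mini-batch gradients $g(\tilde{\mathbf{w}}_u^{t,0})$, whose cross-client independence and bounded variance $\sigma^2$ yield the $\mathcal{O}(\beta\eta\sigma^2/\mathrm{U})$ term (the $1/\mathrm{U}$ arises because squared norms of independent zero-mean vectors are weighted by $\alpha_u^2 \sim 1/\mathrm{U}^2$ and summed over $\mathrm{U}$ clients). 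The surviving deterministic piece $\eta^2 \beta \|\sum_u \alpha_u \nabla \tilde f_u(\tilde{\mathbf{w}}_u^{t,0})\|^2$ is absorbed into $-\tfrac{\eta}{2}\theta_{\mathrm{PHFL}}$ using $\eta \leq 1/\beta$ (which forces $\eta^2 \beta \leq \eta$), after a final smoothness swap to the common evaluation point.

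Finally, I would telescope the descent inequality from $t=0$ to $T-1$, use $f(\tilde{\bar{\mathbf{w}}}^T) \geq f_{\inf}$, divide by $\eta T/2$, and collect the residual terms into the five asymptotic groups appearing in (\ref{theorem_1_eqn_Sup}). The main obstacle, in my view, is the bookkeeping around the two sources of randomness (independent mini-batches within a client, independent indicators across clients) while simultaneously accounting for the mask $\mathbf{m}_u^t$ inside every gradient: one must ensure that (\ref{pruneRatio}) is invoked only on weight differences, not gradient differences, and that the swap from $\nabla \tilde f_u(\tilde{\mathbf{w}}_u^{t,0})$ to $\nabla \tilde f_u(\tilde{\bar{\mathbf{w}}}^{t,0})$ produces the drift terms $\mathrm{L}_1\!-\!\mathrm{L}_4$ without double-counting the pruning error. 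The subsequent Lemmas 2--5 are reserved for bounding those four drift quantities recursively along the hierarchy.
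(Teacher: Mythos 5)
Your high-level route is the same as the paper's: a $\beta$-smoothness descent step on the aggregated virtual model, the polarization identity $-\langle a,b\rangle=\tfrac{1}{2}(\Vert a-b\Vert^2-\Vert a\Vert^2-\Vert b\Vert^2)$ to extract the target gradient norm, a bias--variance split of the second-order term over the reception indicators and the mini-batch noise (giving $\varphi_{\mathrm{w},0}$ and the $\sigma^2/\mathrm{U}$ term), and a telescoping sum with the hierarchical drift deferred to $\mathrm{L}_1$--$\mathrm{L}_4$. But one step fails as you justify it.

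The problem is your choice $a=\nabla f(\tilde{\bar{\mathbf{w}}}^{t,0})$ \emph{without} the masks. With that choice, $\Vert a-b\Vert^2$ necessarily contains the component $\sum_u\alpha_u\nabla f_u(\tilde{\bar{\mathbf{w}}}^{t,0})\odot(\mathbf{1}-\mathbf{m}_u^t)$, i.e.\ the true gradient restricted to the pruned coordinates. You propose to control it via $\Vert\nabla f_u(\tilde{\mathbf{w}}_u^{t,0})-\nabla f_u(\mathbf{w}_u^t)\Vert^2\leq\beta^2\Vert\mathbf{w}_u^t-\tilde{\mathbf{w}}_u^{t,0}\Vert^2\leq\beta^2\delta^{\mathrm{th}}\Vert\mathbf{w}_u^t\Vert^2$, but that inequality bounds the change of the gradient caused by perturbing the \emph{weights}; it says nothing about the magnitude of the gradient entries that the mask zeroes out, which under the stated assumptions are only bounded by $G^2$ and do not vanish with $\delta^{\mathrm{th}}$. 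The paper never meets this term: in (\ref{3rdTerm}) it uses the idempotency of the binary mask to move $\mathbf{m}_u^t$ from the stochastic gradient onto $\nabla f_u(\tilde{\bar{\mathbf{w}}}^{t,0})$, applies the polarization identity per client with $a=\nabla\tilde f_u(\tilde{\bar{\mathbf{w}}}^{t,0})$ already masked, and obtains $\theta_{\mathrm{PHFL}}$ from the $-\Vert a\Vert^2$ piece via Jensen. A related bookkeeping issue: because the paper keeps both $-\Vert a\Vert^2$ and $-\Vert b\Vert^2$, the latter is free to cancel the deterministic part of the second-order term under $\eta\leq 1/\beta$; your plan spends $-\Vert b\Vert^2$ on producing $\theta_{\mathrm{PHFL}}$ and then also asks it to absorb $\beta\eta^2\Vert b\Vert^2$, which degenerates at $\eta=1/\beta$. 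Finally, your hierarchy split runs from $\tilde{\mathbf{w}}_u^{t,0}$ to $\bar{\mathbf{w}}^t$, whereas $\mathrm{L}_1$--$\mathrm{L}_4$ are stated between $\tilde{\mathbf{w}}_i^{t}$ and $\bar{\mathbf{w}}_j^t,\dots,\bar{\mathbf{w}}^t$; the mismatched endpoints are exactly the pruning residuals $\mathrm{T}_1$--$\mathrm{T}_5$ in the paper's (\ref{lastTerm}), each of order $\delta^{\mathrm{th}}\beta^2 D^2$, which must be inserted explicitly rather than claiming the four terms come out ``exactly.''
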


\begin{proof}
Since there are $U$ clients and we consider the virtual models at different hierarchy levels, we start the convergence proof assuming the global model is the weighted combination of all these $U$ clients. 
After that, we break down our derivations for different hierarchy levels based on our proposed PHFL algorithm.
Note that this is also a standard practice in the literature \cite{feng2022Mobility, xu2021adaptive}.

First, let us write the update rule for the global model as
\begin{align}
\label{globalPrunedAgg}
    \bar{\mathbf{w}}^{t+1} 
    &= \sum_{u=1}^{\mathrm{U}} \alpha_u \tilde{\mathbf{w}}_u^{t+1} = \sum_{u=1}^{\mathrm{U}} \alpha_u \left( \tilde{\mathbf{w}}_u^{t,0} - \eta \tilde{g}(\tilde{\mathbf{w}}_u^{t,0}) \frac{\pmb{1}_u^t}{p_u^t} \right) 
    = \tilde{\bar{\mathbf{w}}}^{t,0} - \eta \sum_{u=1}^{\mathrm{U}} \alpha_u \tilde{g}(\tilde{\mathbf{w}}_u^{t,0}) \frac{\pmb{1}_u^t}{p_u^t},  
\end{align}
where $\tilde{\bar{\mathbf{w}}}^{t,0} \coloneqq \sum_{u=1}^{\mathrm{U}} \alpha_u \tilde{\mathbf{w}}_u^{t,0}$. 
As such, we write 
\begin{align}
\label{mainEq}
    f(\bar{\mathbf{w}}^{t+1}) 
    & = f( \tilde{\bar{\mathbf{w}}}^{t,0} - \eta \sum_{u=1}^{\mathrm{U}} \alpha_u \tilde{g}(\tilde{\mathbf{w}}_u^{t,0})\frac{\pmb{1}_u^t}{p_u^t}),
    \overset{(a)}{\leq} f(\tilde{\bar{\mathbf{w}}}^{t,0}) + \left<\nabla f(\tilde{\bar{\mathbf{w}}}^{t,0}), - \eta \sum_{u=1}^{\mathrm{U}} \alpha_u \tilde{g}(\tilde{\mathbf{w}}_u^{t,0}) \frac{\pmb{1}_u^t}{p_u^t} \right> + \frac{\beta \eta^2}{2} \left\Vert \sum_{u=1}^{\mathrm{U}} \alpha_u \tilde{g}(\tilde{\mathbf{w}}_u^{t,0}) \frac{\pmb{1}_u^t}{p_u^t} \right\Vert^2,
\end{align}
where $(a)$ stems from $\beta$-smoothness assumption.
For the third term in (\ref{mainEq}), we can write
\begin{align}
\label{3rdTerm}
    &\left<\nabla f(\tilde{\bar{\mathbf{w}}}^{t,0}), - \eta \sum_{u=1}^{\mathrm{U}} \alpha_u \tilde{g}(\tilde{\mathbf{w}}_u^{t,0}) \frac{\pmb{1}_u^t}{p_u^t} \right> 
    = -\eta \left< \sum_{u=1}^{\mathrm{U}} \alpha_u \nabla f_u (\tilde{\bar{\mathbf{w}}}^{t,0}), \sum_{u=1}^{\mathrm{U}} \alpha_u \tilde{g}(\tilde{\mathbf{w}}_u^{t,0}) \frac{\pmb{1}_u^t}{p_u^t} \right> \nonumber \\
    &= -\eta \sum_{u=1}^{\mathrm{U}} \alpha_u  \bigg< \nabla f_u (\tilde{\bar{\mathbf{w}}}^{t,0}) \odot \mathbf{m}_u^t,  \big(g(\tilde{\mathbf{w}}_u^{t,0})\odot \mathbf{m}_u^t\big) \frac{\pmb{1}_u^{t}}{p_u^t} \bigg> 
    = -\eta \sum_{u=1}^{\mathrm{U}} \alpha_u  \bigg<\nabla \tilde{f}_u (\tilde{\bar{\mathbf{w}}}^{t,0}),  \tilde{g} (\tilde{\mathbf{w}}_u^{t,0}) \frac{\pmb{1}_u^t}{p_u^t} \bigg>, 
\end{align}
where $\nabla \tilde{f}_u (\tilde{\bar{\mathbf{w}}}^{t,0}) \coloneqq \nabla f_u (\tilde{\bar{\mathbf{w}}}^{t,0}) \odot \mathbf{m}_u^t$.

Plugging (\ref{3rdTerm}) into (\ref{mainEq}) and taking expectation over both sides, we get
\begin{align}
\label{mainEq1}
    \mathbb{E} \left[ f(\bar{\mathbf{w}}^{t+1}) \right] 
    &\leq f(\tilde{\bar{\mathbf{w}}}^{t,0}) - \eta \sum_{u=1}^{\mathrm{U}} \alpha_u \mathbb{E} \left[ \left<\nabla \tilde{f}_u (\tilde{\bar{\mathbf{w}}}^{t,0}),  \tilde{g} (\tilde{\mathbf{w}}_u^{t,0}) \frac{\pmb{1}_u^t}{p_u^t} \right>\right] + \frac{\beta \eta^2}{2} \mathbb{E} \left[ \left\Vert \sum_{u=1}^{\mathrm{U}} \alpha_u \tilde{g}(\tilde{\mathbf{w}}_u^{t,0}) \frac{\pmb{1}_u^t}{p_u^t} \right\Vert^2\right],
\end{align}
We simply the third term in (\ref{mainEq1}) as follows:
\begin{subequations}
\label{3rdTerm1}
\begin{align}
    &- \eta \sum_{u=1}^{\mathrm{U}} \alpha_u \mathbb{E} \left[ \left< \nabla \tilde{f}_u (\tilde{\bar{\mathbf{w}}}^{t,0}),  \tilde{g} (\tilde{\mathbf{w}}_u^{t,0}) \frac{\pmb{1}_u^t}{p_u^t} \right>\right] 
    \overset{(a)}{=} - \eta \sum_{u=1}^{\mathrm{U}} \alpha_u \left< \nabla \tilde{f}_u (\tilde{\bar{\mathbf{w}}}^{t,0}), \mathbb{E} \left[ \tilde{g} (\tilde{\mathbf{w}}_u^{t,0}) \right] \mathbb{E} \bigg[\frac{\pmb{1}_u^t}{p_u^t}\bigg] \right>,\nonumber\\
    & \qquad = - \eta \sum_{u=1}^{\mathrm{U}} \alpha_u \left< \nabla \tilde{f}_u (\tilde{\bar{\mathbf{w}}}^{t,0}),  \nabla \tilde{f}_u (\tilde{\mathbf{w}}_u^{t,0}) \right>, \nonumber\\
    &\qquad= - \frac{\eta}{2} \sum_{u=1}^{\mathrm{U}} \alpha_u \Big\{\Vert \nabla \tilde{f}_u (\tilde{\bar{\mathbf{w}}}^{t,0}) \Vert^2 + \Vert \nabla \tilde{f}_u (\tilde{\mathbf{w}}_u^{t,0}) \Vert^2 - \Vert \nabla \tilde{f}_u (\tilde{\bar{\mathbf{w}}}^{t,0}) -  \nabla \tilde{f}_u (\tilde{\mathbf{w}}_u^{t,0}) \Vert^2 \Big\}, \nonumber\\
    &\qquad= \frac{\eta}{2} \sum_{u=1}^{\mathrm{U}} \alpha_u \Big\{ \Vert \nabla \tilde{f}_u (\tilde{\bar{\mathbf{w}}}^{t,0}) -  \nabla \tilde{f}_u (\tilde{\mathbf{w}}_u^{t,0}) \Vert^2 - \Vert \nabla \tilde{f}_u (\tilde{\bar{\mathbf{w}}}^{t,0}) \Vert^2 - \Vert \nabla \tilde{f}_u (\tilde{\mathbf{w}}_u^{t,0}) \Vert^2 \Big\}, \tag{\ref{3rdTerm1}} \\
    &\qquad= \frac{\eta}{2} \sum_{u=1}^{\mathrm{U}} \alpha_u \Big\{ \Vert \nabla \tilde{f}_u (\tilde{\bar{\mathbf{w}}}^{t,0}) -  \nabla \tilde{f}_u (\tilde{\mathbf{w}}_u^{t,0}) \Vert^2  -  \Vert \nabla \tilde{f}_u (\tilde{\mathbf{w}}_u^{t,0}) \Vert^2  \Big\} - \frac{\eta}{2} \sum_{u=1}^{\mathrm{U}} \alpha_u \Vert \nabla \tilde{f}_u (\tilde{\bar{\mathbf{w}}}^{t,0}) \Vert^2,\nonumber\\
    &\qquad \leq \frac{\eta}{2} \sum_{u=1}^{\mathrm{U}} \alpha_u \Big\{ \Vert \nabla \tilde{f}_u (\tilde{\bar{\mathbf{w}}}^{t,0}) -  \nabla \tilde{f}_u (\tilde{\mathbf{w}}_u^{t,0}) \Vert^2  -  \Vert \nabla \tilde{f}_u (\tilde{\mathbf{w}}_u^{t,0}) \Vert^2  \Big\} - \frac{\eta}{2} \left\Vert \sum_{u=1}^{\mathrm{U}} \alpha_u\nabla \tilde{f}_u (\tilde{\bar{\mathbf{w}}}^{t,0}) \right\Vert^2,\nonumber\\
    &\qquad\overset{(b)}{=} \frac{\eta}{2} \sum_{u=1}^{\mathrm{U}} \alpha_u \Big\{ \Vert \nabla \tilde{f}_u (\tilde{\bar{\mathbf{w}}}^{t,0}) -  \nabla \tilde{f}_u (\tilde{\mathbf{w}}_u^{t,0}) \Vert^2  - \Vert \nabla \tilde{f}_u (\tilde{\mathbf{w}}_u^{t,0}) \Vert^2 \Big\} - \frac{\eta}{2} \Vert \nabla \tilde{f} (\tilde{\bar{\mathbf{w}}}^{t,0}) \Vert^2, \nonumber
\end{align}    
\end{subequations}
where $(a)$ follows from the independence of client selection and SGD. 
Besides, we define $\Vert \nabla \tilde{f} (\tilde{\bar{\mathbf{w}}}^{t,0}) \Vert^2 \coloneqq  \Vert \sum_{u=1}^{\mathrm{U}} \alpha_u \nabla \tilde{f}_u (\tilde{\bar{\mathbf{w}}}^{t,0}) \Vert^2 = \Vert \sum_{u=1}^{\mathrm{U}} \alpha_u \nabla f_u (\tilde{\bar{\mathbf{w}}}^{t,0}) \odot \mathbf{m}_u^t \Vert^2$ in $(b)$.

The second term in (\ref{mainEq1}) can be simplified as follows:
\begin{align}
\label{2ndTerm1}
    & \mathbb{E} \left[ \left\Vert \sum_{u=1}^{\mathrm{U}} \alpha_u \tilde{g}\left(\tilde{\mathbf{w}}_u^{t,0}\right) \frac{\pmb{1}_u^t}{p_u^t} \right\Vert^2\right] \nonumber\\
    &\overset{(a)}{=} \mathbb{E} \left \Vert \sum_{u=1}^{\mathrm{U}} \alpha_u \tilde{g}\left(\tilde{\mathbf{w}}_u^{t,0}\right) \frac{\pmb{1}_u^t}{p_u^t} - \mathbb{E} \left[ \sum_{u=1}^{\mathrm{U}} \alpha_u \tilde{g}\left(\tilde{\mathbf{w}}_u^{t,0}\right) \frac{\pmb{1}_u^t}{p_u^t} \right] \right\Vert^2 + \left(\mathbb{E} \left[ \sum_{u=1}^{\mathrm{U}} \alpha_u \tilde{g}\left(\tilde{\mathbf{w}}_u^{t,0}\right) \frac{\pmb{1}_u^t}{p_u^t} \right]\right)^2, \nonumber\\
    &=\mathbb{E} \left \Vert \sum_{u=1}^{\mathrm{U}} \alpha_u \tilde{g}\left(\tilde{\mathbf{w}}_u^{t,0}\right) \frac{\pmb{1}_u^t}{p_u^t} \pm \sum_{u=1}^{\mathrm{U}} \alpha_u \tilde{g}\left(\tilde{\mathbf{w}}_u^{t,0}\right) - \sum_{u=1}^{\mathrm{U}} \alpha_u \nabla \tilde{f}_u \left(\tilde{\mathbf{w}}_u^{t,0} \right) \right\Vert^2 + \left\Vert \sum_{u=1}^{\mathrm{U}} \alpha_u \nabla \tilde{f}_u \left(\tilde{\mathbf{w}}_u^{t,0} \right) \right\Vert^2, \nonumber\\
    &\overset{(b)}{\leq} \mathbb{E} \left \Vert \sum_{u=1}^{\mathrm{U}} \alpha_u \left(\frac{\pmb{1}_u^t}{p_u^t} -1\right)\tilde{g}\left(\tilde{\mathbf{w}}_u^{t,0}\right) + \sum_{u=1}^{\mathrm{U}} \alpha_u \left( \tilde{g}\left(\tilde{\mathbf{w}}_u^{t,0}\right)- \nabla \tilde{f}_u \left(\tilde{\mathbf{w}}_u^{t,0} \right) \right) \right\Vert^2 + \sum_{u=1}^{\mathrm{U}} \alpha_u \left\Vert \nabla \tilde{f}_u \left(\tilde{\mathbf{w}}_u^{t,0} \right) \right\Vert^2,\nonumber\\
    &\overset{(c)}{\leq} 2\mathbb{E} \left \Vert \sum_{u=1}^{\mathrm{U}} \alpha_u \left(\frac{\pmb{1}_u^t}{p_u^t} -1\right)\tilde{g}\left(\tilde{\mathbf{w}}_u^{t,0}\right) \right\Vert^2 + 2\mathbb{E} \left \Vert\sum_{u=1}^{\mathrm{U}} \alpha_u \left( \tilde{g}\left(\tilde{\mathbf{w}}_u^{t,0}\right)- \nabla \tilde{f}_u \left(\tilde{\mathbf{w}}_u^{t,0} \right) \right) \right\Vert^2 + \sum_{u=1}^{\mathrm{U}} \alpha_u \left\Vert \nabla \tilde{f}_u \left(\tilde{\mathbf{w}}_u^{t,0} \right) \right\Vert^2, \nonumber\\
    &=2\mathbb{E} \Big[\sum_{u=1}^{\mathrm{U}} \alpha_u^2 \left \Vert \left(\frac{\pmb{1}_u^t}{p_u^t} -1\right)\tilde{g}\left(\tilde{\mathbf{w}}_u^{t,0}\right) \right\Vert^2 + \sum_{u=1}^{\mathrm{U}} \alpha_u \sum_{u'=1, u'\neq u}^U \alpha_{u'}  \left(\frac{\pmb{1}_u^t}{p_u^t} -1\right)\tilde{g}\left(\tilde{\mathbf{w}}_u^{t,0}\right) \left(\frac{\pmb{1}_{u'}^t}{p_{u'}^t} - 1 \right)\tilde{g}(\tilde{\mathbf{w}}_{u'}^{t,0}) \Big] + \nonumber\\
    &\Squad 2\mathbb{E} \Big[\sum_{u=1}^{\mathrm{U}} \alpha_u^2 \left \Vert \left( \tilde{g}\left(\tilde{\mathbf{w}}_u^{t,0}\right)- \nabla \tilde{f}_u \left(\tilde{\mathbf{w}}_u^{t,0} \right) \right) \right\Vert^2 + \nonumber\\
    &\qquad \sum_{u=1}^{\mathrm{U}} \alpha_u \sum_{u'=1, u \neq u'}^U \alpha_{u'} \left( \tilde{g}\left(\tilde{\mathbf{w}}_u^{t,0}\right) - \nabla \tilde{f}_u \left(\tilde{\mathbf{w}}_u^{t,0} \right) \right) \left( \tilde{g}(\tilde{\mathbf{w}}_{u'}^{t,0})- \nabla \tilde{f}_{u'}(\tilde{\mathbf{w}}_{u'}^{t,0}) \right) \Big] + \sum_{u=1}^{\mathrm{U}} \alpha_u \left\Vert \nabla \tilde{f}_u \left(\tilde{\mathbf{w}}_u^{t,0} \right) \right\Vert^2,\nonumber\\
    &\overset{(d)}{\leq} 2\mathbb{E} \bigg[\sum_{u=1}^{\mathrm{U}} \alpha_u^2 \left \Vert \left(\frac{\pmb{1}_u^t}{p_u^t} -1\right)\tilde{g}\left(\tilde{\mathbf{w}}_u^{t,0}\right) \right\Vert^2 \bigg] + 2 \sigma^2 \sum_{u=1}^{\mathrm{U}} \alpha_u^2 + \sum_{u=1}^{\mathrm{U}} \alpha_u \left\Vert \nabla \tilde{f}_u \left(\tilde{\mathbf{w}}_u^{t,0} \right) \right\Vert^2,\nonumber\\ 
    &= 2\sum_{u=1}^{\mathrm{U}} \alpha_u^2  \left(\frac{1-p_u^t}{p_u^t}\right) \mathbb{E} \left \Vert \tilde{g}\left(\tilde{\mathbf{w}}_u^{t,0}\right) \right\Vert^2 + 2 \sigma^2 \sum_{u=1}^{\mathrm{U}} \alpha_u^2 + \sum_{u=1}^{\mathrm{U}} \alpha_u \left\Vert \nabla \tilde{f}_u \left(\tilde{\mathbf{w}}_u^{t,0} \right) \right\Vert^2,
\end{align}
where $(a)$ comes from the definition of variance, $(b)$ follows from Jensen inequality, i.e., $\left\Vert \sum_{u=1}^{\mathrm{U}} \alpha_u \nabla \tilde{f}_u (\tilde{\mathbf{w}}_u^{t,0}) \right\Vert^2 \leq \sum_{u=1}^{\mathrm{U}} \alpha_u  \left\Vert \nabla \tilde{f}_u (\tilde{\mathbf{w}}_u^{t,0}) \right\Vert^2$, $(c)$ stems from the fact that $\Vert \sum_{i=1}^n \mathbf{a}_i \Vert^2 \leq n \sum_{i=1}^n \Vert \mathbf{a}_i \Vert^2$ and $(d)$ is due to the bounded stochastic gradient assumption.

Plugging (\ref{3rdTerm1}) and (\ref{2ndTerm1}) in (\ref{mainEq1}), we get
\begin{subequations}
\label{mainEq2}
\begin{align}
    &\mathbb{E} \left[ f(\bar{\mathbf{w}}^{t+1}) \right] 
    \leq  f(\tilde{\bar{\mathbf{w}}}^{t,0}) + \frac{\eta}{2} \sum_{u=1}^{\mathrm{U}} \alpha_u \Big\{ \Vert \nabla \tilde{f}_u (\tilde{\bar{\mathbf{w}}}^{t,0}) -  \nabla \tilde{f}_u (\tilde{\mathbf{w}}_u^{t,0}) \Vert^2  - \Vert \nabla \tilde{f}_u (\tilde{\mathbf{w}}_u^{t,0}) \Vert^2 \Big\} - \nonumber\\
    &\qquad \frac{\eta}{2} \Vert \nabla \tilde{f} (\tilde{\bar{\mathbf{w}}}^{t,0}) \Vert^2 + \eta^2 \sum_{u=1}^{\mathrm{U}} \alpha_u^2  \left(\frac{1-p_u^t}{p_u^t}\right) \mathbb{E} \left \Vert \tilde{g}\left(\tilde{\mathbf{w}}_u^{t,0}\right) \right\Vert^2 + \beta \eta^2 \sigma^2 \sum_{u=1}^{\mathrm{U}} \alpha_u^2 + \frac{\beta \eta^2}{2} \sum_{u=1}^{\mathrm{U}} \alpha_u  \left\Vert \nabla \tilde{f}_u (\tilde{\mathbf{w}}_u^{t,0}) \right\Vert^2, \nonumber\\
    &= f(\tilde{\bar{\mathbf{w}}}^{t,0}) + \frac{\eta}{2} \sum_{u=1}^{\mathrm{U}} \alpha_u \bigg \Vert \nabla \tilde{f}_u (\tilde{\bar{\mathbf{w}}}^{t,0}) -  \nabla \tilde{f}_u (\tilde{\mathbf{w}}_u^{t,0}) \bigg\Vert^2  - \frac{\eta}{2} \left \Vert \nabla \tilde{f} (\tilde{\bar{\mathbf{w}}}^{t,0}) \right\Vert^2 - \frac{\eta}{2} \left(1 - \beta \eta\right) \sum_{u=1}^{\mathrm{U}} \Vert \nabla \tilde{f}_u (\tilde{\mathbf{w}}_u^{t,0}) \Vert^2 \nonumber\\
    & \Squad + \beta \eta^2 \sum_{u=1}^{\mathrm{U}} \alpha_u^2  \left(\frac{1-p_u^t}{p_u^t}\right) \mathbb{E} \left \Vert \tilde{g}\left(\tilde{\mathbf{w}}_u^{t,0}\right) \right\Vert^2 + \beta \eta^2 \sigma^2 \sum_{u=1}^{\mathrm{U}} \alpha_u^2. \tag{\ref{mainEq2}}
\end{align}
\end{subequations}
Note that, when $\eta \leq \frac{1}{\beta}$, $\left(1 - \beta \eta\right) \geq 0$. 
Therefore, we can drop the fourth term in (\ref{mainEq2}).

To that end, rearranging the terms in (\ref{mainEq2}), then dividing both sides by $\frac{\eta}{2}$, taking expectations on both sides and averaging over time, we get the following
\begin{align}
\label{mainEq3}
    \frac{1}{T} \sum_{t=0}^{T-1} \mathbb{E} \left[\left\Vert \nabla \tilde{f}\left(\tilde{\bar{\mathbf{w}}}^{t,0} \right) \right\Vert^2\right] 
    &\leq \frac{2\left(f(\tilde{\bar{\mathbf{w}}}^{0}) - \mathbb{E} [f(\bar{\mathbf{w}}^{T})] \right)}{\eta T} + 2\beta \eta \sigma^2 \rs \sum_{u=1}^{\mathrm{U}} \alpha_u^2 + \frac{2 \beta \eta G^2}{T} \sum_{t=0}^{T-1} \sum_{u=1}^{\mathrm{U}} \alpha_u^2 \left(\frac{1-p_u^t}{p_u^t}\right) + \nonumber \\ 
    &\Squad  \frac{1}{T} \sum_{t=0}^{T-1} \sum_{u=1}^{\mathrm{U}} \alpha_u \mathbb{E} \left[ \Vert \nabla \tilde{f}_u (\tilde{\bar{\mathbf{w}}}^{t,0}) -  \nabla \tilde{f}_u (\tilde{\mathbf{w}}_u^{t,0}) \Vert^2 \right], 
\end{align} 
where we use the notation $\tilde{\bar{\mathbf{w}}}^{0}$ to represent $\tilde{\bar{\mathbf{w}}}^{0,0} $ for simplicity.

The last term in (\ref{mainEq3}) can be expanded as follows:
\begin{subequations}
\label{lastTerm}
\begin{align}
    &\frac{1}{T} \sum_{t=0}^{T-1}  \sum_{u=1}^{\mathrm{U}} \alpha_u \mathbb{E} \left[ \Vert \nabla \tilde{f}_u (\tilde{\bar{\mathbf{w}}}^{t,0}) -  \nabla \tilde{f}_u (\tilde{\mathbf{w}}_u^{t,0}) \Vert^2 \right] 
    = \sum_{u=1}^{\mathrm{U}} \alpha_u \mathbb{E} \bigg[ \Big\Vert \nabla \tilde{f}_u (\tilde{\bar{\mathbf{w}}}^{t,0}) \pm \nabla \tilde{f}_u (\tilde{\bar{\mathbf{w}}}_{l,(u)}^{t,0}) \pm \nabla \tilde{f}_u (\tilde{\bar{\mathbf{w}}}_{k,(u)}^{t,0}) \pm \nabla \tilde{f}_u (\tilde{\bar{\mathbf{w}}}_{j,(u)}^{t,0}) -  \nabla \tilde{f}_u (\tilde{\mathbf{w}}_u^{t,0}) \Big\Vert^2 \bigg], \nonumber \\
    &\overset{(a)}{\leq} \frac{4}{T} \sum_{t=0}^{T-1} \sum_{u=1}^{\mathrm{U}} \alpha_u \mathbb{E} \Big[ \Vert \nabla \tilde{f}_u (\tilde{\bar{\mathbf{w}}}^{t,0}) - \nabla \tilde{f}_u (\tilde{\bar{\mathbf{w}}}_{l,(u)}^{t,0}) \Vert^2 + \Vert \nabla \tilde{f}_u (\tilde{\bar{\mathbf{w}}}_{l,(u)}^{t,0}) - \nabla \tilde{f}_u (\tilde{\bar{\mathbf{w}}}_{k,(u)}^{t,0})\Vert^2 + \Vert \nabla \tilde{f}_u (\tilde{\bar{\mathbf{w}}}_{k,(u)}^{t,0}) - \nabla \tilde{f}_u (\tilde{\bar{\mathbf{w}}}_{j,(u)}^{t,0}) \Vert^2 + \nonumber\\
    &\Bquad \Bquad \Vert \nabla \tilde{f}_u (\tilde{\bar{\mathbf{w}}}_{j,(u)}^{t,0}) - \nabla \tilde{f}_u (\tilde{\mathbf{w}}_u^{t,0}) \Vert^2 \Big], \nonumber \\
    &\overset{(b)}{\leq} \frac{4\beta^2}{T} \sum_{t=0}^{T-1} \sum_{u=1}^{\mathrm{U}} \alpha_u \Big\{  \mathbb{E} \left[\Vert \tilde{\bar{\mathbf{w}}}^{t,0} - \tilde{\bar{\mathbf{w}}}_{l,(u)}^{t,0} \Vert^2 \right] + \mathbb{E} \left[\Vert \tilde{\bar{\mathbf{w}}}_{l,(u)}^{t,0} - \tilde{\bar{\mathbf{w}}}_{k,(u)}^{t,0} \Vert^2 \right] + \mathbb{E} \left[\Vert \tilde{\bar{\mathbf{w}}}_{k,(u)}^{t,0} - \tilde{\bar{\mathbf{w}}}_{j,(u)}^{t,0} \Vert^2 \right] + \mathbb{E} \left[\Vert \tilde{\bar{\mathbf{w}}}_{j,(u)}^{t,0} - \tilde{\mathbf{w}}_u^{t,0} \Vert^2 \right]\Big\},\nonumber\\
    &= \frac{4\beta^2}{T} \sum_{t=0}^{T-1} \sum_{l=1}^{L} \alpha_l \sum_{k=1}^{B_l} \alpha_k \sum_{j=1}^{V_{k,l}} \alpha_j \sum_{i=1}^{U_{j,k,l}} \alpha_i \Big\{ \mathbb{E} \left[\Vert \tilde{\bar{\mathbf{w}}}^{t,0} - \tilde{\bar{\mathbf{w}}}_{l}^{t,0} \Vert^2 \right] + \mathbb{E} \left[\Vert \tilde{\bar{\mathbf{w}}}_{l}^{t,0} - \tilde{\bar{\mathbf{w}}}_{k}^{t,0} \Vert^2 \right] + \mathbb{E} \left[\Vert \tilde{\bar{\mathbf{w}}}_{k}^{t,0} - \tilde{\bar{\mathbf{w}}}_j^{t,0} \Vert^2 \right] +  \mathbb{E} \left[\Vert \tilde{\bar{\mathbf{w}}}_j^{t,0} - \tilde{\mathbf{w}}_i^{t,0} \Vert^2 \right]\Big\},\nonumber\\
    &= \frac{4\beta^2}{T} \sum_{t=0}^{T-1} \Bigg\{ \sum_{l=1}^{L} \alpha_l \mathbb{E} \left[\Vert \tilde{\bar{\mathbf{w}}}^{t,0} - \tilde{\bar{\mathbf{w}}}_{l}^{t,0} \Vert^2 \right] + \sum_{l=1}^{L} \alpha_l \sum_{k=1}^{B_l} \alpha_k \mathbb{E} \left[\Vert \tilde{\bar{\mathbf{w}}}_l^{t,0} - \tilde{\bar{\mathbf{w}}}_k^{t,0} \Vert^2 \right] + \sum_{l=1}^{L} \alpha_l \sum_{k=1}^{B_l} \alpha_k \sum_{j=1}^{V_{k,l}} \alpha_j \mathbb{E} \left[\Vert \tilde{\bar{\mathbf{w}}}_{k}^{t,0} - \tilde{\bar{\mathbf{w}}}_j^{t,0} \Vert^2 \right] + \nonumber\\
    &\Bquad \Mquad \sum_{l=1}^{L} \alpha_l \sum_{k=1}^{B_l} \alpha_k \sum_{j=1}^{V_{k,l}} \alpha_j \sum_{i=1}^{U_{j,k,l}} \alpha_i \mathbb{E} \left[\Vert \tilde{\bar{\mathbf{w}}}_j^{t,0} - \tilde{\mathbf{w}}_i^{t,0} \Vert^2 \right]\Bigg\}, \nonumber\\
    &= \frac{4\beta^2}{T} \sum_{t=0}^{T-1} \Big\{ \sum\nolimits_{l=1}^{L} \alpha_l  \mathbb{E} \left[\Vert \tilde{\bar{\mathbf{w}}}^{t,0}  \pm \bar{\mathbf{w}}^t \pm \bar{\mathbf{w}}_{l}^t -  \tilde{\mathbf{w}}_{l}^t \Vert^2 \right] + \sum_{l=1}^{L} \alpha_l \sum_{k=1}^{B_l} \alpha_k \mathbb{E} \left[\Vert \tilde{\bar{\mathbf{w}}}_{l}^{t,0} \pm \bar{\mathbf{w}}_{l}^t \pm \bar{\mathbf{w}}_k^t - \tilde{\bar{\mathbf{w}}}_k^{t,0} \Vert^2 \right] + \nonumber \\
    & \Squad \sum_{l=1}^{L} \alpha_l \sum_{k=1}^{B_l} \alpha_k \sum_{j=1}^{V_{k,l}} \alpha_j \mathbb{E} \left[\Vert \tilde{\bar{\mathbf{w}}}_k^{t,0} \pm \bar{\mathbf{w}}_k^t \pm \bar{\mathbf{w}}_j^t - \tilde{\bar{\mathbf{w}}}_j^{t,0} \Vert^2 \right] + \sum_{l=1}^{L} \alpha_l \sum_{k=1}^{B_l} \alpha_k \sum_{j=1}^{V_{k,l}} \alpha_j \sum_{i=1}^{U_{j,k,l}} \alpha_i \mathbb{E} \left[\Vert \tilde{\bar{\mathbf{w}}}_j^{t,0} \pm \bar{\mathbf{w}}_j^t \pm \Tilde{\mathbf{w}}_i^t - \tilde{\mathbf{w}}_i^{t,0} \Vert^2 \right]\Big\},\nonumber\\
    &\overset{(c)}{\leq} \frac{12 \beta^2}{T} \sum_{t=0}^{T-1} \bigg\{ \sum_{l=1}^{L} \alpha_l \mathbb{E} \left[ \Vert \tilde{\bar{\mathbf{w}}}^{t,0}  - \bar{\mathbf{w}}^t \Vert^2 +  \Vert \bar{\mathbf{w}}_{l}^t - \tilde{\bar{\mathbf{w}}}_{l}^{t,0} \Vert^2 + \Vert \bar{\mathbf{w}}^t - \bar{\mathbf{w}}_{l}^t \Vert^2 \right] + \sum_{l=1}^{L} \alpha_l \sum_{k=1}^{B_l} \alpha_k \mathbb{E} \Big[ \Vert \tilde{\bar{\mathbf{w}}}_{l}^{t,0} - \bar{\mathbf{w}}_{l}^t \Vert^2 + \Vert \bar{\mathbf{w}}_k^t - \tilde{\bar{\mathbf{w}}}_k^{t,0} \Vert^2 + \Vert \bar{\mathbf{w}}_{l}^t - \bar{\mathbf{w}}_k^t \Vert^2 \Big] + \nonumber \\
    &\Squad \sum_{l=1}^{L} \alpha_l \sum_{k=1}^{B_l} \alpha_k \sum_{j=1}^{V_{k,l}} \alpha_j \mathbb{E} \Big[ \Vert \tilde{\bar{\mathbf{w}}}_k^{t,0} - \bar{\mathbf{w}}_k^t \Vert^2 + \Vert \bar{\mathbf{w}}_j^t - \tilde{\bar{\mathbf{w}}}_j^{t,0} \Vert^2 + \Vert \bar{\mathbf{w}}_k^t - \bar{\mathbf{w}}_j^t \Vert^2  \Big]  + \nonumber \\
    &\Squad \sum_{l=1}^{L} \alpha_l \sum_{k=1}^{B_l} \alpha_k \sum_{j=1}^{V_{k,l}} \alpha_j \sum_{i=1}^{U_{j,k,l}} \alpha_i \Big\{ \mathbb{E} \left[\Vert \tilde{\bar{\mathbf{w}}}_j^{t,0} - \bar{\mathbf{w}}_j^t\Vert^2 + \Vert \bar{\mathbf{w}}_j^t - \tilde{\mathbf{w}}_i^t \Vert^2 + \Vert \tilde{\mathbf{w}}_i^t - \tilde{\mathbf{w}}_i^{t,0} \Vert^2 \right]\bigg\}, \nonumber\\
    &= \frac{12 \beta^2}{T} \underbrace{\sum_{t=0}^{T-1} \sum_{l=1}^{L} \alpha_l  \mathbb{E} \Vert \bar{\mathbf{w}}^t - \bar{\mathbf{w}}_{l}^t \Vert^2}_{\mathrm{L}_4} +  \frac{12 \beta^2}{T} \underbrace{\sum_{t=0}^{T-1} \sum_{l=1}^{L} \alpha_l \sum_{k=1}^{B_l} \alpha_k \mathbb{E} \Vert \bar{\mathbf{w}}_{l}^t - \bar{\mathbf{w}}_k^t \Vert^2}_{\mathrm{L}_3} + \frac{12 \beta^2}{T} \underbrace{\sum_{t=0}^{T-1} \sum_{l=1}^{L} \alpha_l \sum_{k=1}^{B_l} \alpha_k \sum_{j=1}^{V_{k,l}} \alpha_j\mathbb{E} \Vert \bar{\mathbf{w}}_k^t - \bar{\mathbf{w}}_j^t \Vert^2}_{\mathrm{L}_2} + \nonumber\\
    &\Squad\frac{12 \beta^2}{T} \underbrace{\sum_{t=0}^{T-1} \sum_{l=1}^{L} \alpha_l \sum_{k=1}^{B_l} \alpha_k \sum_{j=1}^{V_{k,l}} \alpha_j \sum_{i=1}^{U_{j,k,l}} \alpha_i \mathbb{E} \Vert \bar{\mathbf{w}}_j^t - \tilde{\mathbf{w}}_i^t \Vert^2}_{\mathrm{L}_1} + \underbrace{ \frac{12 \beta^2}{T} \sum_{t=0}^{T-1} \mathbb{E} \Vert \bar{\mathbf{w}}^t - \tilde{\bar{\mathbf{w}}}^{t,0} \Vert^2}_{\mathrm{T}_5} + \underbrace{\frac{24 \beta^2}{T} \sum_{t=0}^{T-1} \sum_{l=1}^{L} \alpha_l  \mathbb{E} \Vert \bar{\mathbf{w}}_{l}^t - \tilde{\bar{\mathbf{w}}}_{l}^{t,0} \Vert^2}_{\mathrm{T}_4} + \nonumber\\
    &\Squad\underbrace{\frac{24 \beta^2}{T} \sum_{t=0}^{T-1} \sum_{l=1}^{L} \alpha_l \sum_{k=1}^{B_l} \alpha_k  \mathbb{E} \Vert \bar{\mathbf{w}}_k^t - \tilde{\bar{\mathbf{w}}}_k^{t,0} \Vert^2}_{\mathrm{T}_3} + \underbrace{\frac{24 \beta^2}{T} \sum_{t=0}^{T-1} \sum_{l=1}^{L} \alpha_l \sum_{k=1}^{B_l} \alpha_k \sum_{j=1}^{V_{k,l}} \alpha_j \mathbb{E} \Vert \bar{\mathbf{w}}_j^t - \tilde{\bar{\mathbf{w}}}_j^{t,0} \Vert^2}_{\mathrm{T}_2} + \nonumber\\
    &\Squad \underbrace{\frac{12 \beta^2}{T} \sum_{t=0}^{T-1} \sum_{l=1}^{L} \alpha_l \sum_{k=1}^{B_l} \alpha_k \sum_{j=1}^{V_{k,l}} \alpha_j \sum_{i=1}^{U_{j,k,l}} \mathbb{E} \Vert \tilde{\mathbf{w}}_i^t - \tilde{\mathbf{w}}_i^{t,0} \Vert^2}_{\mathrm{T}_1}, \tag{\ref{lastTerm}}
\end{align}
\end{subequations} 
where we used the notation $\bar{\mathbf{w}}_{j,(u)}^t, \bar{\mathbf{w}}_{k,(u)}^t$ and $\bar{\mathbf{w}}_{l,(u)}^t$ to represent the model of the VC, sBS and mBS, respectively, that UE $u$ is connected to.
Furthermore, $(a)$ and $(c)$ stem from the fact that $\Vert \sum_{i=1}^n \mathbf{a}_i \Vert^2 \leq n \sum_{i=1}^n \Vert \mathbf{a}_i \Vert^2$.
Moreover, $(b)$ arises from the $\beta$-smoothness and divergence between the loss functions assumptions.

To this end, we bound the terms $\mathrm{T}_1$ to $\mathrm{T}_5$ using our aggregation rules and definitions.
First, let us focus on the term $\mathrm{T}_1$
\begin{align}
    \mathrm{T}_1 
    &\overset{(a)}{=} \frac{12 \beta^2}{T} \sum_{l=1}^{L} \alpha_l \sum_{k=1}^{B_l} \alpha_k \sum_{j=1}^{V_{k,l}} \alpha_j \sum_{i=1}^{U_{j,k,l}} \alpha_i \sum_{t=0}^{T-1} \mathbb{E} \bigg\Vert \mathbf{w}_i^{t} -  \tilde{\mathbf{w}}_i^{t,0}\bigg\Vert^2 \nonumber \\ 
    &\overset{(b)}{\bblue{\leq}} \frac{12 \beta^2}{T} \sum_{l=1}^{L} \alpha_l \sum_{k=1}^{B_l} \alpha_k \sum_{j=1}^{V_{k,l}} \alpha_j \sum_{i=1}^{U_{j,k,l}} \alpha_i \sum_{t=0}^{T-1} \delta_i^t \mathbb{E} \Vert \mathbf{w}_i^t \Vert^2, \label{localPrune}
\end{align}
where in $(a)$, we trace back to the nearest synchronization iteration where $\mathbf{w}_i^t \gets \bar{\mathbf{w}}_j^t$ and $\tilde{\mathbf{w}}_i^{t} = \mathbf{w}_i^t$.
Besides, we use the definition of pruning ratio in $(b)$.

Now, we calculate the bound for the term $\mathrm{T}_2$ as follows
\begin{align}
   \mathrm{T}_2 
   &= \frac{24 \beta^2}{T} \sum_{t=0}^{T-1} \sum_{l=1}^{L} \alpha_l \sum_{k=1}^{B_l} \alpha_k \sum_{j=1}^{V_{k,l}} \alpha_j \mathbb{E} \bigg \Vert \sum_{i=1}^{U_{j,k,l}} \alpha_i \big( \tilde{\mathbf{w}}_i^t - \tilde{\mathbf{w}}_i^{t,0} \big) \bigg \Vert^2 \nonumber\\
   &\overset{(a)}{\bblue{\leq}} \frac{24 \beta^2}{T} \sum_{t=0}^{T-1} \sum_{l=1}^{L} \alpha_l \sum_{k=1}^{B_l} \alpha_k \sum_{j=1}^{V_{k,l}} \alpha_j \sum_{i=1}^{U_{j,k,l}} \bblue{\alpha_i} \mathbb{E} \bigg \Vert  \tilde{\mathbf{w}}_i^t - \tilde{\mathbf{w}}_i^{t,0} \bigg \Vert^2 \nonumber\\
   &\overset{(b)}{\bblue{\leq}} \frac{24 \beta^2}{T} \sum_{t=0}^{T-1} \sum_{l=1}^{L} \alpha_l \sum_{k=1}^{B_l} \alpha_k \sum_{j=1}^{V_{k,l}} \alpha_j \sum_{i=1}^{U_{j,k,l}} \bblue{\alpha_i} \delta_i^t \mathbb{E} \Vert \mathbf{w}_i^t \Vert^2,  \label{vCPrune}
\end{align}
where ($a$) arises from \bblue{Jensen inequality} and $(b)$ appears from the same reasoning as in $\mathrm{T}_1$.
Using similar steps, we write the following:
\begin{align}
    \mathrm{T_3} 
    &\leq \frac{24 \beta^2} {T} \sum_{l=1}^{L} \alpha_l \sum_{k=1}^{B_l} \alpha_k \sum_{j=1}^{V_{k,l}} \bblue{\alpha_j} \sum_{i=1}^{U_{j,k,l}} \bblue{\alpha_i} \delta_i^t \mathbb{E} \Vert \mathbf{w}_i^t \Vert^2, \label{sBSPrune} \\
    \mathrm{T}_4  
    &\leq \frac{24 \beta^2} {T} \sum_{l=1}^{L} \alpha_l \sum_{k=1}^{B_l} \bblue{\alpha_k} \sum_{j=1}^{V_{k,l}} \bblue{\alpha_j} \sum_{i=1}^{U_{j,k,l}} \bblue{\alpha_i} \delta_i^t \mathbb{E} \Vert \mathbf{w}_i^t \Vert^2, \label{mBSPrune} \\
    \mathrm{T}_5 
    & \leq \frac{24 \beta^2} {T} \sum_{l=1}^{L} \bblue{\alpha_l} \sum_{k=1}^{B_l} \bblue{\alpha_k} \sum_{j=1}^{V_{k,l}} \bblue{\alpha_j} \sum_{i=1}^{U_{j,k,l}} \bblue{\alpha_i} \delta_i^t \mathbb{E} \Vert \mathbf{w}_i^t \Vert^2. \label{globalPrune}
\end{align}

Plugging the above values into (\ref{mainEq3}), we get
\begin{subequations}
\label{mainEq4}
\begin{align}
    &\frac{1}{T}\rs \sum_{t=0}^{T-1} \rs \mathbb{E} \left[\left\Vert \nabla \tilde{f}\left(\tilde{\bar{\mathbf{w}}}^{t,0} \right) \right\Vert^2\right] 
    \leq \frac{2\left(f(\tilde{\bar{\mathbf{w}}}^{0}) - \mathbb{E} [f(\bar{\mathbf{w}}^{T})] \right)}{\eta T} + 2 \beta \eta \sigma^2 \sum_{l=1}^{L} \alpha_{l}^2 \sum_{k=1}^{B_l} \alpha_k^2 \sum_{j=1}^{V_{k,l}} \alpha_j^2  \sum_{i=1}^{U_{j,k,l}} \alpha_i^2 + \nonumber\\
    &\Mquad 2\beta \eta G^2 \cdot \underbrace{\frac{1}{T} \sum_{t=0}^{T-1} \sum_{l=1}^{L} \alpha_{l}^2 \sum_{k=1}^{B_l} \alpha_k^2 \sum_{j=1}^{V_{k,l}} \alpha_j^2  \sum_{i=1}^{U_{j,k,l}} \alpha_i^2 \bigg(\frac{1 - p_i^t}{p_i^t}\bigg)}_{\varphi_\mathrm{w,0} (\pmb{\delta},\pmb{\mathrm{f}}, \pmb{\mathrm{P}})} + 12 \beta^2 \cdot \big( \mathrm{L}_1 + 2 \big\{\mathrm{L}_2 + \mathrm{L}_3 + \mathrm{L}_4 \big\} \big) + \nonumber\\
    &\Mquad \bblue{96} \beta^2 \cdot \underbrace{\frac{1}{T} \sum_{t=0}^{T-1}  \sum_{l=1}^{L} \alpha_l \sum_{k=1}^{B_l} \alpha_k \sum_{j=1}^{V_{k,l}} \alpha_j \sum_{i=1}^{U_{j,k,l}} \alpha_i \delta_i^t \Vert \mathbf{w}_i^t \Vert^2}_{\mathrm{e}_0 (\pmb{\delta})} . \tag{\ref{mainEq4}}
\end{align}
\end{subequations}

When $\alpha_i = \frac{1}{U_{j,k,l}}$, $\alpha_j = \frac{1}{V_{k,l}}$, $\alpha_k = \frac{1}{B_{l}}$ and $\alpha_l = \frac{1}{L}$, we have $\sum_{l=1}^{L} \alpha_{l}^2 \sum_{k=1}^{B_l} \alpha_k^2 \sum_{j=1}^{V_{k,l}} \alpha_j^2 \sum_{i=1}^{U_{j,k,l}} \alpha_i^2 = \frac{1}{\mathrm{U}}$.
As such, using the fact that $f(\bar{\mathbf{w}}^{T}) \geq f_{\mathrm{inf}}$ and definition of $\theta_{\mathrm{PHFL}}$, we get
\begin{align}
\label{theorem_1_eqn_FinalTerm}
\theta_\mathrm{{PHFL}} \leq
    & \mathcal{O} \bigg( \frac{f(\tilde{\bar{\mathbf{w}}}^0) - f_{\mathrm{inf}} } {\eta T} \bigg)  + \mathcal{O} \bigg( \frac{\beta \eta \sigma^2}{\mathrm{U}} \bigg) + \mathcal{O} \big(\delta^{\mathrm{th}} \beta^2 D^2 \big) + \mathcal{O} \big( \beta \eta G^2 \cdot \varphi_\mathrm{w,0}(\pmb{\delta},\pmb{\mathrm{f}}, \pmb{\mathrm{P}}) \big) + \mathcal{O} \big( \beta^2 \big[\mathrm{L}_1 + \mathrm{L}_2 + \mathrm{L}_3 + \mathrm{L}_4 \big] \big).
\end{align}

\end{proof}

\section{Proof of Lemma \ref{Lemma2}}
\label{proofLemma2}

\setcounter{Lemma}{0}
\begin{Lemma}
\label{Lemma2}
When $\eta \leq 1/[2 \sqrt{10} \kappa_0 \beta]$, the average difference between the VC and local model parameters, i.e., the $\mathrm{L}_1$ term of (\ref{theorem_1_eqn_Sup}), is upper bounded as 
\begin{align}
\label{lem2_Sup}
    &[\beta^2/T] \sum\nolimits_{t=0}^T \sum\nolimits_{l=1}^{L} \alpha_l \sum\nolimits_{k=1}^{B_l} \alpha_k \sum\nolimits_{j=1}^{V_{k,l}} \alpha_j \sum\nolimits_{i=1}^{U_{j,k,l}} \alpha_i \mathbb{E} \left\Vert \bar{\mathbf{w}}_j^t - \tilde{\mathbf{w}}_i^t \right\Vert^2 \nonumber\\
    &\qquad \leq \mathcal{O} \big( \kappa_0 \eta^2 \beta^2 \sigma^2 \big) + \mathcal{O} \big( \kappa_0^2 \eta^2 \beta^2 \epsilon_{\mathrm{vc}}^2 \big) + \mathcal{O} \big( \kappa_0 \eta^2 \beta^2 G^2 \cdot \varphi_{\mathrm{w, L}_1} (\pmb{\delta},\pmb{\mathrm{f}}, \pmb{\mathrm{P}}) \big) + \mathcal{O} \big(\delta^{\mathrm{th}}  \beta^2 D^2 \big),
\end{align}
where $\varphi_{\mathrm{w, L}_1} (\pmb{\delta},\pmb{\mathrm{f}}, \pmb{\mathrm{P}}) = [1/T] \sum_{l=1}^{L} \alpha_l \sum_{k=1}^{B_l} \alpha_k \sum_{j=1}^{V_{k,l}} \alpha_j \sum_{i=1}^{U_{j,k,l}} \alpha_i \sum_{t=0}^{T-1} \left(1/p_i^t - 1\right)$.
\end{Lemma}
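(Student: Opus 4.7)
\medskip

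\noindent\textbf{Proof proposal for Lemma \ref{Lemma2}.} The plan is to fix a local iteration index $t$, trace it back to the most recent VC synchronization time $\bar{t}_0 \coloneqq \kappa_0 \lfloor t/\kappa_0 \rfloor$, and unroll the local SGD trajectory from that synchronization point. At $\bar{t}_0$ we have $\mathbf{w}_i^{\bar{t}_0} \gets \mathbf{w}_j^{\bar{t}_0}$ by (\ref{UE_Update_with_VC_Model}), so the only source of mismatch at the starting point is the pruning step $\tilde{\mathbf{w}}_i^{\bar{t}_0,0} = \mathbf{w}_i^{\bar{t}_0}\odot \mathbf{m}_i^{\bar{t}_0}$, which by assumption (\ref{pruneRatio}) contributes at most $\delta_i^{\bar{t}_0}\|\mathbf{w}_i^{\bar{t}_0}\|^2 \le \delta^{\mathrm{th}} D^2$ to the squared distance. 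For each local step $s = 1,\dots, t-\bar{t}_0$, the UE updates via (\ref{UE_Update}) while the virtual VC model $\bar{\mathbf{w}}_j^{t}$ evolves according to the averaged masked-stochastic-gradient rule in (\ref{VC_Update}). Subtracting these two recursions and iterating gives
\begin{align*}
\bar{\mathbf{w}}_j^{t} - \tilde{\mathbf{w}}_i^{t}
&= \bigl(\tilde{\bar{\mathbf{w}}}_j^{\bar{t}_0,0} - \tilde{\mathbf{w}}_i^{\bar{t}_0,0}\bigr)
- \eta\sum_{s} \Bigl[\sum_{i'} \tfrac{\pmb{1}_{i'}}{p_{i'}}\alpha_{i'} \tilde{g}(\tilde{\mathbf{w}}_{i'}^{\bar{t}_0+s,0}) - \tilde{g}(\tilde{\mathbf{w}}_i^{\bar{t}_0+s,0})\Bigr].
\end{align*}

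\smallskip

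\noindent The next step is to bound the squared norm of this expression after taking expectation. First, I would apply the elementary inequality $\Vert\sum_{n=1}^{N}\mathbf{a}_n\Vert^2 \le N\sum_{n=1}^{N}\Vert \mathbf{a}_n\Vert^2$ to separate (i) the initial pruning mismatch, (ii) a centered ``noise'' piece obtained by subtracting $\nabla \tilde f_{i'}(\tilde{\mathbf{w}}_{i'}^{\cdot,0})$ inside each summand, (iii) a wireless piece proportional to $(1/p_{i'}^{t}-1)$, and (iv) a ``drift'' piece involving $\nabla\tilde{f}_{i'}$. For (ii), independence of the stochastic gradients across $i'$ and $s$ together with Assumption $3$ yields an $\mathcal{O}(\kappa_0 \eta^2 \sigma^2)$ contribution; for (iii), the same independence/variance argument used in (\ref{2ndTerm1}) together with the bounded-gradient assumption yields $\mathcal{O}(\kappa_0 \eta^2 G^2\,\varphi_{\mathrm{w,L_1}})$ after averaging. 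For (iv), I would insert $\pm \nabla \tilde f_j(\tilde{\mathbf{w}}_{i'}^{\cdot,0})$ and $\pm \nabla\tilde f_i(\bar{\mathbf{w}}_j^{\cdot})$ to split it into a local/VC divergence term (handled by Assumption $4$, giving $\mathcal{O}(\kappa_0^2\eta^2 \epsilon_{\mathrm{vc}}^2)$) and a term controlled by $\beta$-smoothness that is proportional to $\|\bar{\mathbf{w}}_j^{\bar{t}_0+s}-\tilde{\mathbf{w}}_{i'}^{\bar{t}_0+s,0}\|^2$.

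\smallskip

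\noindent Summing over $i$ with weights $\alpha_i$, averaging over the nested hierarchy with $\alpha_j,\alpha_k,\alpha_l$, and then over $t = 0,\dots,T-1$ produces a recursive bound of the shape
\begin{align*}
L_1 \;\le\; C_0\,\delta^{\mathrm{th}} D^2 + C_1 \kappa_0\eta^2\sigma^2 + C_2 \kappa_0^2\eta^2 \epsilon_{\mathrm{vc}}^2 + C_3 \kappa_0\eta^2 G^2\,\varphi_{\mathrm{w,L_1}} + C_4 \kappa_0^2 \eta^2 \beta^2 \cdot L_1,
\end{align*}
where the last term collects the $\beta$-smoothness pieces accumulated over the $\kappa_0$ inner iterations. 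The step-size restriction $\eta \le 1/(2\sqrt{10}\kappa_0\beta)$ is precisely what makes $C_4 \kappa_0^2\eta^2\beta^2 \le 1/2$, so the self-referential term can be absorbed into the left-hand side, after which multiplying by $\beta^2$ yields the claimed bound.

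\smallskip

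\noindent The main obstacle will be step (iv): carefully bookkeeping the three ``$\pm$'' insertions so that the Jensen expansions across the hierarchy do not inflate constants beyond what the $\eta$-condition can absorb, and verifying that the virtual-model drift picks up exactly a factor of $\kappa_0$ (not $\kappa_0^2$) in the noise and wireless terms. Matching the exact coefficient in front of the recursive $L_1$ is what fixes the numerical constant $2\sqrt{10}$ in the step-size bound.
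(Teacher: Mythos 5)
Your proposal is correct and follows essentially the same route as the paper's proof: trace back to the VC synchronization point $\bar t_0$, isolate the initial pruning mismatch via assumption (\ref{pruneRatio}), split the accumulated gradient sum into a variance/wireless part (bounded using unbiasedness, independence, and the bounded-gradient assumption, yielding the $\kappa_0$-scaled $\sigma^2$ and $\varphi_{\mathrm{w,L}_1}$ terms) and a drift part (bounded by inserting intermediate gradients, Assumption 4, and $\beta$-smoothness, yielding the $\kappa_0^2\epsilon_{\mathrm{vc}}^2$ term plus a recursive multiple of $\mathrm{L}_1$), and finally absorb the recursion using $\eta \le 1/(2\sqrt{10}\kappa_0\beta)$. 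The only minor omission is that the drift decomposition also generates an intermediate pruning-error term $\|\tilde{\mathbf{w}}_i^{\tau}-\tilde{\mathbf{w}}_i^{\tau,0}\|^2$ (not just the one at $\bar t_0$), but it is absorbed into the same $\mathcal{O}(\delta^{\mathrm{th}}\beta^2 D^2)$ term, so the final bound is unaffected.
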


\begin{proof}
\begin{align}
\label{mainLem2}
    &\frac{1}{T} \sum_{t=0}^{T-1} \sum_{l=1}^{L} \alpha_l \sum_{k=1}^{B_l} \alpha_k \sum_{j=1}^{V_{k,l}} \alpha_j \sum_{i=1}^{U_{j,k,l}} \alpha_i \mathbb{E} \left\Vert \bar{\mathbf{w}}_j^t - \tilde{\mathbf{w}}_i^{t} \right\Vert^2 \nonumber\\
    &= \frac{1}{T} \sum_{t=0}^{T-1} \sum_{l=1}^{L} \alpha_l \sum_{k=1}^{B_l} \alpha_k \sum_{j=1}^{V_{k,l}} \alpha_j \sum_{i=1}^{U_{j,k,l}} \alpha_i \mathbb{E} \Bigg\Vert \tilde{\mathbf{w}}_j^{\Bar{t}_0, 0} - \eta \sum_{i'=1}^{U_{j,k,l}} \alpha_{i'} \sum_{\tau=\Bar{t}_0}^{t-1} \tilde{g} \left(\tilde{\mathbf{w}}_{i'}^{\tau,0} \right) \frac{\pmb{1}_{i'}^\tau}{p_{i'}^\tau} -  \Big(\tilde{\mathbf{w}}_i^{\Bar{t}_0,0} - \eta \sum_{\tau= \Bar{t}_0}^{t-1} \tilde{g} \big(\tilde{\mathbf{w}}_i^{\tau,0} \big) \frac{\pmb{1}_i^\tau}{p_i^\tau}  \Big) \Bigg\Vert^2, \nonumber\\
    &\overset{(a)}{\leq} \frac{2}{T} \sum_{t=0}^{T-1} \sum_{l=1}^L \alpha_l \sum_{k=1}^{B_l} \alpha_k \sum_{j=1}^{V_{k,l}} \alpha_j \sum_{i=1}^{U_{j,k,l}} \alpha_i \mathbb{E} \Bigg\Vert  \tilde{\mathbf{w}}_j^{\Bar{t}_0, 0} - \tilde{\mathbf{w}}_i^{\Bar{t}_0,0} \Bigg\Vert^2  + \nonumber \\
    &\qquad \frac{2\eta^2}{T} \sum_{t=0}^{T-1}  \sum_{l=1}^{L} \alpha_l \sum_{k=1}^{B_l}  \alpha_k  \sum_{j=1}^{V_{k,l}} \alpha_j  \sum_{i=1}^{U_{j,k,l}} \alpha_i \mathbb{E} \Bigg\Vert \sum_{\tau= \Bar{t}_0}^{t-1} \tilde{g} \big(\tilde{\mathbf{w}}_i^{\tau,0} \big) \frac{\pmb{1}_i^\tau}{p_i^\tau} -  \sum_{i'=1}^{U_{j,k,l}} \alpha_{i'} \sum_{\tau=\Bar{t}_0}^{t-1} \tilde{g}\Big(\tilde{\mathbf{w}}_{i'}^{\tau,0}\Big) \frac{\pmb{1}_{i'}^\tau}{p_{i'}^\tau} \Bigg\Vert^2, \rs
\end{align}
where $\Bar{t}_0 = [\{(m \kappa_3 + t_3)\kappa_2 + t_2\} \kappa_1 + t_1] \kappa_0 $ and $(a)$ comes from $\Vert \sum_{i=1}^n \mathbf{a}_i \Vert^2 \leq n \sum_{i=1}^n \Vert \mathbf{a}_i \Vert^2$.

Note that the first term in (\ref{mainLem2}) comes from the VC receiving a weighted combination of the pruned models of its associated UEs.
For this term, we have
\begin{align}
    & 2 \sum_{l=1}^L \alpha_l \sum_{k=1}^{B_l} \alpha_k \sum_{j=1}^{V_{k,l}} \alpha_j \sum_{i=1}^{U_{j,k,l}} \alpha_i \mathbb{E} \bigg\Vert  \tilde{\mathbf{w}}_j^{\Bar{t}_0, 0} - \tilde{\mathbf{w}}_i^{\Bar{t}_0,0} \bigg\Vert^2, \nonumber \\
    &\overset{(a)}{=} 2 \sum_{l=1}^L \alpha_l \sum_{k=1}^{B_l} \alpha_k \sum_{j=1}^{V_{k,l}} \alpha_j \sum_{i=1}^{U_{j,k,l}} \alpha_i \mathbb{E} \bigg\Vert \left( \tilde{\mathbf{w}}_j^{\Bar{t}_0, 0} - \mathbf{w}_j^{\Bar{t}_0} \right)  + \left( \mathbf{w}_{i}^{\Bar{t}_0} - \tilde{\mathbf{w}}_i^{\Bar{t}_0,0} \right) \bigg\Vert^2,\nonumber \\
    &\overset{(b)}{\leq} 4 \sum_{l=1}^L \alpha_l \sum_{k=1}^{B_l} \alpha_k \sum_{j=1}^{V_{k,l}} \alpha_j \sum_{i=1}^{U_{j,k,l}} \alpha_i \mathbb{E} \bigg\Vert \tilde{\mathbf{w}}_j^{\Bar{t}_0, 0} - \mathbf{w}_j^{\Bar{t}_0} \bigg\Vert^2 + 4 \sum_{l=1}^L \alpha_l \sum_{k=1}^{B_l} \alpha_k \sum_{j=1}^{V_{k,l}} \alpha_j \sum_{i=1}^{U_{j,k,l}} \alpha_i \mathbb{E} \bigg\Vert \mathbf{w}_{i}^{\Bar{t}_0} - \tilde{\mathbf{w}}_i^{\Bar{t}_0,0} \bigg\Vert^2, \nonumber\\
    &\leq 4 \sum_{l=1}^L \alpha_l \sum_{k=1}^{B_l} \alpha_k \sum_{j=1}^{V_{k,l}} \alpha_j \bigg\{ \mathbb{E} \bigg\Vert \tilde{\mathbf{w}}_j^{\Bar{t}_0, 0} - \mathbf{w}_j^{\Bar{t}_0} \bigg\Vert^2  +  \sum_{i=1}^{U_{j,k,l}} \alpha_i \delta_i^{\Bar{t}_0} \mathbb{E} \left\Vert \mathbf{w}_{i}^{\Bar{t}_0} \right\Vert^2 \bigg\},\nonumber\\
    &\leq \bblue{8} \sum_{l=1}^L \alpha_l \sum_{k=1}^{B_l} \alpha_k \sum_{j=1}^{V_{k,l}} \alpha_j \sum_{i=1}^{U_{j,k,l}} \alpha_i\delta_i^{\Bar{t}_0} \mathbb{E} \left\Vert \mathbf{w}_{i}^{\Bar{t}_0} \right\Vert^2, \label{firstTermLemma2_0}  
\end{align}    
where $(a)$ is true since $\mathbf{w}_j^{\Bar{t}_0} = \mathbf{w}_{i}^{\Bar{t}_0}$ and $(b)$ comes from $\Vert \sum_{i=1}^n \mathbf{a}_i \Vert^2 \leq n \sum_{i=1}^n \Vert \mathbf{a}_i \Vert^2$.

As such, the first term is bounded as 
\begin{align}
    &\frac{2}{T} \sum_{t=0}^{T-1} \sum_{l=1}^L \alpha_l \sum_{k=1}^{B_l} \alpha_k \sum_{j=1}^{V_{k,l}} \alpha_j \sum_{i=1}^{U_{j,k,l}} \alpha_i \mathbb{E} \bigg\Vert  \tilde{\mathbf{w}}_j^{\Bar{t}_0, 0} - \tilde{\mathbf{w}}_i^{\Bar{t}_0,0} \bigg\Vert^2 
    \leq \frac{\bblue{8}}{T} \sum_{t=0}^{T-1} \sum_{l=1}^L \alpha_l \sum_{k=1}^{B_l} \alpha_k \sum_{j=1}^{V_{k,l}} \alpha_j \sum_{i=1}^{U_{j,k,l}} \alpha_i \delta_i^{\lfloor t/\kappa_0 \rfloor} \mathbb{E} \left\Vert \mathbf{w}_{i}^{\lfloor t/\kappa_0 \rfloor} \right\Vert^2. \label{firstTermLemma2}  
\end{align}

For the second term of (\ref{mainLem2}), we have 
\begin{align}
\label{mainLemSecondTerm}
    &\frac{2\eta^2}{T} \sum_{t=0}^{T-1}  \sum_{l=1}^{L} \alpha_l \sum_{k=1}^{B_l}  \alpha_k  \sum_{j=1}^{V_{k,l}} \alpha_j  \sum_{i=1}^{U_{j,k,l}} \alpha_i \mathbb{E} \Bigg\Vert \sum_{\tau= \Bar{t}_0}^{t-1} \bigg[ \tilde{g} \big(\tilde{\mathbf{w}}_i^{\tau,0} \big) \frac{\pmb{1}_i^\tau}{p_i^\tau} -  \sum_{i'=1}^{U_{j,k,l}} \alpha_{i'}  \tilde{g}\Big(\tilde{\mathbf{w}}_{i'}^{\tau,0}\Big) \frac{\pmb{1}_{i'}^\tau}{p_{i'}^\tau} \bigg] \Bigg\Vert^2 \nonumber\\
    &= \frac{2\eta^2}{T} \sum_{t=0}^{T-1}  \sum_{l=1}^{L} \alpha_l \sum_{k=1}^{B_l}  \alpha_k  \sum_{j=1}^{V_{k,l}} \alpha_j  \sum_{i=1}^{U_{j,k,l}} \alpha_i \mathbb{E} \Bigg\Vert \sum_{\tau= \Bar{t}_0}^{t-1} \bigg[ \tilde{g} \big(\tilde{\mathbf{w}}_i^{\tau,0} \big) \frac{\pmb{1}_i^\tau}{p_i^\tau} \pm \nabla \tilde{f}_i \big(\tilde{\mathbf{w}}_i^{\tau,0} \big) \pm \sum_{i'=1}^{U_{j,k,l}} \alpha_{i'} \nabla \tilde{f}_i\Big(\tilde{\mathbf{w}}_{i'}^{\tau,0}\Big) - \sum_{i'=1}^{U_{j,k,l}} \alpha_{i'}  \tilde{g}\Big(\tilde{\mathbf{w}}_{i'}^{\tau,0}\Big) \frac{\pmb{1}_{i'}^\tau}{p_{i'}^\tau}  \bigg] \Bigg\Vert^2 \nonumber\\
    &\leq \frac{4\eta^2}{T} \sum_{t=0}^{T-1}  \sum_{l=1}^{L} \alpha_l \sum_{k=1}^{B_l}  \alpha_k  \sum_{j=1}^{V_{k,l}} \alpha_j  \sum_{i=1}^{U_{j,k,l}} \alpha_i \mathbb{E} \Bigg\Vert \sum_{\tau= \Bar{t}_0}^{t-1} \bigg[ \bigg( \tilde{g} \big(\tilde{\mathbf{w}}_i^{\tau,0} \big) \frac{\pmb{1}_i^\tau}{p_i^\tau} - \nabla \tilde{f}_i \big(\tilde{\mathbf{w}}_i^{\tau,0} \big) \bigg)  - \sum_{i'=1}^{U_{j,k,l}} \alpha_{i'}  \bigg( \tilde{g}\Big(\tilde{\mathbf{w}}_{i'}^{\tau,0}\Big) \frac{\pmb{1}_{i'}^\tau}{p_{i'}^\tau} - \nabla \tilde{f}_i\Big(\tilde{\mathbf{w}}_{i'}^{\tau,0}\Big) \bigg) \bigg] \Bigg\Vert^2 + \nonumber\\
    &\Squad \frac{4\eta^2}{T} \sum_{t=0}^{T-1}  \sum_{l=1}^{L} \alpha_l \sum_{k=1}^{B_l}  \alpha_k  \sum_{j=1}^{V_{k,l}} \alpha_j  \sum_{i=1}^{U_{j,k,l}} \alpha_i \mathbb{E} \Bigg\Vert \sum_{\tau= \Bar{t}_0}^{t-1} \bigg[ \nabla \tilde{f}_i \big(\tilde{\mathbf{w}}_i^{\tau,0} \big) - \sum_{i'=1}^{U_{j,k,l}} \alpha_{i'}  \nabla \tilde{f}_i\Big(\tilde{\mathbf{w}}_{i'}^{\tau,0}\Big) \bigg] \Bigg\Vert^2.
\end{align}

We calculate the first and the second terms of (\ref{mainLemSecondTerm}) in Lemma {\ref{lemma20_1}} and Lemma \ref{lemma20_2}, respectively.

\setcounter{Lemma}{4}
\begin{Lemma}
\label{lemma20_1}
\begin{align}
    & \frac{4\eta^2}{T} \sum_{t=0}^{T-1}  \sum_{l=1}^{L} \alpha_l \sum_{k=1}^{B_l}  \alpha_k  \sum_{j=1}^{V_{k,l}} \alpha_j  \sum_{i=1}^{U_{j,k,l}} \alpha_i \mathbb{E} \Bigg\Vert \sum_{\tau= \Bar{t}_0}^{t-1} \bigg[ \bigg( \tilde{g} \big(\tilde{\mathbf{w}}_i^{\tau,0} \big) \frac{\pmb{1}_i^\tau}{p_i^\tau} - \nabla \tilde{f}_i \big(\tilde{\mathbf{w}}_i^{\tau,0} \big) \bigg) - \sum_{i'=1}^{U_{j,k,l}} \alpha_{i'}  \bigg( \tilde{g}\Big(\tilde{\mathbf{w}}_{i'}^{\tau,0}\Big) \frac{\pmb{1}_{i'}^\tau}{p_{i'}^\tau} -  \nabla \tilde{f}_i\Big(\tilde{\mathbf{w}}_{i'}^{\tau,0}\Big) \bigg) \bigg] \Bigg\Vert^2 \nonumber\\
    &\leq 8 \kappa_0 \eta^2 \sigma^2 + 8 \kappa_0 \eta^2 G^2 \cdot \varphi_{\mathrm{w, L}_1},
\end{align}
where $\varphi_{\mathrm{w, L}_1} = \frac{1}{T} \sum_{l=1}^{L} \alpha_l \sum_{k=1}^{B_l} \alpha_k \sum_{j=1}^{V_{k,l}} \alpha_j \sum_{i=1}^{U_{j,k,l}} \alpha_i \sum_{t=0}^{T-1} \left(\frac{1-p_i^t}{p_i^t}\right)$.
\end{Lemma}

\begin{Lemma}
\label{lemma20_2}
\begin{align}
    & \frac{4\eta^2}{T} \sum_{t=0}^{T-1}  \sum_{l=1}^{L} \alpha_l \sum_{k=1}^{B_l}  \alpha_k  \sum_{j=1}^{V_{k,l}} \alpha_j  \sum_{i=1}^{U_{j,k,l}} \alpha_i \mathbb{E} \Bigg\Vert \sum_{\tau= \Bar{t}_0}^{t-1} \bigg[ \nabla \tilde{f}_i \big(\tilde{\mathbf{w}}_i^{\tau,0} \big) - \sum_{i'=1}^{U_{j,k,l}} \alpha_{i'}  \nabla \tilde{f}_{i'} \Big(\tilde{\mathbf{w}}_{i'}^{\tau,0}\Big) \bigg] \Bigg\Vert^2 \nonumber\\
    &\leq 20\kappa_0^2 \eta^2 \epsilon_{\mathrm{vc}}^2 + 40 \kappa_0^2 \eta^2 \beta^2 \cdot \bar{\mathrm{e}}_{\pmb{\delta}} + \frac{40 \kappa_0^2 \eta^2 \beta^2}{T} \sum_{t=0}^{T-1} \sum_{l=1}^{L} \alpha_l \sum_{k=1}^{B_l}  \alpha_k  \sum_{j=1}^{V_{k,l}} \alpha_j  \sum_{i=1}^{U_{j,k,l}} \alpha_i \mathbb{E} \left\Vert \bar{\mathbf{w}}_j^t - \tilde{\mathbf{w}}_i^{t} \right\Vert^2,
\end{align}
where $\bar{\mathrm{e}}_{\pmb{\delta}} = \frac{1}{T} \sum_{t=0}^{T-1} \sum_{l=1}^{L} \alpha_l \sum_{k=1}^{B_l}  \alpha_k  \sum_{j=1}^{V_{k,l}} \alpha_j  \sum_{i=1}^{U_{j,k,l}} \alpha_i \delta_i^t \mathbb{E} \Vert \mathbf{w}_i^t \Vert^2$.
\end{Lemma}

Using Lemma \ref{lemma20_1} and Lemma \ref{lemma20_2}, we write 
\begin{align}
\label{mainLem2_2}
    &\frac{1}{T} \sum_{t=0}^T \sum_{l=1}^{L} \alpha_l \sum_{k=1}^{B_l} \alpha_k \sum_{j=1}^{V_{k,l}} \alpha_j \sum_{i=1}^{U_{j,k,l}} \alpha_i \mathbb{E} \left\Vert \bar{\mathbf{w}}_j^t - \tilde{\mathbf{w}}_i^t \right\Vert^2 
    \leq \frac{8 \kappa_0 \eta^2 \sigma^2 + 20\kappa_0^2 \eta^2 \epsilon_{\mathrm{vc}}^2 + 8 \kappa_0 \eta^2 G^2 \cdot \varphi_{\mathrm{w, L}_1} + 40 \kappa_0^2 \eta^2 \beta^2 \cdot \bar{\mathrm{e}}_{\pmb{\delta}} + \bblue{8} \cdot \mathrm{e}_{\mathrm{p,L}_1}  } { 1 - 40\kappa_0^2 \eta^2 \beta^2}, 
\end{align}
where $\mathrm{e}_{\mathrm{p,L}_1} = \frac{1}{T} \sum_{t=0}^{T-1}\sum_{l=1}^L \alpha_l \sum_{k=1}^{B_l} \alpha_k \sum_{j=1}^{V_{k,l}} \alpha_j \sum_{i=1}^{U_{j,k,l}} \alpha_i \delta_i^{\lfloor t/\kappa_0 \rfloor} \mathbb{E} \left\Vert \mathbf{w}_{i}^{\lfloor t/\kappa_0 \rfloor} \right\Vert^2$.

Now multiplying both sides of (\ref{mainLem2_2}) by $12\beta^2$, we get
\begin{align}
    &\frac{12 \beta^2}{T} \sum_{t=0}^T \sum_{l=1}^{L} \alpha_l \sum_{k=1}^{B_l} \alpha_k \sum_{j=1}^{V_{k,l}} \alpha_j \sum_{i=1}^{U_{j,k,l}} \alpha_i \mathbb{E} \left\Vert \bar{\mathbf{w}}_j^t - \tilde{\mathbf{w}}_i^t \right\Vert^2 \nonumber\\
    &\leq \frac{96 \kappa_0 \eta^2 \beta^2 \sigma^2 + 240 \kappa_0^2 \eta^2 \beta^2 \epsilon_{\mathrm{vc}}^2 + 96 \kappa_0 \eta^2 \beta^2 G^2 \cdot \varphi_{\mathrm{w, L}_1} + 480 \kappa_0^2 \eta^2 \beta^4 \cdot \bar{\mathrm{e}}_{\pmb{\delta}} + \bblue{96} \beta^2 \cdot \mathrm{e}_{\mathrm{p,L}_1}  } { 1 - 40\kappa_0^2 \eta^2 \beta^2}.
\end{align}

When $\eta \leq \frac{1}{2\sqrt{10} \kappa_0 \beta}$, we have $1 - 40\kappa_0^2 \eta^2 \beta^2 \geq 1$, and the previous assumption of $\eta \leq \frac{1}{\beta}$ is automatically satisfied.
As such, we write 
\begin{align}
    &\frac{12 \beta^2}{T} \sum_{t=0}^T \sum_{l=1}^{L} \alpha_l \sum_{k=1}^{B_l} \alpha_k \sum_{j=1}^{V_{k,l}} \alpha_j \sum_{i=1}^{U_{j,k,l}} \alpha_i \mathbb{E} \left\Vert \bar{\mathbf{w}}_j^t - \tilde{\mathbf{w}}_i^t \right\Vert^2 \nonumber\\
    &\leq 96 \kappa_0 \eta^2 \beta^2 \sigma^2 + 240 \kappa_0^2 \eta^2 \beta^2 \epsilon_{\mathrm{vc}}^2 + 96 \kappa_0 \eta^2 \beta^2 G^2 \cdot \varphi_{\mathrm{w, L}_1} + 480 \kappa_0^2 \eta^2 \beta^4 \cdot \bar{\mathrm{e}}_{\pmb{\delta}} + \bblue{96} \beta^2 \cdot \mathrm{e}_{\mathrm{p,L}_1} \nonumber \\
    &\approx \mathcal{O} \big( \kappa_0 \eta^2 \beta^2 \sigma^2 \big) + \mathcal{O} \big( \kappa_0^2 \eta^2 \beta^2 \epsilon_{\mathrm{vc}}^2 \big) + \mathcal{O} \big( \kappa_0 \eta^2 \beta^2 G^2 \cdot \varphi_{\mathrm{w, L}_1} \big) + \mathcal{O} \big(\delta^{\mathrm{th}}  \beta^2 D^2 \big).
\end{align}

\subsection{Missing Proof of Lemma \ref{lemma20_1}}
\begin{align}
\label{lemma2secondTerm0_SecondTerm}
    &4 \eta^2 \sum_{l=1}^{L} \alpha_l \sum_{k=1}^{B_l} \alpha_k \sum_{j=1}^{V_{k,l}} \alpha_j \sum_{i=1}^{U_{j,k,l}} \alpha_i \mathbb{E}  \Bigg\Vert \sum_{\tau= \Bar{t}_0}^{t-1} \Bigg\{ \left(\tilde{g} \big(\tilde{\mathbf{w}}_i^{\tau,0} \big)\frac{\pmb{1}_i^\tau}{p_i^\tau} - \nabla \tilde{f}_i (\tilde{\mathbf{w}}_i^{\tau,0})\right) - \sum_{i'=1}^{U_{j,k,l}}  \alpha_{i'} \left( \tilde{g} \left(\tilde{\mathbf{w}}_{i'}^{\tau,0} \right) \frac{\pmb{1}_{i'}^\tau}{p_{i'}^\tau} - \nabla \tilde{f}_{i'} (\tilde{\mathbf{w}}_{i'}^{\tau,0} ) \right)  \Bigg\} \Bigg\Vert^2, \nonumber\\
    &\overset{(a)}{=} 4 \eta^2 \sum_{l=1}^{L} \alpha_l \sum_{k=1}^{B_l} \alpha_k \sum_{j=1}^{V_{k,l}} \alpha_j \Bigg\{ \sum_{i=1}^{U_{j,k,l}} \alpha_i \mathbb{E}  \Bigg\Vert \sum_{\tau= \Bar{t}_0}^{t-1} \left(\tilde{g} \big(\tilde{\mathbf{w}}_i^{\tau,0} \big) \frac{\pmb{1}_i^\tau}{p_i^\tau} - \nabla \tilde{f}_i (\tilde{\mathbf{w}}_i^{\tau,0})\right) \Bigg\Vert^2 - \mathbb{E}  \Bigg\Vert \sum_{\tau = \Bar{t}_0}^{t-1} \sum_{i'=1}^{U_{j,k,l}} \alpha_{i'} \left( \tilde{g} \left(\tilde{\mathbf{w}}_{i'}^{\tau,0} \right) \frac{\pmb{1}_{i'}^\tau}{p_{i'}^\tau} - \nabla \tilde{f}_{i'} (\tilde{\mathbf{w}}_{i'}^{\tau})\right) \Bigg\Vert^2 \Bigg\}, \nonumber\\
    &\overset{(b)}{=} 4 \eta^2 \sum_{l=1}^{L} \alpha_l \sum_{k=1}^{B_l} \alpha_k \sum_{j=1}^{V_{k,l}} \alpha_j \Bigg\{ \sum_{i=1}^{U_{j,k,l}} \alpha_i \sum_{\tau= \Bar{t}_0}^{t-1} \mathbb{E}  \Bigg\Vert \tilde{g} \big(\tilde{\mathbf{w}}_i^{\tau,0} \big)\frac{\pmb{1}_i^\tau}{p_i^\tau} \pm \Tilde{g} \big( \Tilde{\mathbf{w}}_i^{\tau,0} \big) - \nabla \tilde{f}_i (\tilde{\mathbf{w}}_i^{\tau,0}) \Bigg\Vert^2 -  \nonumber\\
    & \Mquad \qquad \sum_{\tau= \Bar{t}_0}^{t-1} \mathbb{E}  \Bigg\Vert \sum_{i'=1}^{U_{j,k,l}}  \alpha_{i'} \left( \tilde{g} \left(\tilde{\mathbf{w}}_{i'}^{\tau,0} \right) \frac{\pmb{1}_{i'}^\tau}{p_{i'}^\tau} \pm \tilde{g} \left(\tilde{\mathbf{w}}_{i'}^{\tau,0} \right) - \nabla \tilde{f}_{i'} (\tilde{\mathbf{w}}_{i'}^{\tau,0} ) \right) \Bigg\Vert^2 \Bigg\}, \nonumber\\
    &\overset{(c)}{=} 8 \eta^2 \sum_{l=1}^{L} \alpha_l \sum_{k=1}^{B_l} \alpha_k \sum_{j=1}^{V_{k,l}} \alpha_j \sum_{i=1}^{U_{j,k,l}} \alpha_i \sum_{\tau= \Bar{t}_0}^{t-1} \mathbb{E} \Bigg\{ \Bigg\Vert \left(\frac{\pmb{1}_i^\tau}{p_i^\tau} - 1\right)\tilde{g} \big(\tilde{\mathbf{w}}_i^{\tau,0} \big)\Bigg\Vert^2 + \Bigg\Vert \Tilde{g} \big( \Tilde{\mathbf{w}}_i^{\tau,0} \big) - \nabla \tilde{f}_i (\tilde{\mathbf{w}}_i^{\tau,0}) \Bigg\Vert^2 \Bigg\} - \nonumber\\
    &8 \eta^2 \sum_{l=1}^{L} \alpha_l \sum_{k=1}^{B_l} \alpha_k \sum_{j=1}^{V_{k,l}} \alpha_j \sum_{i=1}^{U_{j,k,l}} \alpha_i \sum_{\tau= \Bar{t}_0}^{t-1} \sum_{i'=1}^{U_{j,k,l}} \rs \rs \rs \alpha_{i'}^2 \mathbb{E} \Bigg\{ \Bigg\Vert \left( \frac{\pmb{1}_{i'}^\tau}{p_{i'}^\tau} - 1 \right) \tilde{g} \left(\tilde{\mathbf{w}}_{i'}^{\tau,0} \right)\Bigg\Vert^2 + \Bigg\Vert \tilde{g} \left(\tilde{\mathbf{w}}_{i'}^{\tau,0} \right) - \nabla \tilde{f}_{i'} (\tilde{\mathbf{w}}_{i'}^{\tau,0} ) \Bigg\Vert^2 \Bigg\}, \nonumber\\
    & \overset{(d)}{\leq} 8 \eta^2 \sum_{l=1}^{L} \alpha_l \sum_{k=1}^{B_l} \alpha_k \sum_{j=1}^{V_{k,l}} \alpha_j \sum_{i=1}^{U_{j,k,l}} \alpha_i \sum_{\tau= \Bar{t}_0}^{t-1} \Bigg\{ \left(\frac{1-p_i^\tau}{p_i^\tau}\right) \mathbb{E} \left\Vert \tilde{g} (\tilde{\mathbf{w}}_i^{\tau,0}) \right\Vert^2 + \sigma^2 \Bigg\} - \nonumber\\
    &\Squad 8 \eta^2 \sum_{l=1}^{L} \alpha_l \sum_{k=1}^{B_l} \alpha_k \sum_{j=1}^{V_{k,l}} \alpha_j \sum_{\tau= \Bar{t}_0}^{t-1} \sum_{i=1}^{U_{j,k,l}} \alpha_i^2 \Bigg\{\left( \frac{1 - p_i^\tau}{p_i^\tau} \right) \mathbb{E} \left \Vert \tilde{g} (\tilde{\mathbf{w}}_i^{\tau,0} ) \right\Vert^2 + \sigma^2 \Bigg\}, \nonumber\\ 
    & \overset{(e)}{\leq} 8 \kappa_0 \eta^2 \sigma^2 - 8 \eta^2 \sum\nolimits_{l=1}^{L} \alpha_l \sum\nolimits_{k=1}^{B_l} \alpha_k \sum\nolimits_{j=1}^{V_{k,l}} \alpha_j \sum\nolimits_{\tau= \Bar{t}_0}^{t-1} \sum\nolimits_{i=1}^{U_{j,k,l}} \alpha_i^2 \sigma^2 + \nonumber\\
    &\Squad 8 \eta^2 \sum\nolimits_{l=1}^{L} \alpha_l \sum\nolimits_{k=1}^{B_l} \alpha_k \sum\nolimits_{j=1}^{V_{k,l}} \alpha_j \sum\nolimits_{\tau= \Bar{t}_0}^{t-1} \sum\nolimits_{i=1}^{U_{j,k,l}} \alpha_i \left(1 - \alpha_i \right) \left([1-p_i^\tau]/p_i^\tau\right) \mathbb{E} \left\Vert \tilde{g} (\tilde{\mathbf{w}}_i^{\tau,0}) \right\Vert^2, \nonumber \\
    &\leq 8 \kappa_0 \eta^2 \sigma^2 + 8 \eta^2 \sum_{l=1}^{L} \alpha_l \sum_{k=1}^{B_l} \alpha_k \sum_{j=1}^{V_{k,l}} \alpha_j \sum_{\tau= \Bar{t}_0}^{t-1} \sum_{i=1}^{U_{j,k,l}} \alpha_i \left(\frac{1-p_i^\tau}{p_i^\tau}\right) \mathbb{E} \left\Vert \tilde{g} (\tilde{\mathbf{w}}_i^{\tau,0}) \right\Vert^2, 
\end{align}
where $(a)$ stems from the fact that $\sum_{i=1}^n p_i \Vert \mathbf{x}_i - \Bar{\mathbf{x}} \Vert^2 = \sum_{i=1}^n p_i \Vert \mathbf{x}_i \Vert^2 - \Vert \Bar{\mathbf{x}} \Vert^2$, where $\bar{\mathbf{x}} = \sum_{i=1}^n p_i \mathbf{x}_i$ for any $0 \leq p_i\leq 1$ and $\sum_{i=1}^n p_i = 1$.
Besides, $(b)$ is true due to the time independence of the SGD assumption.
Furthermore, in $(c)$ and $(d)$, we use the bounded divergence of the mini-batch gradient assumption and client independence property.

\subsection{Missing Proof of Lemma \ref{lemma20_2}}
\begin{align}
    & \frac{4\eta^2}{T} \sum_{t=0}^{T-1}  \sum_{l=1}^{L} \alpha_l \sum_{k=1}^{B_l}  \alpha_k  \sum_{j=1}^{V_{k,l}} \alpha_j  \sum_{i=1}^{U_{j,k,l}} \alpha_i \mathbb{E} \Bigg\Vert \sum_{\tau= \Bar{t}_0}^{t-1} \bigg[ \nabla \tilde{f}_i \big(\tilde{\mathbf{w}}_i^{\tau,0} \big) - \sum_{i'=1}^{U_{j,k,l}} \alpha_{i'} \nabla \tilde{f}_{i'} \Big(\tilde{\mathbf{w}}_{i'}^{\tau,0}\Big) \bigg] \Bigg\Vert^2 \nonumber\\
    &= \frac{4\kappa_0 \eta^2}{T} \sum_{t=0}^{T-1}  \sum_{l=1}^{L} \alpha_l \sum_{k=1}^{B_l}  \alpha_k  \sum_{j=1}^{V_{k,l}} \alpha_j  \sum_{i=1}^{U_{j,k,l}} \alpha_i \sum_{\tau= \Bar{t}_0}^{t-1} \mathbb{E} \Bigg\Vert \bigg( \nabla \tilde{f}_i \big(\tilde{\mathbf{w}}_i^{\tau,0} \big) - \nabla \tilde{f}_i \big(\tilde{\mathbf{w}}_i^{\tau} \big) \bigg) + \bigg(\nabla \tilde{f}_i \big(\tilde{\mathbf{w}}_i^{\tau} \big) - \nabla \tilde{f}_i \big(\bar{\mathbf{w}}_j^{\tau} \big) \bigg) + \nonumber \\
    &\Squad \bigg(\nabla \tilde{f}_i \big(\bar{\mathbf{w}}_j^{\tau} \big) - \sum_{i'=1}^{U_{j,k,l}} \alpha_{i'} \nabla \tilde{f}_{i'} \big(\bar{\mathbf{w}}_{j}^{\tau}\big) \bigg) + \bigg(\sum_{i'=1}^{U_{j,k,l}} \alpha_{i'} \nabla \tilde{f}_{i'} \big(\bar{\mathbf{w}}_{j}^{\tau}\big) - \sum_{i'=1}^{U_{j,k,l}} \alpha_{i'} \nabla \tilde{f}_{i'} \big(\tilde{\mathbf{w}}_{i}^{\tau}\big) \bigg) + \nonumber\\ 
    &\Squad \bigg( \sum_{i'=1}^{U_{j,k,l}} \alpha_{i'} \nabla \tilde{f}_{i'} \big(\tilde{\mathbf{w}}_{i}^{\tau}\big) - \sum_{i'=1}^{U_{j,k,l}} \alpha_{i'} \nabla \tilde{f}_{i'}\big(\tilde{\mathbf{w}}_{i'}^{\tau,0}\big) \bigg) \Bigg\Vert^2  \nonumber \\
    &\leq \frac{20\kappa_0^2 \eta^2}{T} \sum_{t=0}^{T-1}  \sum_{l=1}^{L} \alpha_l \sum_{k=1}^{B_l}  \alpha_k  \sum_{j=1}^{V_{k,l}} \alpha_j  \sum_{i=1}^{U_{j,k,l}} \alpha_i \bigg[ 2\beta^2 \mathbb{E} \Vert \tilde{\mathbf{w}}_i^{t} - \tilde{\mathbf{w}}_i^{t,0} \Vert^2 + \epsilon_{\mathrm{vc}}^2 + 2\beta^2 \mathbb{E} \Vert \bar{\mathbf{w}}_j^{t} - \tilde{\mathbf{w}}_i^{t} \Vert^2   \bigg] \nonumber \\
    &\leq 20\kappa_0^2 \eta^2 \epsilon_{\mathrm{vc}}^2 + 40 \kappa_0^2 \eta^2 \beta^2 \cdot \bar{\mathrm{e}}_{\pmb{\delta}} + \frac{40 \kappa_0^2 \eta^2 \beta^2}{T} \sum_{t=0}^{T-1} \sum_{l=1}^{L} \alpha_l \sum_{k=1}^{B_l}  \alpha_k  \sum_{j=1}^{V_{k,l}} \alpha_j  \sum_{i=1}^{U_{j,k,l}} \alpha_i \mathbb{E} \left\Vert \bar{\mathbf{w}}_j^t - \tilde{\mathbf{w}}_i^{t} \right\Vert^2,
\end{align}
where $\bar{\mathrm{e}}_{\pmb{\delta}} = \frac{1}{T} \sum_{t=0}^{T-1} \sum_{l=1}^{L} \alpha_l \sum_{k=1}^{B_l}  \alpha_k  \sum_{j=1}^{V_{k,l}} \alpha_j  \sum_{i=1}^{U_{j,k,l}} \alpha_i \delta_i^t \mathbb{E} \Vert \mathbf{w}_i^t \Vert^2$.
\end{proof}

\section{Proof of Lemma \ref{Lemma3}}
\setcounter{Lemma}{1}
\begin{Lemma}
\label{Lemma3}
When $\eta \leq 1/[2 \sqrt{10} \kappa_0 \kappa_1 \beta]$, the difference between the sBS model parameters and VC model parameters, i.e., the $\mathrm{L}_2$ term of (\ref{theorem_1_eqn_Sup}), is upper bounded as 
\begin{align}
\label{lem3_Sup}
    & \frac{\beta^2}{T} \sum_{t=0}^{T-1} \sum_{l=1}^{L} \alpha_l \sum_{k=1}^{B_l} \alpha_k \sum_{j=1}^{V_{k,l}} \alpha_j \mathbb{E} \left\Vert \bar{\mathbf{w}}_k^t - \bar{\mathbf{w}}_j^t \right\Vert^2 \nonumber\\ 
    &\leq \mathcal{O} \big(\beta^4 \kappa_0^4 \kappa_1^2 \eta^4 \epsilon_{\mathrm{vc}}^2 \big) + \mathcal{O} \big(\kappa_0^2 \kappa_1^2 \eta^2 \beta^2 \epsilon_{\mathrm{sbs}}^2 \big) + \mathcal{O} \big( \kappa_0 \kappa_1 \eta^2 \sigma^2 \beta^2 \big) + \mathcal{O } \big( \delta^{th} \beta^2 D^2 \big) + \nonumber\\
    &\Mquad \mathcal{O} \big( \kappa_0^3 \kappa_1^2 \beta^4 \eta^4 G^2 \cdot \varphi_{\mathrm{w, L}_1} (\pmb{\delta},\pmb{\mathrm{f}}, \pmb{\mathrm{P}}) \big) + \mathcal{O} \big( \kappa_0 \kappa_1 \beta^2 \eta^2 \cdot \varphi_{\mathrm{w, L}_2} (\pmb{\delta},\pmb{\mathrm{f}}, \pmb{\mathrm{P}}) \big),
\end{align}
where $\varphi_{\mathrm{w, L}_2} (\pmb{\delta},\pmb{\mathrm{f}}, \pmb{\mathrm{P}}) = [1/T] \sum_{t=0}^{T-1} \sum_{l=1}^{L} \alpha_l \sum_{k=1}^{B_l} \alpha_k \sum_{j=1}^{V_{k,l}} \alpha_j \sum_{i=1}^{U_{j,k,l}} \alpha_i^2 ( 1/p_i^t - 1 )$.
\end{Lemma}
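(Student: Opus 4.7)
\noindent
\textbf{Proof Proposal for Lemma \ref{Lemma3}.} The plan is to mimic the strategy of Lemma \ref{Lemma2}, but one tier higher: I will unroll the sBS and VC models back to their last common synchronization point, decompose the resulting expression into pruning, stochastic, wireless, and divergence contributions, and then invoke Lemma \ref{Lemma2} to handle nested UE-to-VC differences, finally absorbing any self-referential term into the left-hand side via a standard contraction argument.

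First I would fix $\bar{t}_2 = (m\kappa_3 + t_3)\kappa_2\kappa_1\kappa_0$ as the latest sBS synchronization iteration and use the update rules (\ref{VC_Update}) and (\ref{sBS_Update}) to write
\begin{align*}
\bar{\mathbf{w}}_k^t - \bar{\mathbf{w}}_j^t
&= \big(\tilde{\bar{\mathbf{w}}}_k^{\bar{t}_2,0} - \tilde{\bar{\mathbf{w}}}_j^{\bar{t}_2,0}\big) - \eta \sum_{\tau = \bar{t}_2}^{t-1} \Bigg[ \sum_{j'=1}^{V_{k,l}} \alpha_{j'} \sum_{i=1}^{U_{j',k,l}} \alpha_i \tilde{g}\big(\tilde{\mathbf{w}}_i^{\tau,0}\big) \frac{\pmb{1}_i^\tau}{p_i^\tau} - \sum_{i=1}^{U_{j,k,l}} \alpha_i \tilde{g}\big(\tilde{\mathbf{w}}_i^{\tau,0}\big) \frac{\pmb{1}_i^\tau}{p_i^\tau} \Bigg].
\end{align*}
Since $\mathbf{w}_j^{\bar{t}_2} = \mathbf{w}_k^{\bar{t}_2}$ at synchronization, the initial gap $\tilde{\bar{\mathbf{w}}}_k^{\bar{t}_2,0} - \tilde{\bar{\mathbf{w}}}_j^{\bar{t}_2,0}$ reduces purely to pruning differences at the VC and sBS levels, which by Assumption 8 is controlled by $\delta^{\mathrm{th}} D^2$, contributing the $\mathcal{O}(\delta^{\mathrm{th}} \beta^2 D^2)$ term.

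Next I would apply Jensen's inequality $\|\sum_{i=1}^n \mathbf{a}_i\|^2 \leq n \sum_{i=1}^n \|\mathbf{a}_i\|^2$ with $n = \kappa_0\kappa_1$ (since $t - \bar{t}_2 \leq \kappa_0\kappa_1$) to push the squared norm inside the time sum. For each $\tau$, I would add and subtract three centering terms to decompose the summand: (i) replace stochastic gradients $\tilde{g}$ by expected pruned gradients $\nabla\tilde{f}_i$, yielding the variance term $\mathcal{O}(\kappa_0\kappa_1 \eta^2 \beta^2 \sigma^2)$ and, via the $(\pmb{1}_i^\tau / p_i^\tau - 1)$ residuals with a similar identity $\sum_i p_i\|x_i - \bar x\|^2 = \sum_i p_i \|x_i\|^2 - \|\bar x\|^2$ as in Lemma \ref{Lemma2}, the wireless term $\mathcal{O}(\kappa_0\kappa_1 \beta^2 \eta^2 \varphi_{\mathrm{w,L}_2})$ (the $\alpha_i^2$ factor instead of $\alpha_i$ stems from one fewer inner averaging level compared to $\varphi_{\mathrm{w,L}_1}$); (ii) replace $\nabla\tilde{f}_i(\tilde{\mathbf{w}}_i^{\tau,0})$ by $\nabla\tilde{f}_i(\bar{\mathbf{w}}_j^\tau)$ to expose the $L_1$-type discrepancy which Lemma \ref{Lemma2} bounds; (iii) apply the VC–sBS divergence assumption $\sum_j \alpha_j \|\nabla f_j(\mathbf w) - \nabla f_k(\mathbf w)\|^2 \leq \epsilon_{\mathrm{sbs}}^2$ (and, one step inwards, the UE–VC assumption $\epsilon_{\mathrm{vc}}^2$) to the remaining bias terms.

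After $\beta$-smoothness is used on the distance $\|\bar{\mathbf w}_k^\tau - \bar{\mathbf w}_j^\tau\|^2$ arising in one of the cross terms, the inequality becomes self-referential: the right-hand side contains a $\kappa_0^2\kappa_1^2\eta^2\beta^2$ multiple of the LHS-like quantity averaged over $\tau$. The main obstacle, and where I expect most bookkeeping, will be tracking the orders of $\kappa_0,\kappa_1$ picked up by each decomposition step (the $\epsilon_{\mathrm{vc}}^2$ term inherits an extra $\kappa_0^2$ because it enters through Lemma \ref{Lemma2}, explaining the $\kappa_0^4\kappa_1^2\eta^4\beta^4$ prefactor), and making sure the wireless coefficients $\varphi_{\mathrm{w,L}_1}$ versus $\varphi_{\mathrm{w,L}_2}$ come out with the right powers of $\alpha_i$. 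Choosing $\eta \leq 1/[2\sqrt{10}\kappa_0\kappa_1\beta]$ ensures $1 - 40\kappa_0^2\kappa_1^2\eta^2\beta^2 \geq 1$, so the self-referential term can be absorbed cleanly, yielding the stated bound after multiplying through by $\beta^2$.
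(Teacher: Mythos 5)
Your proposal follows essentially the same route as the paper's proof: unroll both models back to the last point where the VC model is reset to the sBS model, charge the residual gap at that point to pruning via the pruning-ratio assumption, then split the accumulated gradient sum into a stochastic-plus-wireless part (giving the $\sigma^2$ and $\varphi_{\mathrm{w,L}_2}$ terms) and a deterministic divergence part (giving $\epsilon_{\mathrm{sbs}}^2$, the Lemma~\ref{Lemma2}-type terms with their extra $\kappa_0^2$, and a self-referential $\Vert\bar{\mathbf{w}}_k-\bar{\mathbf{w}}_j\Vert^2$ term absorbed through the step-size condition). One slip to correct: the synchronization index you write, $\bar{t}_2=(m\kappa_3+t_3)\kappa_2\kappa_1\kappa_0$, is the sBS--mBS reset point, whereas the VC--sBS reset occurs at $\bar{t}_1=\{(m\kappa_3+t_3)\kappa_2+t_2\}\kappa_1\kappa_0$; only with that choice is $t-\bar{t}_1\le\kappa_0\kappa_1$ (and the unrolling of $\bar{\mathbf{w}}_j^t$ valid past intermediate resets), which your Jensen step and all subsequent $\kappa_0\kappa_1$ bookkeeping require. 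Also note that the step-size condition gives $1-40\kappa_0^2\kappa_1^2\eta^2\beta^2\ge 0$ rather than $\ge 1$ (the paper states the same overstatement); what is actually needed before dividing through is a strictly positive lower bound on that denominator.
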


\begin{proof}
\label{proofLemma3}
\begin{align}
\label{lemma3Main0}
    &\frac{1}{T} \sum_{t=0}^{T-1} \sum_{l=1}^{L} \alpha_l \sum_{k=1}^{B_l} \alpha_k \sum_{j=1}^{V_{k,l}} \alpha_j \mathbb{E} \left\Vert \bar{\mathbf{w}}_k^t - \bar{\mathbf{w}}_j^t \right\Vert^2 \nonumber\\ 
    &= \frac{1}{T} \sum_{t=0}^{T-1} \sum_{l=1}^{L} \alpha_l \sum_{k=1}^{B_l} \alpha_k \sum_{j=1}^{V_{k,l}} \alpha_j \mathbb{E} \Big\Vert \tilde{\mathbf{w}}_k^{\Bar{t}_1,0} - \eta  \sum_{\tau=\Bar{t}_1}^{t-1} \sum_{j'=1}^{V_{k,l}} \alpha_{j'} \sum_{i'=1}^{U_{j',k,l}} \alpha_{i'} \Tilde{g} \left( \tilde{\mathbf{w}}_{i'}^{\tau,0} \right) \frac{\pmb{1}_{i'}^\tau}{p_{i'}^\tau} -  \tilde{\mathbf{w}}_j^{\Bar{t}_1,0} + \eta \sum_{\tau=\Bar{t}_1}^{t-1} \sum_{i=1}^{U_{j,k,l}} \alpha_i \tilde{g} \big(\tilde{\mathbf{w}}_i^{\tau,0} \big) \frac{\pmb{1}_i^\tau}{p_i^\tau} \Big\Vert^2, \nonumber \\
    &\leq \frac{2}{T} \sum_{t=0}^{T-1} \sum_{l=1}^{L} \alpha_l \sum_{k=1}^{B_l} \alpha_k \sum_{j=1}^{V_{k,l}} \alpha_j \mathbb{E} \left\Vert \tilde{\mathbf{w}}_k^{\Bar{t}_1,0} - \tilde{\mathbf{w}}_j^{\Bar{t}_1,0} \right \Vert^2 + \\
    &\qquad \frac{2\eta^2} {T} \sum_{t=0}^{T-1} \sum_{l=1}^{L} \alpha_l \sum_{k=1}^{B_l} \alpha_k \sum_{j=1}^{V_{k,l}} \alpha_j \mathbb{E} \left \Vert \sum_{\tau=\Bar{t}_1}^{t-1} \bigg[ \sum_{i=1}^{U_{j,k,l}} \alpha_i \tilde{g} \big(\tilde{\mathbf{w}}_i^{\tau,0} \big) \frac{\pmb{1}_i^\tau}{p_i^\tau} -  \sum_{j'=1}^{V_{k,l}} \alpha_{j'} \sum_{i'=1}^{U_{j',k,l}} \alpha_{i'} \Tilde{g} \left( \tilde{\mathbf{w}}_{i'}^{\tau,0} \right) \frac{\pmb{1}_{i'}^\tau}{p_{i'}^\tau} \bigg] \right\Vert^2, \nonumber
\end{align}
where $\Bar{t}_1 = \{(m \kappa_3 + t_3)\kappa_2 + t_2\} \kappa_1 \kappa_0 $ and the inequalities in the last term arise from Jensen inequality. 

For the first term in (\ref{lemma3Main0}), we have
\begin{align}
\label{Lemma2Main0_First_Term}
    & 2 \sum_{l=1}^{L} \alpha_l \sum_{k=1}^{B_l} \alpha_k \sum_{j=1}^{V_{k,l}} \alpha_j \mathbb{E} \left\Vert \tilde{\mathbf{w}}_k^{\Bar{t}_1,0} - \tilde{\mathbf{w}}_j^{\Bar{t}_1,0} \right \Vert^2 \nonumber \\
    &\overset{(a)}{=} 2 \sum_{l=1}^{L} \alpha_l \sum_{k=1}^{B_l} \alpha_k \sum_{j=1}^{V_{k,l}} \alpha_j \mathbb{E} \left\Vert \tilde{\mathbf{w}}_k^{\Bar{t}_1,0} - \mathbf{w}_k^{\Bar{t}_1} + \mathbf{w}_j^{\Bar{t}_1} - \tilde{\mathbf{w}}_j^{\Bar{t}_1,0} \right \Vert^2, \nonumber \\
    & \leq 4 \sum_{l=1}^{L} \alpha_l \sum_{k=1}^{B_l} \alpha_k \sum_{j=1}^{V_{k,l}} \alpha_j \mathbb{E} \left\Vert \mathbf{w}_k^{\Bar{t}_1} - \tilde{\mathbf{w}}_k^{\Bar{t}_1,0} \right\Vert^2 + 4 \sum_{l=1}^{L} \alpha_l \sum_{k=1}^{B_l} \alpha_k \sum_{j=1}^{V_{k,l}} \alpha_j \mathbb{E} \left\Vert\mathbf{w}_j^{\Bar{t}_1} - \tilde{\mathbf{w}}_j^{\Bar{t}_1,0} \right \Vert^2, \nonumber \\
    &\overset{(b)}{\leq} 4 \sum_{l=1}^{L} \alpha_l \sum_{k=1}^{B_l} \alpha_k \sum_{j=1}^{V_{k,l}} \bblue{\alpha_j} \sum_{i=1}^{U_{j,k,l}} \bblue{\alpha_i} \delta_i^{\Bar{t}_1} \mathbb{E} \left\Vert \mathbf{w}_{i}^{\Bar{t}_1} \right\Vert^2 + 4 \sum_{l=1}^{L} \alpha_l \sum_{k=1}^{B_l} \alpha_k \sum_{j=1}^{V_{k,l}} \alpha_j \sum_{i=1}^{U_{j,k,l}} \bblue{\alpha_i} \delta_{i}^{\Bar{t}_1} \mathbb{E} \left\Vert \mathbf{w}_{i}^{\Bar{t}_1} \right\Vert^2, \nonumber\\
    &= \bblue{8} \sum_{l=1}^{L} \alpha_l \sum_{k=1}^{B_l} \alpha_k \sum_{j=1}^{V_{k,l}} \alpha_j \sum_{i=1}^{U_{j,k,l}} \bblue{\alpha_i} \delta_i^{\Bar{t}_1} \mathbb{E} \left\Vert \mathbf{w}_{i}^{\Bar{t}_1} \right\Vert^2,
\end{align}
where in $(a)$ we use the fact that $\mathbf{w}_k^{\Bar{t}_1} = \mathbf{w}_j^{\Bar{t}_1}$ and $(b)$ stems from following similar steps as in (\ref{sBSPrune}) and (\ref{vCPrune}).

As such we derive the upper bound of the first term of (\ref{lemma3Main0}) as
\begin{align}
\label{Lemma2Main0_First_TermBound}
    & 2 \sum_{l=1}^{L} \alpha_l \sum_{k=1}^{B_l} \alpha_k \sum_{j=1}^{V_{k,l}} \alpha_j \mathbb{E} \left\Vert \tilde{\mathbf{w}}_k^{\Bar{t}_1,0} - \tilde{\mathbf{w}}_j^{\Bar{t}_1,0} \right \Vert^2 \nonumber \\
    &\leq \frac{\bblue{8}}{T} \sum_{t=0}^{T-1} \sum_{l=1}^{L} \alpha_l \sum_{k=1}^{B_l} \alpha_k \sum_{j=1}^{V_{k,l}} \alpha_j \sum_{i=1}^{U_{j,k,l}} \bblue{\alpha_i} \delta_i^{\left\lfloor t/(\kappa_0\kappa_1)\right\rfloor} \mathbb{E} \left\Vert \mathbf{w}_{i}^{\left\lfloor t/(\kappa_0\kappa_1)\right\rfloor} \right\Vert^2 
    \approx \mathcal{O} \big( \delta^{\mathrm{th}} D^2 \big).
\end{align}

For the second term in (\ref{lemma3Main0}), we have
\begin{align}
\label{Lemma3Main0_Second_Term}
    & \frac{2\eta^2}{T} \sum_{t=0}^{T-1} \sum_{l=1}^{L} \alpha_l \sum_{k=1}^{B_l} \alpha_k \sum_{j=1}^{V_{k,l}} \alpha_j \mathbb{E} \left \Vert \sum_{\tau=\Bar{t}_1}^{t-1} \Big( \sum_{i=1}^{U_{j,k,l}} \alpha_i \tilde{g} \big(\tilde{\mathbf{w}}_i^{\tau,0} \big) \frac{\pmb{1}_i^\tau}{p_i^\tau} -  \sum_{j'=1}^{V_{k,l}} \alpha_{j'} \sum_{i'=1}^{U_{j',k,l}} \alpha_{i'} \Tilde{g} \left( \tilde{\mathbf{w}}_{i'}^{\tau,0} \right) \frac{\pmb{1}_{i'}^\tau}{p_{i'}^\tau} \Big) \right\Vert^2 \nonumber\\
    &= \frac{2\eta^2}{T} \sum_{t=0}^{T-1} \sum_{l=1}^{L} \alpha_l \sum_{k=1}^{B_l} \alpha_k \sum_{j=1}^{V_{k,l}} \alpha_j \mathbb{E} \bigg \Vert \sum_{\tau=\Bar{t}_1}^{t-1} \bigg[  \bigg\{ \sum_{i=1}^{U_{j,k,l}} \alpha_i \tilde{g} \big(\tilde{\mathbf{w}}_i^{\tau,0} \big) \frac{\pmb{1}_i^\tau}{p_i^\tau} - \sum_{i=1}^{U_{j,k,l}} \alpha_i \nabla \tilde{f}_i \big(\tilde{\mathbf{w}}_i^{\tau,0} \big) \bigg\} +  \nonumber\\
    &\Squad \bigg\{ \sum_{j'=1}^{V_{k,l}} \alpha_{j'} \sum_{i'=1}^{U_{j',k,l}} \alpha_{i'} \nabla \Tilde{f}_{i'} \left( \tilde{\mathbf{w}}_{i'}^{\tau,0} \right) - \sum_{j'=1}^{V_{k,l}} \alpha_{j'} \sum_{i'=1}^{U_{j',k,l}} \alpha_{i'} \Tilde{g} \left( \tilde{\mathbf{w}}_{i'}^{\tau,0} \right) \frac{\pmb{1}_{i'}^\tau}{p_{i'}^\tau} \bigg\} + \nonumber\\
    &\Squad \bigg\{\sum_{i=1}^{U_{j,k,l}} \alpha_i \nabla \tilde{f}_i \big(\tilde{\mathbf{w}}_i^{\tau,0} \big) - \sum_{j'=1}^{V_{k,l}} \alpha_{j'} \sum_{i'=1}^{U_{j',k,l}} \alpha_{i'} \nabla \Tilde{f}_{i'} \left( \tilde{\mathbf{w}}_{i'}^{\tau,0} \right) \bigg\} \bigg] \bigg\Vert^2 , \nonumber\\
    &\leq \frac{4\eta^2}{T} \sum_{t=0}^{T-1} \sum_{l=1}^{L} \alpha_l \sum_{k=1}^{B_l} \alpha_k \sum_{j=1}^{V_{k,l}} \alpha_j \mathbb{E} \bigg \Vert \sum_{\tau=\Bar{t}_1}^{t-1} \bigg[  \bigg\{ \sum_{i=1}^{U_{j,k,l}} \alpha_i \tilde{g} \big(\tilde{\mathbf{w}}_i^{\tau,0} \big) \frac{\pmb{1}_i^\tau}{p_i^\tau} - \sum_{i=1}^{U_{j,k,l}} \alpha_i \nabla \tilde{f}_i \big(\tilde{\mathbf{w}}_i^{\tau,0} \big) \bigg\} +  \nonumber\\
    &\Mquad \bigg\{ \sum_{j'=1}^{V_{k,l}} \alpha_{j'} \sum_{i'=1}^{U_{j',k,l}} \alpha_{i'} \nabla \Tilde{f}_{i'} \left( \tilde{\mathbf{w}}_{i'}^{\tau,0} \right) - \sum_{j'=1}^{V_{k,l}} \alpha_{j'} \sum_{i'=1}^{U_{j',k,l}} \alpha_{i'} \Tilde{g} \left( \tilde{\mathbf{w}}_{i'}^{\tau,0} \right) \frac{\pmb{1}_{i'}^\tau}{p_{i'}^\tau} \bigg\} \bigg] \bigg\Vert^2 + \nonumber\\
    &~ \frac{4\eta^2}{T} \sum_{t=0}^{T-1} \sum_{l=1}^{L} \alpha_l \sum_{k=1}^{B_l} \alpha_k \sum_{j=1}^{V_{k,l}} \alpha_j \mathbb{E} \bigg\Vert \sum_{\tau=\Bar{t}_1}^{t-1} \bigg[ \sum_{i=1}^{U_{j,k,l}} \alpha_i \nabla \tilde{f}_i \big(\tilde{\mathbf{w}}_i^{\tau,0} \big) - \sum_{j'=1}^{V_{k,l}} \alpha_{j'} \sum_{i'=1}^{U_{j',k,l}} \alpha_{i'} \nabla \Tilde{f}_{i'} \left( \tilde{\mathbf{w}}_{i'}^{\tau,0} \right) \bigg] \bigg\Vert^2,
\end{align}

\setcounter{Lemma}{6}
\begin{Lemma}
\label{Lemma_30}
\begin{align}
    &\frac{4\eta^2}{T} \sum_{t=0}^{T-1} \sum_{l=1}^{L} \alpha_l \sum_{k=1}^{B_l} \alpha_k \sum_{j=1}^{V_{k,l}} \alpha_j \mathbb{E} \bigg \Vert \sum_{\tau=\Bar{t}_1}^{t-1} \bigg[  \bigg\{ \sum_{i=1}^{U_{j,k,l}} \alpha_i \tilde{g} \big(\tilde{\mathbf{w}}_i^{\tau,0} \big) \frac{\pmb{1}_i^\tau}{p_i^\tau} - \sum_{i=1}^{U_{j,k,l}} \alpha_i \nabla \tilde{f}_i \big(\tilde{\mathbf{w}}_i^{\tau,0} \big) \bigg\} +  \nonumber\\
    &\Mquad \bigg\{ \sum_{j'=1}^{V_{k,l}} \alpha_{j'} \sum_{i'=1}^{U_{j',k,l}} \alpha_{i'} \nabla \Tilde{f}_{i'} \left( \tilde{\mathbf{w}}_{i'}^{\tau,0} \right) - \sum_{j'=1}^{V_{k,l}} \alpha_{j'} \sum_{i'=1}^{U_{j',k,l}} \alpha_{i'} \Tilde{g} \left( \tilde{\mathbf{w}}_{i'}^{\tau,0} \right) \frac{\pmb{1}_{i'}^\tau}{p_{i'}^\tau} \bigg\} \bigg] \bigg\Vert^2 \nonumber\\
    & \leq 8 \kappa_0 \kappa_1 \eta^2 \sigma^2 \sum_{l=1}^{L} \alpha_l \sum_{k=1}^{B_l} \alpha_k \sum_{j=1}^{V_{k,l}} \alpha_j \sum_{i=1}^{U_{j,k,l}} \alpha_i^2 + 8 \kappa_0 \kappa_1 \eta^2 \cdot \varphi_{\mathrm{w, L}_2} \\
    & \approx \mathcal{O} \big( \kappa_0 \kappa_1 \eta^2 \sigma^2 \big) + \mathcal{O} \big( \kappa_0 \kappa_1 \eta^2 \cdot \varphi_{\mathrm{w, L}_2} \big), \label{Lemma_30_Eqn}
\end{align}
where $\varphi_{\mathrm{w, L}_2} = \frac{1}{T} \sum_{t=0}^{T-1} \sum_{l=1}^{L} \alpha_l \sum_{k=1}^{B_l} \alpha_k \sum_{j=1}^{V_{k,l}} \alpha_j \sum_{i=1}^{U_{j,k,l}} \alpha_i^2 \big(\frac{1}{p_i^t} - 1 \big)$.
\end{Lemma}

\begin{Lemma}
\label{Lemma_30_1}
\begin{align}
    &\frac{4\eta^2}{T} \sum_{t=0}^{T-1} \sum_{l=1}^{L} \alpha_l \sum_{k=1}^{B_l} \alpha_k \sum_{j=1}^{V_{k,l}} \alpha_j \mathbb{E} \bigg\Vert \sum_{\tau=\Bar{t}_1}^{t-1} \bigg[ \sum_{i=1}^{U_{j,k,l}} \alpha_i \nabla \tilde{f}_i \big(\tilde{\mathbf{w}}_i^{\tau,0} \big) - \sum_{j'=1}^{V_{k,l}} \alpha_{j'} \sum_{i'=1}^{U_{j',k,l}} \alpha_{i'} \nabla \Tilde{f}_{i'} \left( \tilde{\mathbf{w}}_{i'}^{\tau,0} \right) \bigg] \bigg\Vert^2 \nonumber\\ 
    &\leq  \mathcal{O} \big(\beta^2 \kappa_0^4 \kappa_1^2 \eta^4 \epsilon_{\mathrm{vc}}^2 \big) + \mathcal{O} \big(\kappa_0^2 \kappa_1^2 \eta^2 \epsilon_{\mathrm{sbs}}^2 \big) + \mathcal{O} \big(\beta^2 \sigma^2 \kappa_0^3 \kappa_1^2 \eta^4 \big) + \mathcal{O} \big( \delta^{\mathrm{th}} D^2 \beta^2 \kappa_0^2 \kappa_1^2 \eta^2 \big) + \nonumber\\
    &\qquad \qquad \mathcal{O} \big( \beta^2 \kappa_0^3 \kappa_1^2 \eta^4 G^2 \cdot \varphi_{\mathrm{w, L}_1} \big) + \frac{40 \beta^2 \kappa_0^2 \kappa_1^2 \eta^2}{T} \sum_{t=0}^{T-1} \sum_{l=1}^{L} \alpha_l \sum_{k=1}^{B_l} \alpha_k \sum_{j=1}^{V_{k,l}} \alpha_j \mathbb{E} \big\Vert \Bar{\mathbf{w}}_k^t - \Bar{\mathbf{w}}_j^t \big\Vert^2. 
\end{align}
\end{Lemma}

Now, using Lemma \ref{Lemma_30}, Lemma \ref{Lemma_30_1} and assuming $\eta \leq \frac{1}{2\sqrt{10} \kappa_0 \kappa_1 \beta }$, we have the following
\begin{align}
    & \frac{\beta^2}{T} \sum_{t=0}^{T-1} \sum_{l=1}^{L} \alpha_l \sum_{k=1}^{B_l} \alpha_k \sum_{j=1}^{V_{k,l}} \alpha_j \mathbb{E} \left\Vert \bar{\mathbf{w}}_k^t - \bar{\mathbf{w}}_j^t \right\Vert^2 \nonumber\\ 
    &\leq \mathcal{O } \big( \delta^{th} \beta^2 D^2 \big) + \mathcal{O} \big( \kappa_0 \kappa_1 \eta^2 \sigma^2 \beta^2 \big)  + \mathcal{O} \big( \kappa_0 \kappa_1 \beta^2 \eta^2 \cdot \varphi_{\mathrm{w, L}_2}) + \mathcal{O} \big(\beta^4 \kappa_0^4 \kappa_1^2 \eta^4 \epsilon_{\mathrm{vc}}^2 \big) + \mathcal{O} \big(\kappa_0^2 \kappa_1^2 \eta^2 \beta^2 \epsilon_{\mathrm{sbs}}^2 \big) + \nonumber\\
    &\Squad \mathcal{O} \big( \sigma^2 \kappa_0^3 \kappa_1^2 \beta^4 \eta^4 \big) + \mathcal{O} \big(\delta^{\mathrm{th}} D^2 \kappa_0^2 \kappa_1^2 \eta^2 \beta^4 \big) + \mathcal{O} \big( \kappa_0^3 \kappa_1^2 \beta^4 \eta^4 G^2 \cdot \varphi_{\mathrm{w, L}_1} \big) \nonumber\\
    &\approx \mathcal{O} \big(\beta^4 \kappa_0^4 \kappa_1^2 \eta^4 \epsilon_{\mathrm{vc}}^2 \big) + \mathcal{O} \big(\kappa_0^2 \kappa_1^2 \eta^2 \beta^2 \epsilon_{\mathrm{sbs}}^2 \big) + \mathcal{O} \big( \kappa_0 \kappa_1 \eta^2 \sigma^2 \beta^2 \big) + \mathcal{O } \big( \delta^{th} \beta^2 D^2 \big) + \nonumber\\
    &\Squad \mathcal{O} \big( \kappa_0^3 \kappa_1^2 \beta^4 \eta^4 G^2 \cdot \varphi_{\mathrm{w, L}_1} \big) + \mathcal{O} \big( \kappa_0 \kappa_1 \beta^2 \eta^2 \cdot \varphi_{\mathrm{w, L}_2}) .
\end{align}

\subsection{Missing Proof of Lemma \ref{Lemma_30}}
\begin{align}
\label{Lemma_30_main_1}
    &\frac{4\eta^2}{T} \sum_{t=0}^{T-1} \sum_{l=1}^{L} \alpha_l \sum_{k=1}^{B_l} \alpha_k \sum_{j=1}^{V_{k,l}} \alpha_j \mathbb{E} \bigg \Vert \sum_{\tau=\Bar{t}_1}^{t-1} \bigg[  \bigg\{ \sum_{i=1}^{U_{j,k,l}} \alpha_i \tilde{g} \big(\tilde{\mathbf{w}}_i^{\tau,0} \big) \frac{\pmb{1}_i^\tau}{p_i^\tau} - \sum_{i=1}^{U_{j,k,l}} \alpha_i \nabla \tilde{f}_i \big(\tilde{\mathbf{w}}_i^{\tau,0} \big) \bigg\} +  \nonumber\\
    &\Mquad \bigg\{ \sum_{j'=1}^{V_{k,l}} \alpha_{j'} \sum_{i'=1}^{U_{j',k,l}} \alpha_{i'} \nabla \Tilde{f}_{i'} \left( \tilde{\mathbf{w}}_{i'}^{\tau,0} \right) - \sum_{j'=1}^{V_{k,l}} \alpha_{j'} \sum_{i'=1}^{U_{j',k,l}} \alpha_{i'} \Tilde{g} \left( \tilde{\mathbf{w}}_{i'}^{\tau,0} \right) \frac{\pmb{1}_{i'}^\tau}{p_{i'}^\tau} \bigg\} \bigg] \bigg\Vert^2 \nonumber\\
    &\overset{(a)}{=} \frac{4\eta^2}{T} \sum_{t=0}^{T-1} \sum_{l=1}^{L} \alpha_l \sum_{k=1}^{B_l} \alpha_k \sum_{j=1}^{V_{k,l}} \alpha_j \mathbb{E} \bigg \Vert \sum_{\tau=\Bar{t}_1}^{t-1} \bigg[  \sum_{i=1}^{U_{j,k,l}} \alpha_i \tilde{g} \big(\tilde{\mathbf{w}}_i^{\tau,0} \big) \frac{\pmb{1}_i^\tau}{p_i^\tau} - \sum_{i=1}^{U_{j,k,l}} \alpha_i \nabla \tilde{f}_i \big(\tilde{\mathbf{w}}_i^{\tau,0} \big)  \bigg] \bigg\Vert^2 -  \nonumber\\
    &\Squad \frac{4\eta^2}{T} \sum_{t=0}^{T-1} \sum_{l=1}^{L} \alpha_l \sum_{k=1}^{B_l} \alpha_k  \mathbb{E} \bigg \Vert \sum_{\tau=\Bar{t}_1}^{t-1}  \sum_{j'=1}^{V_{k,l}} \alpha_{j'} \sum_{i'=1}^{U_{j',k,l}} \alpha_{i'} \bigg[ \Tilde{g} \left( \tilde{\mathbf{w}}_{i'}^{\tau,0} \right) \frac{\pmb{1}_{i'}^\tau}{p_{i'}^\tau} - \nabla \Tilde{f}_{i'} \left( \tilde{\mathbf{w}}_{i'}^{\tau,0} \right) \bigg]  \bigg\Vert^2 \nonumber \\
    &\overset{(b)}{=} \frac{4\eta^2}{T} \sum_{t=0}^{T-1} \sum_{l=1}^{L} \alpha_l \sum_{k=1}^{B_l} \alpha_k \sum_{j=1}^{V_{k,l}} \alpha_j \sum_{\tau=\Bar{t}_1}^{t-1} \mathbb{E} \bigg \Vert  \sum_{i=1}^{U_{j,k,l}} \alpha_i \bigg[ \tilde{g} \big(\tilde{\mathbf{w}}_i^{\tau,0} \big) \frac{\pmb{1}_i^\tau}{p_i^\tau} -  \nabla \tilde{f}_i \big(\tilde{\mathbf{w}}_i^{\tau,0} \big) \bigg] \bigg\Vert^2 -  \nonumber\\
    &\Squad \frac{4\eta^2}{T} \sum_{t=0}^{T-1} \sum_{l=1}^{L} \alpha_l \sum_{k=1}^{B_l} \alpha_k  \sum_{\tau=\Bar{t}_1}^{t-1} \mathbb{E} \bigg \Vert  \sum_{j'=1}^{V_{k,l}} \alpha_{j'} \sum_{i'=1}^{U_{j',k,l}} \alpha_{i'} \bigg[ \Tilde{g} \left( \tilde{\mathbf{w}}_{i'}^{\tau,0} \right) \frac{\pmb{1}_{i'}^\tau}{p_{i'}^\tau} - \nabla \Tilde{f}_{i'} \left( \tilde{\mathbf{w}}_{i'}^{\tau,0} \right) \bigg]  \bigg\Vert^2 \nonumber \\
    &\overset{(c)}{\leq} \frac{4 \kappa_0 \kappa_1 \eta^2}{T} \sum_{t=0}^{T-1} \sum_{l=1}^{L} \alpha_l \sum_{k=1}^{B_l} \alpha_k \sum_{j=1}^{V_{k,l}} \alpha_j  \mathbb{E} \bigg \Vert  \sum_{i=1}^{U_{j,k,l}} \alpha_i \bigg[ \tilde{g} \big(\tilde{\mathbf{w}}_i^{t,0} \big) \frac{\pmb{1}_i^t}{p_i^t} -  \nabla \tilde{f}_i \big(\tilde{\mathbf{w}}_i^{t,0} \big) \bigg] \bigg\Vert^2 -  \nonumber\\
    &\Squad \frac{4\kappa_0 \kappa_1\eta^2}{T} \sum_{t=0}^{T-1} \sum_{l=1}^{L} \alpha_l \sum_{k=1}^{B_l} \alpha_k  \mathbb{E} \bigg \Vert  \sum_{j'=1}^{V_{k,l}} \alpha_{j'} \sum_{i'=1}^{U_{j',k,l}} \alpha_{i'} \bigg[ \Tilde{g} \left( \tilde{\mathbf{w}}_{i'}^{t,0} \right) \frac{\pmb{1}_{i'}^t}{p_{i'}^t} - \nabla \Tilde{f}_{i'} \left( \tilde{\mathbf{w}}_{i'}^{t,0} \right) \bigg]  \bigg\Vert^2 \nonumber \\
    &= \frac{4 \kappa_0 \kappa_1 \eta^2}{T} \sum_{t=0}^{T-1} \sum_{l=1}^{L} \alpha_l \sum_{k=1}^{B_l} \alpha_k \sum_{j=1}^{V_{k,l}} \alpha_j \sum_{i=1}^{U_{j,k,l}} \alpha_i^2 \mathbb{E} \bigg \Vert  \tilde{g} \big(\tilde{\mathbf{w}}_i^{t,0} \big) \frac{\pmb{1}_i^t}{p_i^t} \pm \Tilde{g} \left( \tilde{\mathbf{w}}_{i}^{t,0} \right) - \nabla \tilde{f}_i \big( \tilde{\mathbf{w}}_i^{t,0} \big) \bigg\Vert^2 -  \nonumber\\
    &\Squad \frac{4\kappa_0 \kappa_1\eta^2}{T} \sum_{t=0}^{T-1} \sum_{l=1}^{L} \alpha_l \sum_{k=1}^{B_l} \alpha_k    \sum_{j=1}^{V_{k,l}} \alpha_{j}^2 \sum_{i=1}^{U_{j,k,l}} \alpha_{i}^2 \mathbb{E} \bigg \Vert \Tilde{g} \left( \tilde{\mathbf{w}}_{i}^{t,0} \right) \frac{\pmb{1}_{i}^t}{p_{i}^t} \pm \Tilde{g} \left( \tilde{\mathbf{w}}_{i}^{t,0} \right) - \nabla \Tilde{f}_{i} \left( \tilde{\mathbf{w}}_{i}^{t,0} \right)  \bigg\Vert^2 \nonumber \\
    &\leq \frac{8 \kappa_0 \kappa_1 \eta^2}{T} \sum_{t=0}^{T-1} \sum_{l=1}^{L} \alpha_l \sum_{k=1}^{B_l} \alpha_k \sum_{j=1}^{V_{k,l}} \alpha_j \sum_{i=1}^{U_{j,k,l}} \alpha_i^2 \bigg[ \mathbb{E} \bigg \Vert  \bigg(\frac{\pmb{1}_i^t}{p_i^t} - 1 \bigg) \tilde{g} \big(\tilde{\mathbf{w}}_i^{t,0} \big) \bigg\Vert^2 + \mathbb{E} \bigg \Vert \Tilde{g} \left( \tilde{\mathbf{w}}_{i}^{t,0} \right) - \nabla \tilde{f}_i \big( \tilde{\mathbf{w}}_i^{t,0} \big) \bigg\Vert^2 \bigg] -  \nonumber\\
    &\quad \frac{8 \kappa_0 \kappa_1 \eta^2}{T} \sum_{t=0}^{T-1} \sum_{l=1}^{L} \alpha_l \sum_{k=1}^{B_l} \alpha_k \sum_{j=1}^{V_{k,l}} \alpha_j^2 \sum_{i=1}^{U_{j,k,l}} \alpha_i^2 \bigg[ \mathbb{E} \bigg \Vert  \bigg(\frac{\pmb{1}_i^t}{p_i^t} - 1 \bigg) \tilde{g} \big(\tilde{\mathbf{w}}_i^{t,0} \big) \bigg\Vert^2 + \mathbb{E} \bigg \Vert \Tilde{g} \left( \tilde{\mathbf{w}}_{i}^{t,0} \right) - \nabla \tilde{f}_i \big( \tilde{\mathbf{w}}_i^{t,0} \big) \bigg\Vert^2 \bigg] \nonumber \\
    &\leq \frac{8 \kappa_0 \kappa_1 \eta^2 G^2}{T} \sum_{t=0}^{T-1} \sum_{l=1}^{L} \alpha_l \sum_{k=1}^{B_l} \alpha_k \sum_{j=1}^{V_{k,l}} \alpha_j \sum_{i=1}^{U_{j,k,l}} \alpha_i^2 \bigg(\frac{1}{p_i^t} - 1 \bigg) + 8 \kappa_0 \kappa_1 \eta^2 \sigma^2 G^2 \sum_{l=1}^{L} \alpha_l \sum_{k=1}^{B_l} \alpha_k \sum_{j=1}^{V_{k,l}} \alpha_j \sum_{i=1}^{U_{j,k,l}} \alpha_i^2 - \nonumber \\
    & \frac{8 \kappa_0 \kappa_1 \eta^2 G^2}{T} \sum_{t=0}^{T-1} \sum_{l=1}^{L} \alpha_l \sum_{k=1}^{B_l} \alpha_k \sum_{j=1}^{V_{k,l}} \alpha_j^2 \sum_{i=1}^{U_{j,k,l}} \alpha_i^2 \bigg(\frac{1}{p_i^t} - 1 \bigg) - 8 \kappa_0 \kappa_1 \eta^2 \sigma^2 G^2 \sum_{l=1}^{L} \alpha_l \sum_{k=1}^{B_l} \alpha_k \sum_{j=1}^{V_{k,l}} \alpha_j^2 \sum_{i=1}^{U_{j,k,l}} \alpha_i^2 \nonumber \\
    & \leq 8 \kappa_0 \kappa_1 \eta^2 \sigma^2 \sum_{l=1}^{L} \alpha_l \sum_{k=1}^{B_l} \alpha_k \sum_{j=1}^{V_{k,l}} \alpha_j \sum_{i=1}^{U_{j,k,l}} \alpha_i^2 + 8 \kappa_0 \kappa_1 \eta^2 \cdot \varphi_{\mathrm{w, L}_2},  
\end{align}
where $\varphi_{\mathrm{w, L}_2} = \frac{1}{T} \sum_{t=0}^{T-1} \sum_{l=1}^{L} \alpha_l \sum_{k=1}^{B_l} \alpha_k \sum_{j=1}^{V_{k,l}} \alpha_j \sum_{i=1}^{U_{j,k,l}} \alpha_i^2 \bigg(\frac{1}{p_i^t} - 1 \bigg)$.

\subsection{Missing Proof of Lemma \ref{Lemma_30_1}}
\begin{align}
    &\frac{4\eta^2}{T} \sum_{t=0}^{T-1} \sum_{l=1}^{L} \alpha_l \sum_{k=1}^{B_l} \alpha_k \sum_{j=1}^{V_{k,l}} \alpha_j \mathbb{E} \bigg\Vert \sum_{\tau=\Bar{t}_1}^{t-1} \bigg[ \sum_{i=1}^{U_{j,k,l}} \alpha_i \nabla \tilde{f}_i \big(\tilde{\mathbf{w}}_i^{\tau,0} \big) - \sum_{j'=1}^{V_{k,l}} \alpha_{j'} \sum_{i'=1}^{U_{j',k,l}} \alpha_{i'} \nabla \Tilde{f}_{i'} \left( \tilde{\mathbf{w}}_{i'}^{\tau,0} \right) \bigg] \bigg\Vert^2 \nonumber\\ 
    &= \frac{4\kappa_0 \kappa_1 \eta^2}{T} \sum_{t=0}^{T-1} \sum_{l=1}^{L} \alpha_l \sum_{k=1}^{B_l} \alpha_k \sum_{j=1}^{V_{k,l}} \alpha_j \sum_{\tau=\Bar{t}_1}^{t-1} \mathbb{E} \bigg\Vert \bigg( \sum_{i=1}^{U_{j,k,l}} \alpha_i \big[ \nabla \tilde{f}_i \big(\tilde{\mathbf{w}}_i^{\tau,0} \big) - \nabla \tilde{f}_i \big(\bar{\mathbf{w}}_j^{\tau} \big)  \big] \bigg) + \nonumber\\
    & \bigg( \sum_{i=1}^{U_{j,k,l}} \alpha_i \big[ \nabla \tilde{f}_i \big(\bar{\mathbf{w}}_j^{\tau} \big) - \nabla \tilde{f}_i \big(\bar{\mathbf{w}}_k^{\tau} \big)  \big] \bigg) + \bigg( \sum_{i=1}^{U_{j,k,l}} \alpha_i \nabla \tilde{f}_i \big(\bar{\mathbf{w}}_k^{\tau} \big) - \sum_{j'=1}^{V_{k,l}} \alpha_{j'} \sum_{i'=1}^{U_{j',k,l}} \alpha_{i'} \nabla \Tilde{f}_{i'} \big( \bar{\mathbf{w}}_{k}^{\tau} \big) \bigg) + \nonumber \\
    & \bigg( \sum_{j'=1}^{V_{k,l}} \alpha_{j'} \sum_{i'=1}^{U_{j',k,l}} \alpha_{i'} \big[ \nabla \Tilde{f}_{i'} \big( \bar{\mathbf{w}}_{k}^{\tau} \big) - \nabla \Tilde{f}_{i'} \big( \bar{\mathbf{w}}_{j}^{\tau} \big) \big] \bigg) + \bigg( \sum_{j'=1}^{V_{k,l}} \alpha_{j'} \sum_{i'=1}^{U_{j',k,l}} \alpha_{i'} \big[ \nabla \Tilde{f}_{i'} \big( \bar{\mathbf{w}}_{j}^{\tau} \big) - \nabla \Tilde{f}_{i'} \big( \tilde{\mathbf{w}}_{i'}^{\tau,0} \big) \big] \bigg) \bigg\Vert^2 \nonumber\\  
    & \leq 20 \kappa_0^2 \kappa_1^2 \eta^2 \epsilon_{\mathrm{sbs}}^2 + \frac{40 \beta^2 \kappa_0^2 \kappa_1^2 \eta^2}{T} \sum_{t=0}^{T-1} \sum_{l=1}^{L} \alpha_l \sum_{k=1}^{B_l} \alpha_k \sum_{j=1}^{V_{k,l}} \alpha_j \mathbb{E} \big\Vert \Bar{\mathbf{w}}_k^t - \Bar{\mathbf{w}}_j^t \big\Vert^2 +  \frac{80 \beta^2 \kappa_0^2 \kappa_1^2 \eta^2}{T} \sum_{t=0}^{T-1} \sum_{l=1}^{L} \alpha_l \sum_{k=1}^{B_l} \alpha_k \sum_{j=1}^{V_{k,l}} \alpha_j \sum_{i=1}^{U_{j,k,l}} \alpha_i \mathbb{E} \big\Vert \Bar{\mathbf{w}}_j^t - \tilde{\mathbf{w}}_i^{t} \big\Vert^2 + \nonumber\\
    &\Squad \frac{80 \beta^2 \kappa_0^2 \kappa_1^2 \eta^2}{T} \sum_{t=0}^{T-1} \sum_{l=1}^{L} \alpha_l \sum_{k=1}^{B_l} \alpha_k \sum_{j=1}^{V_{k,l}} \alpha_j \sum_{i=1}^{U_{j,k,l}} \alpha_i \mathbb{E} \big\Vert \tilde{\mathbf{w}}_i^{t} - \tilde{\mathbf{w}}_i^{t,0} \big\Vert^2 \nonumber\\
    &\approx \mathcal{O} \big(\beta^2 \kappa_0^4 \kappa_1^2 \eta^4 \epsilon_{\mathrm{vc}}^2 \big) + \mathcal{O} \big(\kappa_0^2 \kappa_1^2 \eta^2 \epsilon_{\mathrm{sbs}}^2 \big) + \mathcal{O} \big(\beta^2 \sigma^2 \kappa_0^3 \kappa_1^2 \eta^4 \big) + \mathcal{O} \big( \delta^{\mathrm{th}} D^2 \beta^2 \kappa_0^2 \kappa_1^2 \eta^2 \big) + \mathcal{O} \big( \beta^2 \kappa_0^3 \kappa_1^2 \eta^4 G^2 \cdot \varphi_{\mathrm{w, L}_1} \big) + \nonumber\\
    &\Squad \frac{40 \beta^2 \kappa_0^2 \kappa_1^2 \eta^2}{T} \sum_{t=0}^{T-1} \sum_{l=1}^{L} \alpha_l \sum_{k=1}^{B_l} \alpha_k \sum_{j=1}^{V_{k,l}} \alpha_j \mathbb{E} \big\Vert \Bar{\mathbf{w}}_k^t - \Bar{\mathbf{w}}_j^t \big\Vert^2. 
\end{align}
which concludes the proof of Lemma \ref{Lemma_30_1}.
\end{proof}

\section{Proof of Lemma \ref{Lemma4}}
\setcounter{Lemma}{2}
\begin{Lemma}
\label{Lemma4}
When $\eta \leq 1/[2\sqrt{14} \kappa_0 \kappa_1 \kappa_2 \beta]$, the average difference between the sBS and mBS model parameters, i.e., the $\mathrm{L}_3$ term of (\ref{theorem_1_eqn_Sup}), is upper bounded as
\begin{align}
\label{Lem4_Sup}
    &[\beta^2/T] \sum\nolimits_{t=0}^{T-1} \sum\nolimits_{l=1}^{L} \alpha_l \sum\nolimits_{k=1}^{B_l} \alpha_k \mathbb{E} \left\Vert \bar{\mathbf{w}}_l^t - \bar{\mathbf{w}}_k^t \right\Vert^2 \nonumber \\
    &\leq \mathcal{O} \big( \kappa_0^3 \kappa_1^2 \kappa_2^2 \eta^4 \beta^4 \epsilon_{\mathrm{vc}}^2 \big) + \mathcal{O} \big(\kappa_0^4 \kappa_1^4 \kappa_2^2 \eta^4 \beta^4 \epsilon_{\mathrm{sbs}}^2 \big) + \mathcal{O} \big( \kappa_0^2 \kappa_1^2 \kappa_2^2 \eta^2 \beta^4 \epsilon_{\mathrm{mbs}}^2 \big) + \mathcal{O} \big( \kappa_0\kappa_1\kappa_2 \eta^2 \beta^2 \sigma^2 \big) + \mathcal{O} \big( \delta^{\mathrm{th}} \beta^2 D^2 \big) + \nonumber\\
    &\qquad \mathcal{O} \big( \kappa_0^3 \kappa_1^2 \kappa_2^2 \eta^4 \beta^4 G^2 \cdot \varphi_{\mathrm{w, L}_1} (\pmb{\delta},\pmb{\mathrm{f}}, \pmb{\mathrm{P}}) \big) + \mathcal{O} \big( \kappa_0^2 \kappa_1^2 \kappa_2^2 \beta^4 \eta^4 \cdot \varphi_{\mathrm{w, L}_2} (\pmb{\delta},\pmb{\mathrm{f}}, \pmb{\mathrm{P}}) ) + \mathcal{O} \big( \kappa_0\kappa_1\kappa_2 \eta^2 \beta^2 G^2 \cdot \varphi_{\mathrm{w,L}_3} (\pmb{\delta},\pmb{\mathrm{f}}, \pmb{\mathrm{P}}) \big).
\end{align}
where $\varphi_{\mathrm{w,L}_3} (\pmb{\delta},\pmb{\mathrm{f}}, \pmb{\mathrm{P}}) = [1/T] \sum_{l=1}^{L} \alpha_l \sum_{k=1}^{B_l} \alpha_k \sum_{j=1}^{V_{k,l}} \alpha_j^2 \sum_{i=1}^{U_{j,k,l}} \alpha_i^2 \sum_{t=0}^{T-1} \left(1/p_i^t - 1 \right)$. 
\end{Lemma}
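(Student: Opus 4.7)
\textbf{Proof proposal for Lemma \ref{Lemma4}.} The plan is to mirror the structure of the proofs of Lemmas \ref{Lemma2} and \ref{Lemma3}, but now tracking three nested hierarchy levels (UE, VC, sBS) between consecutive mBS aggregations, rather than one or two. First, I would trace back to the most recent mBS synchronization index $\bar{t}_2 = (m\kappa_3 + t_3)\kappa_2 \kappa_1 \kappa_0$, at which (\ref{sBS_Update_with_mBS_Model}) gives $\mathbf{w}_k^{\bar{t}_2} = \mathbf{w}_l^{\bar{t}_2}$. Using the recursions (\ref{sBS_Update}) and the analogous expression for $\bar{\mathbf{w}}_l^t$ obtained by further averaging over $k$, I would write
\begin{align*}
\bar{\mathbf{w}}_l^t - \bar{\mathbf{w}}_k^t
&= \big(\tilde{\mathbf{w}}_l^{\bar{t}_2,0} - \tilde{\mathbf{w}}_k^{\bar{t}_2,0}\big)
- \eta \sum_{\tau=\bar{t}_2}^{t-1} \bigg[ \sum_{k'=1}^{B_l}\alpha_{k'}\!\!\sum_{j'=1}^{V_{k',l}}\!\alpha_{j'}\!\!\!\sum_{i'=1}^{U_{j',k',l}}\!\!\!\alpha_{i'} \tilde g(\tilde{\mathbf{w}}_{i'}^{\tau,0})\frac{\pmb{1}_{i'}^\tau}{p_{i'}^\tau}
- \sum_{j=1}^{V_{k,l}}\alpha_{j}\!\!\sum_{i=1}^{U_{j,k,l}}\!\!\alpha_{i}\tilde g(\tilde{\mathbf{w}}_{i}^{\tau,0})\frac{\pmb{1}_{i}^\tau}{p_{i}^\tau}\bigg],
\end{align*}
and apply $\Vert a+b\Vert^2\le 2\Vert a\Vert^2 + 2\Vert b\Vert^2$ together with Jensen's inequality to split the target into a pruning-initialization term and an accumulated stochastic-gradient-drift term.

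For the pruning-initialization term $\mathbb{E}\Vert \tilde{\mathbf{w}}_l^{\bar{t}_2,0} - \tilde{\mathbf{w}}_k^{\bar{t}_2,0}\Vert^2$, I would insert $\pm \mathbf{w}_l^{\bar{t}_2} = \pm \mathbf{w}_k^{\bar{t}_2}$ and apply the pruning-ratio definition (\ref{pruneRatio}) together with Assumption~7, contributing an $\mathcal{O}(\delta^{\mathrm{th}} D^2)$ term after multiplication by $\beta^2$. For the accumulated-drift term, I would insert $\pm \nabla\tilde f_i(\tilde{\mathbf{w}}_i^{\tau,0})$ and $\pm \nabla\tilde f_{i'}(\tilde{\mathbf{w}}_{i'}^{\tau,0})$ inside each of the two inner sums so as to isolate a pure stochastic/wireless fluctuation piece (handled as in Lemmas \ref{lemma20_1} and \ref{Lemma_30} by the bounded-variance assumption and the $(1/p_i^\tau -1)$ weighting) and a deterministic gradient-divergence piece.

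The deterministic piece is where the cascading through hierarchy levels happens. I would telescope
\[
\nabla \tilde f_i(\tilde{\mathbf{w}}_i^{\tau,0}) \to \nabla\tilde f_i(\bar{\mathbf{w}}_j^\tau)\to \nabla\tilde f_i(\bar{\mathbf{w}}_k^\tau)\to \nabla\tilde f_i(\bar{\mathbf{w}}_l^\tau)
\]
(and the same telescoping for the outer averaged term) and invoke $\beta$-smoothness plus the bounded-divergence assumptions with constants $\epsilon_{\mathrm{vc}}^2$, $\epsilon_{\mathrm{sbs}}^2$, $\epsilon_{\mathrm{mbs}}^2$. Three quantities reappear at this stage: (i) the VC-vs-UE drift, already bounded by Lemma \ref{Lemma2}, producing the $\kappa_0^3\kappa_1^2\kappa_2^2\eta^4\beta^4$ prefactor on $\epsilon_{\mathrm{vc}}^2$, $G^2\varphi_{\mathrm{w,L}_1}$, and $\sigma^2$; (ii) the sBS-vs-VC drift, bounded by Lemma \ref{Lemma3}, producing the $\kappa_0^4\kappa_1^4\kappa_2^2\eta^4\beta^4$ prefactor on $\epsilon_{\mathrm{sbs}}^2$ and the $\kappa_0^2\kappa_1^2\kappa_2^2\beta^4\eta^4$ prefactor on $\varphi_{\mathrm{w,L}_2}$; and (iii) a new recursive $L_3$ term of the form $c\cdot\kappa_0^2\kappa_1^2\kappa_2^2\eta^2\beta^2\cdot L_3$, which I would move to the left-hand side. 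Choosing $\eta \le 1/[2\sqrt{14}\kappa_0\kappa_1\kappa_2\beta]$ ensures $1 - c\kappa_0^2\kappa_1^2\kappa_2^2\eta^2\beta^2 \ge 1/2$ (with the constant $14$ chosen exactly to absorb the coefficient arising from the telescoped $\Vert\cdot\Vert^2$ split across four segments), allowing the recursion to close.

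The main obstacle I anticipate is \emph{bookkeeping the powers}: the $(1/p_i^\tau-1)$ weighting inherited from the outer and inner averages produces the new wireless factor $\varphi_{\mathrm{w,L}_3}$ with an additional $\alpha_j^2\alpha_i^2$ weight (as opposed to $\alpha_i^2$ in $\varphi_{\mathrm{w,L}_2}$), and correctly distinguishing which noise contributes at which order in $\kappa_0,\kappa_1,\kappa_2$ requires careful application of $\Vert\sum_{\tau}x_\tau\Vert^2 \le (\kappa_0\kappa_1\kappa_2)\sum_\tau \Vert x_\tau\Vert^2$ versus the variance identity for zero-mean terms. Once the recursion is closed, collecting the leading orders yields exactly the bound in (\ref{Lem4_Sup}).
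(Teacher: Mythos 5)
Your proposal follows essentially the same route as the paper's proof: trace back to the sBS--mBS synchronization index $\bar{t}_2$, split off the pruning-initialization term via $\mathbf{w}_k^{\bar{t}_2}=\mathbf{w}_l^{\bar{t}_2}$ and the pruning-ratio assumption, then separate the accumulated drift into a zero-mean stochastic/wireless fluctuation piece (the paper's Lemma 9, yielding $\sigma^2$ and $\varphi_{\mathrm{w,L}_3}$ with the $\alpha_j^2\alpha_i^2$ weighting you identified) and a telescoped deterministic piece through the UE--VC--sBS--mBS chain (the paper's Lemma 10), which reintroduces the $\mathrm{L}_1$ and $\mathrm{L}_2$ bounds from Lemmas 2 and 3 plus a recursive $\mathrm{L}_3$ term with coefficient $56\kappa_0^2\kappa_1^2\kappa_2^2\eta^2\beta^2$ absorbed by the stated step-size condition. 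The plan is correct and matches the paper's argument in structure and in all the key technical moves.
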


\begin{proof}
\label{proofLemma4}
\begin{align}
\label{Lemma4_Main_0}
    &\frac{1}{T} \sum_{t=0}^{T-1} \sum_{l=1}^{L} \alpha_l \sum_{k=1}^{B_l} \alpha_k \mathbb{E} \left\Vert \bar{\mathbf{w}}_l^t - \bar{\mathbf{w}}_k^t \right\Vert^2 \nonumber\\ 
    &= \sum_{l=1}^{L} \alpha_l \sum_{k=1}^{B_l} \alpha_k \mathbb{E} \Bigg\Vert \tilde{ \mathbf{w} }_l^{\Bar{t}_2, 0} - \tilde{ \mathbf{w} }_k^{\Bar{t}_2, 0} - \eta  \sum_{\tau=\Bar{t}_2}^{t-1} \sum_{k'=1}^{B_l} \alpha_{k'} \sum_{j'=1}^{V_{k',l}} \alpha_{j'}  \sum_{i'=1}^{U_{j',k',l}} \alpha_{i'} \Tilde{g} \big(\tilde{\mathbf{w}}_{i'}^{\tau,0}\big) \frac{\pmb{1}_{i'}^\tau}{p_{i'}^\tau} + \eta \sum_{\tau=\Bar{t}_2}^{t-1} \sum_{j=1}^{V_{k,l}} \alpha_j \sum_{i=1}^{U_{j,k,l} } \alpha_i \tilde{g} \big(\tilde{\mathbf{w}}_i^{\tau,0} \big) \frac{\pmb{1}_{i}^\tau}{p_{i}^\tau} \Bigg\Vert^2, \nonumber \\
    &\leq \frac{2}{T} \sum_{t=0}^{T-1} \sum_{l=1}^{L} \rs \alpha_l \rs \sum_{k=1}^{B_l} \rs \alpha_k \mathbb{E} \left\Vert \tilde{ \mathbf{w} }_l^{\Bar{t}_2, 0} - \tilde{ \mathbf{w} }_k^{\Bar{t}_2, 0} \right \Vert^2 + \frac{2\eta^2}{T} \sum_{t=0}^{T-1} \rs \sum_{l=1}^{L} \rs \alpha_l \rs \sum_{k=1}^{B_l} \rs \alpha_k \mathbb{E} \bigg \Vert \sum_{\tau=\Bar{t}_2}^{t-1} \bigg[ \sum_{j=1}^{V_{k,l}} \rs \alpha_j \rs \sum_{i=1}^{U_{j,k,l}} \rs \alpha_i \tilde{g} \big(\tilde{\mathbf{w}}_i^{\tau,0} \big) \frac{\pmb{1}_{i}^\tau}{p_{i}^\tau} - \sum_{k'=1}^{B_l} \rs \alpha_{k'} \rs \sum_{j'=1}^{V_{k',l}} \rs \alpha_{j'} \rs \sum_{i'=1}^{U_{j',k',l}} \rs \alpha_{i'} \Tilde{g} \big(\tilde{\mathbf{w}}_{i'}^{\tau,0}\big) \frac{\pmb{1}_{i'}^\tau}{p_{i'}^\tau} \bigg] \bigg\Vert^2 \rs, 
\end{align}
where $\Bar{t}_2 = (m \kappa_3 + t_3)\kappa_2\kappa_1 \kappa_0 $ and the inequalities in the last term arise from Jensen inequality. 

For the first term in (\ref{Lemma4_Main_0}), we have
\begin{align}
\label{Lemma4Main0_First_Term}
    & 2 \sum_{l=1}^{L} \alpha_l \sum_{k=1}^{B_l} \alpha_k \mathbb{E} \left\Vert \tilde{ \mathbf{w} }_l^{\Bar{t}_2, 0} - \tilde{ \mathbf{w} }_k^{\Bar{t}_2, 0} \right \Vert^2 \nonumber \\ 
    & \overset{(a)}{=} 2 \sum_{l=1}^{L} \alpha_l \sum_{k=1}^{B_l} \alpha_k \mathbb{E} \left\Vert \tilde{ \mathbf{w} }_l^{\Bar{t}_2, 0} - \mathbf{w}_l^{\Bar{t}_2} + \mathbf{w}_k^{\Bar{t}_2} - \tilde{ \mathbf{w} }_k^{\Bar{t}_2, 0} \right \Vert^2, \nonumber \\
    &\leq 4 \sum_{l=1}^{L} \alpha_l \sum_{k=1}^{B_l} \alpha_k \mathbb{E} \left\Vert \mathbf{w}_l^{\Bar{t}_2} - \tilde{ \mathbf{w} }_l^{\Bar{t}_2, 0} \right\Vert^2 + 4 \sum_{l=1}^{L} \alpha_l \sum_{k=1}^{B_l} \alpha_k \mathbb{E} \left\Vert \mathbf{w}_k^{\Bar{t}_2} - \tilde{ \mathbf{w} }_k^{\Bar{t}_2, 0} \right \Vert^2, \nonumber \\
    &\overset{(b)}{\leq} 4 \sum_{l=1}^{L} \alpha_l \sum_{k=1}^{B_l} \bblue{\alpha_k} \sum_{j=1}^{V_{k,l}} \bblue{\alpha_j} \sum_{i=1}^{U_{j,k,l}} \bblue{\alpha_i} \delta_i^{\Bar{t}_2} \mathbb{E} \left\Vert \mathbf{w}_{i}^{\Bar{t}_2} \right\Vert^2 + 4 \sum_{l=1}^{L} \alpha_l \sum_{k=1}^{B_l} \alpha_k \sum_{j=1}^{V_{k,l}} \bblue{\alpha_j} \sum_{i=1}^{U_{j,k,l}} \bblue{\alpha_i} \delta_i^{\Bar{t}_2} \mathbb{E} \left\Vert \mathbf{w}_{i}^{\Bar{t}_2} \right \Vert^2, \nonumber \\
    &= \bblue{8} \sum_{l=1}^{L} \alpha_l \sum_{k=1}^{B_l} \alpha_k \sum_{j=1}^{V_{k,l}} \bblue{\alpha_j} \sum_{i=1}^{U_{j,k,l}} \bblue{\alpha_i} \delta_i^{\Bar{t}_2} \mathbb{E} \left\Vert \mathbf{w}_{i}^{\Bar{t}_2} \right\Vert^2,
\end{align}
where in $(a)$ we use the fact that $\mathbf{w}_l^{\Bar{t}_2} = \mathbf{w}_k^{\Bar{t}_2}$.
Moreover, $(b)$ appears following similar steps as in (\ref{mBSPrune}) and (\ref{sBSPrune}).

As such, we have 
\begin{align}
    & \frac{2}{T} \sum_{t=0}^{T-1} \sum_{l=1}^{L} \rs \alpha_l \rs \sum_{k=1}^{B_l} \rs \alpha_k \mathbb{E} \left\Vert \tilde{ \mathbf{w} }_l^{\Bar{t}_2, 0} - \tilde{ \mathbf{w} }_k^{\Bar{t}_2, 0} \right \Vert^2 \nonumber\\
    & \leq \frac{\bblue{8}}{T} \sum_{t=0}^{T-1} \sum_{l=1}^{L} \alpha_l \sum_{k=1}^{B_l} \alpha_k \sum_{j=1}^{V_{k,l}} \bblue{\alpha_j} \sum_{i=1}^{U_{j,k,l}} \bblue{\alpha_i} \delta_i^{\big\lfloor \frac{t}{\kappa_0\kappa_1\kappa_2}\big\rfloor} \mathbb{E} \bigg\Vert \mathbf{w}_{i}^{\big\lfloor \frac{t}{\kappa_0\kappa_1\kappa_2}\big\rfloor} \bigg\Vert^2 \nonumber\\
    &\approx \mathcal{O} \big( \delta^{\mathrm{th}} D^2 \big).
\end{align}

For the second term in (\ref{Lemma4_Main_0}), we have
\begin{align}
\label{Lemma4Main0_Second_Term}
    & \frac{2\eta^2}{T} \sum_{t=0}^{T-1} \sum_{l=1}^{L} \alpha_l \sum_{k=1}^{B_l} \alpha_k \mathbb{E} \left \Vert \sum_{\tau=\Bar{t}_2}^{t-1} \Big( \sum_{j=1}^{V_{k,l}} \alpha_j \sum_{i=1}^{U_{j,k,l} } \alpha_i \tilde{g} \big(\tilde{\mathbf{w}}_i^{\tau,0} \big) \frac{\pmb{1}_{i}^\tau}{p_{i}^\tau} -  \sum_{k'=1}^{B_l} \alpha_{k'} \sum_{j'=1}^{V_{k',l}} \alpha_{j'}  \sum_{i'=1}^{U_{j',k',l}} \alpha_{i'} \Tilde{g} \big(\tilde{\mathbf{w}}_{i'}^{\tau,0}\big) \frac{\pmb{1}_{i'}^\tau}{p_{i'}^\tau} \Big) \right\Vert^2 \nonumber\\
    &= \frac{2\eta^2}{T} \sum_{t=0}^{T-1} \sum_{l=1}^{L} \alpha_l \sum_{k=1}^{B_l} \alpha_k \mathbb{E} \Bigg \Vert \sum_{\tau=\Bar{t}_2}^{t-1} \rs \bigg[ \sum_{j=1}^{V_{k,l}} \alpha_j \sum_{i=1}^{U_{j,k,l}} \alpha_i \bigg( \tilde{g} \big(\tilde{\mathbf{w}}_i^{\tau,0} \big) \frac{\pmb{1}_{i}^\tau}{p_{i}^\tau} - \nabla \tilde{f}_i \big(\tilde{\mathbf{w}}_i^{\tau,0} \big) \bigg) - \sum_{k'=1}^{B_l} \alpha_{k'} \sum_{j'=1}^{V_{k',l}}  \alpha_{j'} \sum_{i'=1}^{U_{j',k',l} } \alpha_{i'} \bigg( \Tilde{g} \big(\tilde{\mathbf{w}}_{i'}^{\tau,0}\big) \frac{\pmb{1}_{i'}^\tau}{p_{i'}^\tau} - \nabla \tilde{f}_{i'} \big(\tilde{\mathbf{w}}_{i'}^{\tau,0} \big) \bigg) + \nonumber \\
    &\Mquad \sum_{k'=1}^{B_l} \alpha_{k'} \sum_{j'=1}^{V_{k',l}}  \alpha_{j'} \sum_{i'=1}^{U_{j',k',l} } \alpha_{i'} \nabla \tilde{f}_{i'} \big(\tilde{\mathbf{w}}_{i'}^{\tau,0} \big) - \sum_{j=1}^{V_{k,l}} \alpha_j \sum_{i=1}^{U_{j,k,l}} \alpha_i \nabla \tilde{f}_i \big(\tilde{\mathbf{w}}_i^{\tau,0} \big) \bigg] \Bigg\Vert^2, \nonumber\\
    &\leq \frac{4\eta^2}{T} \sum_{t=0}^{T-1} \sum_{l=1}^{L} \alpha_l \sum_{k=1}^{B_l} \alpha_k \mathbb{E} \Bigg \Vert \sum_{\tau=\Bar{t}_2}^{t-1} \rs \bigg[ \sum_{j=1}^{V_{k,l}} \alpha_j \sum_{i=1}^{U_{j,k,l}} \alpha_i \bigg( \tilde{g} \big(\tilde{\mathbf{w}}_i^{\tau,0} \big) \frac{\pmb{1}_{i}^\tau}{p_{i}^\tau} - \nabla \tilde{f}_i \big(\tilde{\mathbf{w}}_i^{\tau,0} \big) \bigg) - \sum_{k'=1}^{B_l} \alpha_{k'} \sum_{j'=1}^{V_{k',l}}  \alpha_{j'} \sum_{i'=1}^{U_{j',k',l} } \alpha_{i'} \bigg( \Tilde{g} \big(\tilde{\mathbf{w}}_{i'}^{\tau,0}\big) \frac{\pmb{1}_{i'}^\tau}{p_{i'}^\tau} - \nabla \tilde{f}_{i'} \big(\tilde{\mathbf{w}}_{i'}^{\tau,0} \big) \bigg) \bigg] \Bigg \Vert^2 + \nonumber \\
    &\Mquad \frac{4\eta^2}{T} \sum_{t=0}^{T-1} \sum_{l=1}^{L} \alpha_l \sum_{k=1}^{B_l} \alpha_k \mathbb{E} \Bigg \Vert \sum_{\tau=\Bar{t}_2}^{t-1} \rs \bigg[\sum_{k'=1}^{B_l} \alpha_{k'} \sum_{j'=1}^{V_{k',l}}  \alpha_{j'} \sum_{i'=1}^{U_{j',k',l} } \alpha_{i'} \nabla \tilde{f}_{i'} \big(\tilde{\mathbf{w}}_{i'}^{\tau,0} \big) - \sum_{j=1}^{V_{k,l}} \alpha_j \sum_{i=1}^{U_{j,k,l}} \alpha_i \nabla \tilde{f}_i \big(\tilde{\mathbf{w}}_i^{\tau,0} \big) \bigg] \Bigg\Vert^2, 
\end{align}

\setcounter{Lemma}{8}
\begin{Lemma}
\label{Lemma_40}
\begin{align}
    &\frac{4\eta^2}{T} \sum_{t=0}^{T-1} \sum_{l=1}^{L} \alpha_l \sum_{k=1}^{B_l} \alpha_k \mathbb{E} \Bigg \Vert \sum_{\tau=\Bar{t}_2}^{t-1} \rs \bigg[ \sum_{j=1}^{V_{k,l}} \alpha_j \sum_{i=1}^{U_{j,k,l}} \alpha_i \bigg( \tilde{g} \big(\tilde{\mathbf{w}}_i^{\tau,0} \big) \frac{\pmb{1}_{i}^\tau}{p_{i}^\tau} - \nabla \tilde{f}_i \big(\tilde{\mathbf{w}}_i^{\tau,0} \big) \bigg) - \sum_{k'=1}^{B_l} \alpha_{k'} \sum_{j'=1}^{V_{k',l}}  \alpha_{j'} \sum_{i'=1}^{U_{j',k',l} } \alpha_{i'} \bigg( \Tilde{g} \big(\tilde{\mathbf{w}}_{i'}^{\tau,0}\big) \frac{\pmb{1}_{i'}^\tau}{p_{i'}^\tau} - \nabla \tilde{f}_{i'} \big(\tilde{\mathbf{w}}_{i'}^{\tau,0} \big) \bigg) \bigg] \Bigg \Vert^2 \nonumber\\
    &\leq 8 \kappa_0\kappa_1\kappa_2 \eta^2 \sigma^2 \sum_{l=1}^{L} \alpha_l \sum_{k=1}^{B_l} \alpha_k \sum_{j=1}^{V_{k,l}} \alpha_j^2 \sum_{i=1}^{U_{j,k,l}} \alpha_i^2 + 8 \kappa_0\kappa_1\kappa_2 \eta^2 G^2 \cdot \varphi_{\mathrm{w,L}_3} \nonumber\\
    & \approx \mathcal{O} \big( \kappa_0\kappa_1\kappa_2 \eta^2 \sigma^2 \big) + \mathcal{O} \big( \kappa_0\kappa_1\kappa_2 \eta^2 G^2 \cdot \varphi_{\mathrm{w,L}_3} \big),
\end{align}
where $\varphi_{\mathrm{w,L}_3} = \frac{1}{T} \sum_{l=1}^{L} \alpha_l \sum_{k=1}^{B_l} \alpha_k \sum_{j=1}^{V_{k,l}} \alpha_j^2 \sum_{i=1}^{U_{j,k,l}} \alpha_i^2 \sum_{t=0}^{T-1} \left(\frac{1}{p_i^t} - 1 \right)$.
\end{Lemma}

\begin{Lemma}
\label{Lemma_40_1}
\begin{align}
    & \frac{4\eta^2}{T} \sum_{t=0}^{T-1} \sum_{l=1}^{L} \alpha_l \sum_{k=1}^{B_l} \alpha_k \mathbb{E} \Bigg \Vert \sum_{\tau=\Bar{t}_2}^{t-1} \rs \bigg[\sum_{k'=1}^{B_l} \alpha_{k'} \sum_{j'=1}^{V_{k',l}}  \alpha_{j'} \sum_{i'=1}^{U_{j',k',l} } \alpha_{i'} \nabla \tilde{f}_{i'} \big(\tilde{\mathbf{w}}_{i'}^{\tau,0} \big) - \sum_{j=1}^{V_{k,l}} \alpha_j \sum_{i=1}^{U_{j,k,l}} \alpha_i \nabla \tilde{f}_i \big(\tilde{\mathbf{w}}_i^{\tau,0} \big) \bigg] \Bigg\Vert^2  \nonumber\\
    &\leq \mathcal{O} \big( \kappa_0^3 \kappa_1^2 \kappa_2^2 \eta^4 \beta^2 \sigma^2 \big) + \mathcal{O} \big( \kappa_0^3 \kappa_1^2 \kappa_2^2 \eta^4 \beta^2 \epsilon_{\mathrm{vc}}^2 \big) + \mathcal{O} \big(\kappa_0^4 \kappa_1^4 \kappa_2^2 \eta^4 \beta^2 \epsilon_{\mathrm{sbs}}^2 \big) + \mathcal{O} \big( \kappa_0^2 \kappa_1^2 \kappa_2^2 \eta^2 \beta^2 \epsilon_{\mathrm{mbs}}^2 \big) +  \mathcal{O } \big( \delta^{th} \kappa_0^2 \kappa_1^2 \kappa_2^2 \eta^2 \beta^2 D^2 \big) + \nonumber\\
    &\Squad \mathcal{O} \big( \kappa_0^3 \kappa_1^2 \kappa_2^2 \eta^4 \beta^2 G^2 \cdot \varphi_{\mathrm{w, L}_1} \big) + \mathcal{O} \big( \kappa_0^2 \kappa_1^2 \kappa_2^2 \beta^2 \eta^4 \cdot \varphi_{\mathrm{w, L}_2}) + \frac{56 \kappa_0^2 \kappa_1^2 \kappa_2^2 \eta^2 \beta^2}{T} \sum_{t=0}^{T-1} \sum_{l=1}^{L} \alpha_l \sum_{k=1}^{B_l} \alpha_k \mathbb{E} \Vert \bar{\mathbf{w}}_l^t - \bar{\mathbf{w}}_k^t \Vert^2.
\end{align}
\end{Lemma}

Using Lemma \ref{Lemma_40} and Lemma \ref{Lemma_40_1}, when $\eta \leq \frac{1}{2\sqrt{14} \kappa_0 \kappa_1 \kappa_2 \beta}$, we have
\begin{align}
    &\frac{\beta^2}{T} \sum_{t=0}^{T-1} \sum_{l=1}^{L} \alpha_l \sum_{k=1}^{B_l} \alpha_k \mathbb{E} \left\Vert \bar{\mathbf{w}}_l^t - \bar{\mathbf{w}}_k^t \right\Vert^2  \nonumber\\
    &\leq \mathcal{O} \big( \delta^{\mathrm{th}} \beta^2 D^2 \big) + \mathcal{O} \big( \kappa_0\kappa_1\kappa_2 \eta^2 \beta^2 \sigma^2 \big) + \mathcal{O} \big( \kappa_0\kappa_1\kappa_2 \eta^2 \beta^2 G^2 \cdot \varphi_{\mathrm{w,L}_3} \big) + \mathcal{O} \big( \kappa_0^3 \kappa_1^2 \kappa_2^2 \eta^4 \beta^4 \sigma^2 \big) + \mathcal{O} \big( \kappa_0^3 \kappa_1^2 \kappa_2^2 \eta^4 \beta^4 \epsilon_{\mathrm{vc}}^2 \big) + \nonumber\\
    &\mathcal{O} \big(\kappa_0^4 \kappa_1^4 \kappa_2^2 \eta^4 \beta^4 \epsilon_{\mathrm{sbs}}^2 \big) + \mathcal{O} \big( \kappa_0^2 \kappa_1^2 \kappa_2^2 \eta^2 \beta^4 \epsilon_{\mathrm{mbs}}^2 \big) + \mathcal{O} \big( \delta^{th} \kappa_0^2 \kappa_1^2 \kappa_2^2 \eta^2 \beta^4 D^2 \big) + \mathcal{O} \big( \kappa_0^3 \kappa_1^2 \kappa_2^2 \eta^4 \beta^4 G^2 \cdot \varphi_{\mathrm{w, L}_1} \big) + \mathcal{O} \big( \kappa_0^2 \kappa_1^2 \kappa_2^2 \beta^4 \eta^4 \cdot \varphi_{\mathrm{w, L}_2}) \nonumber\\
    &\approx \mathcal{O} \big( \kappa_0^3 \kappa_1^2 \kappa_2^2 \eta^4 \beta^4 \epsilon_{\mathrm{vc}}^2 \big) + \mathcal{O} \big(\kappa_0^4 \kappa_1^4 \kappa_2^2 \eta^4 \beta^4 \epsilon_{\mathrm{sbs}}^2 \big) + \mathcal{O} \big( \kappa_0^2 \kappa_1^2 \kappa_2^2 \eta^2 \beta^4 \epsilon_{\mathrm{mbs}}^2 \big) + \mathcal{O} \big( \kappa_0\kappa_1\kappa_2 \eta^2 \beta^2 \sigma^2 \big) + \mathcal{O} \big( \delta^{\mathrm{th}} \beta^2 D^2 \big) + \nonumber\\
    &\mathcal{O} \big( \kappa_0^3 \kappa_1^2 \kappa_2^2 \eta^4 \beta^4 G^2 \cdot \varphi_{\mathrm{w, L}_1} \big) + \mathcal{O} \big( \kappa_0^2 \kappa_1^2 \kappa_2^2 \beta^4 \eta^4 \cdot \varphi_{\mathrm{w, L}_2}) +  \mathcal{O} \big( \kappa_0\kappa_1\kappa_2 \eta^2 \beta^2 G^2 \cdot \varphi_{\mathrm{w,L}_3} \big).
\end{align}

\subsection{Missing Proof of Lemma \ref{Lemma_40}}
\begin{align}
\label{Lemma_40_main_1}
    &\frac{4\eta^2}{T} \sum_{t=0}^{T-1} \sum_{l=1}^{L} \alpha_l \sum_{k=1}^{B_l} \alpha_k \mathbb{E} \Bigg \Vert \sum_{\tau=\Bar{t}_2}^{t-1} \rs \bigg[ \sum_{j=1}^{V_{k,l}} \alpha_j \sum_{i=1}^{U_{j,k,l}} \alpha_i \bigg( \tilde{g} \big(\tilde{\mathbf{w}}_i^{\tau,0} \big) \frac{\pmb{1}_{i}^\tau}{p_{i}^\tau} - \nabla \tilde{f}_i \big(\tilde{\mathbf{w}}_i^{\tau,0} \big) \bigg) - \sum_{k'=1}^{B_l} \alpha_{k'} \sum_{j'=1}^{V_{k',l}}  \alpha_{j'} \sum_{i'=1}^{U_{j',k',l} } \alpha_{i'} \bigg( \Tilde{g} \big(\tilde{\mathbf{w}}_{i'}^{\tau,0}\big) \frac{\pmb{1}_{i'}^\tau}{p_{i'}^\tau} - \nabla \tilde{f}_{i'} \big(\tilde{\mathbf{w}}_{i'}^{\tau,0} \big) \bigg) \bigg] \Bigg \Vert^2 \nonumber\\
    &\overset{(a)}{=} \frac{4\eta^2}{T} \sum_{t=0}^{T-1} \sum_{l=1}^{L} \alpha_l \sum_{k=1}^{B_l} \alpha_k \mathbb{E} \Bigg \Vert \sum_{\tau=\Bar{t}_2}^{t-1} \rs \bigg[ \sum_{j=1}^{V_{k,l}} \alpha_j \sum_{i=1}^{U_{j,k,l}} \alpha_i \bigg( \tilde{g} \big(\tilde{\mathbf{w}}_i^{\tau,0} \big) \frac{\pmb{1}_{i}^\tau}{p_{i}^\tau} - \nabla \tilde{f}_i \big(\tilde{\mathbf{w}}_i^{\tau,0} \big) \bigg) \bigg] \Bigg \Vert^2 - \nonumber\\
    &\Squad \frac{4\eta^2}{T} \sum_{t=0}^{T-1} \sum_{l=1}^{L} \alpha_l \mathbb{E} \Bigg \Vert \sum_{\tau=\Bar{t}_2}^{t-1} \rs \bigg[\sum_{k'=1}^{B_l} \alpha_{k'} \sum_{j'=1}^{V_{k',l}}  \alpha_{j'} \sum_{i'=1}^{U_{j',k',l} } \alpha_{i'} \bigg( \Tilde{g} \big(\tilde{\mathbf{w}}_{i'}^{\tau,0}\big) \frac{\pmb{1}_{i'}^\tau}{p_{i'}^\tau} - \nabla \tilde{f}_{i'} \big(\tilde{\mathbf{w}}_{i'}^{\tau,0} \big) \bigg) \bigg] \Bigg \Vert^2 \nonumber\\
    &\overset{(b)}{=} \frac{4\eta^2}{T} \sum_{t=0}^{T-1} \sum_{l=1}^{L} \alpha_l \sum_{k=1}^{B_l} \alpha_k \sum_{\tau=\Bar{t}_2}^{t-1} \mathbb{E} \Bigg \Vert  \sum_{j=1}^{V_{k,l}} \alpha_j \sum_{i=1}^{U_{j,k,l}} \alpha_i \bigg( \tilde{g} \big(\tilde{\mathbf{w}}_i^{\tau,0} \big) \frac{\pmb{1}_{i}^\tau}{p_{i}^\tau} - \nabla \tilde{f}_i \big(\tilde{\mathbf{w}}_i^{\tau,0} \big) \bigg) \Bigg \Vert^2 - \nonumber\\
    &\Squad \frac{4\eta^2}{T} \sum_{t=0}^{T-1} \sum_{l=1}^{L} \alpha_l \sum_{\tau=\Bar{t}_2}^{t-1} \mathbb{E} \Bigg \Vert \sum_{k'=1}^{B_l} \alpha_{k'} \sum_{j'=1}^{V_{k',l}}  \alpha_{j'} \sum_{i'=1}^{U_{j',k',l} } \alpha_{i'} \bigg( \Tilde{g} \big(\tilde{\mathbf{w}}_{i'}^{\tau,0}\big) \frac{\pmb{1}_{i'}^\tau}{p_{i'}^\tau} - \nabla \tilde{f}_{i'} \big(\tilde{\mathbf{w}}_{i'}^{\tau,0} \big) \bigg) \Bigg \Vert^2 \nonumber\\
    &\overset{(c)}{\leq} \frac{4 \kappa_0\kappa_1\kappa_2 \eta^2}{T} \sum_{t=0}^{T-1} \sum_{l=1}^{L} \alpha_l \sum_{k=1}^{B_l} \alpha_k \mathbb{E} \Bigg \Vert \sum_{j=1}^{V_{k,l}} \alpha_j \sum_{i=1}^{U_{j,k,l}} \alpha_i \bigg( \tilde{g} \big(\tilde{\mathbf{w}}_i^{t,0} \big) \frac{\pmb{1}_{i}^t}{p_{i}^t} - \nabla \tilde{f}_i \big(\tilde{\mathbf{w}}_i^{t,0} \big) \bigg) \Bigg \Vert^2 - \nonumber\\
    &\Squad \frac{4 \kappa_0\kappa_1\kappa_2 \eta^2}{T} \sum_{t=0}^{T-1} \sum_{l=1}^{L} \alpha_l \mathbb{E} \Bigg \Vert \sum_{k=1}^{B_l} \alpha_{k} \sum_{j=1}^{V_{k,l}}  \alpha_{j} \sum_{i=1}^{U_{j,k,l} } \alpha_{i} \bigg( \Tilde{g} \big(\tilde{\mathbf{w}}_{i}^{t,0}\big) \frac{\pmb{1}_{i}^t}{p_{i}^t} - \nabla \tilde{f}_{i} \big(\tilde{\mathbf{w}}_{i}^{t,0} \big) \bigg) \Bigg \Vert^2 \nonumber\\
    &\overset{(d)}{=} \frac{4 \kappa_0\kappa_1\kappa_2 \eta^2}{T} \sum_{t=0}^{T-1} \sum_{l=1}^{L} \alpha_l \sum_{k=1}^{B_l} \alpha_k \sum_{j=1}^{V_{k,l}} \alpha_j^2 \sum_{i=1}^{U_{j,k,l}} \alpha_i^2 \mathbb{E} \Bigg \Vert \tilde{g} \big(\tilde{\mathbf{w}}_i^{t,0} \big) \frac{\pmb{1}_{i}^t}{p_{i}^t} \pm \tilde{g} \big(\tilde{\mathbf{w}}_i^{t,0} \big) - \nabla \tilde{f}_i \big(\tilde{\mathbf{w}}_i^{t,0} \big) \Bigg \Vert^2 - \nonumber\\
    &\Squad \frac{4 \kappa_0\kappa_1\kappa_2 \eta^2}{T} \sum_{t=0}^{T-1} \sum_{l=1}^{L} \alpha_l \sum_{k=1}^{B_l} \alpha_{k}^2 \sum_{j=1}^{V_{k,l}}  \alpha_{j}^2 \sum_{i=1}^{U_{j,k,l} } \alpha_{i}^2 \mathbb{E} \Bigg \Vert \Tilde{g} \big(\tilde{\mathbf{w}}_{i}^{t,0}\big) \frac{\pmb{1}_{i}^t}{p_{i}^t} \pm \tilde{g} \big(\tilde{\mathbf{w}}_i^{t,0} \big) - \nabla \tilde{f}_{i} \big(\tilde{\mathbf{w}}_{i}^{t,0} \big) \Bigg \Vert^2 \nonumber \\
    &\leq 8 \kappa_0\kappa_1\kappa_2 \eta^2 \sigma^2 \sum_{l=1}^{L} \alpha_l \sum_{k=1}^{B_l} \alpha_k \sum_{j=1}^{V_{k,l}} \alpha_j^2 \sum_{i=1}^{U_{j,k,l}} \alpha_i^2 + 8 \kappa_0\kappa_1\kappa_2 \eta^2 G^2 \cdot \varphi_{\mathrm{w,L}_3} \nonumber\\
    & \approx \mathcal{O} \big( \kappa_0\kappa_1\kappa_2 \eta^2 \sigma^2 \big) + \mathcal{O} \big( \kappa_0\kappa_1\kappa_2 \eta^2 G^2 \cdot \varphi_{\mathrm{w,L}_3} \big),
\end{align}
where $\varphi_{\mathrm{w,L}_3} = \frac{1}{T} \sum_{l=1}^{L} \alpha_l \sum_{k=1}^{B_l} \alpha_k \sum_{j=1}^{V_{k,l}} \alpha_j^2 \sum_{i=1}^{U_{j,k,l}} \alpha_i^2 \sum_{t=0}^{T-1} \left(\frac{1}{p_i^t} - 1 \right)$.

\subsection{Missing Proof of Lemma \ref{Lemma_40_1}}
\begin{align}
\label{Lemma_40_1_Main}
    &\frac{4\eta^2}{T} \sum_{t=0}^{T-1} \sum_{l=1}^{L} \alpha_l \sum_{k=1}^{B_l} \alpha_k \mathbb{E} \Bigg \Vert \sum_{\tau=\Bar{t}_2}^{t-1} \rs \bigg[\sum_{k'=1}^{B_l} \alpha_{k'} \sum_{j'=1}^{V_{k',l}}  \alpha_{j'} \sum_{i'=1}^{U_{j',k',l} } \alpha_{i'} \nabla \tilde{f}_{i'} \big(\tilde{\mathbf{w}}_{i'}^{\tau,0} \big) - \sum_{j=1}^{V_{k,l}} \alpha_j \sum_{i=1}^{U_{j,k,l}} \alpha_i \nabla \tilde{f}_i \big(\tilde{\mathbf{w}}_i^{\tau,0} \big) \bigg] \Bigg\Vert^2  \nonumber\\
    &=\frac{4\kappa_0\kappa_1\kappa_2\eta^2}{T} \sum_{t=0}^{T-1} \sum_{l=1}^{L} \alpha_l \sum_{k=1}^{B_l} \alpha_k \sum_{\tau=\Bar{t}_2}^{t-1} \mathbb{E} \Bigg \Vert \bigg( \sum_{j=1}^{V_{k,l}} \alpha_j \sum_{i=1}^{U_{j,k,l}} \alpha_i \big[ \nabla \tilde{f}_i \big(\tilde{\mathbf{w}}_i^{\tau,0} \big) - \nabla \tilde{f}_i \big(\bar{\mathbf{w}}_j^{\tau} \big)\big]  \bigg) + \nonumber\\
    &\Squad \bigg( \sum_{j=1}^{V_{k,l}} \alpha_j \sum_{i=1}^{U_{j,k,l}} \alpha_i \big[ \nabla \tilde{f}_i \big(\bar{\mathbf{w}}_j^{\tau} \big) - \nabla \tilde{f}_i \big(\bar{\mathbf{w}}_k^{\tau} \big) \big]  \bigg) + \bigg(\sum_{j=1}^{V_{k,l}} \alpha_j \sum_{i=1}^{U_{j,k,l}} \alpha_i \big[ \nabla \tilde{f}_i \big(\bar{\mathbf{w}}_k^{\tau} \big) - \nabla \tilde{f}_i \big(\bar{\mathbf{w}}_l^{\tau} \big) \big] \bigg) + \nonumber \\
    &\Squad \bigg( \sum_{j=1}^{V_{k,l}} \alpha_j \sum_{i=1}^{U_{j,k,l}} \alpha_i \nabla \tilde{f}_i \big(\bar{\mathbf{w}}_l^{\tau} \big) - \sum_{k'=1}^{B_l} \alpha_{k'} \sum_{j'=1}^{V_{k',l}}  \alpha_{j'} \sum_{i'=1}^{U_{j',k',l} } \alpha_{i'} \nabla \tilde{f}_{i'} \big(\bar{\mathbf{w}}_{l}^{\tau} \big) \bigg) +  \nonumber \\
    &\Squad \bigg( \sum_{k'=1}^{B_l} \alpha_{k'} \sum_{j'=1}^{V_{k',l}}  \alpha_{j'} \sum_{i'=1}^{U_{j',k',l} } \alpha_{i'} \big[ \nabla \tilde{f}_{i'} \big(\bar{\mathbf{w}}_{l}^{\tau} \big) - \nabla \tilde{f}_{i'} \big(\bar{\mathbf{w}}_{k}^{\tau} \big) \big] \bigg) + \nonumber \\
    &\Squad \bigg( \sum_{k'=1}^{B_l} \alpha_{k'} \sum_{j'=1}^{V_{k',l}}  \alpha_{j'} \sum_{i'=1}^{U_{j',k',l} } \alpha_{i'} \big[ \nabla \tilde{f}_{i'} \big(\bar{\mathbf{w}}_{k}^{\tau} \big) - \nabla \tilde{f}_{i'} \big(\bar{\mathbf{w}}_{j}^{\tau} \big) \big] \bigg) + \nonumber \\
    &\Squad \bigg( \sum_{k'=1}^{B_l} \alpha_{k'} \sum_{j'=1}^{V_{k',l}}  \alpha_{j'} \sum_{i'=1}^{U_{j',k',l} } \alpha_{i'} \big[ \nabla \tilde{f}_{i'} \big(\bar{\mathbf{w}}_{j}^{\tau} \big) - \nabla \tilde{f}_{i'} \big(\tilde{\mathbf{w}}_{i'}^{\tau,0} \big) \big] \bigg) \Bigg\Vert^2 \nonumber \\
    &\leq 28 \kappa_0^2 \kappa_1^2 \kappa_2^2 \eta^2 \beta^2 \epsilon_{\mathrm{mbs}}^2 + \frac{128 \kappa_0^2 \kappa_1^2 \kappa_2^2 \eta^2 \beta^2}{T} \sum_{t=0}^{T-1} \sum_{l=1}^{L} \alpha_l \sum_{k=1}^{B_l} \alpha_k \sum_{j=1}^{V_{k,l}} \alpha_{j} \sum_{i=1}^{U_{j,k,l} } \alpha_{i} \mathbb{E} \Vert \tilde{\mathbf{w}}_i^t - \tilde{\mathbf{w}}_i^{t,0} \Vert^2 + \nonumber \\
    &\Squad \frac{128 \kappa_0^2 \kappa_1^2 \kappa_2^2 \eta^2 \beta^2}{T} \sum_{t=0}^{T-1} \sum_{l=1}^{L} \alpha_l \sum_{k=1}^{B_l} \alpha_k \sum_{j=1}^{V_{k,l}} \alpha_{j} \sum_{i=1}^{U_{j,k,l} } \alpha_{i} \mathbb{E} \Vert \bar{\mathbf{w}}_j^t - \tilde{\mathbf{w}}_i^{t} \Vert^2  + \nonumber \\ 
    &\Squad \frac{56 \kappa_0^2 \kappa_1^2 \kappa_2^2 \eta^2 \beta^2}{T} \sum_{t=0}^{T-1} \sum_{l=1}^{L} \alpha_l \sum_{k=1}^{B_l} \alpha_k \sum_{j=1}^{V_{k,l}} \alpha_{j} \mathbb{E} \Vert \bar{\mathbf{w}}_k^t - \bar{\mathbf{w}}_j^t \Vert^2 + \nonumber\\
    &\Squad \frac{56 \kappa_0^2 \kappa_1^2 \kappa_2^2 \eta^2 \beta^2}{T} \sum_{t=0}^{T-1} \sum_{l=1}^{L} \alpha_l \sum_{k=1}^{B_l} \alpha_k \mathbb{E} \Vert \bar{\mathbf{w}}_l^t - \bar{\mathbf{w}}_k^t \Vert^2 \nonumber \\
    &\leq \mathcal{O} \big( \kappa_0^2 \kappa_1^2 \kappa_2^2 \eta^2 \beta^2 \epsilon_{\mathrm{mbs}}^2 \big) + \mathcal{O} \big( \delta^{\mathrm{th}} \kappa_0^2 \kappa_1^2 \kappa_3^2 \eta^2 \beta^2 D^2  \big) + \mathcal{O} \big( \kappa_0^3 \kappa_1^2 \kappa_2^2 \eta^4 \beta^2 \sigma^2 \big) + \mathcal{O} \big( \kappa_0^3 \kappa_1^2 \kappa_2^2 \eta^4 \beta^2 \epsilon_{\mathrm{vc}}^2 \big) + \nonumber\\
    & \mathcal{O} \big( \kappa_0^3 \kappa_1^2 \kappa_2^2 \eta^4 \beta^2 G^2 \cdot \varphi_{\mathrm{w, L}_1} \big) + \mathcal{O} \big(\delta^{\mathrm{th}} \kappa_0^2 \kappa_1^2 \kappa_2^2 \eta^2 \beta^2 D^2 \big) + \mathcal{O} \big(\beta^4 \kappa_0^6 \kappa_1^4 \kappa_2^2 \eta^6 \epsilon_{\mathrm{vc}}^2 \big) + \mathcal{O} \big(\kappa_0^4 \kappa_1^4 \kappa_2^2 \eta^4 \beta^2 \epsilon_{\mathrm{sbs}}^2 \big) + \nonumber\\
    & \mathcal{O} \big( \kappa_0^3 \kappa_1^3 \kappa_2^2 \eta^4 \sigma^2 \beta^2 \big) + \mathcal{O } \big( \delta^{th} \kappa_0^2 \kappa_1^2 \kappa_2^2 \eta^2 \beta^2 D^2 \big) + \mathcal{O} \big( \kappa_0^5 \kappa_1^4 \kappa_2^2 \beta^4 \eta^6 G^2 \cdot \varphi_{\mathrm{w, L}_1} \big) +  \nonumber\\
    & \mathcal{O} \big( \kappa_0^2 \kappa_1^2 \kappa_2^2 \beta^2 \eta^4 \cdot \varphi_{\mathrm{w, L}_2}) + \frac{56 \kappa_0^2 \kappa_1^2 \kappa_2^2 \eta^2 \beta^2}{T} \sum_{t=0}^{T-1} \sum_{l=1}^{L} \alpha_l \sum_{k=1}^{B_l} \alpha_k \mathbb{E} \Vert \bar{\mathbf{w}}_l^t - \bar{\mathbf{w}}_k^t \Vert^2 \nonumber \\
    & \approx \mathcal{O} \big( \kappa_0^3 \kappa_1^2 \kappa_2^2 \eta^4 \beta^2 \sigma^2 \big) + \mathcal{O} \big( \kappa_0^3 \kappa_1^2 \kappa_2^2 \eta^4 \beta^2 \epsilon_{\mathrm{vc}}^2 \big) + \mathcal{O} \big(\kappa_0^4 \kappa_1^4 \kappa_2^2 \eta^4 \beta^2 \epsilon_{\mathrm{sbs}}^2 \big) + \mathcal{O} \big( \kappa_0^2 \kappa_1^2 \kappa_2^2 \eta^2 \beta^2 \epsilon_{\mathrm{mbs}}^2 \big) +  \nonumber\\
    &\Squad \mathcal{O } \big( \delta^{th} \kappa_0^2 \kappa_1^2 \kappa_2^2 \eta^2 \beta^2 D^2 \big) +  \mathcal{O} \big( \kappa_0^3 \kappa_1^2 \kappa_2^2 \eta^4 \beta^2 G^2 \cdot \varphi_{\mathrm{w, L}_1} \big)  + \nonumber\\
    &\Squad \mathcal{O} \big( \kappa_0^2 \kappa_1^2 \kappa_2^2 \beta^2 \eta^4 \cdot \varphi_{\mathrm{w, L}_2}) + \frac{56 \kappa_0^2 \kappa_1^2 \kappa_2^2 \eta^2 \beta^2}{T} \sum_{t=0}^{T-1} \sum_{l=1}^{L} \alpha_l \sum_{k=1}^{B_l} \alpha_k \mathbb{E} \Vert \bar{\mathbf{w}}_l^t - \bar{\mathbf{w}}_k^t \Vert^2. 
\end{align}

\end{proof}

\section{Proof of Lemma \ref{Lemma5}}
\setcounter{Lemma}{3}
\begin{Lemma}
\label{Lemma5}
When $\eta \leq 1/[6\sqrt{2} \kappa_0 \kappa_1 \kappa_2 \kappa_3 \beta]$, the average difference between the global and the mBS models, i.e., the $\mathrm{L}_4$ term, is bounded as follows:
\begin{align}
\label{Lem5_Sup}
    &[\beta^2/T] \sum\nolimits_{t=0}^{T-1} \sum\nolimits_{l=1}^{L} \alpha_l \mathbb{E} \left\Vert \bar{\mathbf{w}}^t - \bar{\mathbf{w}}_l^t \right\Vert^2 
    \leq \mathcal{O} \big( \kappa_0^4 \kappa_1^2 \kappa_2^2 \kappa_3^2 \eta^4 \beta^4 \epsilon_{\mathrm{vc}}^2 \big) + \mathcal{O} \big(\kappa_0^4 \kappa_1^4 \kappa_2^2 \kappa_3^2 \eta^4 \beta^4 \epsilon_{\mathrm{sbs}}^2 \big) + \nonumber\\
    &\qquad \mathcal{O} \big( \kappa_0^4 \kappa_1^4 \kappa_2^4 \kappa_3^2 \eta^4 \beta^6 \epsilon_{\mathrm{mbs}}^2 \big) + \mathcal{O} \big(\kappa_0^2 \kappa_1^2 \kappa_2^2 \kappa_3^2 \beta^4 \eta^2 \epsilon^2 \big) + \mathcal{O} \big( \kappa_0\kappa_1\kappa_2\kappa_3 \beta^2 \eta^2 \sigma^2 \big) + \mathcal{O} \big( \delta^{\mathrm{th}} \beta^2 D^2 \big) + \nonumber\\
    &\qquad \mathcal{O} \big( \kappa_0^3 \kappa_1^2 \kappa_2^2 \kappa_3^2 \eta^4 \beta^4 G^2 \cdot \varphi_{\mathrm{w, L}_1} (\pmb{\delta},\pmb{\mathrm{f}}, \pmb{\mathrm{P}}) \big) + \mathcal{O} \big( \kappa_0^3 \kappa_1^3 \kappa_2^2 \kappa_3^2 \beta^4 \eta^4 \cdot \varphi_{\mathrm{w, L}_2} (\pmb{\delta},\pmb{\mathrm{f}}, \pmb{\mathrm{P}}) \big) + \nonumber\\
    &\qquad \mathcal{O} \big( \kappa_0^3 \kappa_1^3 \kappa_2^3 \kappa_3^2 \eta^4 \beta^4 G^2 \cdot \varphi_{\mathrm{w,L}_3} (\pmb{\delta},\pmb{\mathrm{f}}, \pmb{\mathrm{P}}) \big) + \mathcal{O} \big(\kappa_0\kappa_1\kappa_2\kappa_3 \beta^2 \eta^2 G^2 \cdot \varphi_{\mathrm{w,L}_4} (\pmb{\delta},\pmb{\mathrm{f}}, \pmb{\mathrm{P}}) \big), 
\end{align}
where $\varphi_{\mathrm{w,L}_4} (\pmb{\delta},\pmb{\mathrm{f}}, \pmb{\mathrm{P}}) = [1/T] \sum_{l=1}^{L} \alpha_l \sum_{k=1}^{B_l} \alpha_k^2 \sum_{j=1}^{V_{k,l}} \alpha_j^2 \sum_{i=1}^{U_{j,k,l}} \alpha_i^2  \sum_{t = 0}^{T-1} \left(1/p_i^t - 1 \right)$.
\end{Lemma}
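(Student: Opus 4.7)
The plan is to mirror the telescoping argument used for Lemmas \ref{Lemma2}--\ref{Lemma4}, but now with one additional hierarchical level. First I would trace back to the nearest global synchronization iteration $\bar{t}_3 = m \kappa_3 \kappa_2 \kappa_1 \kappa_0$, at which $\mathbf{w}_l^{\bar{t}_3} = \mathbf{w}^{\bar{t}_3}$. Using the update rules (\ref{mBS_Update}) and (\ref{global_Update}), express
\begin{align*}
\bar{\mathbf{w}}^t - \bar{\mathbf{w}}_l^t
&= \bigl(\tilde{\bar{\mathbf{w}}}^{\bar{t}_3,0} - \tilde{\bar{\mathbf{w}}}_l^{\bar{t}_3,0}\bigr)
  - \eta \sum_{\tau=\bar{t}_3}^{t-1} \Bigl[ \sum_{l'=1}^{L}\alpha_{l'} A_{l'}^\tau - A_l^\tau \Bigr],
\end{align*}
where $A_{l}^\tau$ denotes the inner quadruple sum of pruned stochastic gradients weighted by $\pmb{1}_i^\tau/p_i^\tau$ collected at mBS $l$ at iteration $\tau$. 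Applying Jensen's inequality to the square of the above splits the bound into a ``pruning--mismatch'' piece (analogous to (\ref{Lemma4Main0_First_Term})) and an ``accumulated--gradient'' piece. The pruning piece is dispatched exactly as in Lemmas \ref{Lemma3}, \ref{Lemma4} by noting $\mathbf{w}^{\bar t_3}=\mathbf{w}_l^{\bar t_3}=\mathbf{w}_i^{\bar t_3}$ and invoking the definition of $\delta_i^t$, yielding the $\mathcal{O}(\delta^{\mathrm{th}} D^2)$ contribution.

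Next I would handle the accumulated-gradient piece by inserting $\pm\nabla\tilde f_i(\tilde{\mathbf{w}}_i^{\tau,0})$ inside each gradient and splitting again into a stochastic-variance part and a full-gradient-bias part. The variance part, after the usual $\mathbb{E}[\pmb{1}_i^\tau/p_i^\tau - 1]^2 = (1-p_i^t)/p_i^t$ calculation combined with the independence of different clients and of iterations, produces the $\mathcal{O}(\kappa_0\kappa_1\kappa_2\kappa_3 \eta^2 \sigma^2)$ term and, after collapsing the nested $\alpha_k^2\alpha_j^2\alpha_i^2$ factors, the wireless-factor term $\mathcal{O}(\kappa_0\kappa_1\kappa_2\kappa_3 \eta^2 G^2 \cdot \varphi_{\mathrm{w,L}_4})$. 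This part is a direct generalization of Lemmas \ref{lemma20_1}, \ref{Lemma_30}, \ref{Lemma_40}.

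The bias part is the main obstacle: I would telescope through all four intermediate tier models by repeatedly inserting
$\pm\nabla\tilde f_i(\bar{\mathbf{w}}_j^\tau),\ \pm\nabla\tilde f_i(\bar{\mathbf{w}}_k^\tau),\ \pm\nabla\tilde f_i(\bar{\mathbf{w}}_l^\tau),\ \pm\nabla\tilde f_i(\bar{\mathbf{w}}^\tau)$
inside every gradient (and the symmetric insertions for the $l'$-side), so that $\beta$-smoothness converts each jump into a norm difference of model parameters at adjacent tiers. The pairs give rise, respectively, to $\mathrm{L}_1$, $\mathrm{L}_2$, $\mathrm{L}_3$, and $\mathrm{L}_4$ on the model side, plus the divergence bounds $\epsilon_{\mathrm{vc}}^2, \epsilon_{\mathrm{sbs}}^2, \epsilon_{\mathrm{mbs}}^2, \epsilon^2$ on the gradient side. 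Substituting Lemmas \ref{Lemma2}, \ref{Lemma3}, \ref{Lemma4} for the first three, the bias term becomes a sum of the claimed $\epsilon_{\mathrm{vc}}^2$, $\epsilon_{\mathrm{sbs}}^2$, $\epsilon_{\mathrm{mbs}}^2$, $\epsilon^2$ contributions with the correct powers of $\kappa_0,\kappa_1,\kappa_2,\kappa_3,\eta,\beta$, together with a self-referential $\mathcal{O}(\kappa_0^2\kappa_1^2\kappa_2^2\kappa_3^2\eta^2\beta^2)$ multiple of $\frac{1}{T}\sum_t\sum_l \alpha_l\,\mathbb{E}\|\bar{\mathbf{w}}^t-\bar{\mathbf{w}}_l^t\|^2$ on the right-hand side.

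Finally I would absorb this self-referential term into the left-hand side: the stated step-size restriction $\eta\le 1/[6\sqrt{2}\,\kappa_0\kappa_1\kappa_2\kappa_3\beta]$ makes the coefficient of the recursive term at most $1/2$ (in fact $\le 72\kappa_0^2\kappa_1^2\kappa_2^2\kappa_3^2\eta^2\beta^2 \le 1$), so dividing by $1$ minus that coefficient keeps every numerator's $\mathcal{O}$-order intact. Multiplying both sides by $\beta^2$ delivers (\ref{Lem5_Sup}). The real bookkeeping challenge is not any single inequality but tracking the correct $\kappa$-exponents across the twelve or so resulting terms; I expect to organize this with a single combined ``$\pm$-telescope'' table mirroring (\ref{Lemma_40_1_Main}) but extended by one more level, and then apply the preceding lemmas term-by-term before the recursion-closing step.
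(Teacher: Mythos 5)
Your proposal follows essentially the same route as the paper's proof: trace back to the global synchronization iteration where $\mathbf{w}^{m\prod_z\kappa_z}=\mathbf{w}_l^{m\prod_z\kappa_z}$, split via Jensen into a pruning-mismatch term (giving $\mathcal{O}(\delta^{\mathrm{th}}D^2)$) and an accumulated-gradient term, separate the latter into a variance/wireless part (yielding the $\sigma^2$ and $\varphi_{\mathrm{w,L}_4}$ terms) and a bias part telescoped through all four tiers via $\beta$-smoothness and the divergence assumptions, then substitute Lemmas \ref{Lemma2}--\ref{Lemma4} and close the self-referential $72(\beta\eta\kappa_0\kappa_1\kappa_2\kappa_3)^2$ recursion with the stated step-size condition. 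This matches the paper's decomposition and its two auxiliary lemmas term for term, so the sketch is correct and not a different approach.
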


\begin{proof}
\label{proofLemma5}

\begin{align}
\label{Lemma5_Main_0}
    &\frac{1}{T} \sum_{t=0}^{T-1} \sum_{l=1}^{L} \alpha_l \mathbb{E} \left\Vert \bar{\mathbf{w}}^t - \bar{\mathbf{w}}_{l}^t \right\Vert^2 \nonumber\\ 
    &= \frac{1}{T} \sum_{t=0}^{T-1}  \sum_{l=1}^{L} \alpha_l \mathbb{E} \bigg\Vert \tilde{ \mathbf{w} }^{m\prod_{z=0}^3\kappa_z, 0} - \eta \sum_{\tau=m\prod_{z=0}^3\kappa_z}^{t-1}\sum_{l'=1}^L \alpha_{l'} \sum_{k'=1}^{B_{l'}} \alpha_{k'}  \sum_{j'=1}^{V_{k',l'}} \alpha_{j'} \sum_{i'=1}^{U_{j',k',l'}} \alpha_{i'} \Tilde{g} \big(\tilde{\mathbf{w}}_{i'}^{\tau,0}\big) \frac{\pmb{1}_{i'}^\tau}{p_{i}^\tau} - \nonumber\\
    &\Squad \Big( \tilde{ \mathbf{w} }_l^{m\prod_{z=0}^3\kappa_z, 0} - \eta \sum_{\tau = m\prod_{z=0}^3\kappa_z} \sum_{k=1}^{B_l} \alpha_k \sum_{j=1}^{V_{k,l}} \alpha_j \sum_{i=1}^{U_{j,k,l} } \alpha_i \Tilde{g} \big( \tilde{\mathbf{w}}_i^{\tau,0} \big) \frac{\pmb{1}_i^\tau}{p_i^\tau} \Big) \bigg\Vert^2,  \nonumber \\
    &\leq \frac{2}{T} \sum_{t=0}^{T-1}  \sum_{l=1}^{L} \alpha_l \mathbb{E} \left\Vert \tilde{ \mathbf{w} }^{m\prod_{z=0}^3\kappa_z, 0} - \tilde{ \mathbf{w} }_l^{m\prod_{z=0}^3\kappa_z, 0} \right\Vert^2 + \nonumber \\
    &\Squad \frac{2 \eta^2}{T} \sum_{t=0}^{T-1}  \sum_{l=1}^{L} \alpha_l \mathbb{E} \bigg\Vert \sum_{\tau = m\prod_{z=0}^3\kappa_z}^{t-1} \bigg[ \sum_{k=1}^{B_l} \alpha_k \sum_{j=1}^{V_{k,l}} \alpha_j \sum_{i=1}^{U_{j,k,l} } \alpha_i \Tilde{g} \big( \tilde{\mathbf{w}}_i^{\tau,0} \big) \frac{\pmb{1}_i^\tau}{p_i^\tau} -  \sum_{l'=1}^L \alpha_{l'} \sum_{k'=1}^{B_{l'}} \alpha_{k'}  \sum_{j'=1}^{V_{k',l'}} \alpha_{j'} \sum_{i'=1}^{U_{j',k',l'}} \alpha_{i'} \Tilde{g} \big(\tilde{\mathbf{w}}_{i'}^{\tau,0}\big) \frac{\pmb{1}_{i'}^\tau}{p_{i'}^\tau} \bigg] \bigg\Vert^2, 
\end{align}
where the last inequality follows from Jensen inequality. 

For the first term of (\ref{Lemma5_Main_0}), we have
\begin{align}
\label{Lemma5_Main_0_FirstTerm}
    & 2 \sum_{l=1}^{L} \alpha_l \mathbb{E} \left\Vert \tilde{ \mathbf{w} }^{m\prod_{z=0}^3\kappa_z, 0} - \tilde{ \mathbf{w} }_l^{m\prod_{z=0}^3\kappa_z, 0} \right\Vert^2, \nonumber\\ &\overset{(a)}{=} 2 \sum_{l=1}^{L} \alpha_l \mathbb{E} \left\Vert \tilde{ \mathbf{w} }^{m\prod_{z=0}^3\kappa_z, 0} - \mathbf{w}^{m\prod_{z=0}^3\kappa_z} + \mathbf{w}_l^{m\prod_{z=0}^3\kappa_z} - \tilde{ \mathbf{w} }_l^{m\prod_{z=0}^3\kappa_z, 0} \right\Vert^2, \nonumber \\
    &\overset{(b)}{\leq} 4 \sum_{l=1}^{L} \alpha_l \mathbb{E} \left\Vert \tilde{ \mathbf{w} }^{m\prod_{z=0}^3\kappa_z, 0} - \mathbf{w}^{m\prod_{z=0}^3\kappa_z} \right\Vert^2 +  4 \sum_{l=1}^{L} \alpha_l \mathbb{E} \left\Vert \mathbf{w}_l^{m\prod_{z=0}^3\kappa_z} - \tilde{ \mathbf{w} }_l^{m\prod_{z=0}^3\kappa_z, 0} \right\Vert^2, \nonumber \\
    &\leq \bblue{8} \sum_{l=1}^{L} \alpha_l \sum_{k=1}^{B_l} \bblue{\alpha_k} \sum_{j=1}^{V_{k,l}} \bblue{\alpha_j} \sum_{i=1}^{U_{j,k,l}} \bblue{\alpha_i} \delta_i^{m\prod_{z=0}^3\kappa_z} \mathbb{E} \left\Vert \mathbf{w}_i^{m\prod_{z=0}^3\kappa_z} \right\Vert^2,
\end{align}
where in $(a)$, we use the fact that $\mathbf{w}^{m\prod_{z=0}^3\kappa_z} = \mathbf{w}_l^{m\prod_{z=0}^3\kappa_z}$ and $(b)$ stems from $\Vert \sum_{i=1}^n \mathbf{a}_i \Vert^2 \leq n \sum_{i=1}^n \Vert \mathbf{a}_i \Vert^2$.
Moreover, the last inequality appears following steps as in (\ref{globalPrune}) and (\ref{mBSPrune}).

As such, we simplify the first term as
\begin{align}
    &\frac{2}{T} \sum_{t=0}^{T-1} \sum_{l=1}^{L} \alpha_l \mathbb{E} \left\Vert \tilde{ \mathbf{w} }^{m\prod_{z=0}^3\kappa_z, 0} - \tilde{ \mathbf{w} }_l^{m\prod_{z=0}^3\kappa_z, 0} \right\Vert^2 
    \leq \frac{\bblue{8}}{T} \sum_{t=0}^{T-1} \sum_{l=1}^{L} \alpha_l \sum_{k=1}^{B_l} \bblue{\alpha_k} \sum_{j=1}^{V_{k,l}} \bblue{\alpha_j} \sum_{i=1}^{U_{j,k,l}} \bblue{\alpha_i} \delta_i^{\left\lfloor t/[\prod_{z=0}^3\kappa_z] \right\rfloor} \mathbb{E} \left\Vert \mathbf{w}_i^{\left\lfloor t/[\prod_{z=0}^3\kappa_z] \right\rfloor} \right\Vert^2 \nonumber\\
    &\approx \mathcal{O} \big( \delta^{\mathrm{th}} D^2 \big).
\end{align}

The second term of (\ref{Lemma5_Main_0}) is further simplified as follows:
\begin{align}
\label{Lemma5_Main_0_SecondTerm}
    & \frac{2 \eta^2}{T} \sum_{t=0}^{T-1} \sum_{l=1}^{L} \alpha_l \mathbb{E} \bigg\Vert \sum_{\tau = m\prod_{z=0}^3\kappa_z}^{t-1} \bigg[ \sum_{k=1}^{B_l} \alpha_k \sum_{j=1}^{V_{k,l}} \alpha_j \sum_{i=1}^{U_{j,k,l} } \alpha_i \Tilde{g} \big( \tilde{\mathbf{w}}_i^{\tau,0} \big) \frac{\pmb{1}_i^\tau}{p_i^\tau} -  \sum_{l'=1}^L \alpha_{l'} \sum_{k'=1}^{B_{l'}} \alpha_{k'}  \sum_{j'=1}^{V_{k',l'}} \alpha_{j'} \sum_{i'=1}^{U_{j',k',l'}} \alpha_{i'} \Tilde{g} \big(\tilde{\mathbf{w}}_{i'}^{\tau,0}\big) \frac{\pmb{1}_{i'}^\tau}{p_{i'}^\tau} \bigg] \bigg\Vert^2 \nonumber\\
    &= \frac{2 \eta^2}{T} \sum_{t=0}^{T-1} \sum_{l=1}^{L} \alpha_l \mathbb{E} \bigg\Vert \sum_{\tau = m\prod_{z=0}^3\kappa_z}^{t-1} \bigg[ \sum_{k=1}^{B_l} \alpha_k \sum_{j=1}^{V_{k,l}} \alpha_j \sum_{i=1}^{U_{j,k,l} } \alpha_i \bigg( \Tilde{g} \big( \tilde{\mathbf{w}}_i^{\tau,0} \big) \frac{\pmb{1}_i^\tau}{p_i^\tau} - \nabla \tilde{f}_i \big( \tilde{\mathbf{w}}_i^{\tau,0} \big) \bigg) - \nonumber\\
    &\Squad \sum_{l'=1}^L \alpha_{l'} \sum_{k'=1}^{B_{l'}} \alpha_{k'}  \sum_{j'=1}^{V_{k',l'}} \alpha_{j'} \sum_{i'=1}^{U_{j',k',l'}} \alpha_{i'} \bigg( \Tilde{g} \big(\tilde{\mathbf{w}}_{i'}^{\tau,0}\big) \frac{\pmb{1}_{i'}^\tau}{p_{i'}^\tau} -  \Tilde{f}_{i'} \big(\tilde{\mathbf{w}}_{i'}^{\tau,0}\big) \bigg) + \nonumber\\
    &\Squad \sum_{k=1}^{B_l} \alpha_k \sum_{j=1}^{V_{k,l}} \alpha_j \sum_{i=1}^{U_{j,k,l} } \alpha_i \nabla \tilde{f}_i \big( \tilde{\mathbf{w}}_i^{\tau,0} \big) - \sum_{l'=1}^L \alpha_{l'} \sum_{k'=1}^{B_{l'}} \alpha_{k'}  \sum_{j'=1}^{V_{k',l'}} \alpha_{j'} \sum_{i'=1}^{U_{j',k',l'}} \alpha_{i'} \Tilde{f}_{i'} \big(\tilde{\mathbf{w}}_{i'}^{\tau,0}\big) \bigg] \bigg\Vert^2 \nonumber\\
    &\leq \frac{4 \eta^2}{T} \sum_{t=0}^{T-1} \sum_{l=1}^{L} \alpha_l \mathbb{E} \bigg\Vert \sum_{\tau = m\prod_{z=0}^3\kappa_z}^{t-1} \bigg[ \sum_{k=1}^{B_l} \alpha_k \sum_{j=1}^{V_{k,l}} \alpha_j \sum_{i=1}^{U_{j,k,l} } \alpha_i \bigg( \Tilde{g} \big( \tilde{\mathbf{w}}_i^{\tau,0} \big) \frac{\pmb{1}_i^\tau}{p_i^\tau} - \nabla \tilde{f}_i \big( \tilde{\mathbf{w}}_i^{\tau,0} \big) \bigg) - \nonumber\\
    &\Squad \sum_{l'=1}^L \alpha_{l'} \sum_{k'=1}^{B_{l'}} \alpha_{k'}  \sum_{j'=1}^{V_{k',l'}} \alpha_{j'} \sum_{i'=1}^{U_{j',k',l'}} \alpha_{i'} \bigg( \Tilde{g} \big(\tilde{\mathbf{w}}_{i'}^{\tau,0}\big) \frac{\pmb{1}_{i'}^\tau}{p_{i'}^\tau} -  \Tilde{f}_{i'} \big(\tilde{\mathbf{w}}_{i'}^{\tau,0}\big) \bigg) \bigg] \bigg\Vert^2 + \nonumber\\
    &\Squad\frac{4 \eta^2}{T} \sum_{t=0}^{T-1} \sum_{l=1}^{L} \alpha_l \mathbb{E} \bigg\Vert \sum_{\tau = m\prod_{z=0}^3\kappa_z}^{t-1} \bigg[ \sum_{k=1}^{B_l} \alpha_k \sum_{j=1}^{V_{k,l}} \alpha_j \sum_{i=1}^{U_{j,k,l} } \alpha_i \nabla \tilde{f}_i \big( \tilde{\mathbf{w}}_i^{\tau,0} \big) -  \sum_{l'=1}^L \alpha_{l'} \sum_{k'=1}^{B_{l'}} \alpha_{k'}  \sum_{j'=1}^{V_{k',l'}} \alpha_{j'} \sum_{i'=1}^{U_{j',k',l'}} \alpha_{i'} \Tilde{f}_{i'} \big(\tilde{\mathbf{w}}_{i'}^{\tau,0}\big) \bigg] \bigg\Vert^2.
\end{align}

\setcounter{Lemma}{10}
\begin{Lemma}
\label{Lemma_50}
The first term of (\ref{Lemma5_Main_0_SecondTerm}) is bounded as follow:
\begin{align}
\label{Lemma_50_Main_0}
    & \frac{4 \eta^2}{T} \sum_{t=0}^{T-1} \sum_{l=1}^{L} \alpha_l \mathbb{E} \bigg\Vert \sum_{\tau = m\prod_{z=0}^3\kappa_z}^{t-1} \bigg[ \sum_{k=1}^{B_l} \alpha_k \sum_{j=1}^{V_{k,l}} \alpha_j \sum_{i=1}^{U_{j,k,l} } \alpha_i \bigg( \Tilde{g} \big( \tilde{\mathbf{w}}_i^{\tau,0} \big) \frac{\pmb{1}_i^\tau}{p_i^\tau} - \nabla \tilde{f}_i \big( \tilde{\mathbf{w}}_i^{\tau,0} \big) \bigg) - \nonumber\\
    &\Squad \sum_{l'=1}^L \alpha_{l'} \sum_{k'=1}^{B_{l'}} \alpha_{k'}  \sum_{j'=1}^{V_{k',l'}} \alpha_{j'} \sum_{i'=1}^{U_{j',k',l'}} \alpha_{i'} \bigg( \Tilde{g} \big(\tilde{\mathbf{w}}_{i'}^{\tau,0}\big) \frac{\pmb{1}_{i'}^\tau}{p_{i'}^\tau} -  \Tilde{f}_{i'} \big(\tilde{\mathbf{w}}_{i'}^{\tau,0}\big) \bigg) \bigg] \bigg\Vert^2 \nonumber \\
    & \leq 8 \Big(\prod_{z=0}^3\kappa_z \Big) \eta^2 \sigma^2 \sum_{l=1}^{L} \alpha_l \sum_{k=1}^{B_l} \alpha_k^2 \sum_{j=1}^{V_{k,l}} \alpha_j^2  \sum_{i=1}^{U_{j,k,l}} \alpha_i^2 +  \frac{8 \Big(\prod_{z=0}^3\kappa_z \Big) \eta^2 G^2}{T} \sum_{l=1}^{L} \alpha_l \sum_{k=1}^{B_l} \alpha_k^2 \sum_{j=1}^{V_{k,l}} \alpha_j^2 \sum_{i=1}^{U_{j,k,l}} \alpha_i^2  \sum_{t = 0}^{T-1} \bigg(\frac{1 - p_i^t}{p_i^t}\bigg) \nonumber\\
    & \approx \mathcal{O} \big( \kappa_0\kappa_1\kappa_2\kappa_3 \eta^2 \sigma^2 \big) + \mathcal{O} \big(\kappa_0\kappa_1\kappa_2\kappa_3 \eta^2 G^2 \cdot \varphi_{\mathrm{w,L}_4}\big),  
\end{align}
where $\varphi_{\mathrm{w,L}_4} = \frac{1}{T} \sum_{l=1}^{L} \alpha_l \sum_{k=1}^{B_l} \alpha_k^2 \sum_{j=1}^{V_{k,l}} \alpha_j^2 \sum_{i=1}^{U_{j,k,l}} \alpha_i^2  \sum_{t = 0}^{T-1} \big(\frac{1 - p_i^t}{p_i^t}\big)$.
\end{Lemma}

\begin{Lemma}
\label{Lemma_50_1}
The second term of (\ref{Lemma5_Main_0_SecondTerm}) is bounded as follow:
\begin{align}
    & \frac{4 \eta^2}{T} \sum_{t=0}^{T-1} \sum_{l=1}^{L} \alpha_l \mathbb{E} \bigg\Vert \sum_{\tau = m\prod_{z=0}^3\kappa_z}^{t-1} \bigg[ \sum_{k=1}^{B_l} \alpha_k \sum_{j=1}^{V_{k,l}} \alpha_j \sum_{i=1}^{U_{j,k,l} } \alpha_i \nabla \tilde{f}_i \big( \tilde{\mathbf{w}}_i^{\tau,0} \big) - \sum_{l'=1}^L \alpha_{l'} \sum_{k'=1}^{B_{l'}} \alpha_{k'}  \sum_{j'=1}^{V_{k',l'}} \alpha_{j'} \sum_{i'=1}^{U_{j',k',l'}} \alpha_{i'} \Tilde{f}_{i'} \big(\tilde{\mathbf{w}}_{i'}^{\tau,0}\big) \bigg] \bigg\Vert^2 \nonumber\\
    &\leq \mathcal{O} \big( \kappa_0^4 \kappa_1^2 \kappa_2^2 \kappa_3^2 \eta^4 \beta^2 \epsilon_{\mathrm{vc}}^2 \big) + \mathcal{O} \big(\kappa_0^4 \kappa_1^4 \kappa_2^2 \kappa_3^2 \eta^4 \beta^2 \epsilon_{\mathrm{sbs}}^2 \big) + \mathcal{O} \big( \kappa_0^4 \kappa_1^4 \kappa_2^4 \kappa_3^2 \eta^4 \beta^4 \epsilon_{\mathrm{mbs}}^2 \big) + \nonumber\\
    &\Squad \mathcal{O} \big(\kappa_0^2 \kappa_1^2 \kappa_2^2 \kappa_3^2 \beta^2 \eta^2 \epsilon^2 \big) + \mathcal{O} \big( \kappa_0^3 \kappa_1^2 \kappa_2^2 \kappa_3^2 \eta^4 \beta^2 \sigma^2 \big)  + \mathcal{O} \big( \delta^{\mathrm{th}} \kappa_0^2 \kappa_1^2 \kappa_2^2 \kappa_3^2 \eta^2 \beta^2 D^2 \big) + \nonumber \\
    &\Squad \mathcal{O} \big( \kappa_0^3 \kappa_1^2 \kappa_2^2 \kappa_3^2 \eta^4 \beta^2 G^2 \cdot \varphi_{\mathrm{w, L}_1} \big) + \mathcal{O} \big( \kappa_0^3 \kappa_1^3 \kappa_2^2 \kappa_3^2 \beta^2 \eta^4 \cdot \varphi_{\mathrm{w, L}_2}) + \nonumber\\
    &\Squad \rs \rs \mathcal{O} \big( \kappa_0^3 \kappa_1^3 \kappa_2^3 \kappa_3^2 \eta^4 \beta^2 G^2 \cdot \varphi_{\mathrm{w,L}_3} \big) + \frac{72 \big(\beta \eta \kappa_0 \kappa_1 \kappa_2 \kappa_3 \big)^2}{T} \sum_{t=0}^{T-1} \sum_{l=1}^{L} \alpha_l  \mathbb{E} \bigg\Vert \Bar{\mathbf{w}}^t - \Bar{\mathbf{w}}_l^t \bigg\Vert^2.
\end{align}
\end{Lemma}

Using Lemma \ref{Lemma_50} and Lemma \ref{Lemma_50_1}, and assuming $\eta \leq \frac{1}{6\sqrt{2} \kappa_0 \kappa_1 \kappa_2 \kappa_3 \beta}$, we get
\begin{align}
    &\frac{\beta^2}{T} \sum_{t=0}^{T-1} \sum_{l=1}^{L} \alpha_l \mathbb{E} \left\Vert \bar{\mathbf{w}}^t - \bar{\mathbf{w}}_{l}^t \right\Vert^2 \nonumber\\ 
    &\leq \mathcal{O} \big( \delta^{\mathrm{th}} \beta^2 D^2 \big) + \mathcal{O} \big( \kappa_0\kappa_1\kappa_2\kappa_3 \beta^2 \eta^2 \sigma^2 \big) + \mathcal{O} \big(\kappa_0\kappa_1\kappa_2\kappa_3 \beta^2 \eta^2 G^2 \cdot \varphi_{\mathrm{w,L}_4}\big) + \nonumber\\
    &\Squad \mathcal{O} \big( \kappa_0^4 \kappa_1^2 \kappa_2^2 \kappa_3^2 \eta^4 \beta^4 \epsilon_{\mathrm{vc}}^2 \big) + \mathcal{O} \big(\kappa_0^4 \kappa_1^4 \kappa_2^2 \kappa_3^2 \eta^4 \beta^4 \epsilon_{\mathrm{sbs}}^2 \big) + \mathcal{O} \big( \kappa_0^4 \kappa_1^4 \kappa_2^4 \kappa_3^2 \eta^4 \beta^6 \epsilon_{\mathrm{mbs}}^2 \big) + \nonumber\\
    &\Squad \mathcal{O} \big(\kappa_0^2 \kappa_1^2 \kappa_2^2 \kappa_3^2 \beta^4 \eta^2 \epsilon^2 \big) + \mathcal{O} \big( \kappa_0^3 \kappa_1^2 \kappa_2^2 \kappa_3^2 \eta^4 \beta^4 \sigma^2 \big)  + \mathcal{O} \big( \delta^{\mathrm{th}} \kappa_0^2 \kappa_1^2 \kappa_2^2 \kappa_3^2 \eta^2 \beta^4 D^2 \big) + \nonumber \\
    &\Squad \mathcal{O} \big( \kappa_0^3 \kappa_1^2 \kappa_2^2 \kappa_3^2 \eta^4 \beta^4 G^2 \cdot \varphi_{\mathrm{w, L}_1} \big) + \mathcal{O} \big( \kappa_0^3 \kappa_1^3 \kappa_2^2 \kappa_3^2 \beta^4 \eta^4 \cdot \varphi_{\mathrm{w, L}_2}) + \nonumber\\
    &\Squad \rs \rs \mathcal{O} \big( \kappa_0^3 \kappa_1^3 \kappa_2^3 \kappa_3^2 \eta^4 \beta^4 G^2 \cdot \varphi_{\mathrm{w,L}_3} \big) \nonumber\\
    &\approx \mathcal{O} \big( \kappa_0^4 \kappa_1^2 \kappa_2^2 \kappa_3^2 \eta^4 \beta^4 \epsilon_{\mathrm{vc}}^2 \big) + \mathcal{O} \big(\kappa_0^4 \kappa_1^4 \kappa_2^2 \kappa_3^2 \eta^4 \beta^4 \epsilon_{\mathrm{sbs}}^2 \big) + \mathcal{O} \big( \kappa_0^4 \kappa_1^4 \kappa_2^4 \kappa_3^2 \eta^4 \beta^6 \epsilon_{\mathrm{mbs}}^2 \big) + \nonumber\\
    &\Squad \mathcal{O} \big(\kappa_0^2 \kappa_1^2 \kappa_2^2 \kappa_3^2 \beta^4 \eta^2 \epsilon^2 \big) + \mathcal{O} \big( \kappa_0\kappa_1\kappa_2\kappa_3 \beta^2 \eta^2 \sigma^2 \big) + \mathcal{O} \big( \delta^{\mathrm{th}} \beta^2 D^2 \big) + \nonumber\\
    &\Squad \mathcal{O} \big( \kappa_0^3 \kappa_1^2 \kappa_2^2 \kappa_3^2 \eta^4 \beta^4 G^2 \cdot \varphi_{\mathrm{w, L}_1} \big) + \mathcal{O} \big( \kappa_0^3 \kappa_1^3 \kappa_2^2 \kappa_3^2 \beta^4 \eta^4 \cdot \varphi_{\mathrm{w, L}_2}) + \nonumber\\
    &\Squad \mathcal{O} \big( \kappa_0^3 \kappa_1^3 \kappa_2^3 \kappa_3^2 \eta^4 \beta^4 G^2 \cdot \varphi_{\mathrm{w,L}_3} \big) + \mathcal{O} \big(\kappa_0\kappa_1\kappa_2\kappa_3 \beta^2 \eta^2 G^2 \cdot \varphi_{\mathrm{w,L}_4}\big).
\end{align}

\subsection{Missing Proof of Lemma \ref{Lemma_50}}
\begin{align}
    & \frac{4 \eta^2}{T} \sum_{t=0}^{T-1} \sum_{l=1}^{L} \alpha_l \mathbb{E} \bigg\Vert \sum_{\tau = m\prod_{z=0}^3\kappa_z}^{t-1} \bigg[ \sum_{k=1}^{B_l} \alpha_k \sum_{j=1}^{V_{k,l}} \alpha_j \sum_{i=1}^{U_{j,k,l} } \alpha_i \bigg( \Tilde{g} \big( \tilde{\mathbf{w}}_i^{\tau,0} \big) \frac{\pmb{1}_i^\tau}{p_i^\tau} - \nabla \tilde{f}_i \big( \tilde{\mathbf{w}}_i^{\tau,0} \big) \bigg) - \nonumber\\
    &\Squad \sum_{l'=1}^L \alpha_{l'} \sum_{k'=1}^{B_{l'}} \alpha_{k'}  \sum_{j'=1}^{V_{k',l'}} \alpha_{j'} \sum_{i'=1}^{U_{j',k',l'}} \alpha_{i'} \bigg( \Tilde{g} \big(\tilde{\mathbf{w}}_{i'}^{\tau,0}\big) \frac{\pmb{1}_{i'}^\tau}{p_{i'}^\tau} -  \Tilde{f}_{i'} \big(\tilde{\mathbf{w}}_{i'}^{\tau,0}\big) \bigg) \bigg] \bigg\Vert^2 \nonumber \\
    &\overset{(a)}{=} \frac{4 \eta^2}{T} \sum_{t=0}^{T-1} \sum_{l=1}^{L} \alpha_l \mathbb{E} \bigg\Vert \sum_{\tau = m\prod_{z=0}^3\kappa_z}^{t-1} \bigg[ \sum_{k=1}^{B_l} \alpha_k \sum_{j=1}^{V_{k,l}} \alpha_j \sum_{i=1}^{U_{j,k,l} } \alpha_i \bigg( \Tilde{g} \big( \tilde{\mathbf{w}}_i^{\tau,0} \big) \frac{\pmb{1}_i^\tau}{p_i^\tau} - \nabla \tilde{f}_i \big( \tilde{\mathbf{w}}_i^{\tau,0} \big) \bigg) \bigg] \bigg\Vert^2 - \nonumber\\
    &\Squad \frac{4 \eta^2}{T} \sum_{t=0}^{T-1} \mathbb{E} \bigg\Vert \sum_{\tau = m\prod_{z=0}^3\kappa_z}^{t-1} \bigg[\sum_{l=1}^L \alpha_{l} \sum_{k=1}^{B_{l}} \alpha_{k}  \sum_{j=1}^{V_{k,l}} \alpha_{j} \sum_{i=1}^{U_{j,k,l}} \alpha_{i} \bigg( \Tilde{g} \big(\tilde{\mathbf{w}}_{i}^{\tau,0}\big) \frac{\pmb{1}_{i}^\tau}{p_{i}^\tau} -  \Tilde{f}_{i} \big(\tilde{\mathbf{w}}_{i}^{\tau,0}\big) \bigg) \bigg] \bigg\Vert^2 \nonumber \\ 
    &\overset{(b)}{=} \frac{4 \eta^2}{T} \sum_{t=0}^{T-1} \sum_{l=1}^{L} \alpha_l \sum_{\tau = m\prod_{z=0}^3\kappa_z}^{t-1} \mathbb{E} \bigg\Vert \sum_{k=1}^{B_l} \alpha_k \sum_{j=1}^{V_{k,l}} \alpha_j \sum_{i=1}^{U_{j,k,l} } \alpha_i \bigg( \Tilde{g} \big( \tilde{\mathbf{w}}_i^{\tau,0} \big) \frac{\pmb{1}_i^\tau}{p_i^\tau} - \nabla \tilde{f}_i \big( \tilde{\mathbf{w}}_i^{\tau,0} \big) \bigg) \bigg\Vert^2 - \nonumber\\
    &\Squad \frac{4 \eta^2}{T} \sum_{t=0}^{T-1} \sum_{\tau = m\prod_{z=0}^3\kappa_z}^{t-1} \mathbb{E} \bigg\Vert \sum_{l=1}^L \alpha_{l} \sum_{k=1}^{B_{l}} \alpha_{k}  \sum_{j=1}^{V_{k,l}} \alpha_{j} \sum_{i=1}^{U_{j,k,l}} \alpha_{i} \bigg( \Tilde{g} \big(\tilde{\mathbf{w}}_{i}^{\tau,0}\big) \frac{\pmb{1}_{i}^\tau}{p_{i}^\tau} -  \Tilde{f}_{i} \big(\tilde{\mathbf{w}}_{i}^{\tau,0}\big) \bigg) \bigg\Vert^2 \nonumber \\ 
    &\overset{(c)}{\leq} \frac{4\kappa_0 \kappa_1 \kappa_2 \kappa_3 \eta^2}{T} \sum_{t=0}^{T-1} \sum_{l=1}^{L} \alpha_l \mathbb{E} \bigg\Vert \sum_{k=1}^{B_l} \alpha_k \sum_{j=1}^{V_{k,l}} \alpha_j \sum_{i=1}^{U_{j,k,l} } \alpha_i \bigg( \Tilde{g} \big( \tilde{\mathbf{w}}_i^{t,0} \big) \frac{\pmb{1}_i^t} {p_i^t} \pm \Tilde{g} \big(\tilde{\mathbf{w}}_{i}^{t,0}\big) - \nabla \tilde{f}_i \big( \tilde{\mathbf{w}}_i^{t,0} \big) \bigg) \bigg\Vert^2 - \nonumber\\
    &\qquad \frac{4 \kappa_0 \kappa_1 \kappa_2 \kappa_3 \eta^2}{T} \sum_{t=0}^{T-1} \mathbb{E} \bigg\Vert \sum_{l=1}^L \alpha_{l} \sum_{k=1}^{B_{l}} \alpha_{k}  \sum_{j=1}^{V_{k,l}} \alpha_{j} \sum_{i=1}^{U_{j,k,l}} \alpha_{i} \bigg( \Tilde{g} \big(\tilde{\mathbf{w}}_{i}^{t,0}\big) \frac{\pmb{1}_{i}^t}{p_{i}^t} \pm \Tilde{g} \big(\tilde{\mathbf{w}}_{i}^{t,0}\big) -  \Tilde{f}_{i} \big(\tilde{\mathbf{w}}_{i}^{t,0}\big) \bigg) \bigg\Vert^2 \nonumber \\ 
    & \leq 8 \kappa_0 \kappa_1 \kappa_2 \kappa_3 \eta^2 \sigma^2 \sum_{l=1}^{L} \alpha_l \sum_{k=1}^{B_l} \alpha_k^2 \sum_{j=1}^{V_{k,l}} \alpha_j^2  \sum_{i=1}^{U_{j,k,l}} \alpha_i^2 +  \frac{8 \kappa_0 \kappa_1 \kappa_2 \kappa_3 \eta^2 G^2}{T} \sum_{l=1}^{L} \alpha_l \sum_{k=1}^{B_l} \alpha_k^2 \sum_{j=1}^{V_{k,l}} \alpha_j^2 \sum_{i=1}^{U_{j,k,l}} \alpha_i^2  \sum_{t = 0}^{T-1} \bigg(\frac{1 - p_i^t}{p_i^t}\bigg) \nonumber\\
    & \approx \mathcal{O} \big( \kappa_0\kappa_1\kappa_2\kappa_3 \eta^2 \sigma^2 \big) + \mathcal{O} \big(\kappa_0\kappa_1\kappa_2\kappa_3 \eta^2 G^2 \cdot \varphi_{\mathrm{w,L}_4}\big),  
\end{align}
where $\varphi_{\mathrm{w,L}_4} = \frac{1}{T} \sum_{l=1}^{L} \alpha_l \sum_{k=1}^{B_l} \alpha_k^2 \sum_{j=1}^{V_{k,l}} \alpha_j^2 \sum_{i=1}^{U_{j,k,l}} \alpha_i^2  \sum_{t = 0}^{T-1} \big(\frac{1 - p_i^t}{p_i^t}\big)$.

\subsection{Missing Proof of Lemma \ref{Lemma_50_1}}
\begin{align}
    & \frac{4 \eta^2}{T} \sum_{t=0}^{T-1} \sum_{l=1}^{L} \alpha_l \mathbb{E} \bigg\Vert \sum_{\tau = m\prod_{z=0}^3\kappa_z}^{t-1} \bigg[ \sum_{k=1}^{B_l} \alpha_k \sum_{j=1}^{V_{k,l}} \alpha_j \sum_{i=1}^{U_{j,k,l} } \alpha_i \nabla \tilde{f}_i \big( \tilde{\mathbf{w}}_i^{\tau,0} \big) -  \sum_{l'=1}^L \alpha_{l'} \sum_{k'=1}^{B_{l'}} \alpha_{k'}  \sum_{j'=1}^{V_{k',l'}} \alpha_{j'} \sum_{i'=1}^{U_{j',k',l'}} \alpha_{i'} \nabla \Tilde{f}_{i'} \big(\tilde{\mathbf{w}}_{i'}^{\tau,0}\big) \bigg] \bigg\Vert^2 \nonumber\\
    &= \frac{4\kappa_0 \kappa_1 \kappa_2 \kappa_3 \eta^2}{T} \sum_{t=0}^{T-1} \sum_{l=1}^{L} \alpha_l \mathbb{E} \bigg\Vert \bigg( \sum_{k=1}^{B_l} \alpha_k \sum_{j=1}^{V_{k,l}} \alpha_j \sum_{i=1}^{U_{j,k,l} } \alpha_i \big[\nabla \tilde{f}_i \big( \tilde{\mathbf{w}}_i^{t,0} \big) - \nabla \tilde{f}_i \big( \bar{\mathbf{w}}_j^t \big) \big] \bigg) + \nonumber \\
    &\qquad \bigg( \sum_{k=1}^{B_l} \alpha_k \sum_{j=1}^{V_{k,l}} \alpha_j \sum_{i=1}^{U_{j,k,l} } \alpha_i \big[\nabla \tilde{f}_i \big( \bar{\mathbf{w}}_j^t \big) - \nabla \tilde{f}_i \big( \bar{\mathbf{w}}_k^t \big) \big] \bigg) + \bigg( \sum_{k=1}^{B_l} \alpha_k \sum_{j=1}^{V_{k,l}} \alpha_j \sum_{i=1}^{U_{j,k,l} } \alpha_i \big[\nabla \tilde{f}_i \big( \bar{\mathbf{w}}_k^t \big) - \nabla \tilde{f}_i \big( \bar{\mathbf{w}}_l^t \big) \big] \bigg) + \nonumber\\
    &\qquad \bigg( \sum_{k=1}^{B_l} \alpha_k \sum_{j=1}^{V_{k,l}} \alpha_j \sum_{i=1}^{U_{j,k,l} } \alpha_i \big[\nabla \tilde{f}_i \big( \bar{\mathbf{w}}_l^t \big) - \nabla \tilde{f}_i \big( \bar{\mathbf{w}}^t \big) \big] \bigg) + \bigg( \sum_{k=1}^{B_l} \alpha_k \sum_{j=1}^{V_{k,l}} \alpha_j \sum_{i=1}^{U_{j,k,l} } \alpha_i \nabla \tilde{f}_i \big( \bar{\mathbf{w}}^t \big) - \nonumber\\
    &\qquad \sum_{l'=1}^L \alpha_{l'} \sum_{k'=1}^{B_{l'}} \alpha_{k'} \sum_{j'=1}^{V_{k',l'}} \alpha_{j'} \sum_{i'=1}^{U_{j',k',l'}} \alpha_{i'} \Tilde{f}_{i'} \big(\bar{\mathbf{w}}^t\big) \bigg) + \bigg( \sum_{l'=1}^L \alpha_{l'} \sum_{k'=1}^{B_{l'}} \alpha_{k'} \sum_{j'=1}^{V_{k',l'}} \alpha_{j'} \sum_{i'=1}^{U_{j',k',l'}} \alpha_{i'} \big[ \nabla \Tilde{f}_{i'} \big(\bar{\mathbf{w}}^t\big) - \nabla \Tilde{f}_{i'} \big(\bar{\mathbf{w}}_l^t\big) \big] \bigg) + \nonumber \\
    &\qquad \bigg( \sum_{l'=1}^L \alpha_{l'} \sum_{k'=1}^{B_{l'}} \alpha_{k'} \sum_{j'=1}^{V_{k',l'}} \alpha_{j'} \sum_{i'=1}^{U_{j',k',l'}} \alpha_{i'} \big[ \nabla \Tilde{f}_{i'} \big(\bar{\mathbf{w}}_l^t\big) - \nabla \Tilde{f}_{i'} \big(\bar{\mathbf{w}}_k^t\big) \big] \bigg) + \nonumber \\
    &\qquad \bigg( \sum_{l'=1}^L \alpha_{l'} \sum_{k'=1}^{B_{l'}} \alpha_{k'} \sum_{j'=1}^{V_{k',l'}} \alpha_{j'} \sum_{i'=1}^{U_{j',k',l'}} \alpha_{i'} \big[ \nabla \Tilde{f}_{i'} \big(\bar{\mathbf{w}}_k^t\big) - \nabla \Tilde{f}_{i'} \big(\bar{\mathbf{w}}_j^t\big) \big] \bigg) + \nonumber\\
    &\qquad \bigg( \sum_{l'=1}^L \alpha_{l'} \sum_{k'=1}^{B_{l'}} \alpha_{k'} \sum_{j'=1}^{V_{k',l'}} \alpha_{j'} \sum_{i'=1}^{U_{j',k',l'}} \alpha_{i'} \big[ \nabla \Tilde{f}_{i'} \big(\bar{\mathbf{w}}_l^t\big) -  \nabla \Tilde{f}_{i'} \big(\tilde{\mathbf{w}}_{i'}^{t,0}\big) \big] \bigg) ~ \bigg\Vert^2  \\
    &\leq 36 \big(\beta \epsilon \eta \kappa_0 \kappa_1 \kappa_2 \kappa_3 \big)^2 + \frac{144 \big(\beta \eta \kappa_0 \kappa_1 \kappa_2 \kappa_3 \big)^2}{T} \sum_{t=0}^{T-1} \sum_{l=1}^{L} \alpha_l \sum_{k=1}^{B_l} \alpha_k \sum_{j=1}^{V_{k,l}} \alpha_j \sum_{i=1}^{U_{j,k,l} } \alpha_i \mathbb{E} \bigg\Vert \tilde{\mathbf{w}}_i^t - \Tilde{\mathbf{w}}_i^{t,0} \bigg\Vert^2 + \nonumber\\
    &\Squad \frac{144 \big(\beta \eta \kappa_0 \kappa_1 \kappa_2 \kappa_3 \big)^2}{T} \sum_{t=0}^{T-1} \sum_{l=1}^{L} \alpha_l \sum_{k=1}^{B_l} \alpha_k \sum_{j=1}^{V_{k,l}} \alpha_j \sum_{i=1}^{U_{j,k,l} } \alpha_i \mathbb{E} \bigg\Vert \Bar{\mathbf{w}}_j^t - \Tilde{\mathbf{w}}_i^t \bigg\Vert^2 + \nonumber\\
    &\Squad \frac{72 \big(\beta \eta \kappa_0 \kappa_1 \kappa_2 \kappa_3 \big)^2}{T} \sum_{t=0}^{T-1} \sum_{l=1}^{L} \alpha_l \sum_{k=1}^{B_l} \alpha_k \sum_{j=1}^{V_{k,l}} \alpha_j \mathbb{E} \bigg\Vert \Bar{\mathbf{w}}_k^t - \Bar{\mathbf{w}}_j^t \bigg\Vert^2 + \nonumber\\
    &\Squad \frac{72 \big(\beta \eta \kappa_0 \kappa_1 \kappa_2 \kappa_3 \big)^2}{T} \sum_{t=0}^{T-1} \sum_{l=1}^{L} \alpha_l \sum_{k=1}^{B_l} \alpha_k \mathbb{E} \bigg\Vert \Bar{\mathbf{w}}_l^t - \Bar{\mathbf{w}}_k^t \bigg\Vert^2 + \nonumber\\
    &\Squad \frac{72 \big(\beta \eta \kappa_0 \kappa_1 \kappa_2 \kappa_3 \big)^2}{T} \sum_{t=0}^{T-1} \sum_{l=1}^{L} \alpha_l  \mathbb{E} \bigg\Vert \Bar{\mathbf{w}}^t - \Bar{\mathbf{w}}_l^t \bigg\Vert^2 \nonumber \\
    &\leq \mathcal{O} \big(\kappa_0^2 \kappa_1^2 \kappa_2^2 \kappa_3^2 \beta^2 \eta^2 \epsilon^2 \big) + \mathcal{O} \big( \delta^{\mathrm{th}} \kappa_0^2 \kappa_1^2 \kappa_2^2 \kappa_3^2 \beta^2 \eta^2 D^2 \big) + \mathcal{O} \big( \kappa_0^3 \kappa_1^2 \kappa_2^2 \kappa_3^2 \eta^4 \beta^2 \sigma^2 \big) + \nonumber\\
    &\qquad \mathcal{O} \big( \kappa_0^4 \kappa_1^2 \kappa_2^2 \kappa_3^2 \eta^4 \beta^2 \epsilon_{\mathrm{vc}}^2 \big) + \mathcal{O} \big( \kappa_0^3 \kappa_1^2 \kappa_2^2 \kappa_3^2 \eta^4 \beta^2 G^2 \cdot \varphi_{\mathrm{w, L}_1} \big) + \mathcal{O} \big(\delta^{\mathrm{th}} \kappa_0^2 \kappa_1^2 \kappa_2^2 \kappa_3^2 \eta^2 \beta^2 D^2 \big) + \nonumber\\
    &\qquad \mathcal{O} \big(\kappa_0^6 \kappa_1^4 \kappa_2^2 \kappa_3^2 \eta^6 \beta^4 \epsilon_{\mathrm{vc}}^2 \big) + \mathcal{O} \big(\kappa_0^4 \kappa_1^4 \kappa_2^2 \kappa_3^2 \eta^4 \beta^2 \epsilon_{\mathrm{sbs}}^2 \big) + \mathcal{O} \big( \kappa_0^3 \kappa_1^3 \kappa_2^2 \kappa_3^2 \eta^4 \sigma^2 \beta^2 \big) + \nonumber\\
    &\qquad \mathcal{O } \big( \delta^{th} \kappa_0^2 \kappa_1^2 \kappa_2^2 \kappa_3^2 \beta^2 \eta^2 D^2 \big) + \mathcal{O} \big( \kappa_0^5 \kappa_1^4 \kappa_2^2 \kappa_3^2 \beta^4 \eta^6 G^2 \cdot \varphi_{\mathrm{w, L}_1} \big) + \mathcal{O} \big( \kappa_0^3 \kappa_1^3 \kappa_2^2 \kappa_3^2 \beta^2 \eta^4 \cdot \varphi_{\mathrm{w, L}_2}) + \nonumber\\
    &\qquad \mathcal{O} \big( \kappa_0^5 \kappa_1^4 \kappa_2^4 \kappa_3^2 \eta^6 \beta^4 \epsilon_{\mathrm{vc}}^2 \big) + \mathcal{O} \big(\kappa_0^6 \kappa_1^6 \kappa_2^4 \kappa_3^2 \eta^6 \beta^4 \epsilon_{\mathrm{sbs}}^2 \big) + \mathcal{O} \big( \kappa_0^4 \kappa_1^4 \kappa_2^4 \kappa_3^2 \eta^4 \beta^4 \epsilon_{\mathrm{mbs}}^2 \big) + \nonumber\\
    &\qquad \mathcal{O} \big( \kappa_0^3 \kappa_1^3 \kappa_2^3 \kappa_3^2 \eta^4 \beta^2 \sigma^2 \big) + \mathcal{O} \big( \delta^{\mathrm{th}} \kappa_0^2 \kappa_1^2 \kappa_2^2 \kappa_3^2 \eta^2 \beta^2 D^2 \big) + \mathcal{O} \big( \kappa_0^5 \kappa_1^4 \kappa_2^4 \kappa_3^2 \eta^6 \beta^4 G^2 \cdot \varphi_{\mathrm{w, L}_1} \big) + \nonumber\\
    &\qquad \mathcal{O} \big( \kappa_0^4 \kappa_1^4 \kappa_2^4 \kappa_3^2 \beta^4 \eta^6 \cdot \varphi_{\mathrm{w, L}_2}) + \mathcal{O} \big( \kappa_0^3 \kappa_1^3 \kappa_2^3 \kappa_3^2 \eta^4 \beta^2 G^2 \cdot \varphi_{\mathrm{w,L}_3} \big) +\nonumber\\
    &\qquad \frac{72 \big(\beta \eta \kappa_0 \kappa_1 \kappa_2 \kappa_3 \big)^2}{T} \sum_{t=0}^{T-1} \sum_{l=1}^{L} \alpha_l  \mathbb{E} \bigg\Vert \Bar{\mathbf{w}}^t - \Bar{\mathbf{w}}_l^t \bigg\Vert^2 \nonumber \\
    & \approx \mathcal{O} \big( \kappa_0^4 \kappa_1^2 \kappa_2^2 \kappa_3^2 \eta^4 \beta^2 \epsilon_{\mathrm{vc}}^2 \big) + \mathcal{O} \big(\kappa_0^4 \kappa_1^4 \kappa_2^2 \kappa_3^2 \eta^4 \beta^2 \epsilon_{\mathrm{sbs}}^2 \big) + \mathcal{O} \big( \kappa_0^4 \kappa_1^4 \kappa_2^4 \kappa_3^2 \eta^4 \beta^4 \epsilon_{\mathrm{mbs}}^2 \big) + \nonumber\\
    &\Squad \mathcal{O} \big(\kappa_0^2 \kappa_1^2 \kappa_2^2 \kappa_3^2 \beta^2 \eta^2 \epsilon^2 \big) + \mathcal{O} \big( \kappa_0^3 \kappa_1^2 \kappa_2^2 \kappa_3^2 \eta^4 \beta^2 \sigma^2 \big)  + \mathcal{O} \big( \delta^{\mathrm{th}} \kappa_0^2 \kappa_1^2 \kappa_2^2 \kappa_3^2 \eta^2 \beta^2 D^2 \big) + \nonumber \\
    &\Squad \mathcal{O} \big( \kappa_0^3 \kappa_1^2 \kappa_2^2 \kappa_3^2 \eta^4 \beta^2 G^2 \cdot \varphi_{\mathrm{w, L}_1} \big) + \mathcal{O} \big( \kappa_0^3 \kappa_1^3 \kappa_2^2 \kappa_3^2 \beta^2 \eta^4 \cdot \varphi_{\mathrm{w, L}_2}) + \nonumber\\
    &\Squad \rs \mathcal{O} \big( \kappa_0^3 \kappa_1^3 \kappa_2^3 \kappa_3^2 \eta^4 \beta^2 G^2 \cdot \varphi_{\mathrm{w,L}_3} \big) + \frac{72 \big(\beta \eta \kappa_0 \kappa_1 \kappa_2 \kappa_3 \big)^2}{T} \sum_{t=0}^{T-1} \sum_{l=1}^{L} \alpha_l  \mathbb{E} \bigg\Vert \Bar{\mathbf{w}}^t - \Bar{\mathbf{w}}_l^t \bigg\Vert^2\rs. \rs 
\end{align}
\end{proof}

\end{document}